\ifdefined\TITsupplement
\documentclass[11pt,journal,onecolumn]{IEEEtran}
\else
\documentclass[12pt,journal,onecolumn]{IEEEtran}
\fi

\RequirePackage{amsthm,amsmath,amsfonts,amssymb,mathtools,mathrsfs}
\RequirePackage[numbers,sort&compress]{natbib}
\RequirePackage{graphicx,array,booktabs}
\usepackage{pdfpages}
\theoremstyle{plain}
\newtheorem{theorem}{Theorem}
\newtheorem{proposition}{Proposition}

\newtheorem{lemma}{Lemma}
\newtheorem{corollary}{Corollary}
\theoremstyle{definition}
\newtheorem{definition}{Definition}
\newtheorem{remark}{Remark}
\newtheorem{assumption}{Assumption}

\newtheorem{assumptionW}{Assumption}

\newtheorem{example}{Example}
\DeclareMathOperator{\Ree}{Re}
\DeclareMathOperator{\Imm}{Im}
\DeclareMathOperator{\dist}{dist}

\DeclareMathOperator*{\argmin}{arg\,min}
\DeclareMathOperator{\spanR}{span_{\mathbb R}}
\DeclareMathOperator{\spanC}{span_{\mathbb C}}
\newcommand{\dd}{\,\mathrm d}
\newcommand{\eps}{\varepsilon}
\newcommand{\CN}{\mathcal{CN}}

\begin{document}
\title{Fisher Information and the Minimum Mean-Square Error for Locating Closely Spaced Point Sources}
\author{Chuling~Wen, Jian~Lu, Qingtang~Jiang, and Charles~K.~Chui%
\thanks{This work was supported in part by the National Natural Science Foundation
of China under Grant 12571569 and Grant U21A20455, in part by the Natural
Science Foundation of Guangdong Province of China under Grant
2024A1515011913, in part by the Shenzhen Key Laboratory of Advanced
Machine Learning and Applications under Grant SYSPG20241211173920042.}%
\thanks{C.~Wen and J.~Lu are with the School of Mathematical Sciences,
Shenzhen University, Shenzhen 518060, China (e-mail: chulwen@163.com;
jianlu@szu.edu.cn). J.~Lu is also with the National Center for Applied
Mathematics Shenzhen (NCAMS), Shenzhen 518055, China. Q.~Jiang is with the
School of Mathematical Sciences, Zhejiang Normal University, Jinhua 321004,
China (e-mail: jiangq@zjnu.edu.cn). C.~K.~Chui is an independent researcher,
Menlo Park, CA 94025, USA (e-mail: ckchui@stanford.edu).
J.~Lu is the corresponding author.}}
\markboth{}{Wen \MakeLowercase{\textit{et al.}}: Fisher Information and the MMSE for Closely Spaced Point Sources}
\maketitle
\begin{abstract}
We determine the optimal average error of locating closely spaced sources from noisy superpositions of a known smooth response with unknown complex coefficients. Relative phases control how changes in one location can be offset by the other unknowns, so local Fisher information alone does not determine the attainable error. For every fixed number $K\ge2$ of sources with independent uniform phases, we prove matching Bayes risk bounds jointly in separation and noise, for all sufficiently small separations and all noise levels. Odd source counts have two risk regimes; even counts have an intermediate regime in which mean squared error is linear in the noise standard deviation. A projected least-squares estimator attains these orders without knowing the phases or the prior density. Bounded nonuniform phase laws preserve the orders under stated support conditions, whereas priors concentrated near poorly identified configurations can change the exponents. A common finite comparison yields both a global inverse bound and paths controlled by the original prior. These give sufficient conditions under which the risk has the order of the pointwise capped Cram\'er--Rao bound, and a lower-tail calculation explains the power and logarithmic regimes. We also give conditional guarantees when the cluster center, separation, and source count are unknown, with an additional low-noise requirement for count selection. A calibrated covariance factor model illustrates the same criterion beyond source localization.

\end{abstract}
\begin{IEEEkeywords}
Bayes risk, Cram\'er--Rao bound, Fisher information, resolution limits,
source localization.
\end{IEEEkeywords}
\IEEEpeerreviewmaketitle
\section{Introduction}
\label{sec:intro}

\IEEEPARstart{L}{ocating} point sources that are closer than the width of
the observed response is a basic task in line spectral estimation,
direction-of-arrival estimation and coherent imaging
\citep{stoica1989music,donoho1992superres,candes2014superres,shahram2006statistical}.
When the sources are closely spaced, their contributions overlap. The
accuracy of their locations is then governed by how well the data
separate nearby configurations. The Fisher information and the
Cram\'er--Rao bound are the standard tools for quantifying this accuracy
\citep{kay1993estimation,lee1992crb,smith2005resolution}. In this paper,
we study the optimal average accuracy of the source locations through
the following two questions.
\begin{itemize}
\item \emph{Closely spaced sources in complex Gaussian noise.} Consider
$K\ge2$ sources with locations $\tau_1<\cdots<\tau_K$, amplitude moduli
$a_j>0$ and phases $\varphi_j$, observed through a known response $\psi$
at known sampling locations $w_k$:
\begin{equation}
 Y_k=\sum_{j=1}^K a_j e^{i\varphi_j}\psi(w_k,\tau_j)+\eta_k,
 \qquad 1\le k\le n,
 \label{eq:intromodel}
\end{equation}
where $\eta_k\overset{\rm iid}{\sim}\CN(0,\sigma^2)$. The sources form a
cluster $\tau_j=\tau_0+\eps s_j$ with center $\tau_0$, scale $\eps$ and
normalized offsets $s_j$ separated by at least $\rho>0$. A prior $\pi$ on
the normalized offsets, moduli and phases generates one configuration.
The prior does not depend on $\eps$, $n$ or $\sigma$, and the
configuration stays fixed during the observations. With the effective
noise level $\zeta_n=\sigma/\sqrt n$, we aim to determine the minimum
mean-square error
\[
 R_{\pi,j}(\eps,\zeta_n)
 =\inf_{\widehat\tau_j}
       \mathbb E_{\pi,\eta}|\widehat\tau_j(Y)-\tau_j|^2
\]
of the $j$th location jointly in $\eps$ and $\zeta_n$. The infimum is
over all estimators based on $Y=(Y_1,\dots,Y_n)$, including those that
know the prior. The responses covered include line spectra,
$\psi(w,\tau)=e^{i\pi w\tau}$, and smooth kernels used in array
processing and coherent imaging (Assumption~\ref{ass:sampling}). Since
$K$, $\tau_0$ and $\eps$ are known for now, $R_{\pi,j}\lesssim\eps^2$.
The question is how the risk falls below this cap as the noise
decreases.
\item \emph{Scalar targets with degenerate information.} More generally,
let $(P_x)_{x\in\Theta}$ be a parametric experiment with noise scale
$r_n$, and let $g(x)$ be a scalar target with trivial squared error $L$.
Suppose that the efficient Fisher information for $g$ is of order
$I(x)/r_n^2$. We ask when the minimum mean-square error
$\inf_{\widehat g}\mathbb E_\pi|\widehat g-g(x)|^2$ has the order of the
capped Cram\'er--Rao bound $\mathbb E_\pi\min\{L,r_n^2/I(x)\}$. Local
information alone does not settle this when $I$ vanishes on part of the
support of $\pi$.
\end{itemize}

Most analyses of closely spaced sources evaluate the Cram\'er--Rao bound
at a fixed configuration or average it over phases, or bound the error
uniformly over configurations. None of these gives the minimum
mean-square error under a prior when the information degenerates on
part of the parameter space. The same question arises for other
parametric models whose information vanishes through cancellation among
the unknowns.

A natural benchmark is the local Cram\'er--Rao bound
$\zeta_n^2/\{2J_j(x)\}$, where $2J_j(x)/\zeta_n^2$ is the efficient
Fisher information for $\tau_j$ at the configuration $x$
(Section~\ref{lc:efficient-info}). Towards determining the minimum
mean-square error, two difficulties stand out.
\begin{itemize}
\item \emph{Degenerate information near phase cancellation.} The relative
phases determine how much a change in one location can be offset by
changing the other locations and coefficients. Near configurations where
this cancellation is complete, the efficient information degenerates.
The local bound then exceeds the trivial error $\eps^2$ on a set of
configurations whose prior probability depends on the noise level. The
phase-averaged Cram\'er--Rao bound of \citet{swingler1993frequency}
averages the inverse information over the phases. Capping this average
afterwards can miss a whole risk regime.
\item \emph{From local information to the risk of every estimator.}
Local information describes one configuration, while the Bayes risk is
attained by estimators that may use the prior. An upper bound must
control location errors over the whole candidate class, not only near
the truth. A lower bound must compare configurations that keep their
prior mass. Neither follows from a pointwise information calculation,
and both must hold uniformly near degenerate phases and at every noise
level.
\end{itemize}

\paragraph{Main contributions}
This paper determines the minimum mean-square error of closely spaced
sources jointly in separation and noise level. It also gives sufficient
conditions under which this error is described by the Fisher
information. Our main contributions are summarized below.
\begin{enumerate}
\item For every fixed $K$ with independent uniform phases, we prove
matching upper and lower bounds on $R_{\pi,j}$ for all sufficiently small
separations and all noise levels (Theorem~\ref{thm:randomphase}). Odd
source counts have two risk regimes, with
$R_{\pi,j}\asymp\min\{\eps^2,\zeta_n^2\eps^{-(3K-3)}\}$. Even counts have
an additional intermediate regime. In it the mean-square error is
linear in the noise standard deviation. For two sources,
$R_{\pi,j}\asymp\min\{\eps^2,\zeta_n,\zeta_n^2\eps^{-3}\}$. A projected
least-squares estimator attains these orders without knowing the phases
or the prior density. Phase densities with an upper bound and a local
lower bound near one reference configuration preserve them
(Proposition~\ref{prop:bounded-phase-prior}).
\item The risk exponents are set by the prior mass near degenerate
configurations, not by the source count alone. For locally supported
power-weighted phase priors with every even $K$, we determine the risk
of the sources in the reweighted set. For a singular three-source prior,
we determine the risk of every source. Both include logarithmic
boundary cases (Theorem~\ref{thm:prior-effect}). For the priors of
Theorems~\ref{thm:randomphase} and~\ref{thm:prior-effect}, the risk has
the order of the capped Cram\'er--Rao bound
$\mathbb E_\pi\min\{\eps^2,\zeta_n^2/J_j(x)\}$.
\item We give a criterion for the second question
(Theorem~\ref{risk:transfer}). It has two conditions on an information
surrogate $S\lesssim I$. The first is an inverse bound. The change of
the target between the truth and any other parameter value is at most
a constant times the smaller of $\sqrt L$ and their signal distance
divided by $\sqrt S$ at the truth. The second is a matching condition.
Some family of paths charges every set at most a fixed multiple of its
prior mass. Its two-point lower bound is at least a constant times
$\mathbb E_\pi\min\{L,r_n^2/S\}$, with a constant independent of $r_n$.
Under these conditions,
\[
 \inf_{\widehat g}\mathbb E_\pi|\widehat g-g(x)|^2
 \asymp\mathbb E_\pi\min\{L,r_n^2/I(x)\},
\]
and an estimator that does not use the prior attains this order.
Suppose that the prior probability of $\{I<u\}$ is of order
$(u/s_0)^\kappa$ for $u$ between twice the smallest value of $I$ and a
top scale $s_0$. Averaging the inverse information before capping then
keeps this order uniformly in the smallest value of $I$ exactly when
$\kappa>1$ (Proposition~\ref{prop:kappa-law}). The source results are
instances, with paths that fix low-order moments of the source
configuration (Section~\ref{lc:criterion-sources}). A calibrated
covariance factor model with $m$ reference variables is a second
instance. Its risk has order $n^{-m/4}$, $n^{-1}\log(en)$ and $n^{-1}$
for $m<4$, $m=4$ and $m>4$ (Corollary~\ref{factor:risk}).
\item When the center, the separation, and a source count bounded by a
known $K_{\max}$ are unknown, we give guarantees for one fine-scale
cluster. Count selection needs an additional low-noise requirement
(Section~\ref{sec:known-count-adaptation}).
\end{enumerate}

\paragraph{Organization}
Section~\ref{sec:model} states the model and the risk.
Section~\ref{sec:main-results} presents our main results for source
localization, and Section~\ref{sec:related-works} reviews related works.
Section~\ref{sec:info} proves Theorems~\ref{thm:randomphase}
and~\ref{thm:prior-effect} from the key lemmas; the appendices contain
the remaining proofs. Section~\ref{sec:criterion} states the criterion
and the factor model, and Section~\ref{sec:known-count-adaptation}
treats an unknown center, separation and source count.
Section~\ref{sec:discussion} discusses future directions. The
supplementary material contains numerical illustrations, their settings
and a table of conditions, and no proofs.

\paragraph{Notation}
We write $u\lesssim v$ for $u\le Cv$ and $u\asymp v$ for both
comparisons, with scale-independent constants whose allowed dependence
is stated with each result. Source locations are also called nodes. We
write $a\vee b=\max\{a,b\}$ and distinguish real and complex spans by
$\spanR$ and $\spanC$. Table~\ref{lc:notation} summarizes the common
notation.

\begin{table}[t]
\centering
\caption{Summary of Common Notation}
\label{lc:notation}
\begin{tabular}{@{}ll@{}}
\toprule
Symbol & Meaning\\
\midrule
$K$, $\tau_j$, $c_j=a_je^{i\varphi_j}$ & source count, locations, complex coefficients\\
$\tau_0$, $\eps$, $s_j$ & cluster center, separation scale, normalized offsets\\
$\psi$, $\nu_n$, $g_r$ & response, sampling design, Taylor coefficients at $\tau_0$\\
$\zeta_n=\sigma/\sqrt n$ & effective noise level\\
$\mathcal X_K$, $\pi$ & class of normalized configurations, prior on it\\
$R_{\pi,j}(\eps,\zeta_n)$ & minimum mean-square error of $\tau_j$, \eqref{eq:intro-risk}\\
$\Pi$ & projection onto $\spanC\{g_0,\ldots,g_{2K-1}\}$ in $L^2(\nu_n)$\\
$2J_j(x)/\zeta_n^2$ & efficient Fisher information for $\tau_j$; $J_{\min}=\min_jJ_j$\\
$\Psi_K$, $\zeta_{*,K}(\eps)$ & risk order under uniform phases, noise level of the cap\\
$L$, $S$, $\mathcal A_\pi(r;S)$ & cap, information surrogate, $\mathbb E_\pi\min\{L,r^2/S\}$ (Section~\ref{crit:setting})\\
$\kappa$ & lower-tail exponent of the information under the prior\\
\bottomrule
\end{tabular}
\end{table}

\section{Problem Formulation}
\label{sec:model}
In this section, we formulate the observation model, the class of
configurations and the Bayes risk. We then state the condition on the
response and define the efficient Fisher information.

\subsection{Observation Model}
Suppose that we are interested in $K\ge2$ point sources with locations
$\tau_j$, amplitude moduli $a_j>0$ and phases $\varphi_j$, observed
through a known response $\psi(w,\tau)$ at sampling locations $w_k$ in a
set $\Omega$, for example an interval. Let
$\nu_n=n^{-1}\sum_k\delta_{w_k}$ denote the empirical measure of the
sampling locations. What we have observed are the noisy samples
$Y_k=\mu_\theta(w_k)+\eta_k$, $1\le k\le n$, where
\begin{equation}
\label{eq:model}
 \mu_\theta(w)=\sum_{j=1}^Kc_j\psi(w,\tau_j),\qquad
 \eta_k\overset{\rm iid}{\sim}\CN(0,\sigma^2),\qquad
 \theta=(a_j,\varphi_j,\tau_j)_{j=1}^K\in\mathbb R^{3K},
\end{equation}
where $c_j=a_jq_j$, $a_j>0$ and $q_j=e^{i\varphi_j}$. More generally, a
probability measure $\nu$ on $\Omega$ is called a sampling design, and we
put
\[
 \langle h_1,h_2\rangle_\nu=\Ree\int\overline{h_1}h_2\dd\nu,
 \qquad\|h\|_\nu^2=\int|h|^2\dd\nu,\qquad\zeta_n=\sigma/\sqrt n.
\]

\subsection{Cluster Class, Prior, and Bayes Risk}
\label{sec:class-estimator}

The sources form a cluster, $\tau_j=\tau_0+\eps s_j$, with center
$\tau_0$, scale $\eps$ and normalized offsets $s_j$. Fix a known compact
interval $\mathcal S_0$, a minimum normalized gap $\rho>0$ and
amplitude bounds $0<\underline a<1<\overline a$. The class of normalized
configurations is
\begin{equation}
\label{eq:class}
 \mathcal X_K=\{(s,c):s_j\in\mathcal S_0,\ s_{j+1}-s_j\ge\rho,\ \underline a\le|c_j|\le\overline a\}:
\end{equation}
the phases are unrestricted. It is assumed throughout this section and
Sections~\ref{sec:main-results}--\ref{sec:info} that the estimator knows
$K,\tau_0,\eps,\mathcal S_0,\rho$ and the amplitude bounds, but not the
individual locations or coefficients. Assume that $\mathcal X_K$ contains an
interior configuration with unit moduli. Fix $T<\infty$ such that
$\mathcal S_0\subset[-T,T]$. For $x=(s,c)\in\mathcal X_K$ put
\[
 \theta_x=\bigl(|c_j|,\arg c_j,\tau_0+\eps s_j\bigr)_{j=1}^K,\qquad
 \mu_x=\sum_jc_j\psi(\cdot,\tau_0+\eps s_j).
\]

For the baseline phase law, let $(s_1,\ldots,s_K,a_1,\ldots,a_K)$ have a fixed
density $f$ on the node--modulus class in~\eqref{eq:class}, with
$f\ge c_f>0$ almost everywhere on a neighborhood of some interior point,
and draw the phases independently
of these variables and of one another, uniformly on $[0,2\pi)$. This prior
on normalized parameters does not depend on $\eps,n,\sigma$. Because it
is placed on normalized offsets, changing $\eps$ rescales the
configurations without changing their prior weights.
Sections~\ref{lc:phase-laws} and~\ref{sec:prior-mass} treat other phase
laws. For all these priors the accuracy of each ordered location is
measured by the following risk.

\begin{definition}[Bayes localization risk]
\label{def:bayes-risk}
Let $\pi$ be a prior on offsets, moduli and phases. Formally, $\pi$ is
a probability measure on $\mathcal X_K$ that does not depend on
$\eps,n,\sigma$ or the noise. One realization of $\pi$ generates the
configuration, which is held fixed throughout the observations. For
$1\le j\le K$, the Bayes localization risk of the $j$th ordered location
is
\begin{equation}
 R_{\pi,j}(\eps,\zeta_n)
 =\inf_{\widehat\tau_j(Y)}\mathbb E_{\pi,\eta}|\widehat\tau_j-\tau_j|^2,
 \label{eq:intro-risk}
\end{equation}
where the expectation is over the prior and the noise, and the infimum
is over all measurable functions $\widehat\tau_j$ of the full data
$Y=(Y_1,\ldots,Y_n)$, including estimators that use the prior. In
Section~\ref{sec:known-count-adaptation} the prior is a law of the
physical configuration $(\tau_0,s,c)$, which includes the center. There
$R_{\pi,j}$ is defined by the same infimum~\eqref{eq:intro-risk} with
the expectation under this law.
\end{definition}

Since the loss is quadratic and $\tau_j$ is bounded, the infimum
in~\eqref{eq:intro-risk} is attained by the posterior mean
$\mathbb E_\pi[\tau_j\mid Y]$. Thus $R_{\pi,j}$ is the minimum mean-square
error (MMSE) of $\tau_j$.

\subsection{Response Condition}
\label{sec:obs-model}
The response condition is stated through the Taylor coefficients at
the cluster center $\tau_0$ and their design Gram matrices,
\[
 g_r=\frac{1}{r!}\,\partial_\tau^r\psi(\cdot,\tau_0)\quad(0\le r\le2K),\qquad
 \Gamma_m(\nu)=\Bigl(\int\overline{g_a}\,g_b\dd\nu\Bigr)_{0\le a,b<m}.
\]

\begin{assumption}[Response regularity]
\label{ass:sampling}
$\nu$ is a probability measure on $\Omega$, and for some $r_0>0$,
$A_\psi<\infty$ and $\lambda_0>0$:
\begin{enumerate}
\item the curve $\tau\mapsto\psi(\cdot,\tau)$ is $C^{2K}$ in
      $L^2(\nu)$ on $\mathcal J=(\tau_0-r_0,\tau_0+r_0)$, and
      $\|\partial_\tau^r\psi(\cdot,\tau)\|_\nu\le A_\psi$
      for all $\tau\in\mathcal J$ and $0\le r\le2K$;
\item $\Gamma_{2K}(\nu)\succeq\lambda_0I_{2K}$.
\end{enumerate}
A family of designs satisfies the assumption \emph{uniformly} if one
choice of $r_0,A_\psi,\lambda_0$ serves for all its members. In the
risk statements below the family is the design sequence
$(\nu_n)_{n\ge1}$, so the constants do not change as $\nu_n$ changes
with $n$.
\end{assumption}

\begin{remark}[Scope of Assumption~\ref{ass:sampling}]
Part~1 asks for finitely many uniformly bounded derivatives. A bounded
holomorphic extension to a fixed complex neighborhood is sufficient,
by Cauchy's estimates, but is not required. Part~2 requires that no
nontrivial combination of the response and its first $2K-1$ location
derivatives at $\tau_0$ vanish on the design. For line spectra take
$\psi(w,\tau)=e^{i\pi w\tau}$ on $\Omega=[-w_{\max},w_{\max}]$. Part~1
holds with $A_\psi=\max\{1,(\pi w_{\max})^{2K}\}$. The coefficients
$g_r(w)=(i\pi w)^re^{i\pi\tau_0w}/r!$ give
$\Gamma_{2K}=\overline{\Theta}H\Theta$ with the moment matrix
$H=(\int w^{a+b}\dd\nu)_{a,b<2K}$ and
$\Theta=\operatorname{diag}((i\pi)^r/r!)_{r<2K}$. Hence part~2 holds
uniformly exactly when $H\succeq\lambda_0'I$ uniformly over the designs.
Consider the centered grid $w_k=k/n$ with odd $n\ge2K$. Its empirical
designs have at least $2K$ support points and converge to the uniform
measure on $[-1/2,1/2]$, whose moment matrix is positive definite.
Hence the assumption holds uniformly. This covers sampled sinusoids, uniform
linear arrays and delays on equally spaced subcarriers.
\end{remark}

For the Gaussian response
$\psi(w,\tau)=\exp[-(w-\tau)^2/(2\varsigma^2)]$,
$g_r(w)=\psi(w,\tau_0)p_r(w-\tau_0)$ with a polynomial $p_r$ of degree
$r$. A combination of $g_0,\ldots,g_{2K-1}$ that vanishes at $2K$
distinct sampling points is therefore zero. Hence each such fixed design
satisfies Assumption~\ref{ass:sampling}. A varying design family still
needs a common Gram bound.

Assumption~\ref{ass:sampling} concerns one center. The extensions in
Section~\ref{sec:known-count-adaptation} use a uniform Gram condition
on a fixed search window when the center, separation, or source count
is unknown. Lemmas~\ref{scope:gw-bridge} and~\ref{scope:regular-block}
give the corresponding projection and a sampling design without
aliases on that window.

\subsection{Efficient Fisher Information}
\label{lc:efficient-info}
The Fisher information of $\theta$ in~\eqref{eq:model} is
\begin{equation}
\label{eq:fisher}
 \mathcal I(\theta)_{ab}=\frac{2n}{\sigma^2}\langle\chi_a,\chi_b\rangle_{\nu_n},
 \qquad \chi_a=\partial_{\theta_a}\mu_\theta .
\end{equation}
For nonsingular $\mathcal I(\theta)$, the information on $\theta_a$ with all other
coordinates treated as nuisance parameters is
\begin{equation}
\label{eq:profiled}
 \bigl[(\mathcal I^{-1})_{aa}\bigr]^{-1}
 =\frac{2n}{\sigma^2}\dist_{\nu_n}^2\bigl(\chi_a,\spanR\{\chi_b:b\ne a\}\bigr).
\end{equation}
This Schur complement is the efficient information
\citep{kay1993estimation,hochwald1994concentrated,vandervaart1998asymptotic,murphy2000profile},
also called equivalent Fisher information in localization
\citep{shen2010fundamental}. The quantity $J_j$ below removes the noise
and amplitude factors. Set
\[
 E_j=\psi(\cdot,\tau_j),\qquad E'_j=\partial_\tau\psi(\cdot,\tau_j).
\]
The tangents are $\partial_{a_k}\mu=q_kE_k$, $\partial_{\varphi_k}\mu=ia_kq_kE_k$ and
$\partial_{\tau_k}\mu=a_kq_kE_k'$, and the nuisance tangent space of
$\tau_j$ is
\begin{equation}
\label{eq:tangentspace}
 \mathcal T_j(\varphi)=\spanC\{E_1,\ldots,E_K\}
 +\spanR\{q_kE_k':k\ne j\}.
\end{equation}

\begin{lemma}[Efficient node information]
\label{lem:profiled}
Let $\mathcal I(\theta)$ be nonsingular; under Assumption~\ref{ass:sampling} with
$\nu=\nu_n$, nonsingularity holds for all small $\eps>0$. Then the efficient Fisher
information for $\tau_j$ is $2a_j^2J_j(\eps,\varphi)/\zeta_n^2$, where
\begin{equation}
\label{eq:Jdef}
 J_j(\eps,\varphi)=
 \min_{z\in\mathbb C^K,\;b\in\mathbb R^K,\;b_j=1}
 \Bigl\|\sum_{k=1}^Kz_kE_k+\sum_{k=1}^Kb_kq_kE_k'\Bigr\|_{\nu_n}^2 ,
\end{equation}
and every unbiased estimator of $\tau_j$ has variance at least
$\zeta_n^2/(2a_j^2J_j(\eps,\varphi))$.
\end{lemma}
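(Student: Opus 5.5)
The plan has three parts: prove nonsingularity, compute the Schur complement in \eqref{eq:profiled} for the coordinate $\theta_a=\tau_j$, and finish with the standard Cram\'er--Rao bound for one coordinate of a vector parameter.

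\emph{Reduction to $J_j$.} With $\chi_{\tau_j}=a_jq_jE'_j$, the real span of the remaining tangents is
\[
\spanR\{q_kE_k,\ ia_kq_kE_k,\ a_kq_kE'_k:\ k\ne j\}\cup\{q_jE_j,\ ia_jq_jE_j\}.
\]
Because $a_k>0$ and $|q_k|=1$, the pairs $q_kE_k$ and $ia_kq_kE_k$ span over $\mathbb R$ exactly $\spanC\{E_k\}$. The vectors $a_kq_kE'_k$ span $\spanR\{q_kE'_k\}$. Hence this span equals $\mathcal T_j(\varphi)$ in \eqref{eq:tangentspace}. By \eqref{eq:profiled}, the efficient information is
\[
\frac{2n}{\sigma^2}\dist^2_{\nu_n}\bigl(a_jq_jE'_j,\mathcal T_j\bigr)
=\frac{2a_j^2}{\zeta_n^2}\dist^2_{\nu_n}\bigl(q_jE'_j,\mathcal T_j\bigr).
\]
Writing a general element of $\mathcal T_j$ as $-\sum z_kE_k-\sum_{k\ne j}b_kq_kE'_k$ and setting $b_j=1$, this squared distance is exactly the minimum in \eqref{eq:Jdef}. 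The minimum is attained because it is a finite-dimensional least-squares problem over a closed subspace. I also need that the real inner product $\langle\cdot,\cdot\rangle_{\nu_n}$ produces the squared norm $\|\cdot\|^2_{\nu_n}$, which holds since $\Ree\int|h|^2=\|h\|^2$. The formula \eqref{eq:profiled} itself is the standard block-inverse (Schur complement) identity for a Gram matrix, so I take it as given.

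\emph{Nonsingularity for small $\eps$.} This is the main obstacle. $\mathcal I(\theta)$ is the real Gram matrix of the $3K$ tangents, so it is nonsingular if and only if the vectors $E_k$ and $E'_k$ are linearly independent over $\mathbb C$. Indeed, real independence of $\{q_kE_k,iq_kE_k,q_kE'_k\}$ follows from complex independence of $\{E_k,E'_k\}$: a real relation among them is a complex relation whose $E'_k$-coefficients are real multiples of $q_k$, so all coefficients vanish.

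To prove complex independence I will Taylor-expand at $\tau_0$ using part~1 of Assumption~\ref{ass:sampling}. This gives
\[
E_k=\sum_{r<2K}(\eps s_k)^r g_r+O(\eps^{2K}),\qquad
E'_k=\sum_{1\le r<2K}r(\eps s_k)^{r-1}g_r+O(\eps^{2K-1}),
\]
so that $[E_1,\dots,E_K,E'_1,\dots,E'_K]=[g_0,\dots,g_{2K-1}]\,V(\eps)+\text{remainder}$. Here $V$ is the confluent Vandermonde matrix at the nodes $\eps s_k$. Its determinant is a nonzero multiple of $\prod_{k<l}(\eps(s_l-s_k))^4$, which is of order $\eps^{K(2K-1)-K}\cdot$const after accounting for the derivative columns.

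The cleanest route is to rescale. Multiply row $r$ by $\eps^{-r}$, equivalently apply the column operation $\operatorname{diag}(\eps^r)$ on the $g$ side and multiply the derivative columns by $\eps$. The rescaled matrix then converges to the confluent Vandermonde matrix $V(s)$, which is nonsingular because $s_{k+1}-s_k\ge\rho$. The rescaled remainders are $O(\eps)$ uniformly over $\mathcal X_K$, by compactness and the bound $A_\psi$.

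Combining this with $\Gamma_{2K}(\nu_n)\succeq\lambda_0 I$, the smallest singular value of the rescaled tangent system is bounded below by $c\sqrt{\lambda_0}\,\sigma_{\min}(V(s))-O(\eps)>0$ for all $\eps<\eps_0$. The threshold $\eps_0$ depends only on $K,T,\rho,A_\psi,\lambda_0$. Rescaling by nonzero factors preserves independence, which gives nonsingularity.

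\emph{Cram\'er--Rao.} The model is a Gaussian shift with smooth mean in $\theta$ and nonsingular information. The multiparameter Cram\'er--Rao inequality therefore gives
\[
\operatorname{Var}(\widehat\tau_j)\ge(\mathcal I^{-1})_{\tau_j\tau_j}=\frac{\zeta_n^2}{2a_j^2J_j},
\]
using the identity established in the reduction step.
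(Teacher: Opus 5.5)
Your proposal is correct and follows the paper's proof: the Taylor-expansion and rescaled confluent-Vandermonde argument for full tangent rank is the doubled-node-list case of Lemma~\ref{lem:bridge} that the paper invokes, and the identification of the real nuisance span with $\mathcal T_j(\varphi)$, the Schur-complement formula \eqref{eq:profiled} and the Cram\'er--Rao step are the same as in the paper. One small correction: only the ``if'' direction of your claimed equivalence, that complex independence of $E_1,\ldots,E_K,E_1',\ldots,E_K'$ implies nonsingularity of $\mathcal I(\theta)$, holds in general, but that is the direction you prove and use.
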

\begin{proof}
Lemma~\ref{lem:bridge} in Section~\ref{app:bridge} for the doubled node list and confluent
Vandermonde invertibility give full tangent rank. Amplitude and phase
variations span $\mathbb CE_k$, whereas location variations span
$\mathbb Rq_kE_k'$. With the nuisance space $\mathcal T_j$ in
\eqref{eq:tangentspace}, the Schur-complement formula gives
$[(\mathcal I^{-1})_{\tau_j\tau_j}]^{-1}
=2\zeta_n^{-2}\dist_{\nu_n}^2(a_jq_jE_j',\mathcal T_j)
=2a_j^2J_j(\eps,\varphi)/\zeta_n^2$. The variance bound is the Cram\'er--Rao inequality
\citep{kay1993estimation}.
\end{proof}

Equivalently, $J_j(\eps,\varphi)=\dist_{\nu_n}^2\bigl(q_jE_j',\mathcal T_j(\varphi)\bigr)$;
we write $J_j(\eps,\varphi;\nu)$ when the design differs from $\nu_n$.
For $x=(s,c)\in\mathcal X_K$ write $J_j(x)=|c_j|^2J_j(\eps,\arg c;\nu_n)$
and $J_{\min}(x)=\min_jJ_j(x)$. The efficient Fisher information is then
$2J_j(x)/\zeta_n^2$. The factor $|c_j|^2$ is included when the argument
is $x$ and omitted when it is $(\eps,\varphi)$.

Since $R_{\pi,j}\lesssim\eps^2$, the natural benchmark built from the
information is, up to constant factors, the capped Cram\'er--Rao bound
$\mathbb E_\pi\min\{\eps^2,\zeta_n^2/J_j(x)\}$.

\section{Main Results}
\label{sec:main-results}

We now state the results on the Bayes localization risk, that is, the
minimum mean-square error of each location. We first determine the
risk under uniform phases, and then show how changing the prior near
poorly identified configurations changes its exponents. In both cases
the same projected least-squares estimator attains the bounds without
using the prior density.

\subsection{Uniform Phases}
\label{sec:randomphase}

We begin with the baseline prior of Section~\ref{sec:class-estimator}.
\label{sec:projected-estimator}%
Let $\Pi$ be the complex orthogonal projection in $L^2(\nu_n)$ onto
$\spanC\{g_0,\ldots,g_{2K-1}\}$. This projection reduces the data to
$2K$ complex coordinates and, by Lemma~\ref{lem:moment-comparison},
changes distances between noiseless signals by at most a constant
factor. We investigate the projected global least-squares estimator
\begin{equation}
\label{eq:estimator}
 \widehat x=(\widehat s,\widehat c)\in\argmin_{y\in\mathcal X_K}\|\Pi(Y-\mu_y)\|_{\nu_n}^2,
 \qquad \widehat\tau_j=\tau_0+\eps\widehat s_j ,
\end{equation}
a measurable global minimizer over nodes and complex coefficients
jointly, which exists by \citet{brown1973measurable}. The following
theorem quantifies its risk and shows that no estimator does better in
order.

\begin{theorem}[Bayes risk under uniform phases]
\label{thm:randomphase}
Fix $K\ge2$, suppose that the designs $(\nu_n)_{n\ge1}$ satisfy
Assumption~\ref{ass:sampling} uniformly, and let $\pi$ be the joint prior
of Section~\ref{sec:class-estimator}. Let $U$ be a closed ball in the
interior of the node--modulus class of~\eqref{eq:class} and $c_f>0$ such
that $f\ge c_f$ almost everywhere on $U$. There is $\eps_0>0$ such that,
for every $j\le K$, every $0<\eps\le\eps_0$, every $n$ and every
$\zeta_n>0$,
\begin{equation}
\label{eq:random-phase-risk}
 \begin{split}
 R_{\pi,j}(\eps,\zeta_n)&\asymp\Psi_K(\eps,\zeta_n),\\
 \Psi_K(\eps,\zeta)&=
 \begin{cases}
 \min\{\eps^2,\zeta^2\eps^{-(3K-3)}\},&K\text{ odd},\\
 \min\{\eps^2,\zeta\eps^{3-3K/2},
                  \zeta^2\eps^{-(3K-3)}\},&K\text{ even}.
 \end{cases}
 \end{split}
\end{equation}
The projected global least-squares estimator~\eqref{eq:estimator},
which uses neither the phases nor the prior density, attains this order.
The threshold $\eps_0$ and the constant in the upper bound depend only
on $K$, $\mathcal S_0$, $\rho$, $\underline a$, $\overline a$ and the
constants $r_0,A_\psi,\lambda_0$ of Assumption~\ref{ass:sampling}. The
upper bound holds for every law of the nodes and moduli on the class
when the phases are independent and uniform. The constant in the lower
bound depends in addition only on $U$ and $c_f$. Hence the lower bound
holds uniformly over all node--modulus densities with $f\ge c_f$ almost
everywhere on the fixed ball $U$. No constant depends on
$\eps,n,\sigma,\zeta_n$ or otherwise on $f$.
\end{theorem}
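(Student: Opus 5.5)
The approach is to reduce everything to the local behaviour of the efficient information $J_j(x)$ and then show that the Bayes risk matches the capped Cram\'er--Rao bound $\mathbb E_\pi\min\{\eps^2,\zeta_n^2/J_j(x)\}$. Finally I would evaluate that average under uniform phases.

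\emph{Step 1 (moment reduction).} After projecting with $\Pi$, Taylor-expand $\psi(\cdot,\tau_0+\eps s_k)$ to order $2K-1$. This gives $\Pi\mu_x=\sum_{r<2K}\eps^r m_r(x)g_r+O(\eps^{2K})$ with complex moments $m_r(x)=\sum_kc_ks_k^r$. By Assumption~\ref{ass:sampling} and Lemma~\ref{lem:moment-comparison}, $\|\Pi(\mu_x-\mu_y)\|_{\nu_n}$ is comparable to the weighted moment distance $\bigl(\sum_r\eps^{2r}|m_r(x)-m_r(y)|^2\bigr)^{1/2}$, up to higher-order terms. So both information and identifiability become questions about the map $x\mapsto(\eps^rm_r(x))_{r<2K}$, which has $3K$ real unknowns and is read off in increasing powers of $\eps$.

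\emph{Step 2 (information asymptotics).} I would prove, uniformly on $\mathcal X_K$ for $\eps\le\eps_0$, that
\[
 J_j(x)\asymp\eps^{3K-3}\quad(K\text{ odd}),\qquad
 J_j(x)\asymp\eps^{3K-4}\Lambda_j(s,c)^2+\eps^{3K-3}\quad(K\text{ even}).
\]
Here $\Lambda_j$ is a real-analytic function of the relative phases whose zero set is a smooth hypersurface, transversal in at least one phase coordinate at points of $U$. Heuristically, the $3K$ real unknowns fill the moments of orders $0,\dots,\lfloor 3K/2\rfloor$. When $K$ is even, the last moment used at order $3K/2-1$ is complex, but only one real component is needed. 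Which component that is depends on the phases, and $\Lambda_j$ measures the defect. When $K$ is odd, no such half-filled level exists. To make this rigorous I would write $J_j$ as a ratio of Gram determinants of the tangents $E_k$, $q_kE'_k$, expanded in confluent-Vandermonde form. With $\Lambda_j$ in hand, the even case is elementary: under uniform phases $\Pr(|\Lambda_j|<t)\asymp t$, and the capped bound evaluates to $\min\{\eps^2,\zeta\eps^{3-3K/2},\zeta^2\eps^{-(3K-3)}\}$. The threshold is $t_0=\zeta\eps^{-(3K-2)/2}$, with the two error terms $\eps^2t_0$ and $\zeta^2\eps^{-(3K-4)}/t_0$.

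\emph{Step 3 (upper bound).} I would prove a global inverse bound: for all $x,y\in\mathcal X_K$,
\[
 \eps|s_j(x)-s_j(y)|\lesssim\min\bigl\{\eps,\ \|\Pi(\mu_x-\mu_y)\|_{\nu_n}/\sqrt{J_j(x)}\bigr\}.
\]
The argument is a stratified inversion of the moment map: a Gr\"obner/resultant-type lower bound on the moment distance in the normalized variables, near and away from the diagonal. Combined with $\|\Pi(\mu_{\widehat x}-\mu_x)\|_{\nu_n}\le2\|\Pi\eta\|_{\nu_n}$, where $\|\Pi\eta\|^2$ is a $\chi^2_{4K}$ variable scaled by $\zeta_n^2$, this gives $\mathbb E|\widehat\tau_j-\tau_j|^2\lesssim\min\{\eps^2,\zeta_n^2/J_j(x)\}$ pointwise. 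The pointwise bound uses neither the phases nor $f$. Averaging only over uniform phases yields the upper bound for any node--modulus law. I expect this step to be the main obstacle, because the bound must hold uniformly as $\Lambda_j\to0$ and for distant $y$. The first thing I would try is to show that the degenerate direction is lost only at order $\eps^{3K/2-1}$ and is recovered at the next order with the $\eps^{3K-3}$ floor.

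\emph{Step 4 (lower bound).} Using $f\ge c_f$ on $U$, I would build paths $y_t$ through $x$ that fix the moments $m_0,\dots$ up to the last identifiable level. Along such a path only the $\Lambda$-defective component, or for odd $K$ the first free order, changes, so the signal distance is about $|t|\sqrt{J_j(x)}$ while $s_j$ moves by about $t$. These paths are smooth in $(s,a,\varphi)$ with bounded Jacobians, so they push the prior restricted to $U$ to measures with density at most a constant multiple of $\pi$. A Le~Cam two-point bound then gives a lower bound of order $\mathbb E_\pi\min\{\eps^2,\zeta_n^2/J_j(x)\}$ restricted to $U$; equivalently, one can apply Theorem~\ref{risk:transfer} with $S=J_j$. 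The same phase computation as in Step~2, carried out on $U$, shows that this restricted average is still of order $\Psi_K$, with constants depending only on $U$ and $c_f$.
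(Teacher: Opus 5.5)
Your overall architecture matches the paper's: a moment comparison, a global inverse bound fed into minimum distance, and fixed-moment paths fed into two-point testing. However, Step~2 is false as stated, and your upper bound rests on it.

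The efficient information does not have the form you claim uniformly in the phases, because the degenerate phase set is stratified. At aligned phases (all $\varphi_k$ equal modulo $\pi$), the sums in $Q_{\eps,q}$ of Lemma~\ref{lem:weighted} are real. The $K-1$ free real weights then cancel orders $0,\dots,K-2$, so $J_j\asymp\eps^{4K-4}$. For $K=3$ this is $\eps^8\ll\eps^6$ (compare \eqref{k3p:information} at $u=v=0$), so ``$J_j\asymp\eps^{3K-3}$ uniformly'' fails for odd $K$. For $K=4$ it is $\eps^{12}$, below the floor of your even formula. That floor is itself wrong: it should be $\eps^{3K-2}$ (for $K=2$, $J_1\asymp\eps^2(\vartheta^2+\eps^2)$). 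With $\eps^{3K-3}$, your phase average would give the low-noise term $\zeta_n^2\eps^{7/2-3K}$ instead of $\zeta_n^2\eps^{-(3K-3)}$. The odd-$K$ order survives only because the deeper strata carry little phase mass, and that is exactly what must be proved.

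The missing ingredient is a phase average that never has to locate the degenerate set. The paper obtains it from the information ellipse. For fixed nuisance phases, $J_j\asymp a_j^2\eps^{2K-2}d_j^2\,u_{\varphi_j}^T\mathbf Ru_{\varphi_j}$. The uniform target-phase average of $(u_{\varphi_j}^T\mathbf Ru_{\varphi_j})^{-1}$ is exactly $(\det\mathbf R)^{-1/2}$, and $\sqrt{\det\mathbf R}=\eps^{K-1}|d_j|\{1+O(\eps)\}$ uniformly in the nuisance phases. Together these give $\mathbb E_\varphi J_j^{-1}\asymp\eps^{-(3K-3)}$ for every $K$. That closes odd $K$ via $\min\{\eps^2,\zeta_n^2/J_{\min}\}\le\sum_a\min\{\eps^2,\zeta_n^2/J_a\}$ and Jensen. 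For even $K$ you need two further bounds: the capped target-phase bound $C\min\{L,t/\sqrt{\det\mathbf R},\sqrt{Lt/\lambda_{\max}(\mathbf R)}\}$, and the estimate $\mathbb E_{\varphi_{-j}}\lambda_{\max}(\mathbf R)^{-1/2}\le C\eps^{1-K/2}$ of Lemma~\ref{geom:ellipse}(2), proved by an area formula and a Gaussian small-ball bound. The latter controls exactly the nuisance configurations where the whole ellipse collapses, which a single hypersurface $\Lambda_j=0$ does not see.

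There are also two secondary issues. In Step~3 the mechanism you are missing is the annihilating polynomial. With $p(z)=\prod_k(z-s'_k)$, the moment recurrence gives $\sum_{r<K}\eps^{2r}|\sum_kc_kp(s_k)s_k^r|^2\lesssim\eps^{-2K}\delta^2$. The real test vector $b_k\propto p(s_k)/\omega'(s_k)$ in Lemma~\ref{lem:weighted} then yields $|p(s_j)|\lesssim\delta/(\eps\sqrt{J_j(x)})$. Matching candidate roots to true nodes in order needs this at every node, so the global bound holds with $J_{\min}$ rather than $J_j$, and one then sums over targets.

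In Step~4, pushing $\pi|_U$ along the paths gives an occupation bound $\le B\pi$ only if $f$ is also bounded above on $U$. The theorem does not assume this: its lower constant depends on $f$ only through $c_f$. Use Lebesgue measure on a chart box as the label measure and dominate it by $c_f^{-1}\pi$. The paths must also start at specific reference phases (zero on a set $A\ni j$ of size $\lfloor K/2\rfloor+1$, $\pi/2$ elsewhere) so that the fiber direction moves $\tau_j$ at rate $\gtrsim\eps$. For even $K$ they must run in a chart where the defect $\mathfrak b$ is a function of the fixed moments. Then $\mathfrak b$ stays constant along each path and $d_y\lesssim\eps^{3K/2-1}(\mathfrak b^2+\eps^2)^{1/2}\asymp\eps\sqrt{J_j}$.
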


\begin{proof}See Section~\ref{sec:randomphase-proof} and
Appendices~\ref{geom:source-lower} and~\ref{sec:weighted-phase-details}. The upper bound is
Lemma~\ref{cor:conditional-mse} integrated with the information ellipse.
The lower bound is Lemma~\ref{geom:fiber-paths}(i)--(ii)
with~\eqref{spine:two-integrals}.\end{proof}

\begin{remark}[Unknown center, separation and source count]
While $K$, $\tau_0$ and $\eps$ are taken as known in
Theorem~\ref{thm:randomphase}, they are often unknown in practice.
Under additional sampling and coefficient conditions, the orders below
the cap remain attainable without knowing the center or the separation.
Section~\ref{sec:known-count-adaptation} also treats an unknown source
count bounded by a known $K_{\max}$. That case needs a further noise
threshold and charges an explicit penalty for count errors.
\end{remark}

\begin{figure}[t]
\centering
\includegraphics[width=\textwidth]{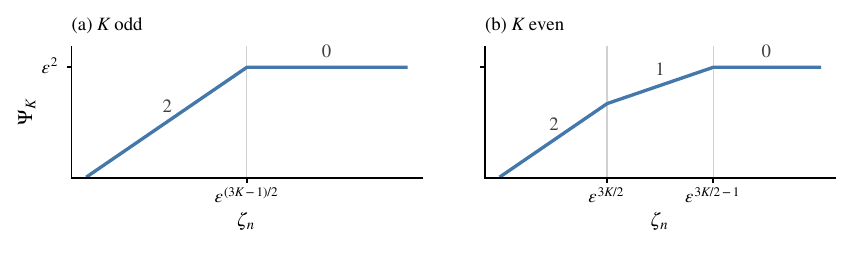}
\caption{Orders of the Bayes risk $\Psi_K(\eps,\zeta_n)$ in
Theorem~\ref{thm:randomphase} on logarithmic axes. Numbers on the
segments are slopes in $\log\zeta_n$; the plateau is the cap $\eps^2$.}
\label{fig:risk-regimes}
\end{figure}

\subsubsection{Implications}
Theorem~\ref{thm:randomphase} determines the Bayes risk of every
location for all small separations and all noise levels. We now single
out several of its implications.
\begin{itemize}
\item \emph{A parity law.} Write $\zeta_{*,K}(\eps)=\eps^{(3K-1)/2}$ for
odd $K$ and $\zeta_{*,K}(\eps)=\eps^{3K/2-1}$ for even $K$. For
$\eps\le1$ this is the noise level at which $\Psi_K$ reaches the cap
$\eps^2$. Below this level the risk of an odd number of sources is
proportional to $\zeta_n^2$. An even number of sources has an
intermediate range in which the mean-square error is linear in the noise
standard deviation (Figure~\ref{fig:risk-regimes}). For two sources and
a fixed per-observation noise level $\sigma$,
$R_{\pi,j}\asymp\min\{\eps^2,\sigma n^{-1/2},\sigma^2n^{-1}\eps^{-3}\}$.
In the intermediate range, halving the mean-square error requires four
times as many observations. In the low-noise range it requires twice as
many.
\item \emph{Averaging the Cram\'er--Rao bound before capping.} Recall from
Section~\ref{lc:efficient-info} that the efficient Fisher
information for $\tau_j$ is $2J_j(x)/\zeta_n^2$. For odd $K$,
Theorem~\ref{thm:randomphase} and
Proposition~\ref{thm:avg-inverse-info} give
$R_{\pi,j}\asymp\min\{\eps^2,\zeta_n^2\mathbb E_\pi[1/J_j]\}$. Thus
averaging the Cram\'er--Rao bound before capping it has the correct
order. For even $K$ the same two results show that this average exceeds
the risk in the intermediate range. The factor is at most
$\eps^{-1/2}$ and is attained at $\zeta_n=\eps^{(3K-1)/2}$.
\item \emph{Where the intermediate range comes from.} For $K=2$, by
Lemma~\ref{lem:weighted}, uniformly in the phases and nodes,
\begin{equation}
 J_1(\eps,\varphi)\asymp\eps^2(\vartheta^2+\eps^2),
 \qquad
 \vartheta=\operatorname{dist}(\varphi_2-\varphi_1,\pi\mathbb Z).
 \label{eq:intro-info}
\end{equation}
If $\eps^3\ll\zeta_n\ll\eps^2$, the local Cram\'er--Rao bound reaches
$\eps^2$ on the arc $\vartheta\lesssim\zeta_n\eps^{-2}$, and the
probability of this arc times $\eps^2$ gives the term $\zeta_n$.
Averaging the local bound first gives $\zeta_n^2\eps^{-3}$. Capping
this average afterwards therefore misses the intermediate range. In
general, write $\kappa$ for the \emph{lower-tail exponent} of the
information under the prior. It means that the prior probability that
the information falls below a level $s$ behaves as $s^\kappa$
(Section~\ref{crit:kappa} gives the precise
definition~\eqref{crit:lower-tail}). For the information ellipse of
Section~\ref{sec:info-def} and even $K$, $\kappa=1/2$ over a nonempty
range conditionally on the nuisance phases, and
Lemma~\ref{geom:ellipse}(2) gives the term linear in $\zeta_n$. For odd
$K$, Section~\ref{sec:randomphase-proof} shows that averaging the
inverse information before capping already gives the risk.
\end{itemize}

\subsubsection{Dependent and Nonuniform Phase Laws}
\label{lc:phase-laws}
The uniform phase law can be replaced by any phase density that is
bounded above and bounded below near one reference configuration.

\begin{proposition}[Dependent and nonuniform phase laws]
\label{prop:bounded-phase-prior}
\label{cor:bounded-phase-prior}
\label{sec:bounded-phase-law}
Keep a node--modulus density $f(s,a)$ on the class~\eqref{eq:class},
let $U_K$ be the uniform probability measure on the phase torus, and
replace the uniform phase law by $h(\varphi\mid s,a)\,dU_K(\varphi)$, where
$h$ is measurable and integrates to one. The phases may be dependent,
and their law may depend on $(s,a)$. For a target $j$, take an interior
node--modulus point with the reference phases of
Appendix~\ref{geom:source-lower}. These phases are zero on a set
$A\ni j$ of size $\lfloor K/2\rfloor+1$ and $\pi/2$ elsewhere. Let
$\mathcal N_j$ be a fixed open product neighborhood of this point whose
closure lies in the interior of the class. Assume
\[
 h(\varphi\mid s,a)\le M_h\ \text{a.e.},\qquad
 f(s,a)h(\varphi\mid s,a)\ge f_j^->0\ \text{a.e. on }\mathcal N_j .
\]
Here $f$, $h$ and $M_h$ are independent of $\eps,n,\sigma$. Then
\[
 R_{\pi,j}(\eps,\zeta_n)\asymp\Psi_K(\eps,\zeta_n),
 \qquad 0<\eps\le\eps_0,\quad \zeta_n>0,
\]
attained by~\eqref{eq:estimator}; besides the
quantities of Theorem~\ref{thm:randomphase}, $\eps_0$ and the constants
depend only on $M_h$, $\mathcal N_j$ and $f_j^-$.
\end{proposition}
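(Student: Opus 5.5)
The proof reduces both directions to Theorem~\ref{thm:randomphase} by a change-of-measure argument on the phase torus. Neither half of that theorem's proof uses uniformity of the phases in an essential way. The upper half only needs the phase law to be dominated by a multiple of $U_K$. The lower half only needs the prior to keep mass near a reference configuration.

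\emph{Upper bound.} The estimator~\eqref{eq:estimator} does not depend on the prior. Theorem~\ref{thm:randomphase} states that its risk bound holds for every law of the nodes and moduli when the phases are independent and uniform. I would first record the underlying conditional statement used in its proof. For each fixed $(s,a)$, the phase-averaged risk satisfies
\[
\int \mathbb E_\eta|\widehat\tau_j-\tau_j|^2\,dU_K(\varphi)\lesssim\Psi_K(\eps,\zeta_n),
\]
uniformly in $(s,a)$ in the class. This is Lemma~\ref{cor:conditional-mse} integrated over the information ellipse, with the nodes and moduli held fixed. Since the integrand is nonnegative and $h\le M_h$, we get
\[
\int\mathbb E_\eta|\widehat\tau_j-\tau_j|^2\,h(\varphi\mid s,a)\,dU_K(\varphi)\le M_h\,C\,\Psi_K(\eps,\zeta_n).
\]
Integrating against $f(s,a)$ gives the upper bound, with a constant depending additionally only on $M_h$. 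Dependence among the phases and on $(s,a)$ is harmless here because the bound is pointwise in $(s,a)$.

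\emph{Lower bound.} I would revisit the two-point/path construction of Lemma~\ref{geom:fiber-paths}(i)--(ii) in Appendix~\ref{geom:source-lower}. The perturbation paths there start at configurations near the reference point: phases zero on $A$ and $\pi/2$ off $A$, so that the cancellation responsible for the degenerate directions of the information ellipse occurs. The paths keep low-order moments fixed and move only within a small neighborhood. The key point is that the lower bound~\eqref{spine:two-integrals} only integrates over pairs of configurations lying in such a neighborhood, and it uses the prior only through a comparison of the form ``the path measure charges a set at most a constant times its prior mass''. I would choose the neighborhood inside $\mathcal N_j$, shrinking the path lengths by a constant if necessary, which may force $\eps_0$ smaller. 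On $\mathcal N_j$ the joint density $fh$ with respect to Lebesgue measure times $U_K$ is at least $f_j^-$. The uniform-phase proof used a density at least $c_f/(2\pi)^K$ on $U\times\mathbb T^K$. Replacing that constant by $f_j^-$, the same two-point bound yields $R_{\pi,j}\gtrsim\Psi_K$, with constants depending on $\mathcal N_j$ and $f_j^-$.

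\emph{Main obstacle.} The delicate step is checking that the lower bound in Theorem~\ref{thm:randomphase} is genuinely local in the phases. For even $K$ the intermediate term $\zeta_n\eps^{3-3K/2}$ comes from the $\kappa=1/2$ lower tail of the information conditionally on the nuisance phases (Lemma~\ref{geom:ellipse}(2)). That is, it comes from the prior mass of a thin slab around the degenerate phase set. I must confirm that this slab, intersected with $\mathcal N_j$, still has mass of the same order: at least $f_j^-$ times the uniform measure of the slab within a fixed product box. I must also confirm that the paths realizing all three regimes ($\eps^2$ cap, linear, quadratic) can be confined to $\mathcal N_j$ for all $\zeta_n$. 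I would handle this by noting that the degenerate phase set passes through the reference configuration transversally to the nuisance directions. Its uniform measure restricted to any fixed product box around the reference point is then comparable to its global measure, uniformly in $\eps\le\eps_0$.
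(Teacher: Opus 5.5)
Your proposal is correct and follows the paper's proof: the upper bound comes from the domination $\pi\le M_h\pi_0$ (with $\pi_0=f\,ds\,da\,dU_K$) together with the fact that the uniform-phase upper bound holds for every node--modulus law, and the lower bound comes from running the paths of Appendix~\ref{geom:source-lower} in a box inside $\mathcal N_j$, where $fh\ge f_j^-$ and the chart Jacobian bound the occupation measure by a constant times $\pi$. The ``obstacle'' you flag is already settled by construction, because the even-$K$ lower bound uses the local $\beta=0$ paths of Lemma~\ref{geom:fiber-paths}(ii), which lie in a fixed box for all $\zeta_n$; Lemma~\ref{geom:ellipse}(2) enters only the upper bound, so you need no comparison between local and global slab measures.
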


\begin{proof}
We have $\pi\le M_h\pi_0$ with
$\pi_0=f\,ds\,da\,dU_K$, and the upper bound of
Theorem~\ref{thm:randomphase} holds for every node--modulus law. For the
lower bound, take the paths of Appendix~\ref{geom:source-lower} about
this reference point, in a box inside $\mathcal N_j$. On $\mathcal N_j$
the density of $\pi$ in the coordinates $(s,\Ree c,\Imm c)$ is
$fh/\{(2\pi)^K\prod_la_l\}$. It is at least $f_j^-/(2\pi\overline a)^K$.
Hence the chart Jacobian bounds the occupation measure of the paths by a
constant times $\pi$, and~\eqref{spine:two-integrals} applies.
\end{proof}

Neither independence nor a global lower bound is needed. The condition
holds under the hypothesis of Theorem~\ref{thm:randomphase} when the
node--modulus projection of $\mathcal N_j$ lies in $U$, and for phase
densities $h\propto e^{V}$ with $|V|\le V_0$ (von Mises laws of bounded
concentration and their correlated versions) whenever $f$ is bounded
below on that projection.

\subsubsection{Comparisons With Prior Works}
\label{sec:comparison}
\label{sec:intro-known}
Phase effects on the Cram\'er--Rao bound of closely spaced sources have
been studied before, mostly at a fixed configuration. We compare
Theorem~\ref{thm:randomphase} with the closest results.
\begin{itemize}
\item For a real sinusoid with a uniform phase,
\citet[Section~II-B1]{swingler1993frequency} observed that averaging
the Fisher information before inversion returns the isolated-source
bound. He instead used the phase-averaged Cram\'er--Rao bound,
proportional to the inverse cube of the separation, and deduced an
inverse-fifth-power signal-to-noise requirement. He found the same
inverse-cube law numerically for two complex sinusoids in complex
noise (Section~II-B2). These have the separation dependence of the
low-noise term $\zeta_n^2\eps^{-3}$ of Theorem~\ref{thm:randomphase}
for $K=2$, and were obtained without proof and without the
Bayes risk. Our results establish
the averaged inverse-information order for every fixed $K$, the
parity law, and attainability throughout the noise range.
\item \citet{lee1992crb} obtained the information exponent $2K-2$ for
general smooth responses under a condition on coefficient variation
across snapshots. \citet{yau1992worst} identified the phases maximizing
the Cram\'er--Rao bound at a given separation, with equal phases least
favorable under symmetric sampling. For two complex sinusoids, the
small-separation orders of the worst-phase and best-phase bounds derived
by \citet{dilaveroglu1998nonmatrix} are the inverse fourth and inverse
second powers of the separation. They agree with~\eqref{eq:intro-info}
at $\vartheta=0$ and at fixed $\vartheta>0$. Under deterministic
bounded noise on Fourier data, the relative phase changes the
two-source resolution orders for source-number detection and location
estimation \citep{he2026twopoint}. The present question is the
contribution of the phases under a prior, including shrinking
neighborhoods of degenerate phases.
\item For deterministic bounded noise, \citet{liu2021computational}
gives number-detection and support-recovery scales proportional to the
corresponding powers $1/(2K-2)$ and $1/(2K-1)$ of the
noise-to-minimum-amplitude ratio, in inverse-bandwidth units. In the
present model, a fixed sufficiently small ratio of root mean squared
error to separation is reached at
\[
 \eps\asymp
 \begin{cases}
   \zeta_n^{2/(3K-1)},&K\text{ odd},\\
   \zeta_n^{2/(3K-2)},&K\text{ even}.
 \end{cases}
\]
These are prior-average scales, not uniform guarantees over phases.
For line spectra, the inverse square-root information at aligned
phases has the separation exponent of the cluster-node error
amplification $\eps^{-(2K-2)}$ in
\citet[Theorem~2.10]{batenkov2020near}.
\end{itemize}

\subsection{Effect of the Prior}
\label{sec:prior-mass}

We now ask whether the exponents in Theorem~\ref{thm:randomphase} are
determined by the source count. Of the two priors below, the first
reweights phases near an information-degenerate hypersurface by a power
of its defining function. The second places a power singularity at
aligned three-source configurations.

\subsubsection{Power Weights Near an Even-Source Degeneracy}
\label{sec:power-weights}

We first reweight the prior near an information-degenerate phase set.
Fix an even $K$ and a partition $A\cup A^{c}=\{1,\ldots,K\}$
with $|A|=K/2+1$ and $|A^{c}|=K/2-1$. At an admissible interior reference point,
set the phases to zero on $A$ and to $\pi/2$ on $A^{c}$. Define
\begin{equation}
\label{eq:phase-prior-determinant}
 \mathscr D(s,\varphi)=\det\begin{pmatrix}
 (s_a^m\cos\varphi_a)_{0\le m<K/2,\,a\le K}\\
 (s_a^m\sin\varphi_a)_{0\le m<K/2,\,a\le K}
 \end{pmatrix}.
\end{equation}
Locally, $\mathscr D=0$ is a smooth hypersurface, and
on the rectangles below $J_j\asymp\eps^{3K-4}(\mathscr D^2+\eps^2)$ for
$j\in A$; thus $\mathscr D=0$ is the set of phases at which the
information of these nodes drops. Let
$\mathcal S_\beta\times\mathcal G\times\mathcal P$ be fixed rectangles
of node offsets, moduli and phases around this reference
point, with closure inside the admissible class and small in the sense
of Theorem~\ref{thm:prior-effect}(a), let $\beta>-1$ be
fixed, and let $f$ be a probability density supported on
$\mathcal S_\beta\times\mathcal G$, with $0<c_f\le f\le C_f<\infty$
there. The prior $\pi_\beta$ has, in unwrapped
phase coordinates, the density
\begin{equation}
\label{eq:physical-weighted-phase-prior}
 p_\beta(s,a,\varphi)=f(s,a)
 \frac{|\mathscr D(s,\varphi)|^\beta}{Z_\beta(s)}
 \mathbf1_{\mathcal P}(\varphi),\qquad
 Z_\beta(s)=\int_{\mathcal P}|\mathscr D(s,\varphi)|^\beta\,d\varphi,
\end{equation}
and is independent of $\eps,n,\sigma$ and the noise. A negative exponent
$\beta$ places more prior mass near $\mathscr D=0$, and a positive one
less.

\subsubsection{A Singular Joint Prior With Three Sources}
\label{sec:singular-prior-k3}

The following singular prior produces an intermediate range for three
sources. Fix an interior configuration of three
separated normalized nodes with positive real weights $\mathsf w_j$. Its six real
moments $\mathfrak r_k=\sum_{j=1}^3\mathsf w_js_j^k$, $0\le k\le5$, form local coordinates.
Choose a fixed small moment rectangle $\mathcal R$ whose inverse remains
inside the admissible node and weight class. For $\mathfrak r\in\mathcal R$, set
\[
 c_j=\mathsf w_j\{1+i(t_0+vs_j+us_j^2)\},\qquad
 (t_0,v,u)\in T_0\times(0,v_0)\times(0,u_0),
\]
where $T_0$ is a small interval about zero and the fixed support lies
near an aligned configuration, within the interior of $\mathcal X_3$.
Here $t_0+vs_j+us_j^2$ is the ratio $\operatorname{Im}c_j/\operatorname{Re}c_j$,
interpolated quadratically in the nodes, and $u$ is its quadratic
coefficient; the aligned configurations are those with $v=u=0$. The nine
real coordinates $(\mathfrak r,t_0,v,u)$ parametrize a neighborhood of the aligned
configuration. For fixed $0<\gamma<1$, the prior $\pi_\gamma$ has the
density
\begin{equation}
 p_\gamma(\mathfrak r,t_0,v,u)=Z_\gamma^{-1}f(\mathfrak r,t_0,v,u)u^{-\gamma},
 \qquad 0<c_f\le f\le C_f<\infty,
 \label{eq:k3-singular-prior}
\end{equation}
relative to $d\mathfrak r\,dt_0\,dv\,du$ on this support, independent of
$\eps,n,\sigma$ and the noise; it is singular along the surface $u=0$.

\subsubsection{Risk Under the Weighted and Singular Priors}
The following theorem determines the Bayes risk under the two priors
just described, with the same estimator as before.

\begin{theorem}[Risk under power-weighted and singular priors]
\label{thm:prior-effect}
Suppose that the designs $(\nu_n)_{n\ge1}$ satisfy
Assumption~\ref{ass:sampling} uniformly.
\begin{enumerate}
\item[(a)] Let $K$ be even and let $\pi_\beta$, $\beta>-1$, be the power-weighted
phase prior~\eqref{eq:physical-weighted-phase-prior} of
Section~\ref{sec:power-weights}, with index set $A$ and the fixed small
rectangles $\mathcal S_\beta\times\mathcal G\times\mathcal P$ of that
section, chosen as in Steps~1 and~2 of the proof in
Appendix~\ref{sec:weighted-phase-details}: $\mathscr D=0$ is a graph
over the other coordinates with $|\partial_{\varphi_{a_0}}\mathscr D|$
bounded above and below for some $a_0\in A$, the closure lies in the
moment chart of Lemma~\ref{weighted:moment-chart}, and the minors used
in Step~2 stay bounded away from zero on the rectangles. (On them
$J_j\asymp\eps^{3K-4}(\mathscr D^2+\eps^2)$ for $j\in A$; the paths of
Step~4 run in a fixed box inside them.) There is $\eps_0>0$ such
that, for each $j\in A$, every $0<\eps\le\eps_0$ and every $\zeta_n>0$,
\begin{equation}
\label{eq:local-weighted-phase-risk}
 R_{\pi_\beta,j}(\eps,\zeta_n)\asymp\Psi_{\beta,K}(\eps,\zeta_n)
 :=\eps^2\int_0^1
 \min\left\{1,\frac{\lambda^2}{t^2+\eps^2}\right\}t^\beta\,dt,
 \qquad \lambda=\frac{\zeta_n}{\zeta_{*,K}(\eps)}=\frac{\zeta_n}{\eps^{3K/2-1}}.
\end{equation}
For $\lambda\ge1$ the risk is of order $\eps^2$, and for $0<\lambda\le1$,
\begin{equation}
\label{eq:weighted-phase-regimes}
 \frac{R_{\pi_\beta,j}(\eps,\zeta_n)}{\eps^2}\asymp
 \begin{cases}
 \lambda^2(\eps\vee\lambda)^{\beta-1},&-1<\beta<1,\\
 \lambda^2\{1+\log[1/(\eps\vee\lambda)]\},&\beta=1,\\
 \lambda^2,&\beta>1.
 \end{cases}
\end{equation}
\item[(b)] Let $K=3$ and let $\pi_\gamma$, $0<\gamma<1$, be the singular
prior~\eqref{eq:k3-singular-prior} of Section~\ref{sec:singular-prior-k3}.
There is $\eps_0>0$ such that, for every $j\le3$, every
$0<\eps\le\eps_0$ and every $\zeta_n>0$,
\begin{equation}
 R_{\pi_\gamma,j}(\eps,\zeta_n)
 \asymp\eps^2\Phi_\gamma(\eps,\lambda),
 \qquad \lambda=\zeta_n/\zeta_{*,3}(\eps)=\zeta_n/\eps^4,
 \label{eq:k3-singular-risk}
\end{equation}
where $\Phi_\gamma(\eps,\lambda)=1$ for $\lambda\ge1$, and for
$0<\lambda\le1$,
\begin{equation}
 \Phi_\gamma(\eps,\lambda)=
 \begin{cases}
  \lambda^2,&0<\gamma<1/2,\\
  \lambda^2\{1+\log[1/(\eps\vee\lambda)]\},&\gamma=1/2,\\
  \lambda^2(\eps\vee\lambda)^{1-2\gamma},&1/2<\gamma<1.
 \end{cases}
 \label{eq:k3-singular-profile}
\end{equation}
\end{enumerate}
In both parts the projected global least-squares
estimator~\eqref{eq:estimator}, minimizing over the full class
$\mathcal X_K$, attains this order without knowing the prior.
In part~(a), $\eps_0$ depends only on $K$, the class~\eqref{eq:class},
the constants of Assumption~\ref{ass:sampling}, the reference
configuration and the rectangles
$\mathcal S_\beta\times\mathcal G\times\mathcal P$. The constants depend
in addition only on $\beta$ and the lower bound $c_f$. Part~(a)
therefore holds uniformly over all densities $f$ of
Section~\ref{sec:power-weights} with a given $c_f$, whatever $C_f$. In
part~(b), $\eps_0$ depends only on the class~\eqref{eq:class}, the
constants of Assumption~\ref{ass:sampling} and the support
$\mathcal R\times T_0\times(0,v_0)\times(0,u_0)$. Since $p_\gamma$ is
normalized, the constants depend in addition only on $\gamma$ and the
ratio $C_f/c_f$. This ratio enters the lower bound through the paths of
Lemma~\ref{geom:fiber-paths}(iii). Part~(b) therefore holds uniformly
over all densities $f$ with a given ratio $C_f/c_f$. No constant
depends on $\eps,n,\sigma,\zeta_n$.
\end{theorem}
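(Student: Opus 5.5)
The plan is to prove both parts by the route already used for Theorem~\ref{thm:randomphase}. The upper bound is a conditional mean-square error bound for the projected least-squares estimator~\eqref{eq:estimator}, integrated against the prior. The lower bound is a two-point (Le Cam type) bound along a family of paths whose occupation measure is dominated by a constant times the prior, combined with~\eqref{spine:two-integrals}. In both parts the key input is a local information formula of the form $J_j\asymp\eps^{2}\zeta_{*}^{2}\,(\text{defining function}^2+\eps^2)/\eps^{2}$. The prior then only changes the law of the defining function, so the final orders come from one-dimensional integrals.

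\textbf{Upper bound.} For part~(a), work on the rectangles $\mathcal S_\beta\times\mathcal G\times\mathcal P$ and use the moment chart of Lemma~\ref{weighted:moment-chart}. With the minors of Step~2 bounded away from zero, I would first establish $J_j\asymp\eps^{3K-4}(\mathscr D^2+\eps^2)$ for $j\in A$. Lemma~\ref{cor:conditional-mse} then bounds the conditional risk of $\widehat\tau_j$ at $x$ by $\min\{\eps^2,\zeta_n^2/J_j(x)\}$ up to constants. This needs the global inverse bound: the location change is controlled by the signal distance divided by $\sqrt{J_j}$, and we minimize over all of $\mathcal X_K$.

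Next I straighten $\mathscr D$. Since $|\partial_{\varphi_{a_0}}\mathscr D|\asymp1$ and $\mathscr D=0$ is a graph, the change of variables $\varphi_{a_0}\mapsto t=\mathscr D$ has bounded Jacobian. Because $Z_\beta(s)\asymp1$ uniformly (as $\beta>-1$), the prior law of $|\mathscr D|$ has density $\asymp t^\beta$ on $[0,t_0]$. Writing $\zeta_n^2/\eps^{3K-4}=\eps^2\lambda^2$ gives
\[
\mathbb E\min\{\eps^2,\zeta_n^2/J_j\}\asymp\eps^2\int_0^1\min\{1,\lambda^2/(t^2+\eps^2)\}t^\beta\,dt .
\]
The regimes~\eqref{eq:weighted-phase-regimes} follow by splitting at $t=\eps\vee\lambda$. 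Part~(b) is analogous. In the coordinates $(\mathfrak r,t_0,v,u)$, I would show $J_j\asymp\eps^{8}(u^2+\eps^2)$ near the aligned set. This is the $K=3$ analogue: $u$ is the leading obstruction to cancelling the third derivative, and $v$ enters only at higher order or is absorbed. Integrating against $u^{-\gamma}\,du$ gives~\eqref{eq:k3-singular-profile}, with threshold exponent $\beta=-\gamma$ translated to $1-2\gamma$ after the $\lambda^2 t^{-2}$ tail.

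\textbf{Lower bound.} For each configuration in a fixed interior box, I would take the paths of Lemma~\ref{geom:fiber-paths}. Each path moves $\tau_j$ by $\delta=\min\{\eps,\zeta_n/\sqrt{J_j}\}$ while keeping the $2K$ low-order moments, hence the projected signal, fixed to the needed order. The signal distance is then $\lesssim\zeta_n$, and the two-point bound gives risk $\gtrsim\delta^2$ on each path. The delicate point is that the paths must not concentrate prior mass. In part~(a), the paths move along level sets transverse to $\mathscr D$ only by $O(\delta)$, so their occupation density relative to $t^\beta\,dt$ stays bounded. For $\beta<0$ this means not crossing $\mathscr D=0$ faster than the weight allows; I would restrict to starting points with $|\mathscr D|\ge$ a fixed multiple of the displacement in $t$. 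In part~(b), the same reasoning with the ratio $C_f/c_f$ controls the $u^{-\gamma}$ weight (Lemma~\ref{geom:fiber-paths}(iii)). Then~\eqref{spine:two-integrals} yields the matching lower bound.

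\textbf{Main obstacle.} I expect the hardest step to be the uniform local information asymptotics $J_j\asymp\eps^{3K-4}(\mathscr D^2+\eps^2)$, and its $K=3$ singular analogue, uniformly up to the degenerate set. This requires a careful Taylor/confluent-Vandermonde expansion of the distance~\eqref{eq:Jdef} in the moment chart, identifying $\mathscr D$ as the leading minor and the $\eps^2$ term as the first nonvanishing correction. The rest is integration.
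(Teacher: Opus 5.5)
\textbf{Part~(b) has a genuine gap.} The estimate $J_j\asymp\eps^{8}(u^2+\eps^2)$ is false, and the mechanism you give for the exponent does not produce the stated profile.

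The correct two-sided estimate is \eqref{k3p:information}: $J_j\asymp\eps^6\mathfrak S/(\eps^2+[v-(p_2+s_j)u]^2)$ with $\mathfrak S=u^2+(v^2+\eps^2)^2$. It gives $J_{\min}\asymp\eps^6\bar\varrho_\eps^2$, where $\varrho_\eps^2=v^2+\eps^2+u^2/(v^2+\eps^2)$. So the information runs from $\eps^8$ on the aligned set up to the generic odd-$K$ order $\eps^{3K-3}=\eps^6$. It degenerates only where $u$ \emph{and} $v$ are both small. At $u=0$ with $v\asymp1$ it is generic, so $v$ is not absorbed.

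Your formula already contradicts Proposition~\ref{thm:avg-inverse-info}. Integrated against a bounded density on this support, it gives $\mathbb E[1/J_j]\gtrsim\eps^{-7}$, even with $\eps^6$ in place of $\eps^8$. A bounded phase density gives $O(\eps^{-6})$. With $\eps^6$ in place of $\eps^8$, your upper integral is $\eps^2\int\min\{1,\lambda^2/(u^2+\eps^2)\}u^{-\gamma}\,du\asymp\eps^2\lambda^2(\eps\vee\lambda)^{-1-\gamma}$, by Lemma~\ref{risk:power-integral} with $q=1-\gamma$. This disagrees with \eqref{eq:k3-singular-profile} for every $\gamma$. The ``translation of $\beta=-\gamma$ into $1-2\gamma$'' has no derivation behind it.

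The effective exponent is $q=3-2\gamma$, which corresponds to $\beta=2-2\gamma$ in part~(a). This is why the logarithm sits at $\gamma=1/2$. The exponent comes from the two-dimensional prior mass of $\{\bar\varrho_\eps\le s\}$, which is essentially $\{v\lesssim s,\ u\lesssim s(v^2+\eps^2)^{1/2}\}$. Its $u^{-\gamma}\,du\,dv$ measure is $\asymp s^{3-2\gamma}$ for $2\eps\le s\le s_0$. Then \eqref{risk:capped-tail} and Lemma~\ref{risk:power-integral} give $\Phi_\gamma$. Without this step your upper bound does not match the lower bound you cite from Lemma~\ref{geom:fiber-paths}(iii). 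Its dyadic layers are in $v$, with $u\asymp v^2$ and layer mass $t_l^{3-2\gamma}$. So part~(b) does not close.

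\textbf{Part~(a) follows the paper's route, but two points need correcting.}
\begin{enumerate}
\item Lemma~\ref{cor:conditional-mse} bounds the conditional risk by $\min\{\eps^2,\zeta_n^2/J_{\min}\}$, not by the $J_j$ version. So you also need $J_{\min}\asymp J_j$ for $j\in A$. Step~2 of Appendix~\ref{sec:weighted-phase-details} obtains it because the numerator of \eqref{eq:profile-minors} is common to all targets and every denominator is $O(\eps^{2(r-1)^2})$.
\item The occupation bound for $\beta<0$ does not come from restricting starting points to $|\mathscr D|$ above the displacement. In the saturated regime the two endpoints of a pair differ by a fixed $u$-step, an order-one change of the normalized configuration. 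If the paths moved $\mathscr D$, that restriction would discard exactly the configurations with $|\mathscr D|\lesssim\eps\vee\lambda$ that carry the risk. What makes it work is Lemma~\ref{weighted:moment-chart}: $\mathscr D=\mathsf k\,\mathfrak b$ with $\mathsf k$ nonvanishing, and $\mathfrak b$ a function of the fixed moments $(m_0,\ldots,m_{3K/2-2})$. Hence the $u$-translations keep $\mathfrak b$ exactly and never cross $\mathscr D=0$. They change $m_{3K/2-1}$ at rate $i\mathfrak b$, and the weight $|\mathfrak b|^\beta$ is carried by the label measure.
\end{enumerate}
Relatedly, the paths fix only these $3K/2-1$ moments, not the first $2K$ moments you describe. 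Fixing $2K$ moments would freeze the configuration, by Lemma~\ref{geom:prony-secant}.
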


\begin{proof}See Section~\ref{sec:randomphase-proof}. The prior-specific phase
integrals and paths are in Appendices~\ref{sec:weighted-phase-details}
and~\ref{prf:k3-singular}.\end{proof}

\subsubsection{Implications}
In short, Theorem~\ref{thm:prior-effect} shows that the prior mass near
$\mathscr D=0$ sets the intermediate law. For $-1<\beta<0$ the risk
decays more slowly in $\zeta_n$ than under uniform phases, and for
$0<\beta<1$ it decays faster. At $\beta=1$ a logarithmic factor appears,
and for $\beta>1$ the intermediate range disappears. For three sources,
the singular prior with $1/2<\gamma<1$ creates an intermediate range,
which is absent under uniform phases. Mixing the prior of part~(a) in a
fixed proportion with a uniform-phase baseline gives the order of the
larger component risk (Corollary~\ref{cor:full-support-weighted-phase}),
so the slower law for $-1<\beta<0$ persists under priors with full phase
support.

Both theorems are instances of the criterion of
Section~\ref{sec:criterion}. Section~\ref{lc:criterion-sources} records
where Section~\ref{sec:info} verifies its conditions for each prior
above.

\section{Related Works}
\label{sec:related-works}

The accuracy of locating closely spaced sources has been studied from
several angles. A first line of work describes the local accuracy
through the Cram\'er--Rao bound. It gives information exponents with
multiple snapshots \citep{lee1992crb}, phase-dependent closed forms
\citep{dilaveroglu1998nonmatrix}, the least favorable phases
\citep{yau1992worst} and phase-averaged bounds
\citep{swingler1993frequency}. Section~\ref{sec:comparison} compares
them with Theorem~\ref{thm:randomphase}. \citet{hockmann2025sparse}
proved a conjecture of Ferreira Da Costa and Mitra on the transition
point between easier and harder super-resolution. The transition is
defined through the size of the Cram\'er--Rao bound. He also extended
the result to higher dimensions. Resolution criteria based on estimating or testing a
two-source separation \citep{smith2005resolution,shahram2006statistical}
use a loss different from the error of an individual location.

A second line concerns worst-case super-resolution, in which the error
is controlled uniformly over configurations
\citep{donoho1992superres,candes2014superres,tang2015minimax,batenkov2020near,liu2021computational},
together with results on conditioning, support stability and ESPRIT
\citep{dacosta2020stable,li2020esprit,yang2023esprit}.
These works address different error criteria; Section~\ref{sec:comparison}
relates the deterministic scales of \citet{liu2021computational} and
\citet{batenkov2020near} to the prior-average scales of
Theorem~\ref{thm:randomphase}.

Nonstandard rates also occur in finite mixtures, singular-information
models, and incoherent imaging
\citep{heinrich2018strong,ho2019singularity,wu2020gaussian,rotnitzky2000singular,tsang2018conservative}.
The aligned-phase balance
$n^{-1/2}\eps^{-(2K-2)}\asymp\eps$ has the exponent
$n^{-1/(4K-2)}$ of the Gaussian-mixture rate with known common variance
in \citet{wu2020gaussian} and of the rate for $K$ components near a
single true component in \citet{heinrich2018strong}; the experiments
and losses are different. Singular-information and weak-identification
analyses
\citep{rotnitzky2000singular,ekvall2022confidence,andrews2012estimation}
describe local behavior or drifting sequences, rather than the
Bayes localization risk jointly in separation and noise.
Distribution-dependent sample complexity in multireference alignment
\citep{bandeira2020optimal,abbe2019multireference} and prior effects
near singular sets \citep{watanabe2009algebraic,rousseau2011overfitted}
provide related phenomena. Theorem~\ref{thm:prior-effect} specifies
the effect here through matching squared-error risks for individual
source locations.

Finally, Ziv--Zakai-type testing arguments, the van Trees inequality and
$f$-informativity bound the Bayes risk below
\citep{zivzakai1969,chazan1975improved,bell1997extended,zhang2023zzb,jeong2025zzb,gill1995applications,chen2016bayes},
and moduli of continuity give minimax rates \citep{donoho1991geometrizing2}
and two-point benchmarks at each instance
\citep{cailow2015framework,duchiruan2024local}. These provide related
lower-bound techniques; Section~\ref{lb:criterion-comparison} states how
Theorem~\ref{risk:transfer} differs from the two-point benchmarks of
\citet{cailow2015framework} and \citet{duchiruan2024local} and from
asymptotic Bayesian Cram\'er--Rao analyses.

\section{Analysis}
\label{sec:info}
\label{sec:mechanism}
This section outlines the proofs of Theorems~\ref{thm:randomphase}
and~\ref{thm:prior-effect}. The main tools are the key lemmas of
Section~\ref{lc:key-lemmas}. They give a finite comparison of source
signals, a statistical conversion of finite comparisons into risk
bounds, and an evaluation of the resulting power integrals. The analysis
then follows the chain
\[
\begin{gathered}
 \text{moment identity}
 \ \Longrightarrow\ \text{inverse bound and prior-dominated paths}\\
 \ \Longrightarrow\ \text{two risk integrals}
 \ \Longrightarrow\ \text{Theorems~\ref{thm:randomphase} and~\ref{thm:prior-effect}}.
\end{gathered}
\]
The statistical conversion is classical. The model-specific work is the
uniform finite comparison, the integration over difficult phases, and
the control of both endpoints under the original prior.
Appendices~\ref{geom:source-lower}, \ref{sec:weighted-phase-details}
and~\ref{prf:k3-singular} establish the prior-specific phase averages
and prior-preserving paths, with uniform constants.

\subsection{Key Lemmas}
\label{lc:key-lemmas}
The key step is the finite--tangent comparison of
Lemma~\ref{geom:prony-secant}. It rests on a single moment identity
that holds uniformly near degenerate phases. The identity controls both
finite differences of source signals and their derivatives.
Lemma~\ref{lem:bridge} transfers these moment bounds to response norms.
The tangent form of the identity gives the efficient information. Its
finite form gives the inverse bound for the upper risk bound. Fixing
low moments gives the paths for the lower bound.
Lemma~\ref{spine:risk} converts finite comparisons into upper and lower
risk bounds through minimum distance and two-point testing.
Lemma~\ref{risk:power-integral} evaluates the resulting power integrals.

\subsubsection{One Finite Comparison for Signals and Tangents}
\label{app:bridge}

We first transfer moment norms to response norms by Hermite--Newton
interpolation, uniformly over node lists with repetitions.

For $1\le m\le2K$, let $\Pi_m$ denote the complex orthogonal projection
in $L^2(\nu)$ onto the span of $g_0,\ldots,g_{m-1}$, so that
$\Pi=\Pi_{2K}$ in
Section~\ref{sec:projected-estimator}. For nodes $x_i=\tau_0+\eps t_i$,
call $\Lambda$ admissible if it is a complex linear combination of
$f\mapsto f^{(e)}(x_i)$, with $e$ below the multiplicity of $x_i$.
Write $\mathsf v_r=\Lambda[(\tau-\tau_0)^r]$. Thus an atomic combination
has $\mathsf v_r=\eps^r\sum_iA_it_i^r$, while
with $E_j=\psi(\cdot,\tau_0+\eps s_j)$ and
$E'_j=\partial_\tau\psi(\cdot,\tau_0+\eps s_j)$, the combination
$\sum_j\alpha_jE_j+\eps\sum_j\beta_jE_j'$ has
$\mathsf v_r=\eps^r(\sum_j\alpha_js_j^r+r\sum_j\beta_js_j^{r-1})$.

\begin{lemma}[Divided differences control the norm]
\label{lem:bridge}
Let $\nu$ satisfy Assumption~\ref{ass:sampling}, $1\le m\le2K$ and $T<\infty$.
Then there are $\eps_0>0$ and $C<\infty$, depending only on $m$, $T$ and the
constants of the assumption, such that for $0<\eps\le\eps_0$, all
$t_1,\ldots,t_m\in[-T,T]$ and every admissible $\Lambda$ (a shorter list is
padded with points of $[-T,T]$ on which $\Lambda$ puts no weight),
\begin{equation}
\label{eq:bridge}
 C^{-1}|\mathsf v|\le\|\Pi_m\Lambda\psi\|_\nu\le\|\Lambda\psi\|_\nu\le C|\mathsf v|,
 \qquad |\mathsf v|^2=\sum_{r<m}|\mathsf v_r|^2 .
\end{equation}
\end{lemma}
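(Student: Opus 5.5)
The plan is to expand $\Lambda\psi$ in Taylor form about $\tau_0$ with an integral remainder. By part~1 of Assumption~\ref{ass:sampling}, for $|\tau-\tau_0|\le\eps T<r_0$ we write
\[
 \psi(\cdot,\tau)=\sum_{r<m}(\tau-\tau_0)^r g_r+(\tau-\tau_0)^mR(\cdot,\tau),
\]
where the remainder satisfies $\|R(\cdot,\tau)\|_\nu\le A_\psi/m!$. Derivatives in $\tau$ up to order $e<m$ act termwise, and the remainder contributes $\partial^e_\tau[(\tau-\tau_0)^mR]$, which is a combination of terms $(\tau-\tau_0)^{m-l}\partial^{e-l}R$. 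Here we use the Hermite--Genocchi form, in which $R$ is an average of $g$-type derivatives $\partial_\tau^m\psi/m!$ along segments, so its derivatives are controlled by $\partial_\tau^{r}\psi$ for $r\le 2K$. Applying $\Lambda$ then gives
\[
 \Lambda\psi=\sum_{r<m}\mathsf v_r g_r+\Lambda[(\tau-\tau_0)^mR].
\]
The key difficulty is that the coefficients $A_i$ of $\Lambda$ may be huge when nodes nearly coalesce, so the remainder must be bounded in terms of $|\mathsf v|$ and not of the coefficients. To handle this, rewrite $\Lambda$ in the Newton basis. Since $\Lambda$ is supported on at most $m$ nodes counted with multiplicity, it is determined on polynomials of degree $<m$ by its values $\mathsf v_r$, and on any smooth $f$ it acts as
\[
 \Lambda f=\sum_{r<m}\lambda_r\, f[x_1,\dots,x_{r+1}],
\]
a combination of divided differences. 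The coefficients $\lambda_r$ are obtained from $\mathsf v$ by a triangular change of basis between monomials $(\tau-\tau_0)^r$ and Newton polynomials $\prod_{i\le r}(\tau-x_i)$, whose entries are elementary symmetric functions of the $\eps t_i$. The divided differences of monomials form a unit-triangular matrix with entries $O(\eps^{\text{gap}})$ bounded uniformly for $t_i\in[-T,T]$. This gives $|\lambda|\asymp|\mathsf v|$ with constants depending on $m,T$ once $\eps\le1$.

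With $\Lambda$ in Newton form, Hermite--Genocchi gives
\[
 \|f[x_1,\dots,x_{r+1}]\|_\nu\le \sup_{\mathcal J}\|\partial^r_\tau\psi\|_\nu/r!\le A_\psi .
\]
Applied to $f=(\tau-\tau_0)^mR$, or more cleanly to the remainder $\psi-\sum_{r<m}(\tau-\tau_0)^rg_r$, whose divided difference of order $r<m$ is the Genocchi average of $\partial_\tau^r$ of a function vanishing to order $m$ at $\tau_0$, this yields a bound $C\eps^{m-r}\le C\eps$ in $L^2(\nu)$. We use the $C^{m}$ bound with $m\le 2K$ for this. Hence
\[
 \Lambda\psi=\sum_{r<m}\mathsf v_rg_r+\rho_\Lambda,\qquad \|\rho_\Lambda\|_\nu\le C\eps|\mathsf v| .
\]
The upper bound $\|\Lambda\psi\|_\nu\le C|\mathsf v|$ then follows from $\|g_r\|_\nu\le A_\psi$. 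The middle inequality is immediate because projections are contractions.

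For the lower bound, note that $\Pi_m\sum_{r<m}\mathsf v_rg_r=\sum_{r<m}\mathsf v_rg_r$. Its norm squared equals $\mathsf v^*\Gamma_m\mathsf v\ge\lambda_0|\mathsf v|^2$, because $\Gamma_m$ is a principal submatrix of $\Gamma_{2K}\succeq\lambda_0I$. Therefore
\[
 \|\Pi_m\Lambda\psi\|_\nu\ge(\sqrt{\lambda_0}-C\eps)|\mathsf v|,
\]
and we choose $\eps_0$ with $C\eps_0\le\sqrt{\lambda_0}/2$. Padding a shorter list adds nodes with zero weight, which is harmless in the Newton representation.

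The main obstacle is the uniform coefficient conversion $|\lambda|\lesssim|\mathsf v|$ together with the remainder bound for repeated or nearly repeated nodes. I would verify it by working entirely with divided differences, which are continuous under confluence, rather than with the atomic weights $A_i$.
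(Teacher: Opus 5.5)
Your proof is correct and is essentially the paper's argument. Both use the Newton (divided-difference) form, Hermite--Genocchi to get $O(\eps)$ errors uniformly under node coalescence, a unit-triangular change to monomials about $\tau_0$ with $O(\eps)$ off-diagonal entries, and the Gram bound $\lambda_0$ on the principal part $\sum_{r<m}\mathsf v_rg_r$, which $\Pi_m$ fixes.

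The paper packages the same computation dually: it interpolates $\psi$ by its Hermite--Newton polynomial $H$, so that $\Lambda\psi=\Lambda H$ and only the coefficients of $H$ (not the Newton weights of $\Lambda$) need estimating. In your version, keep only the ``cleaner'' remainder estimate. Differentiating $(\tau-\tau_0)^mR$ termwise would use derivatives of $\psi$ of order up to $2m-1$, which exceeds the $2K$ available when $m>K$.
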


\begin{proof}
Let $H(\tau)=\sum_{j<m}b_j\prod_{a=1}^j(\tau-x_a)$ be the
Hermite--Newton interpolant of $\psi(\cdot,\tau)$ on the node list,
with repetitions. The divided-difference integral over a simplex
of volume $1/j!$ \citep[Eq.~(52)]{deboor2005}, applied by scalarization,
gives
$b_j=g_j+O_{L^2}(\eps)$ and $\|b_j\|\le C$, uniformly even at repeated
nodes. The lower coefficients of $\prod_{a=1}^j(\tau-x_a)$ in powers of
$\tau-\tau_0$ are $O(\eps)$, so
$H(\tau)=\sum_{l<m}(g_l+e_l)(\tau-\tau_0)^l$ with $\|e_l\|\le C\eps$.
Hermite interpolation matches every derivative used by $\Lambda$, hence
\begin{equation}
 \Lambda\psi=\Lambda H
 =\sum_{l<m}g_l\mathsf v_l+O_{L^2}(\eps)|\mathsf v|.
 \label{eq:prf-common-principal}
\end{equation}
The constants are independent of node gaps and the coefficients of
$\Lambda$. The principal part has norm at least
$\sqrt{\lambda_0}|\mathsf v|$ and is fixed by $\Pi_m$. Taking $\eps_0$
sufficiently small proves both norm bounds.
\end{proof}

For the exponential kernel with $\tau_0=0$, $\mathsf v_r$ is the $r$th derivative
at the origin of the exponential sum $\Lambda\psi$ divided by $(i\pi)^r$.

\subsubsection{Prony Coordinates and Finite Differences}
We next write source signals in Prony coordinates, in which finite
differences and tangents satisfy one identity. The Prony mapping and
Prony varieties are studied in
\citet{batenkov2014geometry,goldman2018prony}. Write
$m_l(x)=\sum_a c_as_a^l$,
$\omega_x(t)=\prod_a(t-s_a)=t^K+\sum_{l<K}p_l(x)t^l$,
$M=(m_0,\ldots,m_{K-1})$ and $\xi(x)=(M(x),p(x))$; complex coordinates
are realified, so $\xi$ has dimension $3K$.

\begin{lemma}[The common finite--tangent comparison]
\label{geom:prony-secant}
\label{geom:prony-target}
On the separated normalized source class, $\xi$ is one-to-one with
a smooth local inverse and uniformly bounded Jacobian and inverse
Jacobian on compact admissible subsets.
\begin{enumerate}
\item Define
\[
 \mathcal A_{\eps,x}(\Delta M,\Delta p)=
 \left((\eps^l\Delta M_l)_{l<K},\;
 \left(\eps^{K+q}\sum_{l<K}m_{q+l}(x)\Delta p_l\right)_{q<K}\right).
\]
Uniformly over admissible pairs and sufficiently small $\eps$,
\begin{equation}
 \|\Pi(\mu_{x'}-\mu_x)\|_{\nu_n}
 \asymp\|\mu_{x'}-\mu_x\|_{\nu_n}
 \asymp|\mathcal A_{\eps,x}(\xi(x')-\xi(x))|.
 \label{geom:source-secant}
\end{equation}
\item In a sufficiently small fixed closed Prony box $\mathcal U$ about an
interior configuration, let the ordered roots remain in disjoint
intervals. For every pair in this box,
\[
 |\tau_j(x')-\tau_j(x)|
 \asymp\eps\left|\sum_{l<K}s_j(x)^l\Delta p_l\right|.
\]
If the original prior has a density bounded below by a positive
constant in this neighborhood, uniform probability $\pi_0$ on the
box, pushed back by the inverse chart, is dominated by a fixed
multiple of that prior.
\end{enumerate}
\end{lemma}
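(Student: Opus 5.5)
The approach reduces both norm comparisons in~\eqref{geom:source-secant} to Lemma~\ref{lem:bridge}, via an exact algebraic identity for the moment differences in Prony coordinates. First, $\xi$ is one-to-one with a smooth inverse on the separated class. The roots $s_a$ are the zeros of $\omega_x$, simple and separated by $\rho$, so they depend smoothly on $p$ with bounded derivatives. Given the roots, $M=V c$ with $V$ the $K\times K$ Vandermonde matrix in the $s_a$. Its determinant is bounded below by $\rho^{K(K-1)/2}$ and its entries are bounded on $[-T,T]$. Hence $c=V^{-1}M$, and the Jacobian of $x\mapsto\xi$ and its inverse are uniformly bounded on compact admissible sets.

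For part~1, I would apply Lemma~\ref{lem:bridge} with $m=2K$ to the admissible functional $\Lambda=\sum_a c'_a\delta_{\tau_0+\eps s'_a}-\sum_a c_a\delta_{\tau_0+\eps s_a}$. Since $\mu_{x'}-\mu_x=\Lambda\psi$, this gives the first two comparisons in terms of $\mathsf v_r=\eps^r(m_r(x')-m_r(x))$, $r<2K$. It therefore suffices to show $\bigl(\sum_{r<2K}\eps^{2r}|\Delta m_r|^2\bigr)^{1/2}\asymp|\mathcal A_{\eps,x}(\Delta\xi)|$.

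The key identity is the Prony recurrence $m_{q+K}(y)=-\sum_{l<K}p_l(y)m_{q+l}(y)$, valid for every configuration $y$ because $\omega_y(s_a)=0$. Differencing it between $x'$ and $x$ gives
\[
 \Delta m_{q+K}+\sum_{l<K}p_l(x')\Delta m_{q+l}=-\sum_{l<K}m_{q+l}(x)\Delta p_l .
\]
This is a unit lower triangular relation in the vector $(\Delta m_r)_{r<2K}$, whose off-diagonal coefficients $p_l(x')$ are bounded. Multiplying row $q$ by $\eps^{K+q}$ and writing $\eps^{K+q}\Delta m_{q+l}=\eps^{K-l}\cdot\eps^{q+l}\Delta m_{q+l}$, each off-diagonal term carries an extra factor $\eps^{K-l}\le\eps$. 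Thus the map from the scaled moments $(\eps^r\Delta m_r)_{r<2K}$ to $\mathcal A_{\eps,x}(\Delta\xi)$ is the identity plus $O(\eps)$ on the off-diagonal block. Its inverse is bounded uniformly for $\eps\le\eps_0$, which gives the two-sided equivalence.

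For part~2, I would use the mean value theorem applied to the polynomial $\omega$ in its coefficients. Writing $\omega_{x'}(s_j(x'))=0$ gives
\[
 0=\omega_x(s_j')+\sum_{l<K}\Delta p_l\,(s_j')^l,\qquad \omega_x(s_j')=\omega_x'(\tilde s)(s_j'-s_j).
\]
In a small box the roots stay in disjoint intervals, so $|\omega_x'|\asymp\prod_{a\ne j}|s_j-s_a|$ is bounded above and below. Replacing $s_j'$ by $s_j$ in $\sum_l\Delta p_l(s_j')^l$ costs $O(|\Delta p||s_j'-s_j|)$, which is absorbed once the box is small. This step, keeping the comparison uniform with constants independent of the pair, is the main obstacle. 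If the direct argument is delicate, I would first bound $|s_j'-s_j|\lesssim|\Delta p|$ and then shrink the box. Multiplying by $\eps$ gives the bound on $\tau_j$.

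The domination claim follows by the change of variables $x=\xi^{-1}(\cdot)$. The pushforward of uniform measure on $\mathcal U$ has density $|\mathcal U|^{-1}|\det D\xi|$ in $(s,\Ree c,\Imm c)$. This is bounded above by the Jacobian bound, so it is at most a constant multiple of the prior density, which is bounded below on the neighborhood.
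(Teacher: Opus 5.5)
Your proposal is correct and follows the paper's proof: the ordered-root and Vandermonde inverse chart, Lemma~\ref{lem:bridge} with $m=2K$ on the union of the node lists, the differenced Prony recurrence read as the unit lower triangular map $T_\eps(p')$ with $T_\eps(p')v=\mathsf E\mathcal A_{\eps,x}h$, and the bounded chart Jacobian for the domination claim. The only place you work harder than needed is item~2: evaluating at $s_j$ rather than $s_j'$ gives the exact identity $\sum_{l<K}s_j^l\Delta p_l=\omega_{x'}(s_j)=(s_j-s_j')\prod_{a\ne j}(s_j-s_a')$, whose product is bounded above and below on the root box, so neither the mean-value step nor the absorption of the $O(|\Delta p||s_j'-s_j|)$ error is required (your absorption does work once the box is small).
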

\begin{proof}
The ordered roots and Vandermonde inversion give the local inverse
chart, with uniform derivative bounds by separation.
Write $h=\Delta\xi$, $v=(\eps^l\Delta m_l)_{l<2K}$, and let
$(T_\eps(p)z)_l=z_l$ for $l<K$ and
$(T_\eps(p)z)_{K+q}=z_{K+q}+\sum_{l<K}p_l\eps^{K-l}z_{q+l}$ for $q<K$.
This is unit lower triangular, with $T_\eps(p)=I+O(\eps)$ uniformly.
Let $\mathsf E=\operatorname{diag}(I_K,-I_K)$. Subtracting the moment
recurrences, or differentiating one, gives the single identity
\begin{equation}
 T_\eps(p')v=\mathsf E\mathcal A_{\eps,x}h
      =T_\eps(p)DF_\eps(\xi)h,
 \quad F_\eps=(\eps^lm_l)_{l<2K}.
 \label{geom:prony-finite-tangent}
\end{equation}
In unscaled coordinates the finite-difference form reads
\begin{equation}
 \Delta m_{K+q}+\sum_{l<K}p_l'\Delta m_{q+l}
      =-\sum_{l<K}m_{q+l}\Delta p_l,\qquad q<K.
 \label{geom:prony-difference}
\end{equation}
Thus $|v|\asymp|\mathcal A_{\eps,x}h|$. Lemma~\ref{lem:bridge}
on the union of the node lists proves~\eqref{geom:source-secant}.

For item~2,
$\sum_{l<K}s_j^l\Delta p_l
=\omega_{x'}(s_j)=(s_j-s_j')\prod_{a\ne j}(s_j-s_a')$.
The last product is bounded above and away from zero on the root box.
The bounded, nonzero chart Jacobian gives the prior-domination claim.
\end{proof}

The tangent and finite forms of~\eqref{geom:prony-finite-tangent} hold
uniformly over all phases. The fixed-moment paths also require control
of the original prior along the paths, which is verified for each prior
below.

\subsubsection{One Statistical Conversion}
\label{spine:conversion}
\label{geom:section}
\label{unified:section}
This subsection states the statistical step used for all risk bounds
of Section~\ref{sec:main-results}. Let
$x\in\Theta$, let $F$ be a signal map on a search class
$\Theta^+\supset\Theta$, and let $g$ be a real target. For a Borel
probability $\pi$ on $\Theta$, put
$R_\pi^*(r)=\inf_{\widehat g}\int\mathbb E_x|\widehat g-g(x)|^2\pi(dx)$,
where the infimum is over all measurable functions of the full data.
Let $r,L_{\rm KL}>0$, $\bar q\ge0$, and suppose
\begin{equation}
 \operatorname{KL}(P_x,P_{x'})\le L_{\rm KL}r^{-2}\|F(x)-F(x')\|^2
 \ (x,x'\in\Theta),
 \qquad \sup_{x\in\Theta}\mathbb E_x\|T-F(x)\|^2\le\bar q r^2.
 \label{spine:experiment}
\end{equation}
All maps and estimators below are measurable, and path families are
jointly measurable in their label and time.
For sources, $F=\Pi\mu$, $T=\Pi Y$, $r=\zeta_n$ and $\bar q=2K$;
Lemma~\ref{lem:bridge} bounds the KL divergence of the full observations
by the projected signal difference. Thus the lower bound concerns all
estimators of the full data.

\begin{lemma}[Finite comparisons give both risk bounds]
\label{spine:risk}
Assume~\eqref{spine:experiment}.
\begin{enumerate}
\item[(a)] If, for $C_0>0$ and all $x\in\Theta$, $y\in\Theta^+$,
\begin{equation}
 |g(y)-g(x)|\le C_0w_x(\|F(y)-F(x)\|),
 \label{risk:modulus}
\end{equation}
where $w_x\ge0$ is
nondecreasing and $w_x(ts)\le tw_x(s)$ for $t\ge1$, a minimum-distance
fit $\widehat x\in\argmin_{y\in\Theta^+}\|T-F(y)\|$ satisfies
\[
 \mathbb E_x|g(\widehat x)-g(x)|^2
 \le C_0^2(1+4\bar q)w_x(r)^2.
\]
The same conclusion holds, with a constant depending also on
$\alpha_1,\alpha_2$, when the fit obeys
$\|F(\widehat x)-F(x)\|\le\alpha_1\|T-F(x)\|+\alpha_2r$.
\item[(b)] Set $\varrho_0=(2L_{\rm KL})^{-1/2}$. If a finite measure
$Q$ on $\Theta^2$ has both marginals at most $B\pi$, $B>0$, and is supported
on $\|F(x')-F(x)\|\le\varrho_0r$, then
\begin{equation}
 R_\pi^*(r)\ge\frac1{16B}\int|g(x')-g(x)|^2\,Q(dx,dx').
 \label{risk:pairs}
\end{equation}
In particular, let $\gamma_y:[0,1]\to\Theta$ have absolutely
continuous signal and target, a target derivative of fixed sign,
finite label measure $\Upsilon$, and measurable $b_y,d_y\ge0$ such that
\begin{equation}
 \begin{gathered}
 |(g\circ\gamma_y)'|\ge b_y,\qquad
 \|(F\circ\gamma_y)'\|\le d_y\quad\text{a.e.},\\
 \int\!\int_0^1\delta_{\gamma_y(s)}\,ds\,\Upsilon(dy)\le B\pi.
 \end{gathered}
 \label{crit:path-family}
\end{equation}
Then
\begin{equation}
 \frac{c_L}{B}\int b_y^2\min\{1,r^2/d_y^2\}\,\Upsilon(dy)
 \le R_\pi^*(r)
 \le C\int w_x(r)^2\,\pi(dx),
 \label{spine:two-integrals}
\end{equation}
where the upper bound assumes~(a),
$c_L=\min\{1/4,\varrho_0^2\}/32$, and the minimum is one if $d_y=0$.
\end{enumerate}
\end{lemma}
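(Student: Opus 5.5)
The three claims are proved in order: (a) by a triangle inequality and the scaling property of $w_x$; (b) by a two-point Le Cam argument integrated against $Q$; and the path bound by building an explicit $Q$ from shifted pairs along each path.

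\emph{Part (a).} Since $\widehat x$ minimizes $\|T-F(y)\|$ over $\Theta^+\ni x$, the triangle inequality gives
\[
\|F(\widehat x)-F(x)\|\le\|T-F(\widehat x)\|+\|T-F(x)\|\le 2D,\qquad D=\|T-F(x)\|.
\]
By~\eqref{risk:modulus} and monotonicity of $w_x$, $|g(\widehat x)-g(x)|\le C_0w_x(2D)$. Put $t=\max\{1,2D/r\}\ge1$. Then monotonicity and the scaling property give
\[
w_x(2D)\le w_x(tr)\le t\,w_x(r).
\]
Hence
\[
|g(\widehat x)-g(x)|^2\le C_0^2\bigl(1+4D^2/r^2\bigr)w_x(r)^2 .
\]
Taking $\mathbb E_x$ and using $\mathbb E_xD^2\le\bar q r^2$ from~\eqref{spine:experiment} gives the constant $C_0^2(1+4\bar q)$. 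For an approximate fit, replace $2D$ by $\alpha_1D+\alpha_2r$ and take $t=\max\{1,(\alpha_1D+\alpha_2r)/r\}$. Then $t^2\le C(\alpha_1,\alpha_2)(1+D^2/r^2)$, and the same computation applies.

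\emph{Part (b), pairs.} Fix an estimator $\widehat g$ and a pair $(x,x')$ in the support of $Q$. For any value $a$ we have $|a-g(x)|^2+|a-g(x')|^2\ge\Delta^2/2$, where $\Delta=|g(x')-g(x)|$. Hence
\[
\mathbb E_x|\widehat g-g(x)|^2+\mathbb E_{x'}|\widehat g-g(x')|^2\ge\frac{\Delta^2}2\int\min\{dP_x,dP_{x'}\}=\frac{\Delta^2}2\bigl(1-\mathrm{TV}(P_x,P_{x'})\bigr).
\]
Pinsker and~\eqref{spine:experiment} give $\mathrm{TV}\le\sqrt{L_{\rm KL}\varrho_0^2/2}=1/2$, so the left side is at least $\Delta^2/4$. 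Integrate this against $Q$. Because both marginals of $Q$ are at most $B\pi$, the integrated left side is at most $2B\int\mathbb E_x|\widehat g-g(x)|^2\pi(dx)$. Taking the infimum over $\widehat g$ yields~\eqref{risk:pairs} with constant $1/(8\cdot2B)=1/(16B)$. The only care needed is joint measurability, which holds because $\widehat g$ and $Q$ are measurable.

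\emph{Part (b), paths.} For each label $y$ set $h_y=\min\{1/2,\varrho_0r/d_y\}$, with $h_y=1/2$ if $d_y=0$. Define
\[
Q=\int\int_0^{1-h_y}\delta_{(\gamma_y(s),\gamma_y(s+h_y))}\,ds\,\Upsilon(dy).
\]
Each marginal of $Q$ is bounded by the occupation measure $\int\int_0^1\delta_{\gamma_y(s)}\,ds\,\Upsilon(dy)$, hence by $B\pi$. Absolute continuity gives a signal distance at most $h_yd_y\le\varrho_0r$ for every pair. Since the target derivative has fixed sign, the target gap is at least $h_yb_y$. The window length satisfies $1-h_y\ge1/2$, so
\[
\int\Delta^2\,dQ\ge\frac12\int h_y^2b_y^2\,\Upsilon(dy)\ge\frac12\min\{1/4,\varrho_0^2\}\int b_y^2\min\{1,r^2/d_y^2\}\,\Upsilon(dy).
\]
Dividing by $16B$ gives the lower bound in~\eqref{spine:two-integrals} with $c_L=\min\{1/4,\varrho_0^2\}/32$. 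The upper bound in~\eqref{spine:two-integrals} is part~(a) integrated over $\pi$.

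The step that needs the most care is the domination of the marginals of $Q$ by $B\pi$. Capping $h_y\le1/2$ is what avoids endpoint atoms: starting points run over $[0,1-h_y]$ and endpoints over $[h_y,1]$, so both marginals are restrictions of the occupation measure, while the window length stays at least $1/2$.
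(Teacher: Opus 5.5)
Your proof is correct and follows essentially the same route as the paper: the triangle inequality with the scaling of $w_x$ for (a), Pinsker plus the two-point bound integrated against the two marginals for \eqref{risk:pairs}, and pairs $(\gamma_y(s),\gamma_y(s+h_y))$ with $h_y=\min\{1/2,\varrho_0r/d_y\}$ for the paths. The differences are cosmetic. You run $s$ under Lebesgue measure on $[0,1-h_y]$, so the marginals are at most $B\pi$ and the window has length at least $1/2$, whereas the paper takes $s$ uniform on $[0,1/2]$, so the marginals are at most $2B\pi$ and the mass is one; both give the same $c_L$. Also, your two-point bound $\Delta^2/4$ actually gives the stronger constant $1/(8B)$, not the $1/(8\cdot2B)$ you wrote, though the stated $1/(16B)$ of course still holds.
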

\begin{proof}
Minimum-distance optimality gives
$\|F(\widehat x)-F(x)\|\le2e$, $e=\|T-F(x)\|$.
Hence the squared error is at most
$C_0^2\max\{1,2e/r\}^2w_x(r)^2$, whose expectation proves~(a);
replace $2e$ by $\alpha_1e+\alpha_2r$ for an approximate fit.
Pinsker's inequality and~\eqref{spine:experiment} give
$\operatorname{TV}(P_x,P_{x'})\le1/2$ on the support of $Q$.
The two-point squared-error bound gives a sum of pointwise risks at
least $|g(x')-g(x)|^2/8$. Integrating and using the two marginal
bounds proves~(b).

For paths take $h_y=\min\{1/2,\varrho_0r/d_y\}$, with $h_y=1/2$
when $d_y=0$, and pair
$\gamma_y(s)$ with $\gamma_y(s+h_y)$, with $s$ uniform on $[0,1/2]$.
Both endpoint measures are at most $2B\pi$: for the second, translate
$s$ by the label-dependent constant $h_y$. The signal difference is
at most $\varrho_0r$ and the target difference at least $b_yh_y$.
The pair bound proves~\eqref{spine:two-integrals}.
\end{proof}

Lemma~\ref{spine:risk} is based on minimum distance and two-point
testing \citep[Chapter~2]{tsybakov2009nonparametric}. Its lower bound
uses only the Kullback--Leibler condition. It is a prior-weighted
two-point argument restricted to pairs whose endpoints both carry prior
mass. The extended Ziv--Zakai bound \citep{bell1997extended} is also a
prior-weighted two-point argument. The two integrals need not match for
an arbitrary prior. Section~\ref{sec:criterion} identifies when both are
comparable to the prior average of the capped inverse information.

\subsubsection{Truncated Power Integrals}
Both risk bounds are evaluated through the following scalar integral.
\begin{lemma}[Truncated power integrals]
\label{risk:power-integral}
For fixed $q>0$, $0\le\eta\le1/2$, and $\lambda>0$, put
\[
 I_q(\eta,\lambda)=\int_0^1
       \min\{1,\lambda^2/(t^2+\eta^2)\}t^{q-1}\,dt.
\]
For $0<\lambda\le1$ and $h=\eta\vee\lambda$,
\[
 I_q(\eta,\lambda)\asymp
 \begin{cases}
 \lambda^2h^{q-2},&q<2,\\
 \lambda^2\{1+\log(1/h)\},&q=2,\\
 \lambda^2,&q>2.
 \end{cases}
\]
For $\lambda\ge1$ it is comparable to one. Constants depend only on $q$.
\end{lemma}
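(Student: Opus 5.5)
The plan is to split the integral at the two natural scales $h=\eta\vee\lambda$ and $1$, and on each piece replace the integrand by a pure power up to constants depending only on $q$. Since $\eta\le1/2$, on $[0,1]$ we have $t^2+\eta^2\asymp(t\vee\eta)^2$ with absolute constants. Hence
\[
 \min\{1,\lambda^2/(t^2+\eta^2)\}\asymp\min\{1,\lambda^2/(t\vee\eta)^2\}.
\]
It therefore suffices to evaluate $\int_0^1\min\{1,\lambda^2(t\vee\eta)^{-2}\}t^{q-1}\,dt$.

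\emph{Case $\lambda\ge1$.} Here $\lambda^2/(t\vee\eta)^2\ge1$ on $[0,1]$, so the integrand equals $t^{q-1}$. The integral is $1/q\asymp1$.

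\emph{Case $0<\lambda\le1$.} Split $[0,1]=[0,h]\cup[h,1]$.
\begin{itemize}
\item On $[0,h]$ the integrand is comparable to $\min\{1,\lambda^2/\eta^2\}$ when $t\le\eta$, and to $1$ when $\eta\le t\le\lambda$. In both subcases $\min\{1,\lambda^2/(t\vee\eta)^2\}\asymp\lambda^2/h^2$. If $h=\lambda$, both sides are $\asymp1$, because $\lambda\ge\eta$ forces $\lambda^2/(t\vee\eta)^2\ge1$ for $t\le\lambda$. If $h=\eta$, then $t\le\eta$ and the value is $\lambda^2/\eta^2$. So the contribution of $[0,h]$ is $\asymp\lambda^2h^{-2}\cdot h^q/q=\lambda^2h^{q-2}/q$.
\item On $[h,1]$ we have $t\ge\eta$ and $t\ge\lambda$, so the integrand is $\lambda^2t^{q-3}$. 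The contribution is $\lambda^2\int_h^1t^{q-3}\,dt$.
\end{itemize}

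The only step needing care is evaluating $\lambda^2\int_h^1t^{q-3}\,dt$ uniformly in $h$, including $h$ near $1$; this is the main obstacle, such as it is. Note that $h\le1$, since $\eta\le1/2$ and $\lambda\le1$.
\begin{itemize}
\item For $q<2$ the integral is $(h^{q-2}-1)/(2-q)\le h^{q-2}/(2-q)$. Adding the first piece gives a total $\asymp\lambda^2h^{q-2}$, with the lower bound supplied by the $[0,h]$ piece alone.
\item For $q=2$ the integral is $\log(1/h)$. The first piece contributes $\asymp\lambda^2$, so the total is $\asymp\lambda^2\{1+\log(1/h)\}$.
\item For $q>2$ the integral is at most $1/(q-2)$, and the first piece is $\lambda^2h^{q-2}/q\le\lambda^2/q$. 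This gives the upper bound $\lambda^2$. For the lower bound, use $[h,1]$ when $h\le1/2$, which yields at least $\lambda^2\int_{1/2}^1t^{q-3}\,dt\gtrsim\lambda^2$. When $h>1/2$, the first piece yields $\lambda^2h^{q-2}/q\ge\lambda^2 2^{2-q}/q$.
\end{itemize}
All constants depend only on $q$, which proves the lemma.
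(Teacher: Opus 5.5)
Your proof is correct and follows the paper's argument: split at $h=\eta\vee\lambda$, note that the two pieces sum to order $\lambda^2\bigl[h^{q-2}+\int_h^1t^{q-3}\,dt\bigr]$, integrate, and treat $\lambda\ge1$ by bounding the integrand's minimum between constants. Replacing $t^2+\eta^2$ by $(t\vee\eta)^2$ (valid with constants $1$ and $2$ for every $\eta$) just spells out the paper's one-line comparison, and your separate handling of $h$ near $1$ in the case $q>2$ supplies the detail the paper leaves implicit.
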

\begin{proof}
Split at $h$. The two parts have sum comparable to
$\lambda^2[h^{q-2}+\int_h^1t^{q-3}\,dt]$; integrating proves the formula.
For $\lambda\ge1$, the integrand's minimum lies between $4/5$ and $1$.
\end{proof}

More generally, Tonelli's theorem gives, for any nonnegative $X$,
\begin{equation}
 \mathbb E\min\{1,\lambda^2/X^2\}
 =2\lambda^2\int_\lambda^\infty\mathbb P\{X\le s\}s^{-3}\,ds.
 \label{risk:capped-tail}
\end{equation}

\subsection{Analysis for the Source Model}
\label{lc:source-analysis}
With the key lemmas in hand, we verify their hypotheses for sources:
the efficient information and its phase average, the inverse bound
over the whole candidate class, and paths that keep the prior mass.

\subsubsection{Information and Its Phase Average}
\label{sec:info-def}
\label{sec:results}
\label{sec:avg-info}
Section~\ref{lc:efficient-info} defined the efficient information
$J_j$. We now estimate it uniformly in the phases and average it over
them.

\emph{Uniform two-sided estimate.}
Exponents computed at a fixed configuration do not control phases that
approach a degenerate configuration as $\eps$ decreases. The estimate
below is uniform in the phases because it keeps all $K$ moment orders,
each with its own power of $\eps$. Put
$d_k=\prod_{\ell\ne k}(s_k-s_\ell)$, $q_k=e^{i\varphi_k}$,
write $J_j(\eps,q;\nu)=J_j(\eps,\varphi;\nu)$, and for $b\in\mathbb R^K$
let
\[
 Q_{\eps,q}(b)=\sum_{r=0}^{K-1}\eps^{2r}
       \Bigl|\sum_{k=1}^Kd_ks_k^rq_kb_k\Bigr|^2 .
\]

\begin{lemma}[Uniform moment estimate]
\label{lem:weighted}
Let $\nu$ satisfy Assumption~\ref{ass:sampling}, $\rho>0$, and let the nodes
range over a compact set with $s_{j+1}-s_j\ge\rho$. Then, for all
$q\in(\mathbb S^1)^K$ and $0<\eps\le\eps_0$,
\begin{equation}
 \label{eq:uniform-moment-main}
 J_j(\eps,q;\nu)\asymp\eps^{2K-2}\min_{b_j=1}Q_{\eps,q}(b),
\end{equation}
with $\eps_0$ and constants depending only on $K$, $\rho$, the compact set
and the constants of Assumption~\ref{ass:sampling}, uniformly over all
phases, including $\varphi=\varphi(\eps)$.
\end{lemma}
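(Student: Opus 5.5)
The plan is to write $J_j$ as the squared norm of an admissible functional applied to $\psi$, and then use Lemma~\ref{lem:bridge} with $m=2K$ to pass to moment vectors. For $z\in\mathbb C^K$ and $b\in\mathbb R^K$ with $b_j=1$, the function $\sum_kz_kE_k+\sum_kb_kq_kE_k'$ equals $\Lambda\psi$. Here $\Lambda$ is admissible on the doubled node list $(s_1,s_1,\ldots,s_K,s_K)$, which has $2K$ entries, so $m=2K$ is allowed. Writing $\alpha_k=z_k$ and $\beta_k=b_kq_k/\eps$ in the notation preceding Lemma~\ref{lem:bridge}, the moments are
\[
 \mathsf v_r=\eps^r\Bigl(\sum_kz_ks_k^r+\eps^{-1}r\sum_kb_kq_ks_k^{r-1}\Bigr),\qquad r<2K .
\]
Lemma~\ref{lem:bridge} gives $\|\Lambda\psi\|_\nu\asymp|\mathsf v|$ uniformly in the coefficients, the phases and the node gaps. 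Hence
\[
 J_j\asymp\min_{z,\;b_j=1}|\mathsf v(z,b)|^2 ,
\]
and what remains is a finite-dimensional minimization of a weighted least-squares problem over $z$.

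\emph{Step 1 (eliminate $z$).} Put $u_k=\eps^{-1}b_kq_k$. Then $\mathsf v_r=\eps^r(\sum_kz_ks_k^r+\sum_kru_ks_k^{r-1})$. Since the $s_k$ are separated, the $K\times K$ Vandermonde system on rows $r<K$ is invertible with bounded inverse, so I can choose $z$ to annihilate $\mathsf v_0,\ldots,\mathsf v_{K-1}$. I will argue this choice is optimal up to constants. The weights $\eps^r$ decrease with $r$, so the rows $r\ge K$ carry smaller weights. Any other $z$ leaves a residual on rows $r<K$ whose weighted size dominates the change it can make on rows $r\ge K$. To make this precise, I will change basis to Newton/Prony form, as in Lemma~\ref{geom:prony-secant}: replace the moment sequence by the sequence obtained after applying the recurrence with $\omega_x(t)=\prod_a(t-s_a)$. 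The unit lower triangular matrix $T_\eps(p)=I+O(\eps)$ shows that this change of basis distorts the weighted norm only by bounded factors.

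\emph{Step 2 (evaluate the upper block).} After the first $K$ moments are annihilated, rows $K+q$ ($q<K$) in Prony coordinates become $\eps^{K+q}\sum_k u_k\,\omega_x'(s_k)s_k^q=\eps^{K+q}\sum_ku_kd_ks_k^q$. This follows from differentiating the identity $\sum_k z_k\omega_x(s_k)s_k^q=0$ together with $\omega_x'(s_k)=d_k$, which is exactly the tangent form of \eqref{geom:prony-finite-tangent}. Substituting $u_k=\eps^{-1}b_kq_k$ gives
\[
 \sum_{q<K}\eps^{2(K+q)-2}\Bigl|\sum_kd_ks_k^qq_kb_k\Bigr|^2=\eps^{2K-2}Q_{\eps,q}(b).
\]
Minimizing over $b_j=1$ then yields \eqref{eq:uniform-moment-main}. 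The lower bound follows because, for every $z$, the weighted norm of $\mathsf v$ is at least a constant times the norm of its upper block after the triangular reduction.

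The main obstacle is Step 1's uniformity. The comparison must not lose powers of $\eps$ when phases make $Q$ nearly degenerate, including phases that depend on $\eps$. I would handle this by never dividing by $Q$. Instead I would use only the exact triangular identity, whose constants depend solely on the node separation $\rho$, $T$, and $K$. I would also use the fact that the constants in Lemma~\ref{lem:bridge} are independent of the coefficients. With these two ingredients, all bounds hold uniformly in $q\in(\mathbb S^1)^K$.
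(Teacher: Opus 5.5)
Your proposal is correct and follows the paper's proof essentially step for step. You pass to scaled moments via Lemma~\ref{lem:bridge} on the doubled node list, then apply the unit lower triangular $T_\eps(p)=I+O(\eps)$: its first $K$ entries range freely over $z$, while its last $K$ entries equal $\eps^{K+q}\sum_k u_kd_ks_k^q$ independently of $z$, which gives $\eps^{2K-2}Q_{\eps,q}(b)$ with constants that do not depend on the phases. The paper does the same computation with $\alpha=\eps z$ and divides by $\eps^2$ at the end. Your decreasing-weights heuristic in Step~1 is superseded once you use the triangular identity.
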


\begin{proof}See Appendix~\ref{prf:weighted}.\end{proof}
Appendix~\ref{prf:weighted} also gives the Cauchy--Binet ratio of
minors~\eqref{eq:profile-minors} used for the weighted priors of
Section~\ref{sec:prior-mass}.

\emph{The information ellipse.}
For the scaled Vandermonde columns
$h_a=(1,\eps s_a,\ldots,\eps^{K-1}s_a^{K-1})^T$ put
$\mathcal V_j=\spanR\{e^{i\varphi_a}h_a:a\ne j\}$.
By~\eqref{eq:uniform-moment-main}, $J_j(x)\asymp a_j^2\widetilde J_j$ with
$\widetilde J_j=\eps^{2K-2}d_j^2\dist(e^{i\varphi_j}h_j,\mathcal V_j)^2$.
Let $\mathbf R$ be the real $2\times2$ Gram matrix of the projections of
$h_j,ih_j$ onto $\mathcal V_j^\perp$ and $u_\varphi=(\cos\varphi,\sin\varphi)^T$.
Then $\widetilde J_j=\eps^{2K-2}d_j^2u_{\varphi_j}^T\mathbf Ru_{\varphi_j}$: for
fixed nuisance phases $\varphi_{-j}$, the information is a quadratic
form in the target phase, with ellipse matrix $\mathbf R$. Integrating the
target phase gives
\begin{equation}
\label{eq:main-ellipse}
 \mathbb E_{\varphi_j}[\widetilde J_j^{-1}\mid\varphi_{-j}]
 =\frac{\eps^{-2K+2}}{d_j^2\sqrt{\det\mathbf R}} .
\end{equation}

\begin{lemma}[Information ellipse]
\label{geom:ellipse}
\label{lem:phase-integrals}
\begin{enumerate}
\item Uniformly over the node class, the coefficient moduli and all
nuisance phases $\varphi_{-j}$, $\sqrt{\det\mathbf R}=\eps^{K-1}|d_j|\{1+O(\eps)\}$.
\item If $K$ is even, then, uniformly over the node class,
$\mathbb E_{\varphi_{-j}}\lambda_{\max}(\mathbf R)^{-1/2}\le C\eps^{1-K/2}$, with $C$
depending only on $K$, the offset bound $T$ and $\rho$.
\end{enumerate}
\end{lemma}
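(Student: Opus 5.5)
For part~1, we will not compute $\mathbf R$ entrywise. Instead we write $\det\mathbf R$ as a ratio of real Gram determinants,
\[
 \det\mathbf R=\frac{\operatorname{Gram}_{\mathbb R}\bigl(h_j,ih_j,(q_ah_a)_{a\ne j}\bigr)}{\operatorname{Gram}_{\mathbb R}\bigl((q_ah_a)_{a\ne j}\bigr)},
\]
which holds because $\mathbf R$ is the Gram matrix of the projections of $h_j,ih_j$ onto $\mathcal V_j^\perp$.

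\emph{Splitting off the complex-orthogonal part.} Let $w$ be the component of $h_j$ complex-orthogonal to $\spanC\{h_a:a\ne j\}$, and write $h_j=w+y$. Then $w$ and $iw$ are real-orthogonal to the subspace $B=\spanC\{h_a:a\ne j\}$, which contains $y$, $iy$ and $\mathcal V_j$. Expanding the numerator by Cauchy--Binet over the orthogonal sum $\spanR\{w,iw\}\oplus B$ gives
\[
 \det\mathbf R=|w|^4+|w|^2\bigl\{\dist^2(y,\mathcal V_j)+\dist^2(iy,\mathcal V_j)\bigr\}
 +\frac{\operatorname{Gram}(y,iy,\mathcal V_j)}{\operatorname{Gram}(\mathcal V_j)} .
\]
The complex distance $|w|$ is a ratio of complex Vandermonde Gram determinants, and the same scaling computation as in Lemma~\ref{lem:weighted} gives $|w|=\eps^{K-1}|d_j|\{1+O(\eps)\}$.

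\emph{The remaining terms.} The claim then reduces to two bounds:
\[
 \dist^2(y,\mathcal V_j)+\dist^2(iy,\mathcal V_j)=1+O(\eps),
 \qquad
 \frac{\operatorname{Gram}(y,iy,\mathcal V_j)}{\operatorname{Gram}(\mathcal V_j)}=O\bigl(\eps^{2K-1}\bigr),
\]
both uniformly in the nuisance phases. For $K=2$ this can be checked by hand: $y=zh_2$ with $z\approx1$, and the middle term is $|w|^2|z|^2|h_2|^2\approx|w|^2$. For general $K$ we would expand both Gram determinants by Cauchy--Binet in the realified scaled coordinates $\eps^r(\Ree,\Imm)$. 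The aim is to show that the lowest power of $\eps$ is carried by the same phase-dependent minors in numerator and denominator, so that these minors cancel; this is the ratio-of-minors identity~\eqref{eq:profile-minors} of Appendix~\ref{prf:weighted}.

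\emph{Main obstacle.} This cancellation is where I expect the real difficulty. The denominator can itself degenerate, and for even $K$ it does, on the set where $\mathscr D$-type determinants vanish. The proof must show that each such degeneracy appears with identical order in the numerator. What I would try first is to factor both determinants through the common block $B$, reducing the ratio to a statement about $\spanR\{q_ah_a\}$ inside $B$. An alternative, if that fails, is to integrate the target phase directly in~\eqref{eq:main-ellipse} and compare with the complex-orthogonal picture.

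For part~2, use $\lambda_{\max}(\mathbf R)=\max_{\varphi_j}u_{\varphi_j}^T\mathbf Ru_{\varphi_j}=\eps^{2-2K}d_j^{-2}\max_{\varphi_j}\widetilde J_j$. By Lemma~\ref{lem:weighted} this equals, up to constants, $\max_{\varphi_j}\min_{b_j=1}Q_{\eps,q}(b)$. For even $K$, $\mathcal V_j$ has real dimension $K-1$ and can lose one order of $\eps$ at most along a hypersurface of the nuisance phases. There the real span of the lowest $K/2$ scaled rows fails to be complete, and this failure is described by a determinant $\mathscr D_j(s,\varphi_{-j})$ of the type~\eqref{eq:phase-prior-determinant} with $j$ removed. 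The bound I would aim for is
\[
 \lambda_{\max}(\mathbf R)\gtrsim\eps^{K-2}\bigl(|\mathscr D_j|+\eps\bigr),
\]
obtained by choosing the best target phase to align with the missing direction and isolating the single minor of order $\eps^{K-2}$. Then $\lambda_{\max}^{-1/2}\lesssim\eps^{1-K/2}|\mathscr D_j|^{-1/2}$. It remains to show that $\mathbb E_{\varphi_{-j}}|\mathscr D_j|^{-1/2}$ is bounded uniformly over the separated node class. Since $\mathscr D_j$ is a trigonometric polynomial, this holds if $\mathscr D_j$ and its phase gradient do not vanish simultaneously, that is, if its zeros are transversal; I would verify this from the Vandermonde structure, using $s_{a+1}-s_a\ge\rho$ to keep the gradient minors bounded below.
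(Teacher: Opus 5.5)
\paragraph{Part 1.} Your identity for $\det\mathbf R$ via $h_j=w+y$ is correct, and so is $|w|=\eps^{K-1}|d_j|\{1+O(\eps)\}$. The two bounds you reduce part~1 to, however, are false for every $K\ge3$; for $K=2$ they hold only because the last term vanishes identically.

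Take $K=3$, $j=1$, $\varphi_2=0$, $\varphi_3=\pi/2$. Then $\mathcal V_1=\spanR\{h_2,ih_3\}$ and $y$ is real. Hence $\dist^2(y,\mathcal V_1)+\dist^2(iy,\mathcal V_1)=\dist^2(y,\mathbb Rh_2)+\dist^2(y,\mathbb Rh_3)\asymp\eps^2$, not $1+O(\eps)$. The Gram matrix of $(y,h_2,iy,ih_3)$ splits into a real block and an imaginary block. So your last term equals $\dist^2(y,\mathbb Rh_2)\,\dist^2(y,\mathbb Rh_3)=\eps^4d_1^2\{1+O(\eps)\}$. This is the entire answer, not $O(\eps^{5})$.

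At aligned nuisance phases the roles reverse: $y\in\mathcal V_1$, the last term vanishes, and the middle term is $|w|^2|y|^2$. Only the sum of the last two terms is $|w|^2\{1+O(\eps)\}$. How it splits depends on $\varphi_{-j}$, so no termwise Cauchy--Binet bound can be uniform in the phases.

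The missing idea is to take the Schur complement in the other order. Set $\alpha_j=\|h_j\|^2$, let $H$ have the nuisance columns $h_a$, and put
\begin{equation*}
G=H^TH,\qquad G_0=H^T(I-h_jh_j^T/\alpha_j)H,\qquad \mathcal Q=\operatorname{diag}(q_a)_{a\ne j}.
\end{equation*}
Then $\det\mathbf R=\alpha_j^2\det N_0/\det N$, where $N=\Ree(\mathcal Q^*G\mathcal Q)$ and $N_0=\Ree(\mathcal Q^*G_0\mathcal Q)$ are the real Gram matrices of the $q_ah_a$ before and after projection off $\mathbb Ch_j$.

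Let $\check H$ have the shortened columns $(1,\eps s_a,\ldots,(\eps s_a)^{K-2})^T$, and let $\Delta=\operatorname{diag}(s_a-s_j)_{a\ne j}$. Interpolate $t^{K-1}$ on the nuisance nodes, and write $(h_a-h_j)/\{\eps(s_a-s_j)\}=(0,\Lambda\check h_a)$ with $\Lambda=I+O(\eps)$. This expresses $G$ and $\eps^{-2}\Delta^{-1}G_0\Delta^{-1}$ as congruences, by the same $\check H$, of two matrices equal to $I+O(\eps)$. Hence
\begin{equation*}
(1-C\eps)\eps^2\Delta G\Delta\preceq G_0\preceq(1+C\eps)\eps^2\Delta G\Delta.
\end{equation*}
This order survives conjugation by $\mathcal Q$, which commutes with $\Delta$, and taking real parts. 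So $\det N_0/\det N=\eps^{2K-2}d_j^2\{1+O(\eps)\}$ uniformly in the phases. No phase-dependent minor is ever computed, and the conditioning of $\check H$ plays no role.

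\paragraph{Part 2.} Here $\mathscr D_j$ is not defined: deleting column $j$ from the square matrix in~\eqref{eq:phase-prior-determinant} leaves a $K\times(K-1)$ matrix. The degeneracy you describe also has the wrong shape.

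Let $\mathsf p_\varphi(y)=\sum_{\iota<K/2}\beta_\iota y^\iota$, with $\sum|\beta_\iota|^2=1$, solve $\Ree(e^{i\varphi_a}\mathsf p_\varphi(s_a))=0$ for $a\ne j$. This is $K-1$ real equations in $K$ real unknowns. The vector with entries $\overline{\beta_\iota}\eps^{-\iota}$, $\iota<K/2$, has three properties: it is real-orthogonal to $\mathcal V_j$, its squared norm is at most $\eps^{2-K}$, and it pairs with the target columns to modulus $|\mathsf p_\varphi(s_j)|$. Hence $\lambda_{\max}(\mathbf R)\ge\eps^{K-2}|\mathsf p_\varphi(s_j)|^2$ directly, with no minor bookkeeping.

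Expanding $\mathscr D$ along column $j$ as $X\cos\varphi_j+Y\sin\varphi_j$ shows that $|\mathsf p_\varphi(s_j)|$ equals $(X^2+Y^2)^{1/2}$ divided by the norm of the nuisance cofactor vector. It can therefore vanish only where $X=Y=0$, a set cut out by two equations. For $K=4$ this set is exactly the curve on which the three nuisance phases agree mod $\pi$. A smooth function vanishing on such a set has critical zeros, so your transversal-zero criterion cannot apply. A local transversality argument would in any case have to be made uniform over the node class.

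The step you are missing is the uniform bound $\mathbb E_{\varphi_{-j}}|\mathsf p_\varphi(s_j)|^{-1}\le C$. The paper obtains it by the area formula for the map from unit polynomials to nuisance phases. By Hadamard's inequality that map has Jacobian at most $C\prod_{a\ne j}|\mathsf p(s_a)|^{-1}$. This reduces the phase average to $\int\prod_{a}|\mathsf p(s_a)|^{-1}\,dS(\mathsf p)$ over the unit sphere. Uniform Vandermonde invertibility then leaves at most $K/2-1$ small factors. A complex Gaussian computation bounds the rest with constants depending only on $K$, $T$ and $\rho$.
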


\begin{proof}See Appendix~\ref{geom:source-lower}.\end{proof}
Write $\widetilde\varphi=\varphi_j-\varphi_0$, where
$u_{\varphi_0}$ is a unit eigenvector for $\lambda_{\min}(\mathbf R)$; then
$u_{\varphi_j}^T\mathbf Ru_{\varphi_j}
=\lambda_{\min}+(\lambda_{\max}-\lambda_{\min})\sin^2\widetilde\varphi$.
The conditional tail has exponent $1/2$ between the two eigenvalue
scales when their ratio is at least eight. This follows from
$\mathbb P\{|\sin\widetilde\varphi|\le z\}=(2/\pi)\arcsin z$;
the global phase average still requires both parts of
Lemma~\ref{geom:ellipse}.

\begin{proposition}[Phase-averaged inverse information]
\label{thm:avg-inverse-info}
Suppose that the designs $(\nu_n)_{n\ge1}$ satisfy
Assumption~\ref{ass:sampling} uniformly. For every fixed
$K\ge2$, every $j\le K$, and sufficiently small $\eps$, uniformly over
the admissible nodes and coefficient moduli in~\eqref{eq:class}, with
independent uniform phases,
\begin{equation}
\label{eq:random-phase-information}
 \mathbb E_\varphi[J_j(x)^{-1}]\asymp\eps^{-(3K-3)}.
\end{equation}
\end{proposition}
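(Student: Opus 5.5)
The plan is to condition on the nuisance phases $\varphi_{-j}$ and integrate out the target phase first, using the information ellipse. By Lemma~\ref{lem:weighted} and the reduction after it, $J_j(x)\asymp a_j^2\widetilde J_j$ with constants uniform over all phases and over the admissible nodes and moduli. Since $\underline a\le a_j\le\overline a$, it suffices to show
\[
 \mathbb E_\varphi[\widetilde J_j^{-1}]\asymp\eps^{-(3K-3)}.
\]
Two-sided comparability of nonnegative integrands is preserved under expectation, so no uniformity issue arises at degenerate phases. The point of Lemma~\ref{lem:weighted} is exactly that it holds even when $\varphi$ depends on $\eps$.

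Next, for fixed $\varphi_{-j}$, note that $\widetilde J_j=\eps^{2K-2}d_j^2\,u_{\varphi_j}^T\mathbf Ru_{\varphi_j}$ is a positive definite quadratic form in $u_{\varphi_j}$. Its positivity for small $\eps$ follows from the nonsingularity in Lemma~\ref{lem:profiled}. The elementary identity
\[
 \frac1{2\pi}\int_0^{2\pi}\frac{d\varphi}{\lambda_1\cos^2\varphi+\lambda_2\sin^2\varphi}=\frac1{\sqrt{\lambda_1\lambda_2}}
\]
then gives \eqref{eq:main-ellipse}:
\[
 \mathbb E_{\varphi_j}[\widetilde J_j^{-1}\mid\varphi_{-j}]=\frac{\eps^{-2K+2}}{d_j^2\sqrt{\det\mathbf R}}.
\]
Here independence of the phases is used, so that $\varphi_j$ is uniform conditionally on $\varphi_{-j}$.

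Then I would insert Lemma~\ref{geom:ellipse}(1), which gives $\sqrt{\det\mathbf R}=\eps^{K-1}|d_j|\{1+O(\eps)\}$ uniformly in $\varphi_{-j}$ and the moduli. The conditional expectation therefore equals
\[
 \eps^{-(3K-3)}|d_j|^{-3}\{1+O(\eps)\}.
\]
The separation bound $s_{j+1}-s_j\ge\rho$ and the compactness of $\mathcal S_0\subset[-T,T]$ give $\rho^{K-1}\le|d_j|\le(2T)^{K-1}$. So this quantity is $\asymp\eps^{-(3K-3)}$ uniformly. Averaging over $\varphi_{-j}$ then yields \eqref{eq:random-phase-information}, and the constants depend only on $K,\rho,T$, the amplitude bounds and the constants of Assumption~\ref{ass:sampling}.

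The main obstacle is Lemma~\ref{geom:ellipse}(1): showing that $\det\mathbf R$ has exact order $\eps^{2K-2}d_j^2$ even when $\mathcal V_j$ nearly contains $ih_j$ or $h_j$. Since that lemma is given, this proof reduces to the reductions above. If I had to justify it, I would write
\[
 \det\mathbf R=\frac{\det\operatorname{Gram}_{\mathbb R}(h_j,ih_j,\{e^{i\varphi_a}h_a\}_{a\ne j})}{\det\operatorname{Gram}_{\mathbb R}(\{e^{i\varphi_a}h_a\}_{a\ne j})}.
\]
The numerator is $|\det_{\mathbb C}[h_j,\ldots]|^2$-type: the span of $h_j,ih_j$ is complex, and after rotating phases I would use Cauchy--Binet over the $\eps$-scaled rows. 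Its leading term would be computed by the Vandermonde determinants. The denominator's phase dependence must cancel at leading order, and this is the delicate part.
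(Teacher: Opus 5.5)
Your proposal is correct and is essentially the paper's proof. You condition on $\varphi_{-j}$, integrate the target phase via \eqref{eq:main-ellipse}, insert Lemma~\ref{geom:ellipse}(1) to get $\eps^{-(3K-3)}|d_j|^{-3}\{1+O(\eps)\}$ uniformly in the nuisance phases, and conclude through $J_j(x)\asymp a_j^2\widetilde J_j$ with bounded moduli and $\rho^{K-1}\le|d_j|\le(2T)^{K-1}$; your closing sketch of Lemma~\ref{geom:ellipse}(1) is unnecessary, since that lemma is available (the paper proves it with the Schur ordering $\det\mathbf R=\alpha_j^2\det N_0/\det N$ and a Loewner comparison of $G_0$ with $\eps^2\Delta G\Delta$ rather than Cauchy--Binet).
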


\begin{proof}
Lemma~\ref{geom:ellipse}(1) and~\eqref{eq:main-ellipse} give
$\mathbb E_{\varphi_j}[\widetilde J_j^{-1}\mid\varphi_{-j}]
=\eps^{-(3K-3)}|d_j|^{-3}\{1+O(\eps)\}$ uniformly in the nuisance
phases, and $J_j(x)\asymp a_j^2\widetilde J_j$.
\end{proof}

\subsubsection{The Inverse Bound}
\label{sec:statistics}
\label{prf:stat}

Throughout this subsection, $(\nu_n)$ satisfies Assumption~\ref{ass:sampling}
uniformly, and unless otherwise stated the constants and the upper bound
on $\eps$ depend only on $K$, the constants of that assumption and the
parameter class~\eqref{eq:class}. Write $m_\ell(x)=\sum_ac_as_a^\ell$
for the moments of $x=(s,c)\in\mathcal X_K$.

\begin{lemma}[Moment comparison]
\label{lem:moment-comparison}
There is $C<\infty$ such that, for sufficiently small $\eps$, all $n$ and
all $x,x'\in\mathcal X_K$,
\[
 C^{-1}V(x,x')\le\|\Pi(\mu_x-\mu_{x'})\|_{\nu_n}
 \le\|\mu_x-\mu_{x'}\|_{\nu_n}\le CV(x,x'),
\]
where $V(x,x')^2=\sum_{\ell=0}^{2K-1}\eps^{2\ell}|m_\ell(x)-m_\ell(x')|^2$.
\end{lemma}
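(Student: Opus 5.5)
The plan is to reduce the statement to Lemma~\ref{lem:bridge} applied to the admissible functional $\Lambda=\delta$-evaluations at the union of the two node lists. Concretely, for $x=(s,c)$ and $x'=(s',c')$ in $\mathcal X_K$, write
\[
 \mu_x-\mu_{x'}=\Lambda\psi,\qquad
 \Lambda f=\sum_{a=1}^Kc_af(\tau_0+\eps s_a)-\sum_{a=1}^Kc'_af(\tau_0+\eps s'_a).
\]
This is an atomic combination on the node list $t=(s_1,\ldots,s_K,s'_1,\ldots,s'_K)$ of length $2K$, with all entries in $[-T,T]$.

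When some $s_a$ coincides with some $s'_b$, I would merge the two atoms into one with coefficient $c_a-c'_b$. That leaves a shorter list, which Lemma~\ref{lem:bridge} allows after padding with points carrying no weight. So $\Lambda$ is admissible with $m=2K$ in every case, including coincident nodes. Its moment vector is
\[
 \mathsf v_r=\Lambda[(\tau-\tau_0)^r]=\eps^r\Bigl(\sum_ac_as_a^r-\sum_ac'_as_a'^r\Bigr)=\eps^r\bigl(m_r(x)-m_r(x')\bigr),\qquad r<2K.
\]
Hence $|\mathsf v|=V(x,x')$.

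Lemma~\ref{lem:bridge} with $\nu=\nu_n$ and $m=2K$ then gives
\[
 C^{-1}V\le\|\Pi_{2K}\Lambda\psi\|_{\nu_n}\le\|\Lambda\psi\|_{\nu_n}\le CV.
\]
Since $\Pi=\Pi_{2K}$, this is exactly the claim. The middle inequality is just the fact that an orthogonal projection does not increase norms.

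The one point to check is uniformity in $n$. The constants $\eps_0$ and $C$ of Lemma~\ref{lem:bridge} depend only on $m=2K$, $T$, and the constants $r_0,A_\psi,\lambda_0$ of Assumption~\ref{ass:sampling}. Because the designs $(\nu_n)$ satisfy the assumption uniformly, a single pair $(\eps_0,C)$ serves for every $n$. The constants are also independent of node gaps and of the coefficients of $\Lambda$, so no separation between the node lists of $x$ and $x'$ is required. This independence is the only potential obstacle, since the two configurations may interlace or nearly coincide arbitrarily. It is already built into the Hermite--Newton bound of Lemma~\ref{lem:bridge}, which is uniform even at repeated nodes, so I do not expect any additional difficulty.
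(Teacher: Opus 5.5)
Your proposal is correct and is essentially the paper's own proof: apply Lemma~\ref{lem:bridge} with $m=2K$ to the atomic difference functional on the $2K$ points $\tau_0+\eps s_a$ and $\tau_0+\eps s'_a$, whose scaled moments are $\eps^\ell\{m_\ell(x)-m_\ell(x')\}$. Your added details are consistent with that lemma and harmless: merging coincident atoms, or equivalently keeping the repeated node, which is also admissible; and obtaining uniformity in $n$ from the uniform Assumption~\ref{ass:sampling}.
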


\begin{proof}
The difference $\mu_x-\mu_{x'}$ combines the response $\psi$ at the $2K$ points
$\tau_0+\eps s_a$ and $\tau_0+\eps s'_a$, and its scaled moments are
$\eps^\ell\{m_\ell(x)-m_\ell(x')\}$; Lemma~\ref{lem:bridge}
with $m=2K$ gives the claim.
\end{proof}

\begin{proposition}[Inverse bound near a configuration]
\label{thm:inverse}
Suppose Assumption~\ref{ass:sampling} holds uniformly for $(\nu_n)$.
For the class $\mathcal X_K$ in~\eqref{eq:class} and projection $\Pi$
defined in Section~\ref{sec:projected-estimator}, there are $c_*,C_*>0$ such
that for small $\eps$, all $n$, all
$x,x'\in\mathcal X_K$ (phases arbitrary, possibly depending on $\eps$) and
$1\le j\le K$,
\[
 \|\Pi(\mu_x-\mu_{x'})\|_{\nu_n}\le c_*\eps\sqrt{J_{\min}(x)}
 \ \Longrightarrow\
 \eps|s_j-s'_j|\le\frac{C_*\|\Pi(\mu_x-\mu_{x'})\|_{\nu_n}}{\sqrt{J_j(x)}} .
\]
\end{proposition}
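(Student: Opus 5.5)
The plan is a two-step argument. First, a localization step shows that the smallness hypothesis forces $x'$ into a fixed Prony neighborhood of $x$. Second, a linearization step uses the single identity \eqref{geom:prony-finite-tangent} to compare the finite difference with the tangent that defines $J_j$.

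By Lemma~\ref{lem:moment-comparison}, $\|\Pi(\mu_x-\mu_{x'})\|_{\nu_n}\asymp V(x,x')$. By Lemma~\ref{geom:prony-secant}(1), $V(x,x')\asymp|\mathcal A_{\eps,x}(\xi(x')-\xi(x))|$. For the localization step we use the crude bound $J_{\min}(x)\lesssim\eps^{2K-2}$, obtained from Lemma~\ref{lem:weighted} with the choice $b=e_j$. Under the hypothesis this gives $|\mathcal A_{\eps,x}h|\lesssim c_*\eps^{K}$, where $h=\xi(x')-\xi(x)$. Since the first block of $\mathcal A$ carries weights $\eps^l\ge\eps^{K-1}$, we obtain $|\Delta M|\lesssim c_*\eps$. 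The second block gives
\[
 \Bigl|\sum_{l<K} m_{q+l}(x)\Delta p_l\Bigr|\lesssim c_*\eps^{-q},
\]
which is useless for large $q$. To localize $p$ we therefore need a global argument, not the linear identity alone.

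The route I would take for localization is by contradiction together with compactness in the normalized coordinates. If $V(x,x')\le c\,\eps^K$ with $c$ small, then the unscaled moments satisfy $|m_l(x)-m_l(x')|\lesssim c\,\eps^{K-l}$ for $l<2K$. In particular the first $K+1$ moments agree up to $O(c)$. However, this is not enough to pin down $K$ nodes with free complex weights, and this step is the main obstacle. I would instead exploit the Prony structure directly. The Hankel matrix $(m_{q+l}(x))_{q,l<K}$ equals $V_s^T\operatorname{diag}(c)V_s$, and so it is invertible with uniform bounds. Using \eqref{geom:prony-difference} with $p'$ bounded (since the nodes are admissible), we get
\[
 \Bigl|\sum_l m_{q+l}\Delta p_l\Bigr|\le\sum_l|\Delta m_{K+q}|+C|\Delta m_{q+l}|\lesssim c\,\eps^{-q}\cdot\eps^{0}.
\]
This again fails for $q\ge1$, which suggests that the hypothesis with $J_{\min}$ rather than a fixed power is essential. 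The alternative is to work entirely inside the identity. We write the finite difference as the tangent at $x$ in direction $h$ plus a quadratic remainder; the remainder is controlled because $T_\eps(p')-T_\eps(p)=O(\eps|\Delta p|)$. A continuity (bootstrap) argument along the segment from $x$ to $x'$ in Prony coordinates then shows that $|h|$ stays small whenever the signal distance stays below $c_*\eps\sqrt{J_{\min}}$, provided $c_*$ is small.

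Once $x'$ lies in the Prony box $\mathcal U$, Lemma~\ref{geom:prony-secant}(2) gives
\[
 \eps|s_j-s_j'|\asymp\eps\Bigl|\sum_l s_j^l\Delta p_l\Bigr|.
\]
The identity rewrites $\mathsf E\mathcal A_{\eps,x}h$ as $T_\eps(p)DF_\eps(\xi)h+(T_\eps(p')-T_\eps(p))v$. The first term is the scaled-moment vector of the tangent combination $\sum_k z_kE_k+\eps\sum_k b_kq_kE_k'$ associated with the perturbation $h$, with coefficient $b_j\propto(s_j'-s_j)/\eps$ up to the ordering factor. By Lemma~\ref{lem:bridge} and the definition \eqref{eq:Jdef}, the norm of this tangent is at least $\eps\sqrt{J_j(x)}\,|s_j'-s_j|/\eps\cdot\eps$ in the appropriate scaling. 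This is the tangent lower bound, restricted to the component along $b_j$, which is exactly the efficient information. The remainder is $O(\eps|\Delta p|)|v|$, and it is absorbed because $|\Delta p|$ is small after localization. Combining the two gives $\eps|s_j-s_j'|\le C_*V(x,x')/\sqrt{J_j(x)}$, as claimed. The hardest step is making the localization uniform near degenerate phases. There I would use the hypothesis's $\sqrt{J_{\min}}$, rather than a fixed power of $\eps$, as the radius of the bootstrap.
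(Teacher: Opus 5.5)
Your localization step is a genuine gap, and the proposal does not close without it. The bootstrap along the Prony segment from $x$ to $x'$ is not justified. The hypothesis controls the signal distance only at $x'$. The segment $\omega_x+t(\omega_{x'}-\omega_x)$ can also leave the admissible class, because a convex combination of two real-rooted monic polynomials with distant roots may have complex roots: for $K=2$ and a long enough $\mathcal S_0$, $(z+5.5)(z+4.5)$ and $(z-4.5)(z-5.5)$ average to $z^2+24.75$. Off the class neither the chart nor the information bounds are available. Nothing else connects $x'$ to $x$ through configurations of small signal distance, so continuity cannot rule out a distant $x'$. Your linearization step also presupposes localization twice. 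It does so when you identify the $j$th node velocity with $(s'_j-s_j)/\eps$, and again when you invoke Lemma~\ref{geom:prony-secant}(2), which is stated for pairs in one small Prony box.

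The missing observation is that your own identity already gives a global estimate, so you need no localization beforehand. By~\eqref{geom:prony-finite-tangent}, with $h=\xi(x')-\xi(x)$, you have exactly $T_\eps(p')v=\mathsf E\mathcal A_{\eps,x}h=T_\eps(p)DF_\eps(\xi)h$, so there is no remainder to absorb. Since $T_\eps(\cdot)=I+O(\eps)$ uniformly over all admissible node polynomials, you get $|DF_\eps(\xi)h|\asymp|v|\asymp\delta$, where $\delta=\|\Pi(\mu_x-\mu_{x'})\|_{\nu_n}$, with no smallness of $\Delta p$ required.

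The tangent at $x$ in the direction $h$ has $k$th node velocity $-\omega_{x'}(s_k)/\omega_x'(s_k)$, not a multiple of $s'_k-s_k$, because the polynomial variation is $\omega_{x'}-\omega_x$. Hence Lemma~\ref{lem:bridge} and~\eqref{eq:Jdef} give $|\omega_{x'}(s_k)|\le C\delta/(\eps\sqrt{J_k(x)})$ for every $k$ and every $x'\in\mathcal X_K$. This is the paper's~\eqref{eq:prf-psj}. The paper derives it with the annihilating polynomial $p=\omega_{x'}$: the sums $S_r=\sum_kc_kp(s_k)s_k^r=\sum_lp_l\{m_{l+r}(x)-m_{l+r}(x')\}$ are bounded through Lemma~\ref{lem:moment-comparison}, and $b_k=p(s_k)d_j/[p(s_j)d_k]$ serves as a real test direction in Lemma~\ref{lem:weighted}.

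Localization then comes for free, and this is where $J_{\min}$ enters. The hypothesis gives $|\omega_{x'}(s_k)|\le Cc_*<(\rho/3)^K$ for every $k$, so each $s_k$ has a root of $\omega_{x'}$ within $\rho/3$. Because these intervals are disjoint, the roots match in order, with $|s_j-s'_a|\ge2\rho/3$ for $a\ne j$. Finally, $|s_j-s'_j|=|\omega_{x'}(s_j)|/\prod_{a\ne j}|s_j-s'_a|\le(3/(2\rho))^{K-1}|\omega_{x'}(s_j)|$ yields the conclusion with $J_j$.
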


\begin{proof}See Appendix~\ref{prf:inverse}; the argument uses an annihilating polynomial.\end{proof}

The class diameter extends the inverse bound to every pair:
\begin{equation}
 \label{eq:source-global-modulus}
 |\tau_j(x')-\tau_j(x)|^2
 \le C\min\left\{\eps^2,
 \frac{\|\Pi(\mu_{x'}-\mu_x)\|_{\nu_n}^2}{J_{\min}(x)}\right\},
 \qquad x,x'\in\mathcal X_K.
\end{equation}
Indeed, when the hypothesis of Proposition~\ref{thm:inverse} holds, use
$J_j\ge J_{\min}$; otherwise the second entry is at least
$c_*^2\eps^2$, so the class diameter applies.

\begin{lemma}[Conditional mean-square bound]
\label{cor:conditional-mse}
For the measurable minimizer in~\eqref{eq:estimator} and every
$x\in\mathcal X_K$,
\begin{equation}
 \label{eq:rp-conditional-mse}
 \mathbb E_x|\widehat\tau_j-\tau_j|^2
 \le C\min\{\eps^2,\zeta_n^2/J_{\min}(x)\}.
\end{equation}
\end{lemma}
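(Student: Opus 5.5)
The plan is to apply Lemma~\ref{spine:risk}(a) with the source specialization of the experiment~\eqref{spine:experiment}: $\Theta=\Theta^+=\mathcal X_K$, $F(y)=\Pi\mu_y$, $T=\Pi Y$, $r=\zeta_n$, $g(y)=\tau_j(y)=\tau_0+\eps s_j$. The estimator~\eqref{eq:estimator} minimizes $\|\Pi(Y-\mu_y)\|_{\nu_n}=\|T-F(y)\|$ over $\mathcal X_K$, so it is exactly the minimum-distance fit of part~(a).

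The second condition in~\eqref{spine:experiment} holds with $\bar q=2K$. Indeed, $T-F(x)=\Pi\eta$, where $\eta$ is the noise vector viewed as an element of $L^2(\nu_n)$. Since $\|\eta\|_{\nu_n}^2=n^{-1}\sum_k|\eta_k|^2$, the projection of $\eta$ onto a fixed complex subspace of dimension at most $2K$ has expected squared norm at most $2K\sigma^2/n=2K\zeta_n^2$. This follows from an orthonormal basis computation: each coordinate $\langle e,\eta\rangle$ has variance $\sigma^2/n$. Only this noise bound is needed for part~(a); the KL condition is used only for lower bounds.

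For the modulus~\eqref{risk:modulus}, the global bound~\eqref{eq:source-global-modulus} suggests taking
\[
 w_x(u)=\min\{\eps,\ u/\sqrt{J_{\min}(x)}\}
\]
with $C_0=\sqrt C$. This choice gives $|\tau_j(y)-\tau_j(x)|\le C_0w_x(\|F(y)-F(x)\|)$ for all $x,y\in\mathcal X_K$. It remains to check the structural hypotheses on $w_x$. It is nonnegative and nondecreasing. For $t\ge1$,
\[
 w_x(tu)=\min\{\eps,tu/\sqrt{J_{\min}}\}\le t\min\{\eps,u/\sqrt{J_{\min}}\}=tw_x(u),
\]
because $\eps\le t\eps$. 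If $J_{\min}(x)=0$ we read $w_x\equiv\eps$, which is still admissible, and~\eqref{eq:source-global-modulus} reduces to the class-diameter bound.

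Lemma~\ref{spine:risk}(a) then yields
\[
 \mathbb E_x|\widehat\tau_j-\tau_j|^2\le C_0^2(1+8K)\,w_x(\zeta_n)^2=C\min\{\eps^2,\zeta_n^2/J_{\min}(x)\}.
\]
This is~\eqref{eq:rp-conditional-mse}, for every $x$, all $n$ and all small $\eps$, with constants inherited from Proposition~\ref{thm:inverse}.

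The only substantive input is~\eqref{eq:source-global-modulus}, that is, Proposition~\ref{thm:inverse} combined with the class diameter; that is where the real difficulty lies, and it is assumed here. The remaining points are routine: checking measurability of $\widehat x$ (by \citet{brown1973measurable}), and confirming that the minimum-distance argument requires $\widehat x$ only to lie in $\Theta^+$, with no need for it to be near $x$.
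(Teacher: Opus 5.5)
Your proposal is correct and is essentially the paper's proof. You apply Lemma~\ref{spine:risk}(a) with $g=\tau_j$, $F=\Pi\mu$ and $w_x(r)=\min\{\eps,r/\sqrt{J_{\min}(x)}\}$ to \eqref{eq:source-global-modulus}, take $\bar q=2K$ from $\mathbb E\|\Pi\eta\|_{\nu_n}^2=2K\zeta_n^2$, and obtain measurability from \citet{brown1973measurable}; you also spell out the noise and homogeneity checks, which the paper leaves implicit.
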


\begin{proof}
Compactness, continuity and measurable selection
\citep{brown1973measurable} give a measurable minimizer.
Apply Lemma~\ref{spine:risk}(a), with $g=\tau_j$, $F=\Pi\mu$ and
$w_x(r)=\min\{\eps,r/\sqrt{J_{\min}(x)}\}$, to
\eqref{eq:source-global-modulus}, using
$\mathbb E\|\Pi\eta\|_{\nu_n}^2=2K\zeta_n^2$.
\end{proof}

Consequently, for every prior $\pi$ on $\mathcal X_K$, the
estimator~\eqref{eq:estimator} has Bayes risk at most
$C\,\mathbb E_\pi\min\{\eps^2,\zeta_n^2/J_{\min}\}$.

\subsubsection{Fixed-Moment Paths}
\label{sec:fiber-paths}
\label{sec:source-verification}

The lower bounds use paths along which the first few complex moments
stay fixed. On such a path the target node moves at speed of order
$\eps$. By Lemma~\ref{lem:bridge}, the mean $\mu$ moves only through the
first moment that changes. That moment carries a high power of
$\eps$. The prior enters only through its density ratio along the path
and through the chart Jacobian. No derivative of the prior density is
taken. Hence the weights $|\mathscr D|^\beta$ with $-1<\beta<0$ and
$u^{-\gamma}$ are allowed. The three source constructions differ only
in how the moving direction keeps the prior density comparable. Affine
fibers serve the uniform law, and a chart that preserves $\mathscr D$
serves $\pi_\beta$. For $\pi_\gamma$ the paths are shears that keep $u$
and $v$ within fixed ratios (see~\eqref{k3p:prior-preservation}).
The affine fibers come from the following consequence of
Lemma~\ref{geom:prony-secant}.

\begin{corollary}[Affine moment fibers]
\label{geom:affine-moment-fiber}
Let $K\le K_M\le\lfloor3K/2\rfloor$ and, in this corollary only,
$M=(m_0,\ldots,m_{K_M-1})$. Let
$\mathcal A(M)_{q,l}=m_{q+l}$ for $q<K_M-K$, $l<K$.
At an interior separated configuration where the realification of $\mathcal A(M)$ has
row rank $2(K_M-K)$, nearby sources have smooth coordinates $(M,t)$, $t\in\mathbb R^{3K-2K_M}$,
in which $p=p_*(M)+N(M)t$ and
$\operatorname{range}N(M)=\ker\mathcal A(M)$.
The chart and inverse have bounded derivatives on fixed smaller
rectangles, independently of $\eps,n$.
\end{corollary}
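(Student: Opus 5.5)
The plan is to build the chart directly from the Prony coordinates $\xi=(m_0,\ldots,m_{K-1},p)$ of Lemma~\ref{geom:prony-secant}. There $p\in\mathbb R^K$ is real, because the nodes are real. The $m_l$ are complex, so $\xi$ has $2K+K=3K$ real coordinates. The key observation is that the higher moments are determined by $\xi$ through the moment recurrence. Since $\sum_a c_a s_a^q\,\omega_x(s_a)=0$, we have
\[
 m_{K+q}=-\sum_{l<K}p_l\,m_{q+l},\qquad q\ge0 .
\]
For $q<K_M-K$ every index satisfies $q+l\le K_M-2$, so all moments on the right lie in $M$. Fixing $M=(m_0,\ldots,m_{K_M-1})$ is therefore equivalent to fixing $m_0,\ldots,m_{K-1}$ and imposing the system
\[
 \mathcal A(M)\,p=-(m_K,\ldots,m_{K_M-1})^T .
\]
This system is affine in $p$ once $M$ is given. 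Realified, it consists of $2(K_M-K)$ real equations in the $K$ real unknowns $p$. The hypothesis $K_M\le\lfloor3K/2\rfloor$ is exactly what lets full row rank be possible. Under that rank condition the solution set is an affine subspace of dimension $K-2(K_M-K)=3K-2K_M$, which matches the claimed dimension of $t$.

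The construction then proceeds in three steps.
\begin{enumerate}
\item Let $M^\ast$ be the reference moments and $N_0$ a real basis of $\ker\mathcal A(M^\ast)$. Full row rank of the realification of $\mathcal A$ is an open condition, so it persists on a neighborhood of $M^\ast$, and there the Moore--Penrose pseudoinverse $\mathcal A(M)^+$ depends smoothly (rationally) on $M$.
\item Set $p_\ast(M)=-\mathcal A(M)^+(m_K,\ldots,m_{K_M-1})^T$ and $N(M)=(I-\mathcal A(M)^+\mathcal A(M))N_0$. Then $\operatorname{range}N(M)\subset\ker\mathcal A(M)$, and $N(M^\ast)=N_0$ has full column rank, so for $M$ near $M^\ast$ the columns of $N(M)$ stay independent. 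Dimension counting then gives equality of the range with the kernel.
\item Define the forward map $(M,t)\mapsto\xi=(m_{<K},p_\ast(M)+N(M)t)$, composed with the inverse Prony chart. Define the backward map by computing $M$ from $\xi$ via the recurrence, which is polynomial, and
\[
 t=\bigl(N_0^TN(M)\bigr)^{-1}N_0^T\bigl(p-p_\ast(M)\bigr).
\]
Both maps are smooth, and they are mutually inverse on the fiber structure. Hence $(M,t)$ are smooth local coordinates near the reference source, with $p=p_\ast(M)+N(M)t$ as claimed.
\end{enumerate}

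For the uniform derivative bounds, I would restrict to a fixed closed rectangle strictly inside the neighborhood where the rank condition and the invertibility of $N_0^TN(M)$ hold. By compactness, $\sigma_{\min}$ of the realified $\mathcal A(M)$ and $\sigma_{\min}(N_0^TN(M))$ are bounded below there. Every derivative of $p_\ast$, $N$ and the backward map is then bounded. Composing with the uniformly bounded Jacobian and inverse Jacobian of the Prony chart from Lemma~\ref{geom:prony-secant} gives the bounds for sources. None of these objects involves $\eps$, $n$ or the design, since the moments are taken in normalized offsets. The constants are therefore independent of $\eps$ and $n$.

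The main obstacle is guaranteeing that the rank, and hence the fiber dimension, is constant on a fixed neighborhood, so that $N(M)$ is smooth and spans the whole kernel rather than a degenerate subspace. I would handle this through the explicit projector formula for $N(M)$ together with openness of full row rank. I would also check carefully that the rectangles in $(M,t)$ coordinates map into the region where the ordered roots remain separated, so that the inverse Prony chart applies.
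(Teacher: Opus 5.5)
Your proposal is correct and follows essentially the paper's route. Both use the Prony chart of Lemma~\ref{geom:prony-secant}, reduce the fixed-$M$ fiber via the moment recurrence to the affine system $\mathcal A(M)p=-(m_K,\ldots,m_{K_M-1})$, recover the remaining moments by the recurrence, and shrink to a fixed rectangle for the uniform bounds. The only difference is cosmetic: the paper solves the system with a fixed invertible real minor, so $t$ is just the complementary block of $p$, whereas your pseudoinverse of the realified $\mathcal A(M)$ with the projected kernel basis $N(M)=(I-\mathcal A(M)^+\mathcal A(M))N_0$ gives the same smooth affine parametrization.
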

\begin{proof}
A fixed invertible real minor solves
$\mathcal A(M)p=-(m_K,\ldots,m_{K_M-1})$ affinely.
Lemma~\ref{geom:prony-secant} recovers roots and coefficients from
$p$ and the first $K$ moments. The recurrence then recovers each
remaining entry of $M$, proving that these maps are smooth inverses.
Shrinking the rectangle gives the bounds. For $K_M=K$ there are no
constraints, and the chart is that of Lemma~\ref{geom:prony-secant}.
\end{proof}

\begin{lemma}[Fixed-moment paths]
\label{geom:fiber-paths}
In each case below there are a jointly measurable path family
$\gamma_y:[0,1]\to\operatorname{supp}\pi$ with a finite measure $\Upsilon$ on the labels and
constants independent of $\eps,n,\zeta_n$ such that the target
derivative has a fixed sign on each path, its magnitude is at least
$b_y$, the speed of the mean satisfies
$\|(\mu\circ\gamma_y)'\|_{\nu_n}\le d_y$, and the occupation measure
$\int\!\int_0^1\delta_{\gamma_y(t)}\,dt\,\Upsilon(dy)$ is at most $B\pi$.
\begin{enumerate}
\item[(i)] \emph{Uniform phases, $K$ odd, every $j$.} The paths fix
 $(m_0,\ldots,m_{(3K-3)/2})$ and move along the affine moment fiber of
 Corollary~\ref{geom:affine-moment-fiber}; $b_y=c\eps$,
 $d_y=C\eps^{(3K-1)/2}$, and $\Upsilon$ is a fixed multiple of Lebesgue
 measure on the remaining chart coordinates. Hence
 $R_{\pi,j}\ge c\min\{\eps^2,\zeta_n^2\eps^{-(3K-3)}\}$.
\item[(ii)] \emph{The prior $\pi_\beta$, $K$ even, $\beta>-1$, $j\in A$;
 for $\beta=0$ also uniform phases with $K$ even.} In the chart
 $(\mathfrak b,z,u,v)$ of Lemma~\ref{weighted:moment-chart}, the paths
 are $u$-translations at fixed $(\mathfrak b,z,v)$. They fix
 $(m_0,\ldots,m_{3K/2-2})$ and change $m_{3K/2-1}$ at rate
 $i\mathfrak b$; the target speed is at least $c\eps$,
 $d_y=C\eps^{3K/2-1}(\mathfrak b^2+\eps^2)^{1/2}$, and
 $\Upsilon(d\mathfrak b\,dz\,dv)=|\mathfrak b|^\beta\,d\mathfrak b\,dz\,dv$.
 Hence, with $\lambda=\zeta_n/\eps^{3K/2-1}$ (and $R_{\pi,j}$ in place
 of $R_{\pi_\beta,j}$ for uniform phases),
 $R_{\pi_\beta,j}\ge c\eps^2\int_0^{\mathfrak b_0}\min\{1,\lambda^2/(\mathfrak b^2+\eps^2)\}\mathfrak b^\beta\,d\mathfrak b$.
\item[(iii)] \emph{The prior $\pi_\gamma$, $K=3$, every $j$.} On each
 dyadic layer $t_l=t_{\max}2^{-l}\ge\eps$, the paths are unit-Jacobian
 shears in the Prony coordinates
 $\xi=(m_0,\ldots,m_{K-1},p_0,\ldots,p_{K-1})$ (moments and coefficients
 of the node polynomial, Lemma~\ref{geom:prony-secant}) that fix
 $(m_0,\ldots,m_3)$ and change $m_4$ at rate $O(t_l)$; the target speed
 is at least $c\eps$, $d_y=C\eps^4t_l$, and $\Upsilon=\sum_l\pi_l$,
 where $\pi_l$ is $\pi_\gamma$ restricted to the layer, of mass of order
 $t_l^{3-2\gamma}$. Hence, with $\lambda=\zeta_n/\eps^4$,
 $R_{\pi_\gamma,j}\ge c\eps^2\sum_l t_l^{3-2\gamma}\min\{1,\lambda^2/t_l^2\}$.
\end{enumerate}
\end{lemma}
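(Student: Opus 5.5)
We construct the paths in each case from the chart of Corollary~\ref{geom:affine-moment-fiber}, or from its analogues for the weighted priors. In every case we check the four requirements of~\eqref{crit:path-family}: fixed sign of the target derivative, target speed at least $c\eps$, a mean-speed bound $d_y$, and occupation measure at most $B\pi$. The stated lower bounds then follow from Lemma~\ref{spine:risk}(b) with $F=\Pi\mu$. The mean speed is always computed the same way. Along a path that fixes $(m_0,\ldots,m_{L-1})$, the tangent $(\mu\circ\gamma)'$ is an admissible combination of $E_k$ and $\eps E_k'$ whose scaled moments $\mathsf v_r$ vanish for $r<L$. By Lemma~\ref{lem:bridge}, its norm is therefore bounded by $C\sum_{r\ge L}\eps^r|\partial_t m_r|$, hence by $C\eps^L|\partial_t m_L|$ up to higher-order terms. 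The target speed comes from Lemma~\ref{geom:prony-secant}(2): $\partial_t\tau_j\asymp\eps\sum_l s_j^l\partial_t p_l$. So we need the direction $N(M)t$ in $p$-space to have a component with $\sum_l s_j^l (N\dot t)_l$ bounded away from zero. Equivalently, the fiber must not be tangent to the locus where $s_j$ is frozen.

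\emph{Case (i).} Take $K_M=(3K-1)/2$, so $K_M-K=(K-1)/2$ and the fiber dimension is $3K-2K_M=1$. At the reference configuration of Appendix~\ref{geom:source-lower}, we verify two things. First, the realified $\mathcal A(M)$ has full row rank $K-1$. Second, the one-dimensional kernel direction $N$ satisfies $\sum_l s_j^lN_l\ne0$; this is a finite nondegeneracy check, done by exhibiting an explicit admissible configuration with phases $0$ on $A$ and $\pi/2$ elsewhere. Both conditions are open, so they persist on a small rectangle. With the first $K_M$ moments fixed, $d_y=C\eps^{K_M}=C\eps^{(3K-1)/2}$ and $b_y=c\eps$. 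The label measure is Lebesgue measure on $M$, which has real dimension $3K-1$. Since the chart has a bounded Jacobian and $f\ge c_f$ on $U$ with uniform phases, the occupation measure is at most $B\pi$. Then~\eqref{spine:two-integrals} gives $c\eps^2\min\{1,\zeta_n^2\eps^{-(3K-1)}\}$, which is the claim.

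\emph{Case (ii).} Here $K_M=3K/2-1$, leaving a two-dimensional fiber. We use the chart $(\mathfrak b,z,u,v)$ of Lemma~\ref{weighted:moment-chart}. In this chart $\mathfrak b$ plays the role of $\mathscr D$ up to a bounded factor, and $u$ moves along the fiber while changing $m_{3K/2-1}$ at rate $i\mathfrak b$ plus $O(\eps)$ corrections from the next moment. This gives $d_y\le C\eps^{3K/2-1}(|\mathfrak b|+\eps)$. Since $u$-translation preserves $\mathfrak b$, it preserves the weight $|\mathscr D|^\beta$ up to constants. No derivative of the weight is involved, so $\beta\in(-1,0)$ is allowed. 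Taking $\Upsilon=|\mathfrak b|^\beta d\mathfrak b\,dz\,dv$ makes the occupation measure comparable to $\pi_\beta$, and~\eqref{spine:two-integrals} yields the displayed integral. For uniform phases with even $K$, set $\beta=0$.

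\emph{Case (iii).} Decompose the support of $\pi_\gamma$ into dyadic layers $u\in[t_l/2,t_l]$, and keep $v/u$ within fixed ratios. On each layer we use shears in $\xi$ that fix $m_0,\ldots,m_3$ and move $p$ along the kernel. Such shears have unit Jacobian, and they keep $u$ and $v$ within fixed ratios, so the density $u^{-\gamma}$ changes by a bounded factor. Along them $m_4$ changes at a rate proportional to the quadratic phase coefficient, which is $O(t_l)$. This gives $d_y=C\eps^4t_l$ and layer mass $\asymp t_l^{3-2\gamma}$; we stop at $t_l\ge\eps$. Summing the two-point bounds over layers yields the stated series, and the constant depends on $C_f/c_f$ through the density ratio.

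The main obstacle is the nondegeneracy in case (ii), and likewise the $O(t_l)$ rate in case (iii). One must show that the first changing moment moves at rate comparable to $|\mathfrak b|+\eps$, uniformly down to $\mathfrak b=0$, and not faster. The plan is to read this off from the Cauchy--Binet minor formula~\eqref{eq:profile-minors} together with the identity~\eqref{geom:prony-difference}.
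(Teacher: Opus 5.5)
Your cases (i) and (ii) follow the paper's constructions, Appendix~\ref{geom:source-lower} and Step~4 of Appendix~\ref{sec:weighted-phase-details}. Two remarks on them:
\begin{itemize}
\item The nondegeneracy you defer in (i) holds as follows. At the reference phases the kernel polynomial vanishes off $A$ and is nonzero at every node of $A$, since each value is an $r\times r$ weighted Vandermonde minor. This is why $A$ must contain $j$.
\item In (ii) the rate $i\mathfrak b$ is exact by construction of the chart of Lemma~\ref{weighted:moment-chart}, so no Cauchy--Binet argument is needed.
\end{itemize}
Case (iii), however, has a genuine gap.

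Your layers $u\in[t_l/2,t_l]$, with $v/u$ in fixed ratios, lie where the information is not degenerate. Since $J_j(x)=\inf\{\|D\mu_xh\|_{\nu_n}^2:D\tau_j(x)h=1\}$, every path satisfies $\|(\mu\circ\gamma_y)'\|_{\nu_n}\ge\sqrt{J_j}\,|(\tau_j\circ\gamma_y)'|$. By \eqref{k3p:information}, $J_j\ge J_{\min}\asymp\eps^6\bar\varrho_\eps^2$. So target speed $c\eps$ and $d_y=C\eps^4t_l$ are compatible only at configurations with $\bar\varrho_\eps\lesssim t_l$, that is, $v\lesssim t_l$ and $u\lesssim t_l(v^2+\eps^2)^{1/2}$.

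On your layers $v\asymp u\asymp t_l\ge\eps$, so $\bar\varrho_\eps\asymp1$ and necessarily $d_y\ge c\eps^4$. Their $\pi_\gamma$-mass is also $\asymp t_l^{2-\gamma}$, not $t_l^{3-2\gamma}$. Whatever the normalization, paths there certify only the baseline $\eps^2\min\{1,\lambda^2\}$. They miss the intermediate range for $1/2<\gamma<1$ and the logarithm at $\gamma=1/2$.

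The heuristic that fails is that $m_4$ moves at a rate proportional to $u$. The only direction fixing $m_0,\ldots,m_3$ is $\dot p\propto\Ree M\times\Imm M=v(R_0\times R_1)+u(R_0\times R_2)$. Along it $\dot s_j\propto v-(p_2+s_j)u$, while $\Ree\dot m_4\propto u$ and $\Imm\dot m_4$ has terms of order $v^2$. After normalizing the node speed to order one, $\dot m_4$ is small only if $u\ll v$ and $v$ is small.

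The missing construction is in Appendix~\ref{geom:k3-paths}.
\begin{itemize}
\item It layers along the level sets of $\bar\varrho_\eps$: $t\le v\le1.1t$ and $t^2\le u\le2t^2$. These layers have mass $t\cdot t^2\cdot t^{-2\gamma}=t^{3-2\gamma}$.
\item It uses the shear $p\mapsto p+h\chi_t(M)$ with $\chi_t=(\Ree M\times\Imm M)/t$. Then $\dot s_j\asymp1$ with fixed sign for every $j$, and $|\dot m_4|\lesssim(u+v^2)/t\lesssim t$.
\item It verifies $\dot u/u=O(t)$ and $\dot v/v=O(t^2)$. This keeps the paths in the support and keeps $u^{-\gamma}$ within a bounded factor, \eqref{k3p:prior-preservation}.
\item It sums the occupation measures over layers using the disjoint image bands $v\in[0.9t,1.2t]$.
\end{itemize}
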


\begin{proof}
Case~(i) is proved in Appendix~\ref{geom:source-lower}. Case~(ii) is
Lemma~\ref{weighted:moment-chart} with Step~4 of
Appendix~\ref{sec:weighted-phase-details}; both are stated for every
$\beta>-1$ and use only the chart and the prior density
of Section~\ref{sec:power-weights}, not the risk bound of
Theorem~\ref{thm:prior-effect}. For $\beta=0$, the local prior is
dominated by the uniform-phase prior
(Appendix~\ref{geom:source-lower}). Case~(iii) is
Appendix~\ref{geom:k3-paths}.
\end{proof}

\subsubsection{Closing the Two Risk Bounds}
\label{sec:randomphase-proof}
Theorems~\ref{thm:randomphase} and~\ref{thm:prior-effect} follow by
evaluating both sides of~\eqref{spine:two-integrals}, with the common
upper modulus $w_x(r)=\min\{\eps,r/\sqrt{J_{\min}(x)}\}$.

\emph{Uniform phases.}
Proposition~\ref{thm:avg-inverse-info},
$J_{\min}^{-1}\le\sum_aJ_a^{-1}$ and Lemma~\ref{cor:conditional-mse}
prove the upper bound of Theorem~\ref{thm:randomphase} for odd $K$; it
matches case~(i) of Lemma~\ref{geom:fiber-paths}. For even $K$,
splitting the integral over the target phase at
$|\sin\widetilde\varphi|=\{t/(L\lambda_{\max}(\mathbf R))\}^{1/2}$, with
$\widetilde\varphi$ as in Section~\ref{sec:info-def}, gives
\[
 \mathbb E_{\varphi_j}\min\{L,t/(u_{\varphi_j}^T\mathbf Ru_{\varphi_j})\}
 \le C\min\bigl\{L,\ t/\sqrt{\det\mathbf R},\ \sqrt{Lt/\lambda_{\max}(\mathbf R)}\bigr\}.
\]
With $L=\eps^2$ and $t=\zeta_n^2\eps^{-2K+2}d_j^{-2}$,
Lemma~\ref{geom:ellipse}(2) bounds the nuisance-phase average of the
last term by $C\zeta_n\eps^{3-3K/2}$, which gives the intermediate range.
The right side of~\eqref{eq:rp-conditional-mse} is at most
$C\sum_a\min\{\eps^2,\zeta_n^2/J_a(x)\}$, so the same bounds apply to
least squares. The integral in case~(ii) with $\beta=0$ is
$\eps^2I_1(\eps,\lambda)$, $\lambda=\zeta_n/\eps^{3K/2-1}$, and
Lemma~\ref{risk:power-integral} with $q=1$ gives
$\eps^2I_1(\eps,\lambda)\asymp\Psi_K(\eps,\zeta_n)$, which proves the
even-$K$ lower bound.
The paths give the integral over $[0,\mathfrak b_0]$. Since
$f(\mathfrak b)=\min\{1,\lambda^2/(\mathfrak b^2+\eps^2)\}$ is
nonincreasing, the substitution $\mathfrak b=\mathfrak b_0t$ gives, for
$\mathfrak b_0\le1$,
$\int_0^{\mathfrak b_0}f(\mathfrak b)\mathfrak b^\beta\,d\mathfrak b
=\mathfrak b_0^{\beta+1}\int_0^1f(\mathfrak b_0t)t^\beta\,dt
\ge\mathfrak b_0^{\beta+1}\int_0^1f(t)t^\beta\,dt$, and the integral
over $[0,1]$ is a lower bound if $\mathfrak b_0>1$. This replacement
imposes no condition on $\eps$.

\emph{Weighted and singular priors.}
For $\pi_\beta$ the two-sided
estimate~\eqref{eq:local-even-information-normal-form} gives
$J_{\min}\asymp J_j\asymp\eps^{3K-4}(\mathscr D^2+\eps^2)$ for $j\in A$,
and both integrals reduce to
$\eps^2 I_{\beta+1}(\eps,\zeta_n/\eps^{3K/2-1})$ (Steps~3--4 of
Appendix~\ref{sec:weighted-phase-details}). For $\pi_\gamma$, the
surrogate~\eqref{k3p:information} is integrated
in~\eqref{k3p:upper-integral}--\eqref{k3p:tail-profile}, and the dyadic
sum of case~(iii) has the same distribution-function order, that of
$\eps^2 I_{3-2\gamma}(\eps,\zeta_n/\eps^4)$.
Lemma~\ref{risk:power-integral} evaluates all these branches, including
the logarithmic cases; Table~\ref{geom:verification-table} records the
inputs for each prior.
Section~\ref{lc:criterion-sources} relates these integrals to the
capped Cram\'er--Rao bound.

\begin{table*}[t]
\centering
\caption{The two inputs to Lemma~\ref{spine:risk} for the source priors.
In every source row the inverse bound
is~\eqref{eq:source-global-modulus} with $S=J_{\min}$ and $L\asymp\eps^2$,
and the lower bounds come from prior-dominated paths. The rows cover
only the targets of Theorems~\ref{thm:randomphase}
and~\ref{thm:prior-effect}; for $\pi_\beta$ and its mixtures, the risk
order for targets $j\notin A$ is open.}
\label{geom:verification-table}
\small
\begin{tabular}{@{}p{0.22\linewidth}p{0.24\linewidth}p{0.2\linewidth}p{0.26\linewidth}@{}}
\toprule
Model and prior & Surrogate & Paths & Matching\\
\midrule
Uniform phases, $K$ odd
 & ellipse, Lemma~\ref{geom:ellipse}(1)
 & Lemma~\ref{geom:fiber-paths}(i)
 & $\sum_a\min\{\eps^2,\zeta_n^2\mathbb E J_a^{-1}\}$,
   Proposition~\ref{thm:avg-inverse-info}\\
Uniform phases, $K$ even
 & ellipse, Lemma~\ref{geom:ellipse}(1)--(2)
 & Lemma~\ref{geom:fiber-paths}(ii), $\beta=0$
 & capped ellipse integral, Section~\ref{sec:randomphase-proof}\\
Bounded phase density ratio
 & as above & as above
 & Proposition~\ref{prop:bounded-phase-prior}\\
$\pi_\beta$, $K$ even, $j\in A$
 & \eqref{eq:local-even-information-normal-form}
 & Lemma~\ref{geom:fiber-paths}(ii)
 & Steps~3--4 of Appendix~\ref{sec:weighted-phase-details}\\
$\pi_\gamma$, $K=3$
 & \eqref{k3p:information}
 & Lemma~\ref{geom:fiber-paths}(iii)
 & \eqref{k3p:upper-integral}--\eqref{k3p:tail-profile}\\
Mixtures, $j\in A$
 & components & components
 & Corollary~\ref{cor:full-support-weighted-phase}\\
\bottomrule
\end{tabular}
\end{table*}

\section{Fisher Information and the Bayes Risk: A Criterion}
\label{sec:criterion}

This section abstracts the proof chain of Section~\ref{sec:info}. It
gives conditions under which the Bayes risk of a scalar
target has the order of the capped Cram\'er--Rao bound
$\mathbb E_\pi\min\{L,r_n^2/I\}$, also when the information degenerates
on part of the prior's support. It compares the two bounds of
Lemma~\ref{spine:risk} with efficient information through two
conditions on an information surrogate. The first is an inverse bound.
The second is a matching lower bound from prior-dominated paths.
Section~\ref{sec:info} supplies both conditions for sources.
Section~\ref{factor:section} supplies them for a one-factor model whose
information also degenerates through cancellation among the unknowns.

\subsection{Setting}
\label{crit:setting}

Use the experiment and risk of Section~\ref{spine:conversion}, with
$P_x=P_{n,x}$, $T=T_n$ and $r=r_n$, where
$\Theta\subset\Theta^+$ are compact metric spaces and
$F:\Theta^+\to\mathbb R^q$ or $\mathbb C^q$ and
$g:\Theta^+\to\mathbb R$ are continuous. For path families as in
Lemma~\ref{spine:risk}(b) put
$\mathcal L_\Upsilon(r)=\int b_y^2\min\{1,r^2/d_y^2\}\,\Upsilon(dy)$,
the minimum being $1$ when $d_y=0$.

\emph{Gaussian observations.} Let $Y=\mu(x)+\zeta Z$ with standard real
Gaussian noise $Z$, or circular complex Gaussian noise $Z$ with
$\mathbb E|Z_k|^2=1$, so that
$\operatorname{KL}(P_x,P_{x'})=\varsigma\|\mu(x)-\mu(x')\|^2/(2\zeta^2)$
with $\varsigma=1$ or $2$. For a linear map $\Pi$ of norm at most one and
rank $q$ (the complex rank for complex $F$), take $r_n=\zeta$, $F=\Pi\mu$ and $T_n=\Pi Y$. Then
\eqref{spine:experiment} holds with $\bar q=q$ and
$L_{\rm KL}=\varsigma C_1^2/2$ whenever
\begin{equation}
 \|\mu(x)-\mu(x')\|\le C_1\|\Pi(\mu(x)-\mu(x'))\|,\qquad x,x'\in\Theta;
 \label{crit:stat-comparison}
\end{equation}
for sources this is Lemma~\ref{lem:moment-comparison}. The dimension
enters only through the rank $q$ of $\Pi$.

\emph{Cap and local modulus.} Put $L=(\operatorname{diam}g(\Theta^+))^2$.
For $x\in\Theta$ and $\delta>0$ the local modulus of the target with
respect to the mean map $F$ is
\[
 \varpi_x(\delta)=\sup\{|g(y)-g(x)|:\ y\in\Theta^+,\
 \|F(y)-F(x)\|\le\delta\}.
\]
Its envelope $\bar\varpi_x(\delta)=\sup_{u>0}\varpi_x(u)\min\{1,\delta/u\}$
is the least majorant of $\varpi_x$ that is nondecreasing and satisfies
$\bar\varpi_x(t\delta)\le t\bar\varpi_x(\delta)$ for $t\ge1$.

\emph{Efficient information.} For the information statements, $\Theta^+$
lies in an open set $U\subset\mathbb R^p$, $F$ and $g$ are $C^1$ on a
neighborhood of $\Theta$, and $\mathcal M\subset U$ is a $C^1$ embedded
submanifold containing $\Theta$ (for instance $\mathcal M=U$ when $\pi$
has a density). The efficient information of $g$ relative to
$\mathcal M$ is
\[
 I(x)=\inf\{\|DF(x)h\|^2:\ h\in T_x\mathcal M,\ Dg(x)h=1\}\in[0,\infty],
\]
with $\inf\emptyset=\infty$. For measurable $S:\Theta\to[0,\infty]$ the
capped Cram\'er--Rao bound is
\begin{equation}
 \mathcal A_\pi(r;S)=\int\min\{L,r^2/S(x)\}\,\pi(dx),
 \qquad r^2/0=\infty,\ r^2/\infty=0 .
 \label{crit:capped-functional}
\end{equation}

\begin{remark}[Relation to Fisher information]
\label{crit:fisher-rel}
Suppose that the Fisher information matrix $\mathcal I_x$ satisfies
\[
 c_Fr_n^{-2}\|DF(x)h\|^2\le h^T\mathcal I_xh\le C_Fr_n^{-2}\|DF(x)h\|^2,
 \qquad h\in T_x\mathcal M .
\]
The efficient Fisher information of $g$ on
$\mathcal M$ and $r_n^{-2}I(x)$ are infima of these two quadratic forms
over one affine set. Hence the former lies between $c_Fr_n^{-2}I(x)$
and $C_Fr_n^{-2}I(x)$. Thus $\mathcal A_\pi(r_n;I)$ is comparable to the
prior average of the Cram\'er--Rao bound for $g$ capped at $L$. For
Gaussian observations, differentiating~\eqref{crit:stat-comparison} at
relative-interior points of $\Theta$ in $\mathcal M$ gives
$c_F=\varsigma$ and $C_F=\varsigma C_1^2$. The comparison therefore
holds $\pi$-almost everywhere when $\pi$ charges no relative-boundary
point of $\Theta$ in $\mathcal M$. For sources take $g=\tau_j$,
$F=\Pi\mu$ and $r_n=\zeta_n$. Lemma~\ref{lem:profiled} identifies
$J_j(x)$ as $\inf\{\|D\mu_xh\|_{\nu_n}^2:Dg(x)h=1\}$. Differentiating
Lemma~\ref{lem:moment-comparison} along curves in $\mathcal X_K$ gives
$\|D\mu_xh\|\le C^2\|\Pi D\mu_xh\|$. Since $\|\Pi\|\le1$,
$C^{-4}J_j\le I\le J_j$ at interior points.
\end{remark}

\subsection{The Criterion}
\label{crit:criterion}

Both conditions are stated through an \emph{information surrogate}.
This is a measurable $S:\Theta\to[0,\infty]$ used in place of $I$
in~\eqref{crit:capped-functional}. The \emph{inverse bound} is global
in $y$. It requires a constant $C_U$ such that
\begin{equation}
 |g(y)-g(x)|\le C_U\min\Bigl\{\sqrt L,\frac{\|F(y)-F(x)\|}{\sqrt{S(x)}}\Bigr\},
 \qquad x\in\Theta,\ y\in\Theta^+ .
 \label{crit:profile}
\end{equation}
Here $\delta/0=\infty$ for every $\delta\ge0$. The
\emph{matching condition} asks for a path family with
\begin{equation}
 \mathcal A_\pi(r_n;S)\le C_M\,\mathcal L_\Upsilon(r_n).
 \label{crit:matching}
\end{equation}
All orders below are uniform over any family of models, priors and
scales $r_n$ along which $L_{\rm KL}$, $\bar q$, $C_U$, $C_M$, $B$ and
the constant $C_S$ of Theorem~\ref{risk:transfer}(b) stay bounded. In
particular, the matching condition is required with $C_M$ and $B$
independent of $r_n$. The following theorem states the criterion. It is
the main result of this section.

\begin{theorem}[Inverse bounds, prior-dominated paths and the
capped Cram\'er--Rao bound]
\label{risk:transfer}
Assume~\eqref{spine:experiment}.
\begin{enumerate}
\item[(a)] \emph{Paths and the efficient information.} If each $\gamma_y$ is absolutely
continuous in a chart of $\mathcal M$, then
$\mathcal L_\Upsilon(r)\le B\,\mathcal A_\pi(r;I)$ for every $r>0$.
\item[(b)] \emph{Capped Cram\'er--Rao criterion.} Suppose
\eqref{crit:profile} holds, \eqref{crit:matching} holds for a path
family as in~(a), and $S\le C_SI$ $\pi$-almost everywhere with
$C_S\ge1$. Then
\[
 R_\pi^*(r_n)\asymp\mathcal A_\pi(r_n;I)\asymp\mathcal A_\pi(r_n;S)
 \asymp\mathcal L_\Upsilon(r_n),
\]
and the minimum-distance estimator of Lemma~\ref{spine:risk}(a), which
does not use $\pi$, attains this order.
\item[(c)] \emph{Canonical modulus.} The envelope $w_x=\bar\varpi_x$ is
admissible in Lemma~\ref{spine:risk}(a) with $C_0=1$.
\end{enumerate}
The constants in~(b) depend only on $L_{\rm KL},\bar q,C_U,B,C_M,C_S$.
\end{theorem}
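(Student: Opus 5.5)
Part~(c) comes first, since it is purely definitional. By construction $|g(y)-g(x)|\le\varpi_x(\|F(y)-F(x)\|)\le\bar\varpi_x(\|F(y)-F(x)\|)$, so \eqref{risk:modulus} holds with $C_0=1$. The envelope is a supremum of functions $\delta\mapsto\varpi_x(u)\min\{1,\delta/u\}$. Each of these is nondecreasing, and each satisfies $\min\{1,t\delta/u\}\le t\min\{1,\delta/u\}$ for $t\ge1$. Both properties survive the supremum, so $\bar\varpi_x$ is admissible in Lemma~\ref{spine:risk}(a).

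For part~(a), I will bound each path integrand by the capped inverse information along the path and then use prior domination. Fix a label $y$ and a time $s$ at which $\gamma_y$ is differentiable in a chart of $\mathcal M$, with tangent $h=\gamma_y'(s)\in T_{\gamma_y(s)}\mathcal M$. If $Dg\,h\neq0$, then $h/(Dg\,h)$ is feasible in the definition of $I$. This gives
\[
 I(\gamma_y(s))\le\|DF h\|^2/|Dg h|^2\le d_y^2/b_y^2 .
\]
Hence $b_y^2\min\{1,r^2/d_y^2\}\le\min\{b_y^2,r^2/I(\gamma_y(s))\}$. The fixed sign of the target derivative and the bound $|g(\gamma_y(1))-g(\gamma_y(0))|\le\sqrt L$ give $b_y\le\sqrt L$, so the right side is at most $\min\{L,r^2/I(\gamma_y(s))\}$. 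The case $d_y=0$ gives $I=0$ and is consistent with the conventions. The set of times with $Dg\,h=0$ is null, because $|(g\circ\gamma_y)'|\ge b_y$ almost everywhere; if $b_y=0$ the integrand vanishes anyway. I then integrate over $s\in[0,1]$ and over $\Upsilon$, and apply the occupation bound in \eqref{crit:path-family}. This yields $\mathcal L_\Upsilon(r)\le B\,\mathcal A_\pi(r;I)$. A measurability check on $x\mapsto I(x)$ is needed here; I would handle it by upper semicontinuity, or by working with an outer integral.

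For part~(b), I will close a chain of inequalities. The upper bound comes from Lemma~\ref{spine:risk}(a) with $w_x(\delta)=C_U\min\{\sqrt L,\delta/\sqrt{S(x)}\}$. This $w_x$ is nondecreasing and sublinear in the required sense, and \eqref{crit:profile} is exactly \eqref{risk:modulus}. The lemma gives $R_\pi^*(r_n)\le C\mathcal A_\pi(r_n;S)$, attained by the minimum-distance estimator, which does not use $\pi$. The lower bound \eqref{spine:two-integrals} gives $R_\pi^*\ge(c_L/B)\mathcal L_\Upsilon$, and matching \eqref{crit:matching} gives $\mathcal L_\Upsilon\ge C_M^{-1}\mathcal A_\pi(r_n;S)$. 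Thus $R^*_\pi\asymp\mathcal A_\pi(\cdot;S)\asymp\mathcal L_\Upsilon$.

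It remains to insert $\mathcal A_\pi(r_n;I)$ into this chain. Since $S\le C_SI$ almost everywhere and $C_S\ge1$, I get $\min\{L,r^2/I\}\le C_S\min\{L,r^2/S\}$, hence $\mathcal A_\pi(\cdot;I)\le C_S\mathcal A_\pi(\cdot;S)$. Part~(a) gives $\mathcal L_\Upsilon\le B\mathcal A_\pi(\cdot;I)$, which closes the chain, and all constants depend only on the listed quantities. I expect the main delicate point to be part~(a): justifying that the tangent of an absolutely continuous path lies in $T\mathcal M$ almost everywhere, and that the resulting pointwise bound can be integrated against the occupation measure. The rest is bookkeeping.
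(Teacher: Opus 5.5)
Your proposal is correct and follows the paper's proof essentially step for step. Part (a) uses the same pointwise bound $I(\gamma_y(s))\,b_y^2\le d_y^2$ integrated against the occupation measure. Part (b) uses the same chain $\frac{c_L}{B}\mathcal L_\Upsilon\le R_\pi^*\le(1+4\bar q)C_U^2\mathcal A_\pi(r_n;S)\le(1+4\bar q)C_U^2C_M\mathcal L_\Upsilon$, closed by $\mathcal L_\Upsilon\le B\mathcal A_\pi(r_n;I)\le BC_S\mathcal A_\pi(r_n;S)$. Part (c) uses the same envelope properties.

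The paper adds only measurability checks. It shows $I$ is Borel as an infimum over rational tangent coordinates, which your upper-semicontinuity suggestion also delivers. In (c) it also proves joint measurability of $(x,\delta)\mapsto\bar\varpi_x(\delta)$, which you omit. This is needed for $\int\bar\varpi_x(r)^2\,\pi(dx)$ to make sense, and the paper obtains it from Berge's maximum theorem, a supremum over rational $u$, and continuity in $\delta$.
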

\begin{proof}See Appendix~\ref{risk:transfer-proof}.\end{proof}

Lemma~\ref{crit:SleI} gives $C_S=\max\{1,C_U^2\}$ at relative-interior
points, hence $\pi$-almost everywhere when $\pi$ charges no
relative-boundary point of $\Theta^+\cap\mathcal M$.

\begin{lemma}[The surrogate does not exceed the information]
\label{crit:SleI}
Let~\eqref{crit:profile} hold at $x\in\Theta$ with $C_U>0$, and let $x$ lie in the
relative interior of $\Theta^+\cap\mathcal M$ in $\mathcal M$. Then
$S(x)\le C_U^2I(x)$.
\end{lemma}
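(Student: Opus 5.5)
The plan is to test the inverse bound~\eqref{crit:profile} on points $y$ that approach $x$ along a curve in $\mathcal M$, and then pass to the limit, which turns it into a bound on tangent vectors.

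\emph{Trivial cases.} If $S(x)=0$ there is nothing to prove. If $I(x)=\infty$, the inequality $S(x)\le C_U^2I(x)$ also holds trivially. So assume $0<S(x)\le\infty$ and $I(x)<\infty$. Fix any $h\in T_x\mathcal M$ with $Dg(x)h=1$; such $h$ exist because $I(x)<\infty$. It suffices to show
\[
 S(x)\le C_U^2\|DF(x)h\|^2 ,
\]
since taking the infimum over such $h$ then gives $S(x)\le C_U^2I(x)$.

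\emph{Curve construction.} Because $\mathcal M$ is a $C^1$ embedded submanifold, there is a $C^1$ curve $c:(-t_0,t_0)\to\mathcal M$ with $c(0)=x$ and $c'(0)=h$; for instance, push a straight line forward through a chart. The point $x$ lies in the relative interior of $\Theta^+\cap\mathcal M$ in $\mathcal M$. Hence, after shrinking $t_0$, we have $c(t)\in\Theta^+$ for all small $|t|$, and $c(t)$ also stays in the neighborhood of $\Theta$ where $F$ and $g$ are $C^1$.

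\emph{Limit argument.} Apply~\eqref{crit:profile} with $y=c(t)$. Drop the $\sqrt L$ branch of the minimum, using only that the minimum is at most its second entry. This gives
\[
 |g(c(t))-g(x)|\le C_U\frac{\|F(c(t))-F(x)\|}{\sqrt{S(x)}}.
\]
Divide by $|t|$ and let $t\to0$. The chain rule gives
\[
 \frac{g(c(t))-g(x)}{t}\to Dg(x)h=1,\qquad \frac{F(c(t))-F(x)}{t}\to DF(x)h .
\]
In the limit we obtain $1\le C_U\|DF(x)h\|/\sqrt{S(x)}$, which is the required bound after squaring.

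When $S(x)=\infty$ the same display gives a contradiction. The right side is $C_U\|F(c(t))-F(x)\|/\infty$. This is $0$ if $F(c(t))\neq F(x)$, and $0/\infty=0$ if $F(c(t))=F(x)$, under the conventions $\delta/0=\infty$ and $\delta/\infty=0$. Meanwhile $|g(c(t))-g(x)|\sim|t|>0$. So $S(x)=\infty$ is impossible when $I(x)<\infty$, and the inequality holds in that case too.

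\emph{Main obstacle.} The only delicate step is making sure that the interior hypothesis really lets the tangent curve stay inside $\Theta^+$, where~\eqref{crit:profile} applies. This is the step that fails at relative-boundary points. It is exactly why the companion statement is given only $\pi$-almost everywhere when $\pi$ charges no such points.
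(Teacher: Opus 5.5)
Your proof is correct and follows essentially the same route as the paper. You apply the inverse bound~\eqref{crit:profile} along a $C^1$ curve in $\mathcal M$ through $x$ with tangent $h$, kept inside $\Theta^+$ by the relative-interior hypothesis, then divide by the step and pass to the limit to get $\sqrt{S(x)}\,|Dg(x)h|\le C_U\|DF(x)h\|$, and take the infimum over $Dg(x)h=1$. Your handling of $S(x)=\infty$, where the bound forces $g\circ c$ to be locally constant and so contradicts $Dg(x)h=1$, matches the paper's observation that in this case $Dg(x)h=0$ for every $h$ and hence $I(x)=\infty$.
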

\begin{proof}See Appendix~\ref{crit:proofs}.\end{proof}

\begin{proposition}[Sufficient conditions]
\label{crit:suff}
\begin{enumerate}
\item[(a)] \emph{A sufficient condition for~\eqref{crit:profile}.} Suppose there are
measurable $S_t,S_v\ge0$ and constants $c_v>0$, $C_Q<\infty$ such that,
for $x\in\Theta$ and $y\in\Theta^+$,
\[
 \|F(y)-F(x)\|\le c_v\sqrt{LS_v(x)}
 \ \Longrightarrow\
 |g(y)-g(x)|\le C_Q\frac{\|F(y)-F(x)\|}{\sqrt{S_t(x)}} ,
\]
with $\delta/0=\infty$. Then~\eqref{crit:profile} holds with $S=\min\{S_t,S_v\}$ and
$C_U=\max\{1,C_Q,1/c_v\}$.
\item[(b)] \emph{A sufficient condition for~\eqref{crit:matching}.} Suppose a path family with
occupation bound $B$ satisfies, for constants $c_P,b_P>0$,
$C_P,C_G<\infty$ and a Borel set $G\subset\Theta$,
\[
\begin{gathered}
 b_y^2\ge c_PL,\qquad d_y^2\le C_PL\,S(\gamma_y(s))\ \text{for a.e. }s,\\
 \int\!\int_0^1\delta_{\gamma_y(s)}\,ds\,\Upsilon(dy)\ge b_P\,\pi|_G,
\end{gathered}
\]
and
\[
 \int_{\Theta\setminus G}\min\{L,r_n^2/S\}\dd\pi
 \le C_G\int_G\min\{L,r_n^2/S\}\dd\pi.
\]
Then~\eqref{crit:matching} holds
with $C_M=(1+C_G)\max\{1,C_P\}/(c_Pb_P)$.
\end{enumerate}
\end{proposition}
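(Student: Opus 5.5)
For part~(a) I fix $x\in\Theta$, $y\in\Theta^+$ and split into two cases according to the hypothesis of the implication. Suppose first that $\|F(y)-F(x)\|\le c_v\sqrt{LS_v(x)}$. The assumed implication gives $|g(y)-g(x)|\le C_Q\|F(y)-F(x)\|/\sqrt{S_t(x)}$. Since $S\le S_t$, this is at most $C_Q\|F(y)-F(x)\|/\sqrt{S(x)}$, with the convention $\delta/0=\infty$ covering $S(x)=0$. In every case $|g(y)-g(x)|\le\sqrt L$, because $L$ is the squared diameter of $g(\Theta^+)$. Combining the two bounds gives~\eqref{crit:profile} with constant $\max\{1,C_Q\}$.

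In the other case, $\|F(y)-F(x)\|>c_v\sqrt{LS_v(x)}\ge c_v\sqrt{LS(x)}$. Then $\|F(y)-F(x)\|/\sqrt{S(x)}>c_v\sqrt L$, which also holds when $S(x)=0$. Hence $\sqrt L\le c_v^{-1}\|F(y)-F(x)\|/\sqrt{S(x)}$. Together with $|g(y)-g(x)|\le\sqrt L$, this gives $|g(y)-g(x)|\le\max\{1,1/c_v\}\min\{\sqrt L,\|F(y)-F(x)\|/\sqrt{S(x)}\}$. Taking the worse of the two cases yields $C_U=\max\{1,C_Q,1/c_v\}$.

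For part~(b), the plan is to bound $\mathcal L_\Upsilon(r_n)$ below pointwise along each path and then integrate against the occupation measure. Fix a label $y$ and a.e.\ $s$. Since $b_y^2\ge c_PL$ and $d_y^2\le C_PLS(\gamma_y(s))$,
\[
 b_y^2\min\{1,r_n^2/d_y^2\}\ge c_PL\min\Bigl\{1,\frac{r_n^2}{C_PLS(\gamma_y(s))}\Bigr\}
 \ge\frac{c_P}{\max\{1,C_P\}}\min\{L,r_n^2/S(\gamma_y(s))\}.
\]
The edge cases $d_y=0$ (the minimum is one) and $S=0$ are consistent with these conventions. The left side does not depend on $s$, so I integrate over $s\in[0,1]$ and then over $\Upsilon$. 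This gives $\mathcal L_\Upsilon(r_n)\ge c_P\max\{1,C_P\}^{-1}\int\!\int_0^1\min\{L,r_n^2/S(\gamma_y(s))\}\,ds\,\Upsilon(dy)$.

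The double integral is the integral of the nonnegative function $\min\{L,r_n^2/S\}$ against the occupation measure. By the lower occupation bound it is at least $b_P\int_G\min\{L,r_n^2/S\}\dd\pi$. The complement hypothesis then gives $\mathcal A_\pi(r_n;S)\le(1+C_G)\int_G\min\{L,r_n^2/S\}\dd\pi$. Chaining these inequalities gives~\eqref{crit:matching} with $C_M=(1+C_G)\max\{1,C_P\}/(c_Pb_P)$.

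The only delicate point is measurability: the composite $(y,s)\mapsto S(\gamma_y(s))$ must be jointly measurable so that Tonelli applies. This follows from the joint measurability of path families assumed in Section~\ref{spine:conversion}.
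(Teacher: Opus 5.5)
Your proposal is correct and follows the paper's proof essentially step by step. In (a) you split at $\|F(y)-F(x)\|=c_v\sqrt{LS_v(x)}$, using the assumed implication with $S\le S_t$ below the threshold and the diameter bound $\sqrt L$ with $S\le S_v$ above it. In (b) you derive the same a.e.\ pointwise inequality $b_y^2\min\{1,r_n^2/d_y^2\}\ge c_P\max\{1,C_P\}^{-1}\min\{L,r_n^2/S(\gamma_y(s))\}$, integrate it against the occupation measure, and close with the lower bound on $G$ and the tail comparison, exactly as the paper does.
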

\begin{proof}See Appendix~\ref{crit:proofs}.\end{proof}

Part~(a) has the form of the source inverse bound of
Proposition~\ref{thm:inverse}, with $S_t=J_j$ and $S_v=J_{\min}$. Then
$S=J_{\min}\le J_j\le C^4I$ by Remark~\ref{crit:fisher-rel}, so
$C_S=C^4$.

\begin{example}[Two limitations of local information]
\label{crit:quadratic-example}
Let $Y=x^2+\zeta Z$, $Z\sim N(0,1)$, $0<\zeta\le1$, and estimate $x$.
In both experiments $I(x)=4x^2$, but
\[
\begin{array}{c|c|c}
 \pi & \mathcal A_\pi(\zeta;I)&R_\pi^*(\zeta)\\ \hline
 \operatorname{Unif}[0,1],\ L=1
   &\zeta-\zeta^2/4&\asymp\zeta^{3/2}\\
 \operatorname{Unif}[-1,1],\ L=4
   &2\zeta-\zeta^2/4&1/3 .
\end{array}
\]
Thus capped inverse information can overestimate or underestimate the
Bayes risk. For $S=I$, the inverse bound holds on $[0,1]$, but matching
cannot hold uniformly as $\zeta\downarrow0$; on $[-1,1]$ the inverse
bound fails because $x$ and $-x$ have identical signals.
\end{example}
\begin{proof}
On $[0,1]$, $\widehat x=\min\{1,\sqrt{Y_+}\}$ has risk at most
$\min\{\zeta,\zeta^2/x^2\}$ at $x$, so $R_\pi^*\le2\zeta^{3/2}$. The
pushforward of $\pi|_{[0,\sqrt\zeta/2]}$ under
$x\mapsto(x,x+\sqrt\zeta/2)$ has marginals at most $\pi$ ($B=1$; here
$L_{\rm KL}=1/2$, so $\varrho_0=1\ge3/4$), signal differences at most
$3\zeta/4$ and integrated squared target difference $\zeta^{3/2}/8$,
so~\eqref{risk:pairs} gives $R_\pi^*\ge\zeta^{3/2}/128$. Also
$|y-x|\le|y^2-x^2|/x$ gives~\eqref{crit:profile} with $C_U=2$, and
Theorem~\ref{risk:transfer}(b) rules out uniform matching. On $[-1,1]$
the posterior is even, so $R_\pi^*=\mathbb E x^2=1/3$. The
inverse-information integrals split at $\zeta/2$ and $\zeta/4$.
\end{proof}

\subsection{When the Averaged Cram\'er--Rao Bound Suffices}
\label{crit:kappa}

The capped Cram\'er--Rao bound~\eqref{crit:capped-functional} caps the bound at
each configuration before averaging. The averaged bound
$\min\{L,r^2\mathbb E_\pi[1/S]\}$ caps only after averaging. Their
relation depends only on the law of $S$ under $\pi$. We say that $S$ has
\emph{lower-tail exponent} $\kappa>0$ under $\pi$ between scales
$0\le s_1\le s_0/8$ if $S\ge s_1$ $\pi$-almost surely and there are
constants $0<c_-\le\min\{1,C_+\}$ with
\begin{equation}
 \pi\{S<u\}\le C_+(u/s_0)^\kappa\quad(u>0),\qquad
 \pi\{S<u\}\ge c_-(u/s_0)^\kappa\quad(2s_1\le u\le s_0).
 \label{crit:lower-tail}
\end{equation}
With $S=I$, $\kappa$ is the lower-tail exponent of the Fisher
information under the prior. Up to a factor $2$, $s_1$ is the smallest
value of $S$. Its essential infimum under $\pi$ lies in $[s_1,2s_1)$, or
is $0$ if $s_1=0$.

\begin{proposition}[Lower-tail law]
\label{prop:kappa-law}
Let $S\ge0$ be measurable, $L,t>0$,
$A(t)=\mathcal A_\pi(t^{1/2};S)=\mathbb E_\pi\min\{L,t/S\}$ and
$\bar A(t)=\min\{L,t\,\mathbb E_\pi[1/S]\}$.
\begin{enumerate}
\item[(a)] $A(t)\le\bar A(t)$. With $F_S(u)=\pi\{S<u\}$,
\[
 \begin{gathered}
 A(t)=L\pi\{S\le t/L\}+t\int_{\{S>t/L\}}S^{-1}\dd\pi
 =t\int_{t/L}^\infty\frac{F_S(u)}{u^2}\dd u,\\
 t\,\mathbb E_\pi[1/S]=t\int_0^\infty\frac{F_S(u)}{u^2}\dd u .
 \end{gathered}
\]
\item[(b)] Under~\eqref{crit:lower-tail}, put $H=\max\{s_1,t/L\}$. For
$t\le Ls_0$,
\[
 A(t)\asymp\begin{cases}
 ts_0^{-\kappa}H^{\kappa-1},&\kappa<1,\\
 (t/s_0)\{1+\log(s_0/H)\},&\kappa=1,\\
 t/s_0,&\kappa>1,
 \end{cases}
\]
and $t\,\mathbb E_\pi[1/S]$ has the same orders with $H$ replaced by
$s_1$ (and is infinite if $s_1=0$ and $\kappa\le1$). For $t\ge Ls_0$,
$A(t)\asymp\bar A(t)\asymp L$; for $t\le Ls_1$, $A(t)=t\,\mathbb E_\pi[1/S]$.
\item[(c)] Consequently there is $C$, depending only on
$(\kappa,c_-,C_+)$ and not on $L$, $s_0$ or $s_1$, such that $\bar A(t)\le CA(t)$ for all $t>0$ and
every law of $S$ satisfying~\eqref{crit:lower-tail} with these constants
and some $s_1\in[0,s_0/8]$, if and only if $\kappa>1$.
\end{enumerate}
The constants in~(b) depend only on $(\kappa,c_-,C_+)$.
\end{proposition}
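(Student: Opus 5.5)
The plan is to prove (a) by a layer-cake computation, (b) by splitting the tail integral of (a) at the scales $s_1,s_0$ and inserting \eqref{crit:lower-tail}, and (c) by comparing the orders obtained in (b). The main obstacle is the lower bounds in (b) when $H$ is close to $s_0$; I would absorb these with a separate bound rather than try to squeeze the tail integral.

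\emph{Part (a).} Since $u\mapsto\min\{L,u\}$ is concave, Jensen's inequality gives $A(t)\le\bar A(t)$. The first identity comes from splitting on the event $\{S\le t/L\}$. For the tail form, write
\[
 \min\{L,t/S\}=\int_0^\infty\mathbf 1\{\min\{L,t/S\}>y\}\,dy
\]
and substitute $y=t/u$. This gives $\min\{L,t/S\}=t\int_{t/L}^\infty\mathbf 1\{S<u\}u^{-2}\,du$. Taking expectations and applying Tonelli yields the formula with $F_S$, and the same computation without the cap gives $t\,\mathbb E[1/S]$. This is the analogue of \eqref{risk:capped-tail}.

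\emph{Part (b).} Assume $t\le Ls_0$, so $t/L\le s_0$.
\begin{itemize}
\item Since $F_S(u)=0$ for $u<s_1$ and $F_S\le1$, the integral over $[\max\{t/L,s_1\},\infty)$ captures $A(t)$ up to nothing lost.
\item On $[H,s_0]$, insert \eqref{crit:lower-tail} to get $t s_0^{-\kappa}\int_H^{s_0}u^{\kappa-2}\,du$. This evaluates to the three displayed cases exactly as in Lemma~\ref{risk:power-integral}.
\item The piece over $[s_0,\infty)$ is at most $t/s_0$, which is dominated by each case.
\item For the lower bounds, use $F_S\ge c_-(u/s_0)^\kappa$ on $[\max\{H,2s_1\},s_0]$. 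The factor $2$ only changes constants, because $H\ge s_1$ gives $\int_{2H}^{s_0}\gtrsim\int_H^{s_0}$ whenever $H\le s_0/4$.
\item When $H>s_0/4$, use instead $A(t)\ge L\,\pi\{S\le t/L\}$ together with $\pi\{S<s_0\}\ge c_-$ at $u=s_0$. Because $t/L\asymp s_0$ in this range, this gives $A\gtrsim t/s_0$, which matches all three cases since $H\asymp s_0$.
\end{itemize}
The expectation $t\,\mathbb E[1/S]$ is handled by the same integral started at $s_1$ instead of $H$. It diverges when $s_1=0$ and $\kappa\le1$. For $t\ge Ls_0$ we have $\pi\{S<u\}\ge c_-$ at $u=s_0\le t/L$, so $A\gtrsim L$. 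For $t\le Ls_1$ the cap never binds, so $A(t)=t\,\mathbb E[1/S]$.

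\emph{Part (c).}
\begin{itemize}
\item If $\kappa>1$, both $A$ and $\bar A$ are of order $\min\{L,t/s_0\}$ for every $t$, using (b) in the regime $t\le Ls_0$. The resulting constant does not depend on $L$, $s_0$ or $s_1$.
\item If $\kappa<1$, take $s_1=\delta s_0$ with $\delta\to0$ and $t=Ls_0\delta^{1/2}$. Then $H=t/L$, so $A\asymp(t/s_0)\delta^{(\kappa-1)/2}$. On the other hand $t\,\mathbb E[1/S]\asymp(t/s_0)\delta^{\kappa-1}$, which still lies below $L$ for small $\delta$. The ratio $\bar A/A$ is then of order $\delta^{(\kappa-1)/2}\to\infty$.
\item If $\kappa=1$, the same choices give ratio $\log(1/\delta)/\log(1/\delta^{1/2})$, which is not yet unbounded. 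I would instead take $t=Ls_0/M$ with $M=\log(1/\delta)$. This makes $\bar A\asymp L$ while $A\asymp(L/M)\log M$, so the ratio $M/\log M\to\infty$.
\end{itemize}
For these counterexamples, laws satisfying \eqref{crit:lower-tail} exist, for instance $S=s_0V^{1/\kappa}$ with $V$ uniform, truncated below at $s_1$.

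The main obstacle is the case bookkeeping near the endpoints $H\approx s_0$ and $u\approx2s_1$ in (b). I would resolve both with the separate bound $A\ge L\,\pi\{S\le t/L\}$ described above.
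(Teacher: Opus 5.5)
Your route is the paper's: the layer-cake identity for (a); for (b), the split of $t\int_{t/L}^\infty F_S(u)u^{-2}\,du$ at $H$ and $s_0$, with the lower bound integrated from $2H$ when $4H\le s_0$; and for (c), the paper's scale choices, namely $\log(Ls_0/t)=(1/2)\log(s_0/s_1)$ for $\kappa<1$ and $\log(Ls_0/t)=\log\log(s_0/s_1)$ for $\kappa=1$. Three steps are wrong as written, although each can be repaired in a line.

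First, in (b) with $H>s_0/4$, you combine $A(t)\ge L\pi\{S\le t/L\}$ with $\pi\{S<s_0\}\ge c_-$. Here $t/L\le s_0$, so the inclusion $\{S\le t/L\}\subset\{S<s_0\}$ goes the wrong way. There are two fixes.
\begin{enumerate}
\item Apply the lower tail at $u=t/L$. This point lies in $(s_0/4,s_0]\subset[2s_1,s_0]$, so $\pi\{S<t/L\}\ge c_-4^{-\kappa}$. Hence $A\ge c_-4^{-\kappa}L\ge c_-4^{-\kappa}t/s_0$.
\item Alternatively, as the paper does, use the piece of the tail integral over $[s_0,\infty)$. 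It is at least $c_-t/s_0$ for every $t\le Ls_0$, because $F_S(u)\ge F_S(s_0)\ge c_-$ there.
\end{enumerate}

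Second, take (c) with $\kappa<1$, $s_1=\delta s_0$ and $t=Ls_0\delta^{1/2}$. Then $t\,\mathbb E_\pi[1/S]\asymp L\delta^{\kappa-1/2}$. For $\kappa<1/2$ this is not below $L$; it tends to infinity. In that case $\bar A=L$ and $\bar A/A\asymp\delta^{-\kappa/2}$, not $\delta^{(\kappa-1)/2}$. The correct statement is $\bar A/A\asymp\min\{\delta^{-\kappa/2},\delta^{-(1-\kappa)/2}\}$, which is the paper's $\min\{e^{\kappa x},e^{(1-\kappa)(y-x)}\}$ at $x=y/2$. It still diverges, so your conclusion survives.

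Third, the ``only if'' direction needs laws that satisfy the lower-tail condition with the given $(c_-,C_+)$, and $C_+<1$ is allowed. Your example $S=\max\{s_1,s_0V^{1/\kappa}\}$ has $\pi\{S<u\}=(u/s_0)^\kappa$ for $s_1<u<s_0$. This violates the upper bound $C_+(u/s_0)^\kappa$ whenever $C_+<1$. Use instead $S=\max\{s_1,s_0(V/c_-)^{1/\kappa}\}$, that is, $F_S(u)=\min\{1,c_-(u/s_0)^\kappa\}$ for $u>s_1$. This is the paper's law, and it satisfies both bounds because $c_-\le\min\{1,C_+\}$.
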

\begin{proof}See Appendix~\ref{crit:proofs}.\end{proof}

Part~(a) is~\eqref{risk:capped-tail} after a change of variable;
part~(b) is a power integral. Under Theorem~\ref{risk:transfer}(b), the
same orders apply to the risk. Part~(c) concerns uniformity as
$s_1\downarrow0$; its proof constructs laws for which uniform
equivalence fails when $\kappa\le1$. For fixed $s_1>0$ equivalence always
holds, but its constant may grow with $s_0/s_1$. For sources this ratio
is a power of $\eps$. The intermediate noise exponent is $2\kappa$
when $\kappa<1$, with a logarithmic correction at $\kappa=1$.

\begin{lemma}[Lower-tail exponent from a vanishing order]
\label{crit:gauge}
Let $S\ge0$ and $\mathsf r$ be measurable, with
$\mathsf r_1\le\mathsf r\le1$ $\pi$-almost surely for some
$\mathsf r_1\ge0$. Suppose that, for $\mathsf p,\mathsf d,s_0>0$ and
constants $0<c_1\le C_1$, $0<c_2\le C_2$,
\[
 c_1s_0\mathsf r^{\mathsf p}\le S\le C_1s_0\mathsf r^{\mathsf p}\
 \text{a.s.},\qquad
 \pi\{\mathsf r<t\}\le C_2t^{\mathsf d}\ (t>0),\qquad
 \pi\{\mathsf r<t\}\ge c_2t^{\mathsf d}\ (2\mathsf r_1\le t\le1),
\]
and put $\kappa=\mathsf d/\mathsf p$ and
$s_1=s_0\mathsf r_1^{\mathsf p}$. If
$s_1\le\min\{2^{-\mathsf p-2},1/8\}s_0$, the conclusions of
Proposition~\ref{prop:kappa-law}(b) hold for $S$, with $\asymp$ in place
of the equality in its last clause and with constants depending only on
$\mathsf p,\mathsf d,c_1,C_1,c_2,C_2$. Otherwise
$c_1\min\{2^{-\mathsf p-2},1/8\}s_0\le S\le C_1s_0$ almost surely.
\end{lemma}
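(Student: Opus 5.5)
The plan is to verify the hypothesis~\eqref{crit:lower-tail} of Proposition~\ref{prop:kappa-law} for $S$, with $\kappa=\mathsf d/\mathsf p$ and $s_1=s_0\mathsf r_1^{\mathsf p}$, up to adjusting the constants. Proposition~\ref{prop:kappa-law}(b) then applies directly. The only change needed is that the exact identity $A(t)=t\,\mathbb E_\pi[1/S]$ for $t\le Ls_1$ may fail, because $S$ is only comparable to $s_0\mathsf r^{\mathsf p}$ and so $S\ge s_1$ need not hold exactly. I would instead run the argument for the comparison variable $\tilde S=s_0\mathsf r^{\mathsf p}$ and transfer the conclusions.

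\emph{Step 1: comparison.} Since $c_1\tilde S\le S\le C_1\tilde S$, we have $\min\{L,t/S\}\asymp\min\{L,t/\tilde S\}$ with constants depending on $c_1,C_1$. Therefore $A_S(t)\asymp A_{\tilde S}(t)$ and $\mathbb E[1/S]\asymp\mathbb E[1/\tilde S]$. It therefore suffices to treat $\tilde S$, which satisfies $\tilde S\ge s_1$ almost surely, since $\mathsf r\ge\mathsf r_1$.

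\emph{Step 2: the tail of $\tilde S$.} The events coincide: $\{\tilde S<u\}=\{\mathsf r<(u/s_0)^{1/\mathsf p}\}$. The upper hypothesis on $\mathsf r$ gives $\pi\{\tilde S<u\}\le C_2(u/s_0)^{\kappa}$ for all $u>0$. For the lower bound, $2s_1\le u\le s_0$ gives $t=(u/s_0)^{1/\mathsf p}\in[2^{1/\mathsf p}\mathsf r_1,1]$. This is not quite $t\ge2\mathsf r_1$, so I would rescale the lower scale.

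The remedy is to use the tail law on the range $2\tilde s_1\le u\le s_0$ with $\tilde s_1=2^{\mathsf p-1}s_1$. Then $u\ge2^{\mathsf p}s_1$ forces $t\ge2\mathsf r_1$, and $\pi\{\tilde S<u\}\ge c_2(u/s_0)^\kappa$ on that range. The requirement $\tilde s_1\le s_0/8$ follows from $s_1\le2^{-\mathsf p-2}s_0$; this is exactly where the hypothesis $s_1\le\min\{2^{-\mathsf p-2},1/8\}s_0$ enters. The constraint $c_-\le\min\{1,C_+\}$ is met by taking $c_-=\min\{c_2,1,C_2\}$.

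\emph{Step 3: transferring the orders.} Proposition~\ref{prop:kappa-law}(b) now holds for $\tilde S$ with lower scale $\tilde s_1$, so the orders are stated with $H=\max\{\tilde s_1,t/L\}$. Because $\tilde s_1\asymp s_1$ with constant $2^{\mathsf p-1}$, the same orders hold with $s_1$, after adjusting constants in the power and logarithmic expressions. Step~1 then carries the orders over to $S$.

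The last clause needs separate handling. For $t\le L s_1$ and the quantity $t\,\mathbb E[1/S]$, one has $\min\{L,t/S\}$ versus $t/S$: since $S\ge c_1s_1$, we get $t/S\le L/c_1$, so the two agree up to the factor $\max\{1,1/c_1\}$. This gives $\asymp$ in place of equality. The range $t\in(L\tilde s_1\text{-scale})$ between $Ls_1$ and $L\tilde s_1$ is absorbed by the same comparability.

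\emph{Step 4: the otherwise clause.} If $s_1>\min\{2^{-\mathsf p-2},1/8\}s_0$, then $\mathsf r^{\mathsf p}\ge\mathsf r_1^{\mathsf p}=s_1/s_0$. The sandwich $c_1s_0\mathsf r^{\mathsf p}\le S\le C_1s_0\mathsf r^{\mathsf p}$, together with $\mathsf r\le1$, gives the stated bounds on $S$ immediately.

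The main obstacle is the bookkeeping in Step~2. The lower tail for $\mathsf r$ is assumed only from $2\mathsf r_1$, which forces the change of lower scale, and every constant must stay independent of $s_0$, $s_1$ and $L$.
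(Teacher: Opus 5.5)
Your route is the paper's. You replace $S$ by a variable that satisfies \eqref{crit:lower-tail} at scales comparable to $(s_0,2^{\mathsf p-1}s_1)$, apply Proposition~\ref{prop:kappa-law}(b), and transfer the orders back by constant-factor comparability. The paper uses $S'=\max\{S,s_1'\}$ with $s_0'=C_1s_0$ and $s_1'=2^{\mathsf p-1}C_1s_1$. You use $\tilde S=s_0\mathsf r^{\mathsf p}$, whose tail is exactly that of $\mathsf r$, so $C_+=C_2$ with no $C_1/c_1$ factor. Your direct handling of the clause $t\le Ls_1$ through $S\ge c_1s_1$, and your Step~4, are correct.

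There is one missing verification, and it is exactly what the maximum in the paper's $S'$ supplies. The definition \eqref{crit:lower-tail} requires $S\ge s_1$ almost surely at the lower scale in use. You invoke the proposition for $\tilde S$ with lower scale $\tilde s_1=2^{\mathsf p-1}s_1$, but you only know $\tilde S\ge s_1$. When $\mathsf p>1$ we have $\tilde s_1>s_1$, and $\tilde S$ can take values in $[s_1,\tilde s_1)$ with positive probability; an atom of $\mathsf r$ at $\mathsf r_1$ is compatible with all hypotheses. The proposition then does not apply as stated. Indeed, its last clause $A(t)=t\,\mathbb E_\pi[1/\tilde S]$ fails for $\tilde S$ at $t=L\tilde s_1$ in that case.

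The repair is one line. Put $\widehat S=\max\{\tilde S,\tilde s_1\}$. Since $\tilde S\ge s_1=2^{1-\mathsf p}\tilde s_1$, you have $\tilde S\le\widehat S\le\max\{1,2^{\mathsf p-1}\}\tilde S$, so your Step~1 comparability still carries every order back to $S$. The tail bounds survive as follows:
\begin{itemize}
\item $\{\widehat S<u\}\subset\{\tilde S<u\}$ gives the upper tail for all $u>0$;
\item $\{\widehat S<u\}=\{\tilde S<u\}$ for $u>\tilde s_1$ gives the lower tail on $[2\tilde s_1,s_0]$;
\item $\widehat S\ge\tilde s_1$ holds almost surely by construction.
\end{itemize}
With this change your argument is complete.
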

\begin{proof}See Appendix~\ref{crit:proofs}.\end{proof}

Here $\mathsf r$ measures the distance to the set where the information
degenerates, $\mathsf p$ is the order of vanishing and $\mathsf d$ the
exponent of the prior mass of $\{\mathsf r<t\}$.

\begin{remark}[Prior mass and the risk exponents]
\label{rem:kappa}
In every row of Table~\ref{tab:kappa} the surrogate satisfies the
hypotheses of Lemma~\ref{crit:gauge}. It is comparable to the efficient
information. For the factor model this holds where $|v|$ is bounded
below (Section~\ref{factor:section}).
Proposition~\ref{prop:kappa-law}(b) evaluates $A(t)$ in the noise ranges
of Theorem~\ref{thm:prior-effect}. For even $K$ the power law of row~1
holds only conditionally on $\varphi_{-j}$. Hence
Theorem~\ref{thm:randomphase} uses the full phase integral of
Section~\ref{sec:randomphase-proof}. Rows~1--3 have $L=\eps^2$ and
$t=\zeta_n^2$. They use the supports of Theorem~\ref{thm:prior-effect}
and $j\in A$ in its part~(a). The matrix $\mathbf R$ and the angle
$\widetilde\varphi$ are as after Lemma~\ref{geom:ellipse}. In row~2,
$\mathscr D$ is divided by its supremum, and its mass exponent comes
from Step~1 of the proof in Appendix~\ref{sec:weighted-phase-details}.
In row~3, $u$ and $\bar\varrho_\eps$ are as in
Appendix~\ref{prf:k3-singular}. Row~4 has
$L=(\overline d-\underline d)^2$ and $t=n^{-1}$.
\end{remark}

\begin{table*}[t]
\centering
\caption{Lower-tail exponents of the surrogates
(Lemma~\ref{crit:gauge}): $S\asymp s_0\mathsf r^{\mathsf p}$ with
$\mathsf r_1\le\mathsf r\le1$, $\pi\{\mathsf r<t\}\asymp t^{\mathsf d}$
for $2\mathsf r_1\le t\le1$, $\kappa=\mathsf d/\mathsf p$ and
$s_1=s_0\mathsf r_1^{\mathsf p}$, up to fixed constants; $s_0$ and $s_1$
are the scales of~\eqref{crit:lower-tail}. Remark~\ref{rem:kappa} gives
the setting of each row.}
\label{tab:kappa}
\small
\begin{tabular}{@{}p{0.30\linewidth}lllll@{}}
\toprule
Model and prior & $s_0$ & $\mathsf r$ & $\mathsf r_1$
 & $(\mathsf p,\mathsf d)$ & $\kappa$\\
\midrule
Even $K$, uniform target phase
 & $\lambda_{\max}(\mathbf R)$ &$|\sin\widetilde\varphi|\vee\mathsf r_1$
 & $(\lambda_{\min}/\lambda_{\max})^{1/2}$ & $(2,1)$ & $1/2$\\
Even $K$, local $|\mathscr D|^\beta$ prior
 & $\eps^{3K-4}$ & $|\mathscr D|\vee\eps$ & $\eps$ & $(2,1+\beta)$
 & $(1+\beta)/2$\\
$K=3$, specified $u^{-\gamma}$ prior
 & $\eps^6$ & $\bar\varrho_\eps$ & $\eps$ & $(2,3-2\gamma)$
 & $(3-2\gamma)/2$\\
Calibrated factor model
 & $\|A\|^4$ & $\|a\|/\|A\|$ & $0$ & $(4,m)$ & $m/4$\\
\bottomrule
\end{tabular}
\end{table*}

\subsection{Application to Theorems~\ref{thm:randomphase} and~\ref{thm:prior-effect}}
\label{lc:criterion-sources}
Theorems~\ref{thm:randomphase} and~\ref{thm:prior-effect} are instances
of Theorem~\ref{risk:transfer} with the surrogate $S=J_{\min}=\min_kJ_k$.
Section~\ref{sec:info} supplies both conditions. The inverse
bound~\eqref{eq:source-global-modulus} holds for every prior on
$\mathcal X_K$. The fixed-moment paths of Lemma~\ref{geom:fiber-paths}
lie on the Prony varieties of \citet{goldman2018prony}.
Table~\ref{geom:verification-table} records which result verifies which
condition for each prior. For every prior and target $j$ covered by the
two theorems, Theorem~\ref{risk:transfer}(b) therefore makes the risk
of the order of $\mathbb E_\pi\min\{\eps^2,\zeta_n^2/J_{\min}\}$. Since
$J_{\min}\le J_j\le C^4I$ (Remark~\ref{crit:fisher-rel}) and the capped
functional decreases in the surrogate, the same holds with $J_j$.
Theorem~\ref{risk:transfer}(a) with $F=\mu$, whose efficient
information for $\tau_j$ is $J_j$, also shows that no prior-dominated
path family gives a lower bound above
$B\,\mathbb E_\pi\min\{\eps^2,\zeta_n^2/J_j\}$.

In the language of Proposition~\ref{prop:kappa-law}, the surrogate
$J_{\min}$ has lower-tail exponent $\kappa=(1+\beta)/2$ under
$\pi_\beta$ and $\kappa=(3-2\gamma)/2$ under $\pi_\gamma$, with the
scales of Table~\ref{tab:kappa}. For every $\zeta_n$,
$\mathcal A_\pi(\zeta_n;J_j)\asymp\mathcal A_\pi(\zeta_n;J_{\min})$ with
constants independent of $\eps$. For small $\zeta_n$ both functionals
equal $\zeta_n^2$ times the prior average of the inverse surrogate.
Hence $\mathbb E_\pi[1/J_j]\asymp\mathbb E_\pi[1/J_{\min}]$, and the
statements about the averaged bound hold with the target information
$J_j$. At $\beta=0$ the scales are $s_0=\eps^{3K-4}$ and
$s_1=\eps^{3K-2}$, with $\kappa=1/2$. They agree with row~1 of
Table~\ref{tab:kappa} after multiplication by $\eps^{2K-2}d_j^2$, at
nuisance phases with $\lambda_{\max}(\mathbf R)\asymp\eps^{K-2}$.

\subsection{A One-Factor Model}
\label{factor:section}

Theorem~\ref{risk:transfer} does not use the source geometry. As an
instance outside sources, let $\mathbf X_1,\ldots,\mathbf X_n\in\mathbb R^{m+1}$
be independent $N(0,\Sigma)$ vectors with
\begin{equation}
 \Sigma=\operatorname{diag}(d_0,d_1,\ldots,d_m)+uu^T,\qquad u=(v,a^T)^T .
 \label{factor:covariance}
\end{equation}
The target is the residual variance $d_0$ of variable~$0$, whose loading
is $v$. The variables $1,\ldots,m$ are references with loadings $a$. The
residual variances of the references in
$\mathcal C\subset\{1,\ldots,m\}$ are known from calibration. The
parameters $d_0$, $u$ and the other $d_j$ are unknown, with
$|a_j|\le A_j$, $|v|\le V$ and $d_j\in[\underline d,\overline d]$ for
some $0<\underline d<\overline d$. The prior $\pi$ has a density on this
box that is bounded above and below by positive constants. Let $\mathsf P$ keep the entries
$(j,k)$, $j\ne k$, and $(j,j)$, $j\in\mathcal C$, of a symmetric
$m\times m$ matrix and set the others to zero, and put
\begin{equation}
 S(a)=\|\mathsf P(aa^T)\|_F^2=\|a\|^4-\textstyle\sum_{j\notin\mathcal C}a_j^4 .
 \label{factor:surrogate}
\end{equation}
The matrix $G=\mathsf P(aa^T)$ is read off $\Sigma$ and the known $d_j$,
and $z=(\Sigma_{j0})_{j\ge1}=va$. Since $\mathsf P(zz^T)=v^2G$,
\begin{equation}
 d_0=\Sigma_{00}-z^TGz/\|G\|_F^2\qquad\text{whenever }S(a)>0;
 \label{factor:identification}
\end{equation}
at $S(a)=0$ only $d_0+v^2$ is identified. With all references
calibrated, $S=\|a\|^4$.

As for sources, the information about the target becomes small through
cancellation. A change of $d_0$ can be partly offset by a change of the
other unknowns. Only the place where the information degenerates
differs. For sources it lies near phase configurations at which a moved
node can be compensated. These are the hypersurface $\mathscr D=0$ for
the even-source priors of Section~\ref{sec:power-weights} and the
aligned set for the three-source prior of
Section~\ref{sec:singular-prior-k3}. Here it lies near the zero set of
$S$ at points with $v\ne0$ (Remark~\ref{factor:exact} and the paragraph
after Lemma~\ref{factor:path-family}). This zero set is $\{a=0\}$ when
all references are calibrated. At
$a=0$ the only
entry of $\Sigma$ that involves $d_0$ or $v$ is $\Sigma_{00}=d_0+v^2$,
so for $v\ne0$ a change of $d_0$ is offset exactly by a change of~$v$
that keeps $d_0+v^2$ fixed.

The clipped plug-in estimator forms $\widehat G$ and $\widehat z$ from the sample
second-moment matrix
$\widehat\Sigma_n=n^{-1}\sum_i\mathbf X_i\mathbf X_i^T$ as $G$ and $z$
are formed from $\Sigma$, evaluates the right side
of~\eqref{factor:identification} at $\widehat\Sigma_n$ and clips it to
$[\underline d,\overline d]$, with $(\underline d+\overline d)/2$ when
$\widehat G=0$. Appendix~\ref{factor:proofs} verifies
the conditions of Theorem~\ref{risk:transfer} for the surrogate $S$.

\begin{corollary}[Risk for the target residual variance]
\label{factor:risk}
Let all references be calibrated and $m\ge1$ be fixed. Then
$R_{\pi,d_0}(n)=R_\pi^*(n^{-1/2})$ of Section~\ref{crit:setting}, for
$g=d_0$, satisfies
\begin{equation}
 R_{\pi,d_0}(n)\asymp\mathbb E_\pi\min\{1,n^{-1}\|a\|^{-4}\}
 \asymp\mathcal A_\pi(n^{-1/2};I)\asymp
 \begin{cases}
 n^{-m/4},&m<4,\\
 n^{-1}\log(en),&m=4,\\
 n^{-1},&m>4,
 \end{cases}
 \label{factor:rates}
\end{equation}
attained by the clipped plug-in estimator at $\widehat\Sigma_n$, which
uses neither the prior nor $n$. The constants depend on $m$, the box and
the density bounds.
\end{corollary}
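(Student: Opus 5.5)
The plan is to apply Theorem~\ref{risk:transfer}(b) with target $g=d_0$, surrogate $S(a)=\|a\|^4$ (all references calibrated), cap $L=(\overline d-\underline d)^2$ and noise scale $r_n=n^{-1/2}$. After that, Lemma~\ref{crit:gauge} and Proposition~\ref{prop:kappa-law}(b) evaluate $\mathcal A_\pi(n^{-1/2};S)$.

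The experiment is not Gaussian-mean. To meet~\eqref{spine:experiment}, take $F(x)=\Sigma(x)$, flattened to its upper triangle, and $T_n=\widehat\Sigma_n$. Then $\mathbb E\|\widehat\Sigma_n-\Sigma\|^2\le\bar q/n$ uniformly on the compact box, since fourth moments are bounded. The KL divergence between $n$ samples from $N(0,\Sigma)$ and $N(0,\Sigma')$ is at most $Cn\|\Sigma-\Sigma'\|_F^2$, because eigenvalues are bounded above and below by $\underline d$ and the box.

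\emph{Inverse bound.} I would verify~\eqref{crit:profile} through Proposition~\ref{crit:suff}(a) with $S_t=S_v=S$. The identification~\eqref{factor:identification} expresses $d_0$ as a smooth function of $\Sigma$ away from $G=0$. Perturbation of the ratio $z^TGz/\|G\|_F^2$ gives the following. For $x$ in the box and $\|\Sigma'-\Sigma\|\le c_v\|a\|^2$, the perturbed $G'$ satisfies $\|G'\|_F\ge\|a\|^2/2$. Writing $z=va$, the derivative of $z^TGz/\|G\|_F^2$ in $(G,z)$ is $O(1/\|a\|^2)$ once one uses $|z|\lesssim\|a\|$, $\|G\|_F=\|a\|^2$ and $z^TGz=v^2\|a\|^4$. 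Hence
\[
|d_0(y)-d_0(x)|\le C\|F(y)-F(x)\|/\|a\|^2=C\|F(y)-F(x)\|/\sqrt{S(x)}.
\]
The cap $\sqrt L$ is automatic. Lemma~\ref{crit:SleI} then gives $S\le C_SI$ at interior points, so $\pi$-a.e.

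The same bound controls the clipped plug-in estimator directly. On $\|\widehat\Sigma_n-\Sigma\|\le c\|a\|^2$ its error is $\lesssim\|\widehat\Sigma_n-\Sigma\|/\|a\|^2$, and otherwise it is at most $\sqrt L$. By Chebyshev, its risk is at most $C\min\{1,n^{-1}\|a\|^{-4}\}$ with no prior or $n$ used. This is the upper bound claimed in~\eqref{factor:rates}.

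\emph{Matching paths.} This is the main obstacle. Near $a=0$ with $|v|\ge v_*>0$, the cancellation direction keeps $d_0+v^2$ fixed. I would take paths $\gamma_y(s)$ for labels $y=(a,d_{1:m},w)$, with $w$ ranging over a fixed interval, parametrized by $d_0=w+s\ell$ for a fixed length $\ell$ and $v=\sqrt{w'-d_0}$, where $w'$ is held fixed and chosen so that $v$ stays in $[v_*,V]$. Along such a path $\Sigma_{00}$ is constant. The only other entries that move are $\Sigma_{j0}=va_j$, at speed $|v'||a_j|\lesssim\|a\|$, so $d_y\lesssim\|a\|$ while $b_y=\ell\asymp\sqrt L$.

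This gives $d_y^2\lesssim\|a\|^2$, whereas Proposition~\ref{crit:suff}(b) needs $d_y^2\lesssim LS=\|a\|^4$. So the naive path is too fast, and I must also move $a$ to cancel the first-order change of $va$. Setting $a=a_0/v$ keeps $z=va$ fixed. It then changes $G=aa^T$ at rate $\|a\|^2$, which gives $d_y\lesssim\|a\|^2=\sqrt S$, as required.

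The occupation measure of these paths is bounded by $B\pi$, because the map $(a_0,d,w,s)\mapsto(a,d_0,v,d)$ has Jacobian bounded above and below on the region $|v|\ge v_*$ and the prior density is comparable to Lebesgue there. The paths also charge a fixed positive fraction of $\pi$ on a set $G=\{|v|\ge v_*\}$, suitably trimmed. On the complement, the capped integral is comparable to that on $G$, because $a$ is independent of $v$ under a density bounded above and below. Proposition~\ref{crit:suff}(b) then yields~\eqref{crit:matching}. Theorem~\ref{risk:transfer}(b) gives
\[
R_{\pi,d_0}(n)\asymp\mathcal A_\pi(n^{-1/2};I)\asymp\mathbb E_\pi\min\{1,n^{-1}\|a\|^{-4}\}.
\]

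\emph{Rates.} Finally, apply Lemma~\ref{crit:gauge} with $\mathsf r=\|a\|/\|A\|$, $\mathsf p=4$, $\mathsf d=m$ (the density is comparable to Lebesgue near $a=0$ in $\mathbb R^m$) and $\mathsf r_1=0$, so $\kappa=m/4$ and $s_1=0$. With $t=n^{-1}$ we have $H=t/L\asymp n^{-1}$. Proposition~\ref{prop:kappa-law}(b) then gives $n^{-1}\cdot n^{-(m/4-1)}=n^{-m/4}$ for $m<4$, $n^{-1}\log(en)$ for $m=4$, and $n^{-1}$ for $m>4$. For small $n$, where $t\ge Ls_0$, all expressions are of order one.
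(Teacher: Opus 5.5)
Your route is the paper's: Theorem~\ref{risk:transfer}(b) with $S=\|a\|^4$, then Lemma~\ref{crit:gauge} and Proposition~\ref{prop:kappa-law}(b). The inverse bound comes from perturbing~\eqref{factor:identification}; you use a local mean-value estimate with Proposition~\ref{crit:suff}(a) and Chebyshev, where the paper has the exact global bound of Lemma~\ref{factor:inverse} and concavity, and either works. The paths fix $z=va$ and $\Sigma_{00}$ while moving $v^2$. Those parts are fine. The path step, however, does not work as written and needs two repairs.

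\emph{The path family is degenerate.} With labels $(a_0,d_{1:m},w)$ and one fixed $w'$, both $d_0=w+s\ell$ and $v=\sqrt{w'-d_0}$ depend on $(w,s)$ only through $w+s\ell$. So does $a=a_0/v$. Hence the map $(a_0,d,w,s)\mapsto(a,d_0,v,d)$ has zero Jacobian, not one bounded below. The whole family lies in the hypersurface $\{d_0+v^2=w'\}$, which is $\pi$-null, so its occupation measure can neither be bounded by $B\pi$ nor be at least $b_P\pi|_G$. The invariant $\Sigma_{00}=w'$ must itself be a label. Take labels $(z,\Sigma_{00})=(a_0,w')$ and let time move $d_0$, or equivalently $v^2$. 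Then the chart $(a,v,d_0)\mapsto(z,v^2,\Sigma_{00})$ has Jacobian $2v^{m+1}$, which is what the paper uses. This Jacobian is bounded above and below on $v\ge v_*$, and the occupation bounds follow from the density bounds.

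\emph{The tail condition needs a different argument.} Since $a=z/v$ changes along each path by a factor up to $V/v_*$, the set $G$ swept by paths that stay in the box must also be shrunk in $a$, not only in $v$ and $d_0$. Your justification, that $a$ is independent of $v$, is not available, since the prior only has a density bounded above and below. That comparability to Lebesgue measure handles the trimming in $v$ and $d_0$, but not the trimming in $a$. For that you need that $\min\{L,r_n^2/\|a\|^4\}$ depends only on $a$ and does not decrease under $a\mapsto a/c$ with $c\ge1$. A change of variables then bounds its integral over the full $a$-box by $c^m$ times its integral over the shrunken box, which is how the paper closes this step. With these two changes your proof goes through.
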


Here $\kappa=m/4$ for $S$ and $I$ (Table~\ref{tab:kappa}), so by
Proposition~\ref{prop:kappa-law} the averaged Cram\'er--Rao bound has
the order of the risk if and only if $m>4$.

\subsection{Comparisons With Prior Works}
\label{lb:criterion-comparison}

Theorem~\ref{risk:transfer} uses classical lower-bound arguments
(Section~\ref{sec:related-works}); we compare it with the closest
benchmarks. For pointwise estimation of a convex function in white noise,
\citet[Theorems~1 and~3]{cailow2015framework} compare the local
two-point minimax benchmark with a squared modulus and give one
estimator attaining it at every function. For general families,
\citet[Proposition~1 and (7)]{duchiruan2024local} give Hellinger-modulus
bounds for the local two-point minimax risk, of matching order under
polynomial growth. Theorem~\ref{risk:transfer} instead controls risk
under a fixed prior. Its part~(a) bounds the path integral by capped
efficient information. Its part~(b) makes the two comparable once the
all-candidate inverse bound and original-prior matching are verified.

The capped decomposition also resembles the interval-error method,
which combines interval-error probabilities with local-MSE
approximations \citep{richmond2006threshold}; here its comparison with
optimal Bayes risk requires the two conditions of
Theorem~\ref{risk:transfer}(b).
\citet[Theorem~6]{benhaim2009bias} show that the number of independent
observations times the optimal-bias bound tends to the prior average of
the trace of the inverse Fisher information of one observation.
\citet[Proposition~3]{aharon2024asymptotically} show that a weighted
Bayesian Cram\'er--Rao bound is asymptotically attained by the maximum
a posteriori estimator when the information is positive and finite.
Both are statements for a fixed model.
Proposition~\ref{prop:kappa-law} instead concerns the information tail
at every noise level, uniformly in the smallest value of the
information.

\section{Adaptation to an Unknown Center, Separation, and Source Count}
\label{sec:consequences}
\label{sec:known-count-adaptation}

In this section, we remove the assumption that the center and the
separation are known, and then the assumption that the source count is
known. Here the estimator knows amplitude bounds $0<a_*\le A_*$, a window
$W=[\bar\tau-D_W/2,\bar\tau+D_W/2]$, and either $K$ or its upper bound
$K_{\max}$, but neither $\tau_0$ nor $\eps$. The truth remains one close
cluster of the form~\eqref{eq:class}; candidates may lie anywhere in $W$.
A window Gram condition replaces Assumption~\ref{ass:sampling}; unknown
$K$ additionally requires a count-selection threshold.

\subsection{A Uniform Gram Condition on a Window}
\label{revs1a:window-comparison}

The comparison of signals must now hold for all candidate
configurations in $W$. For an ordered list $t_1\le\cdots\le t_N$ in $W$,
repetitions allowed, let $v_j(t)=[t_1,\ldots,t_j]\psi$ be the divided
differences in $\tau$, with Hermite divided differences at repeated
knots.

\begin{assumptionW}[Uniform divided-difference Gram bound on a window]
\label{scope:AW}
Fix an integer $K_{\max}\ge1$, put $N=2K_{\max}$, and let $W$ lie in the
interior of a compact interval $W^+$. For some $A_\psi<\infty$ and
$\lambda_G>0$: the curve $\tau\mapsto\psi(\cdot,\tau)$ is $C^N$ in
$L^2(\nu)$ on a neighborhood of $W^+$, with
$\|\partial_\tau^r\psi(\cdot,\tau)\|_\nu\le A_\psi$ for $\tau\in W^+$ and
$r\le N$; and for every ordered list $t\in W^N$,
\[
 G_W(t;\nu)=\Bigl(\int\overline{v_i(t)}\,v_j(t)\dd\nu\Bigr)_{i,j\le N}
 \succeq\lambda_GI_N .
\]
Uniformity over a design family is understood as in
Assumption~\ref{ass:sampling}.
\end{assumptionW}

\begin{definition}[Moment-norming projection]
\label{scope:proj}
For distinct points $w_1,\dots,w_p\in W$ with multiplicities
$k_1,\dots,k_p\ge1$ and $\sum_ik_i\le N$, let
$\Lambda$ be a complex combination of the functionals
$f\mapsto f^{(l)}(w_i)/l!$, $l<k_i$, $i\le p$ (for example, the
difference of two configurations with at most $K_{\max}$ nodes each in
$W$). Write $\Lambda\psi\in L^2(\nu)$ for $\Lambda$ applied to
$\tau\mapsto\psi(\cdot,\tau)$, and put
$m(\Lambda)=(\Lambda[(\tau-\bar\tau)^r])_{r<N}$. An orthogonal projection
$\Pi_W$ in $L^2(\nu)$, chosen without reference to the data or the true
configuration (it is determined by $\psi$, $\nu$ and $W$), is a
\emph{moment-norming projection} of (complex) rank at most $q_W$ if
$\operatorname{rank}\Pi_W\le q_W$ and, for constants
$0<c_\Pi\le C_\Pi$ and every such $\Lambda$,
\[
 c_\Pi\|m(\Lambda)\|_2\le\|\Pi_W\Lambda\psi\|_\nu
 \le\|\Lambda\psi\|_\nu\le C_\Pi\|m(\Lambda)\|_2 .
\]
\end{definition}

Assumption~\ref{scope:AW} implies Assumption~\ref{ass:sampling} at
every center in $W$ for $K\le K_{\max}$; conversely, on a sufficiently
small window the latter gives Assumption~\ref{scope:AW} with
$K_{\max}=K$, and the projection onto the span of the first $2K$ Taylor
coefficients at $\bar\tau$ (the analogue of $\Pi$) is then a
moment-norming projection (Lemma~\ref{scope:small-window}).
Lemma~\ref{scope:gw-bridge} gives a fixed-rank $\Pi_W$ on any window,
and Lemma~\ref{scope:regular-block} gives line-spectrum designs with
spacing $\mathsf h$, $\mathsf hD_W<2$.

\subsection{Estimators and Risk Guarantees}
\label{adapt:known-count-section}

Assume that the designs satisfy
Assumption~\ref{scope:AW} uniformly; it suffices to take $K_{\max}=K$
when $K$ is known. Let $\Pi_W$ be a moment-norming projection
(Definition~\ref{scope:proj}) of rank $q_\Pi\le q_W$.

\emph{Candidate fits.} For $m\ge1$ put
\[
\begin{aligned}
 \mathcal C_{m,W}&=\{(u,c'):u_1\le\cdots\le u_m\in W,\
                a_*\le|c'_l|\le A_*\},\\
 \widehat x^{(m)}&\in\argmin_{y\in\mathcal C_{m,W}}\|\Pi_W(Y-\mu_y)\|_{\nu_n}^2,
\end{aligned}
\]
with repeated nodes allowed, $\mathcal C_{0,W}=\{0\}$ and
$\mathcal C_W=\mathcal C_{K,W}$. For known $K$ the estimator is
$\widehat x=\widehat x^{(K)}$, and $\widehat\tau_j$ is its $j$th node.

\emph{Truth and target order.}
The truth has center $\tau_0\in W$ and nodes
$\tau_j=\tau_0+\eps s_j\in W$, with $(s,c)$ in the
class~\eqref{eq:class} for $\underline a=a_*$, $\overline a=A_*$, and
$c_l=a_le^{i\varphi_l}$. With $\zeta_{*,K}$ as
in Section~\ref{sec:randomphase}, the order~\eqref{eq:random-phase-risk}
is, for both parities,
$\Psi_K(\eps,\zeta)=\min\{\eps^2,\zeta^2\eps^{-(3K-3)},
\eps^2\zeta/\zeta_{*,K}(\eps)\}$. Here and below
$R_{\pi,j}$ is taken under the law $\pi$ of the physical configuration,
as at the end of Definition~\ref{def:bayes-risk}.

For the lower bound in part~(b) below, fix for each target $j$ an
interior node--modulus point with phase zero on a set $A\ni j$ of size
$\lfloor K/2\rfloor+1$ and $\pi/2$ elsewhere, and let $\pi_{0,j}$ be
uniform probability on a small closed box about it in the Prony
coordinates of Lemma~\ref{geom:prony-target}.

\begin{proposition}[Adaptation to the center and the separation]
\label{thm:known-count-adaptation}
\label{prop:knownK-lower}
Fix $j\le K$, and let $(\tau_0,s,c)$ be drawn from a law $\pi$ on the
truths described above (center $\tau_0\in W$, nodes
$\tau_0+\eps s_l\in W$, $(s,c)$ in~\eqref{eq:class}) that does not depend
on $\eps,n,\zeta_n$ or the noise.
\begin{enumerate}
\item[(a)] \emph{Upper bound.} Suppose that, given $(\tau_0,s,a)$,
the phases have a density at most $M_h$ with respect to the uniform law
$U_K$ on the phase torus. The law of $a$ given $(\tau_0,s)$ may depend on
$(\tau_0,s)$; it is arbitrary for odd $K$, and for even $K$ it has a
Lebesgue density at most $M_a$. Then there is $\eps_0>0$ such that, for
$0<\eps\le\eps_0$ and every $\zeta_n>0$,
\begin{equation}
 \mathbb E|\widehat\tau_j-\tau_j|^2\le C\min\Bigl\{D_W^2,\
 \frac{\zeta_n^2}{\eps^{3K-3}},\
 \eps^2\frac{\zeta_n}{\zeta_{*,K}(\eps)}\Bigr\},
 \label{knownK:all-noise-upper}
\end{equation}
which is at most $C\Psi_K(\eps,\zeta_n)$ when
$\zeta_n\le\zeta_{*,K}(\eps)$.
\item[(b)] \emph{Lower bound.} Suppose that, for some $p_0>0$ and
probability laws $Q_j$ on centers $t$ with
$\operatorname{dist}(t,\partial W)\ge\delta_W>0$,
\begin{equation}
 \pi(dt,dx)\ge p_0\,Q_j(dt)\,\pi_{0,j}(dx),\qquad j\le K,
 \label{knownK:prior-domination}
\end{equation}
where the boxes, $p_0$ and $\delta_W$ do not depend on the center. Then
$R_{\pi,j}\ge c\Psi_K(\eps,\zeta_n)$ for
$0<\eps\le\eps_0$ and every $\zeta_n>0$.
\end{enumerate}
Hence, for $\zeta_n\le\zeta_{*,K}(\eps)$, $R_{\pi,j}\asymp\Psi_K$ under
the hypotheses of both~(a) and~(b). The constants and $\eps_0$ may depend on
$K$, the class~\eqref{eq:class},
$a_*,A_*,M_h,M_a$, $W$, the constants of Assumption~\ref{scope:AW} and
Definition~\ref{scope:proj} (including $q_W$), and in~(b) on
$p_0$, $\delta_W$ and the boxes, but not on $\eps,n,\zeta_n$
or the law of $\tau_0$.
\end{proposition}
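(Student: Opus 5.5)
The plan is to run the argument of Theorem~\ref{thm:randomphase} through Lemma~\ref{spine:risk}, replacing the local Taylor projection $\Pi$ by the moment-norming projection $\Pi_W$ and the known-center moments by moments about the fixed point $\bar\tau$.

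\emph{Upper bound.} The first step is a window version of the inverse bound~\eqref{eq:source-global-modulus}. For a truth $x$ with center $\tau_0$ and any candidate $y\in\mathcal C_{K,W}$, Definition~\ref{scope:proj} gives
\[
\|\Pi_W(\mu_y-\mu_x)\|_{\nu_n}\asymp\bigl|m(\Lambda_{y}-\Lambda_x)\bigr|,
\]
where the moments are taken about $\bar\tau$. A binomial change of base point from $\bar\tau$ to $\tau_0$ is a fixed invertible unit triangular map, because $|\tau_0-\bar\tau|\le D_W$. So these are comparable to the moments about $\tau_0$. The difficulty is that the candidate nodes need not lie at distance $O(\eps)$ from $\tau_0$. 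I will split into two cases.
\begin{itemize}
\item If $\|\Pi_W(\mu_y-\mu_x)\|\le c\,\eps\sqrt{J_{\min}(x)}$, the low-order moments about $\tau_0$ nearly agree. A Prony/annihilating-polynomial argument, as in Proposition~\ref{thm:inverse}, should then force every candidate node within $C\eps$ of the truth cluster. The key fact is that the annihilating polynomial of the truth, evaluated at a far node, is of order one. After rescaling by $\eps$, the candidate lies in an enlarged class $\mathcal X_K$ with compact $\mathcal S_0'$, and Proposition~\ref{thm:inverse} applies verbatim.
\item Otherwise, the trivial bound $D_W^2$ is used.
\end{itemize}
This gives the modulus $w_x(r)=\min\{D_W,\,C r/\sqrt{J_{\min}(x)}\}$, hence the bound
\[
\min\{D_W^2,\,C\zeta_n^2/J_{\min}(x)\}
\]
through Lemma~\ref{spine:risk}(a) with $\bar q=q_\Pi$. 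Notably, the cap is $D_W^2$, not $\eps^2$.

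Next I integrate over the phases. Since the phase density is at most $M_h$, I may replace the law by $M_h$ times the uniform law, and since $J_{\min}^{-1}\le\sum_aJ_a^{-1}$, Proposition~\ref{thm:avg-inverse-info} gives the term $\zeta_n^2\eps^{-(3K-3)}$. For even $K$ I need the intermediate term $\eps^2\zeta_n/\zeta_{*,K}$. I obtain it from the split of Section~\ref{sec:randomphase-proof}, using $L=\eps^2$ only on the good event. The bad event, where the cap is $D_W^2$, has the local bound exceeding $\eps^2$. Its phase probability is controlled by the $\kappa=1/2$ tail, which gives probability of order $\zeta_n/\zeta_{*,K}$ and hence contributes $D_W^2\zeta_n/\zeta_{*,K}\lesssim$ the intermediate term, up to constants depending on $W$.

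The hypothesis $M_a$ enters in Lemma~\ref{geom:ellipse}(2), where the average is over nuisance phases with moduli absorbed. I expect the bounded modulus density is needed to average $a_j$-dependent eigenvalue scales, or to handle the $\lambda_{\max}$ factor uniformly. Taking the minimum of the three bounds yields~\eqref{knownK:all-noise-upper}. For $\zeta_n\le\zeta_{*,K}$ this is $\lesssim\Psi_K$, since $D_W^2$ is then never the active term.

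\emph{Lower bound.} I condition on the center. By~\eqref{knownK:prior-domination} it suffices to bound the risk under $Q_j\otimes\pi_{0,j}$, losing a factor $p_0$, since the Bayes risk is monotone under domination of the prior up to the constant. For a fixed center $t$ with $\operatorname{dist}(t,\partial W)\ge\delta_W$, Assumption~\ref{scope:AW} gives Assumption~\ref{ass:sampling} at $t$ with uniform constants. I then use the paths of Lemma~\ref{geom:fiber-paths}(i) or (ii) (with $\beta=0$) inside the box, which by Lemma~\ref{geom:prony-target} are dominated by $\pi_{0,j}$. The paths fix the center, and the KL bound uses $\|\mu_x-\mu_{x'}\|\le C|\mathsf v|$ from Lemma~\ref{lem:bridge} at center $t$. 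Lemma~\ref{spine:risk}(b), integrated over $t\sim Q_j$ (paths jointly measurable in $t$), gives $c\Psi_K$.

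\emph{Main obstacle.} The main obstacle is the localization step in the upper bound. I must show that a small moment discrepancy about $\bar\tau$ forces candidate nodes in the whole window to cluster at scale $\eps$ near $\tau_0$, uniformly in the center and including repeated nodes. I would try a division argument: write the candidate moment sequence against the truth's annihilating polynomial $\omega(\tau)=\prod(\tau-\tau_l)$. The resulting residual equals $\sum_l c'_l\,\omega(u_l)u_l^{q}$, and I would bound it below through the uniform Gram condition, with its Hermite divided differences, in Assumption~\ref{scope:AW}.
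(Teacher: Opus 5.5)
Your lower bound~(b) is essentially the paper's argument and is fine: fixed-moment paths of Lemma~\ref{geom:fiber-paths}(i), or (ii) with $\beta=0$, about each center, dominated by $\pi_{0,j}$ and mixed over $Q_j$. You should add that the nodes stay in $W$ once $T\eps<\delta_W$.

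\textbf{Gap 1: the window inverse bound.} The gap in~(a) is already at the inverse bound. Your ``otherwise'' branch worked in Proposition~\ref{thm:inverse} only because the cap there was $\eps$, and $\Delta>c_*\eps\sqrt{J_{\min}}$ gives $\eps\lesssim\Delta/\sqrt{J_{\min}}$. Here the trivial bound is $D_W\gg\eps$, and $\Delta>c\eps\sqrt{J_{\min}}$ does not give $D_W\lesssim\Delta/\sqrt{J_{\min}}$. Your two branches therefore do not yield $\min\{D_W,Cr/\sqrt{J_{\min}}\}$; the admissible envelope of what they do give has slope of order $D_W/(\eps\sqrt{J_{\min}})$.

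What is missing is a quantitative bound on candidates whose nodes leave the $O(\eps)$-neighborhood of the cluster. Lemma~\ref{knownK:global-inverse} supplies it using the \emph{candidate's} node polynomial $P$ in physical coordinates about $\tau_0$, whose coefficients are bounded because all nodes lie in $W$. The window comparison gives $|P(\eps s_j)|\lesssim\eps^{K-1}\Delta/\sqrt{J_j}$. Once $D_1\ge N_*\max\{\eps,\Delta^{1/(K-1)}\}$, the largest offset cannot be isolated: applying the coefficients of $P/(z-v_m)$ to both moment sequences gives $D_2\ge c_0D_1$. Two far roots then force $\max_j|P(\eps s_j)|\gtrsim D_1^2\eps^{K-2}$, hence $D_1^2\lesssim\eps\Delta/\sqrt J$. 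Together these give
$|u_j-\tau_j|^2\lesssim\min\{D_W^2,\Delta^2/J,\eps\Delta/\sqrt J\}$.

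The third entry is what produces $\eps^2\zeta_n/\zeta_{*,K}$ for odd $K$, via Jensen and $\mathbb E[J^{-1}]\lesssim\eps^{-(3K-3)}$. Your plan never produces this term for odd $K$, although it is the smallest one when $\zeta_n>\zeta_{*,K}$. Your ``main obstacle'', the small-discrepancy case, is actually the easy part: it is the proof of Proposition~\ref{thm:inverse} run in physical moments.

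\textbf{Gap 2: the even-$K$ intermediate term.} The split you propose cannot work. Charging $D_W^2$ on $\{\zeta_n^2/J>\eps^2\}$ gives $D_W^2\zeta_n/\zeta_{*,K}$. This exceeds the target $\eps^2\zeta_n/\zeta_{*,K}$ by $D_W^2/\eps^2$, which is not a constant depending on $W$.

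No modulus capped at $D_W$ repairs this. For $K=2$ one has $J\gtrsim\eps^4$, so for $\zeta_n\le cD_W\eps^2$ the cap is never reached and $\mathbb E\min\{D_W^2,\zeta_n^2/J\}\asymp\zeta_n^2\eps^{-3}$. This exceeds $\zeta_n$ by the factor $\zeta_n/\eps^3$ on $\eps^3\ll\zeta_n\lesssim\eps^2$, which lies inside $\zeta_n\le\zeta_{*,2}$, exactly where the proposition asserts $\asymp\Psi_K$. Even the three-term modulus with Jensen only gives $\zeta_n\eps^{-1/2}$.

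To keep the cap $\eps^2$ near degenerate phases, you must show that far-out candidates within signal distance $\delta$ are rare under the prior. The paper does this as follows.
\begin{itemize}
\item It bounds $\varpi_x(\delta)\le C\min\{\eps,\delta/\sqrt J\}+2N_*\bar\Psi(\delta)^{1/2}+2D_*$, where $\bar\Psi(\delta)=\eps^2\delta/\zeta_{*,K}$ and $D_*$ is the largest far offset among such candidates.
\item Each such candidate yields a unit polynomial $q$ with $|q_{K-1}|\lesssim\eps/D_1$ and $\|H_cq\|\lesssim\bar\Psi(\delta)/D_1^2$.
\item The probability of such a $q$ is controlled by the small-ball Lemma~\ref{knownK:band-small-ball} over the law of $c$ given $(\tau_0,s)$.
\item A dyadic sum then passes from $\mathbb E\varpi_x(\delta)^2\lesssim\bar\Psi(\delta)$ to the envelope of Theorem~\ref{risk:transfer}(c).
\end{itemize}
That small-ball lemma needs a Lebesgue density for $c$, and this is exactly where $M_a$ enters; the lemma fails for fixed moduli with uniform phases. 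Your guess is wrong: $M_a$ plays no role in Lemma~\ref{geom:ellipse}(2), which is uniform in the moduli.
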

\begin{proof}See Appendix~\ref{prf:knownK-adaptation}.\end{proof}

Part~(a) is Lemma~\ref{spine:risk}(a) on the search class
$\mathcal C_W$ with $F=\Pi_W\mu$ and $\bar q=q_W$: for every truth $x$
and candidate $y$, Lemma~\ref{knownK:global-inverse} gives
$|u_j-\tau_j|^2\le C\min\{D_W^2,\Delta^2/J_{\min}(x),
\eps\Delta/\sqrt{J_{\min}(x)}\}$ with
$\Delta=\|\Pi_W(\mu_y-\mu_x)\|_{\nu_n}$, so~\eqref{crit:profile}
holds with $S=J_{\min}$ and cap $D_W^2$. For
even $K$, part~(a) integrates the envelope $\bar\varpi_x$ of the local
modulus (Theorem~\ref{risk:transfer}(c)) instead of this surrogate, and the conditional density bound on
the moduli controls the prior average. The lower bound~(b) is an
instance of the path bound~\eqref{spine:two-integrals}, with the paths of
Lemma~\ref{geom:fiber-paths} about each center.
Above $\zeta_{*,K}$ the bound in~(a) grows linearly as
$\eps^2\zeta_n/\zeta_{*,K}$ until it reaches $D_W^2$. With the known
center and scale of~\eqref{eq:estimator}, the risk stays at most of
order $\eps^2$. Section~S2-E of the supplement
reports simulations in this range.

\subsection{Selecting the Source Count}
\label{adapt:count-selection}

For $0\le m\le K_{\max}$ let $r_m$ be the residual norm of the
$m$-source fit above. Given a deterministic noise-scale bound
$\zeta_n\le\bar\zeta\le C_\sigma\zeta_n$ and an error level
$0<\alpha<1/2$, put
\begin{equation}
 \mathsf t=\bar\zeta\{2q_\Pi\log2+2\log(1/\alpha)\}^{1/2}.
 \label{scope:selection-threshold}
\end{equation}
Choose the smallest $m$ with $r_m\le\mathsf t$, merge coincident nodes
of $\widehat x^{(m)}$, and remove zero coefficients. Report the remaining
count $\widehat K$ and ordered nodes; report $\widehat K=0$ if no fit
qualifies. Neither the true $K$, center, nor separation enters this rule.
For a $K$-source truth and target $j\le K$, define
\begin{equation}
 \ell^{\rm cnt}_{K,j}=
 \begin{cases}
 |\widehat\tau_j-\tau_j|^2,&\widehat K=K,\\
 D_W^2,&\widehat K\ne K,
 \end{cases}
 \qquad R^{\rm cnt}_{\pi,j}=\inf_{\widehat K,\widehat\tau}
                   \mathbb E_{\pi,\eta}\ell^{\rm cnt}_{K,j}.
 \label{scope:Rcnt}
\end{equation}

\begin{proposition}[Unknown source count in the selection range]
\label{scope:unknownK}
Let $2\le K\le K_{\max}$, and assume the conditions of
Proposition~\ref{thm:known-count-adaptation}(a), with
Assumption~\ref{scope:AW} imposed through order $2K_{\max}$.
There are $c_{\rm sel}>0$ and $\eps_0\le1$ such that, if
\begin{equation}
 0<\eps\le\eps_0,\qquad 2\mathsf t\le c_{\rm sel}\eps^{2K-2},
 \label{scope:selection-range}
\end{equation}
then $\mathbb P_x(\widehat K\ne K)\le\alpha$ for every truth $x$, and
\begin{equation}
 \mathbb E\ell^{\rm cnt}_{K,j}\le
 C\min\left\{D_W^2,\frac{\zeta_n^2}{\eps^{3K-3}},
          \frac{\eps^2\zeta_n}{\zeta_{*,K}(\eps)}\right\}
       +D_W^2\alpha.
 \label{scope:count-upper}
\end{equation}
If also~\eqref{knownK:prior-domination} holds, then
$R^{\rm cnt}_{\pi,j}\ge R_{\pi,j}\ge c\Psi_K(\eps,\zeta_n)$.
Consequently, under~\eqref{knownK:prior-domination}, the selector
attains $R^{\rm cnt}_{\pi,j}\asymp\Psi_K(\eps,\zeta_n)$ when
\eqref{scope:selection-range} holds,
$\zeta_n\le\zeta_{*,K}(\eps)$, and
$\alpha=\min\{1/4,\bar\zeta^2/4\}$. Here $\pi$ is a law of $K$-source
truths.
Constants may additionally depend on $K_{\max}$ and $C_\sigma$, but not
on $\eps,n,\zeta_n$ or $\alpha$.
\end{proposition}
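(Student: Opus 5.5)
The plan has three parts: a high-probability event on which the threshold rule returns $K$, a reduction of the loss to Proposition~\ref{thm:known-count-adaptation}(a), and a lower bound obtained by comparing losses.

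\emph{Noise event.} Let $\mathcal E=\{\|\Pi_W\eta\|_{\nu_n}\le\mathsf t/2\}$. The vector $\Pi_W\eta$ is circular complex Gaussian in at most $q_\Pi$ complex dimensions with per-coordinate variance $\zeta_n^2\le\bar\zeta^2$. A Gaussian concentration bound for its norm, or a $\chi^2_{2q_\Pi}$ tail bound, gives $\mathbb P(\mathcal E^c)\le\alpha$ with the threshold~\eqref{scope:selection-threshold}. The constant in that threshold is calibrated for exactly this; if the factor $1/2$ does not fit, I will adjust $c_{\rm sel}$ rather than $\mathsf t$.

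\emph{Count selection on $\mathcal E$.} This is the main step.
\begin{enumerate}
\item \emph{No smaller count qualifies.} The truth $x$ lies in $\mathcal C_{K,W}$, so $r_K\le\|\Pi_W\eta\|\le\mathsf t$. For $m<K$ and any $y\in\mathcal C_{m,W}$, the difference $\mu_y-\mu_x$ is an admissible combination on at most $m+K\le2K-1$ nodes. Definition~\ref{scope:proj} then gives
\[
 \|\Pi_W(\mu_y-\mu_x)\|\ge c_\Pi\|m(\Lambda)\|_2 .
\]
\item \emph{Moment lower bound for fewer sources.} I need the key lemma $\|m(\Lambda)\|_2\ge c\,\eps^{2K-2}$ uniformly over $y$, with $c$ depending only on $a_*$, $\rho$ and $K$. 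This is the main obstacle. The $K$ separated true nodes carry coefficients bounded below. In Newton (divided-difference) coordinates on the merged node list, a combination on $m+K$ nodes that places mass at least $a_*$ on each of $K$ nodes spread at scale $\eps$ must have a nonzero divided difference of order at most $m+K-1\le2K-2$. The size of that divided difference is at least of order $\eps^{2K-2}$ after rescaling, because the monomials up to that order are unisolvent on $K$ separated nodes.

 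I would first prove this for candidate nodes within $O(\eps)$ of the cluster, via the scaled moment map and compactness in the normalized variables (the scaled analogue of Lemma~\ref{lem:bridge}). For candidate nodes far from the cluster, I would use the Gram condition of Assumption~\ref{scope:AW} on the full list. Such nodes only split off their own contribution, and leave the cluster block, which has at most $m<K$ nodes near the cluster, bounded below.
\item \emph{Conclusion for $m<K$.} With the key lemma, $r_m\ge c_\Pi c\,\eps^{2K-2}-\mathsf t/2>\mathsf t$ once $2\mathsf t\le c_{\rm sel}\eps^{2K-2}$. So no $m<K$ qualifies.
\item \emph{The $K$-source fit is not degenerate.} On $\mathcal E$ the selected $m$ is $K$. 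The $K$-fit is within $2\mathsf t$ of the truth in projected signal. If it had a coincident pair of nodes or a vanishing coefficient, it would be a combination with fewer than $K$ effective sources. The same key lemma, applied to a configuration with $K-1$ effective nodes (a repeated node only adds a derivative functional, still covered by the divided-difference argument), excludes this. Hence $\widehat K=K$ on $\mathcal E$, and $\mathbb P_x(\widehat K\ne K)\le\alpha$.
\end{enumerate}

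\emph{Risk bound.} Split the loss as
\[
 \ell^{\rm cnt}_{K,j}\le|\widehat\tau_j^{(K)}-\tau_j|^2+D_W^2\mathbf 1\{\widehat K\ne K\}.
\]
On $\{\widehat K=K\}$ the reported nodes are those of $\widehat x^{(K)}$, which is exactly the estimator of Proposition~\ref{thm:known-count-adaptation}(a). Its expected loss is bounded by~\eqref{knownK:all-noise-upper}, and the count-error term adds at most $D_W^2\alpha$. This gives~\eqref{scope:count-upper}.

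\emph{Lower bound and matching.} For the lower bound, $\ell^{\rm cnt}\ge|\widehat\tau_j-\tau_j|^2$ whenever $\widehat K=K$, and otherwise $\ell^{\rm cnt}=D_W^2\ge|\widehat\tau_j-\tau_j|^2$ for any $W$-valued $\widehat\tau_j$. Therefore $R^{\rm cnt}_{\pi,j}\ge R_{\pi,j}$, and Proposition~\ref{thm:known-count-adaptation}(b) gives $R_{\pi,j}\ge c\Psi_K(\eps,\zeta_n)$.

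For matching, take $\zeta_n\le\zeta_{*,K}(\eps)$ and $\alpha\le\bar\zeta^2/4\le C_\sigma^2\zeta_n^2/4$. Then $D_W^2\alpha\lesssim\zeta_n^2\lesssim\zeta_n^2\eps^{-(3K-3)}$, and also $\zeta_n^2\le\eps^2\zeta_n/\zeta_{*,K}(\eps)$ when $\zeta_n\le\zeta_{*,K}(\eps)$. So the $D_W^2\alpha$ term is absorbed into $\Psi_K$, which gives $R^{\rm cnt}_{\pi,j}\asymp\Psi_K(\eps,\zeta_n)$.
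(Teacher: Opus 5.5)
Your architecture matches the paper's: a noise event on which the true $K$-source fit qualifies, exclusion of every $m<K$ and of any count reduction by merging, reduction of the loss to the $K$-class fit of Proposition~\ref{thm:known-count-adaptation}(a), a loss comparison for the lower bound, and absorption of $D_W^2\alpha\le C_\sigma^2D_W^2\zeta_n^2/4$. The gap is in the step you flag yourself. That step is the uniform bound $\|\Pi_W(\mu_x-\mu')\|_{\nu_n}\ge c\,\eps^{2K-2}$ for every signal $\mu'$ on fewer than $K$ distinct points of $W$, which is the paper's Lemma~\ref{scope:iso}.

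Rescaled compactness proves this only when all candidate nodes lie within $N_0\eps$ of the cluster, and its constant depends on $N_0$. Candidates at distances between $N_0\eps$ and a fixed fraction of $D_W$ are covered neither by it nor by any fixed-separation splitting. The claim that far nodes ``only split off their own contribution'' is unsupported, because every moment of $\Lambda$ receives contributions from all atoms. The Gram condition of Assumption~\ref{scope:AW} does not help here. It only converts $\|\Lambda\psi\|_\nu$ into a Newton-coefficient norm comparable to $\|m(\Lambda)\|_2$, which returns you to the same lower bound instead of decoupling near and far atoms. Unisolvence gives $\Lambda\ne0$, not a bound uniform in where the candidates sit.

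The missing idea is an annihilating polynomial chosen by pigeonhole. Let $\Lambda$ be the difference of the two atomic functionals, and let $\mu'$ have distinct nodes $u_1,\ldots,u_{K'}$ with $K'<K$. One of the $K$ disjoint intervals of radius $\rho\eps/3$ about the true nodes, say the one about $\tau_j$, contains none of them. Put $p(t)=\prod_{i\ne j}(t-\tau_i)\prod_{l\le K'}(t-u_l)$. It has degree at most $2K-2<2K_{\max}$ and kills every atom except $\tau_j$. Hence $|\Lambda p|=|c_jp(\tau_j)|\ge a_*(\rho\eps)^{K-1}(\rho\eps/3)^{K'}\ge c\,\eps^{2K-2}$ for every candidate position. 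Its roots lie in $W$, so its coefficients in powers of $t-\bar\tau$ are bounded, and $|\Lambda p|\le C\|m(\Lambda)\|_2\le Cc_\Pi^{-1}\|\Pi_W\Lambda\psi\|_{\nu_n}$. This settles all scales at once. It also settles your step~4, since merging coincident nodes gives an atomic signal on fewer distinct points, not a derivative functional.

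Three smaller repairs are needed.
\begin{enumerate}
\item The threshold~\eqref{scope:selection-threshold} is calibrated for the event $\{\|\Pi_W\eta\|_{\nu_n}\le\mathsf t\}$, via $\mathbb E e^{\|\Pi_W\eta\|_{\nu_n}^2/(2\zeta_n^2)}=2^{q_\Pi}$ and Markov's inequality. It is not calibrated for $\mathsf t/2$, and with $\mathsf t/2$ the probability bound $\alpha$ fails in general. On the event with $\mathsf t$, a qualifying fit satisfies $\|\Pi_W(\mu_{\widehat x^{(m)}}-\mu_x)\|_{\nu_n}\le r_m+\mathsf t\le2\mathsf t$. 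This is exactly what $2\mathsf t\le c_{\rm sel}\eps^{2K-2}$ with $c_{\rm sel}<c$ absorbs.
\item In the loss split, use the indicator of the complement of that event rather than $\mathbf 1\{\widehat K\ne K\}$. Off the event, a fit with $m>K$ may merge to $K$ nodes that are not those of $\widehat x^{(K)}$.
\item For the lower bound, define the known-$K$ estimator explicitly: report the projection of $\widehat\tau_j$ onto $W$ when $\widehat K=K$, and a fixed point of $W$ otherwise. This is needed because $\widehat\tau_j$ need not exist or lie in $W$.
\end{enumerate}
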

\begin{proof}See Appendix~\ref{prf:scope-unknownK}.\end{proof}

Appendix~\ref{prf:scope-unknownK} adds one annihilating-polynomial
comparison to the known-$K$ proof: every signal with fewer than $K$
atoms is at projected distance at least $c\eps^{2K-2}$ from the truth,
the exponent of the computational resolution limit for number
detection of deterministic line spectra in
\citet[Definitions~1--2 and Theorem~1]{liu2021computational}. For $\bar\zeta\le1$ the matching choice of
$\alpha$ requires $\bar\zeta\{\log(e/\bar\zeta)\}^{1/2}\lesssim\eps^{2K-2}$.
For $K=2$ this range retains the linear branch,
$\eps^3\lesssim\zeta_n\lesssim\eps^2/\{\log(e/\zeta_n)\}^{1/2}$, and for
$K=3$ it reaches $\zeta_{*,3}=\eps^4$ up to the same logarithmic margin.
For even $K\ge4$, $2K-2\ge3K/2$, so the range lies inside the low-noise
branch $\Psi_K=\zeta_n^2\eps^{-(3K-3)}$; for odd $K\ge5$ it is narrower
than $\zeta_n\le\zeta_{*,K}$. Thus this is not a full-noise unknown-$K$
result, and a constant-factor noise upper bound is needed.

\section{Discussion}
\label{sec:discussion}

This paper has determined the minimum mean-square error of locating
closely spaced sources from noisy superpositions of a known response,
jointly in separation and noise level. The error depends on the loss of
information near phase cancellation. It also depends on the prior mass
of those configurations. Under uniform phases this yields the parity
law. Weighted and singular priors change its exponents. The criterion
of Section~\ref{sec:criterion} isolates the two inputs of the proofs.
An inverse bound and a matching lower bound from prior-dominated paths
make the Bayes risk of the order of the capped Cram\'er--Rao bound. For
a two-sided power law of the information tail, averaging the inverse
information before capping keeps this order uniformly in the smallest
value of the information exactly when the lower-tail exponent exceeds
one. Section~\ref{factor:section} shows that
the criterion does not use the source geometry.

Moving forward, several extensions are worth pursuing. First, the
matching lower bound requires prior-mass control along the paths. The
results therefore do not assert a capped-information formula for
arbitrary priors. It would be of interest to characterize the priors
for which this formula holds. Second, the unknown-center extension
retains a single fine-scale true cluster and additional sampling
conditions. Unknown-count recovery also requires a stronger noise
threshold. Aliasing and nested clusters require different global
comparisons. Third, the least-squares estimators are global minimizers.
Algorithms with full complexity guarantees remain open. So do the
leading risk constants when separation and noise vanish jointly.
Finally, it would be of interest to verify the two conditions of the
criterion for other models whose information vanishes through
cancellation among the unknowns, such as finite mixtures near a
degenerate component.

\appendices
\numberwithin{equation}{section}
\numberwithin{theorem}{section}
\numberwithin{proposition}{section}
\numberwithin{lemma}{section}
\numberwithin{corollary}{section}
\numberwithin{remark}{section}
\numberwithin{example}{section}
\section{Proofs of Lemma~\ref{lem:weighted} and Proposition~\ref{thm:inverse}}
\label{sec:proofs}
\label{prf:weighted}
\label{prf:inverse}
Throughout the appendices, $\omega=\omega_x$ is the node polynomial of
Section~\ref{app:bridge}, and $\omega'$ is its derivative.
\begin{proof}[Proof of Lemma~\ref{lem:weighted}]
Fix $b\in\mathbb R^K$ and, for $\alpha\in\mathbb C^K$, let
$R_{\alpha,b}=\sum_k\alpha_kE_k+\eps\sum_kq_kb_kE_k'$, the tangent of
$\mu$ at $c=q$ in the direction $(\delta c,\delta s)=(\alpha,b)$, with scaled
tangent moments $w_l=\eps^l\delta m_l$, $l<2K$. By the differentiated
identity~\eqref{geom:prony-finite-tangent}, the first $K$ entries of
$T_\eps(p)w$ are $w_0,\ldots,w_{K-1}$, which range over $\mathbb C^K$ as
$\alpha$ varies, while its last $K$ entries equal
$\eps^{K+r}\sum_kq_k\omega'(s_k)s_k^rb_k$, $r<K$, independently of
$\alpha$. Lemma~\ref{lem:bridge} with each node doubled and
$T_\eps(p)=I+O(\eps)$ give
$\min_\alpha\|R_{\alpha,b}\|_\nu^2\asymp\eps^{2K}Q_{\eps,q}(b)$.
Since $J_j=\eps^{-2}\min_{\alpha,\,b_j=1}\|R_{\alpha,b}\|_\nu^2$
by~\eqref{eq:Jdef}, the claim follows; the constants are independent of
all phases.
\end{proof}

\emph{A common elimination identity.} For any full-column-rank real matrix $B$ and nonzero row $\ell$,
\begin{equation}
 \min_{\ell z=1}\|Bz\|^2
 =\frac{\displaystyle\sum_{|I|=d}\det(B_I)^2}
 {\displaystyle\sum_{|I|=d-1}\det\binom{B_I}{\ell}^{\,2}},
 \qquad d=\operatorname{rank}B.
 \label{eq:profile-minors}
\end{equation}
Indeed, the minimum is $\det(B^TB)/[\ell\operatorname{adj}(B^TB)\ell^T]$.
Cauchy--Binet gives the numerator; applying it after adjoining
$\sqrt t\ell$ and comparing coefficients of $t$ gives the denominator.
Thus a row weight $\eps^q$ contributes its squared weight to each minor.

\begin{proof}[Proof of Proposition~\ref{thm:inverse}]
Put $\delta=\|\Pi(\mu_x-\mu_{x'})\|_{\nu_n}$ and
$p(z)=\prod_k(z-s'_k)$. The finite-difference form
\eqref{geom:prony-difference} of~\eqref{geom:prony-finite-tangent} and
Lemma~\ref{lem:moment-comparison} give
$S_r:=\sum_kc_kp(s_k)s_k^r=\sum_{l=0}^Kp_l\{m_{l+r}(x)-m_{l+r}(x')\}$
and $\sum_{r<K}\eps^{2r}|S_r|^2\le C\eps^{-2K}\delta^2$.
For $p(s_j)\ne0$, the real direction
$b_k=p(s_k)d_j/[p(s_j)d_k]$, $d_k=\omega'(s_k)$, in the
amplitude-included form of Lemma~\ref{lem:weighted} gives
$J_j(x)\le C\eps^{2K-2}d_j^2p(s_j)^{-2}\sum_{r<K}\eps^{2r}|S_r|^2
\le C\delta^2/\{\eps^2p(s_j)^2\}$. Consequently, also when $p(s_j)=0$,
\begin{equation}
 |p(s_j)|\le\frac{C\delta}{\eps\sqrt{J_j(x)}}.
 \label{eq:prf-psj}
\end{equation}
If
$\delta\le c_*\eps\sqrt{J_{\min}(x)}$, then~\eqref{eq:prf-psj} gives
$|p(s_j)|\le Cc_*$ for every $j$; choose $c_*$ with $Cc_*<(\rho/3)^K$.
Since $|p(s_j)|=\prod_k|s_j-s'_k|$, each $s_j$ has a candidate root within
$\rho/3$. The intervals $(s_j-\rho/3,s_j+\rho/3)$ are disjoint, so each
contains exactly one of the $K$ candidate roots, counted with
multiplicity; the roots therefore match in order, and
$|s_j-s'_k|\ge2\rho/3$ for $k\ne j$. Finally,
$|s_j-s'_j|=|p(s_j)|/\prod_{k\ne j}|s_j-s'_k|
\le(3/(2\rho))^{K-1}|p(s_j)|\le C_*\delta/(\eps\sqrt{J_j(x)})$.
\end{proof}

\section{Paths and Phase Integrals for Uniform Phases}
\label{geom:source-lower}
\emph{Prior-dominated paths for uniform phases.}
Fix a node--modulus point in the interior of the ball $U$ of
Theorem~\ref{thm:randomphase}, on which $f\ge c_f$ almost everywhere.
All charts below are chosen about this point, with phases selected
freely by the uniform law, and may be restricted inside any prescribed
Prony neighborhood of it whose node--modulus projection lies in $U$;
their sizes and Jacobian bounds are independent of $\eps,n,\zeta_n$.

For $K=2r+1$ (Lemma~\ref{geom:fiber-paths}(i)), put phase zero on
$A\ni j$ of size $r+1$ and $\pi/2$ on its complement, and set
$M=(m_0,\ldots,m_{K+r-1})$. The recurrence map
$e\mapsto(\sum_ac_as_a^qe(s_a))_{q<r}$ on real polynomials has rank
$2r=K-1$ at the reference point, as its two weighted Vandermonde blocks
have rank $r$ each, so Corollary~\ref{geom:affine-moment-fiber} gives
coordinates $(M,u)$ with $p=p_*(M)+u e(M)$. At the reference point
$e(s_a)=0$ off $A$, while $e(s_j)\ne0$ because every $r$-column minor on
$A$ is invertible; so $\partial_us_j=-e(s_j)/\omega'(s_j)$ has fixed
sign and magnitude bounded below nearby. As $\partial_um_l=0$ for
$l<K+r$, Lemma~\ref{lem:bridge} applied to the tangent gives
\begin{equation}
 |\partial_u\tau_j|\ge c\eps,\qquad
 \|\partial_u\mu\|_{\nu_n}\le C\eps^{K+r}.
 \label{geom:source-direction}
\end{equation}
Traverse a fixed $u$-interval at constant speed and average over $M$;
the chart Jacobian and $f\ge c_f$ bound the occupation measure by
$C\pi$, which with~\eqref{geom:source-direction} proves
Lemma~\ref{geom:fiber-paths}(i).

For $K=2r$, choose the reference configuration of
Section~\ref{sec:power-weights} at the same node--modulus point, with
$j\in A$, and at $\beta=0$ shrink the weighted chart inside $U$. Its
prior with uniform node--modulus density is then dominated by the
baseline prior, as $f\ge c_f$ there, so the paths of Step~4 of
Appendix~\ref{sec:weighted-phase-details} have the required occupation
bound under $\pi$; this is the $\beta=0$ domination used in
Lemma~\ref{geom:fiber-paths}(ii) for uniform phases.

\begin{proof}[Proof of Lemma~\ref{geom:ellipse}(1)]
We use $d_a,h_a,\mathcal V_j$ of Section~\ref{sec:info-def}; all
comparisons are uniform over~\eqref{eq:class}, admissible designs and
nuisance phases. Let $\mathsf T_j$ have columns $(h_j,0)$ and $(0,h_j)$
in $\mathbb R^{2K}$, so
$\mathbf R=\mathsf T_j^T(I-P_{\mathcal V_j})\mathsf T_j$.
Collect the nuisance $h_a$ as $H$, and set $\alpha_j=\|h_j\|^2$,
$G=H^TH$, $G_0=H^T(I-h_jh_j^T/\alpha_j)H$ and
$\mathcal Q=\operatorname{diag}(e^{i\varphi_a})_{a\ne j}$. The real
nuisance Gram matrices are $N=\Ree(\mathcal Q^*G\mathcal Q)$ and, after
projection off the target line, $N_0=\Ree(\mathcal Q^*G_0\mathcal Q)$;
both are positive definite by distinct-node Vandermonde rank, and the
Schur complement in either order gives
$\det\mathbf R=\alpha_j^2\det N_0/\det N$.

Let $\check H$ collect the shortened columns
$\check h_a=(1,\eps s_a,\ldots,\eps^{K-2}s_a^{K-2})^T$, $a\ne j$, and
$\Delta=\operatorname{diag}(s_a-s_j)_{a\ne j}$. Interpolation of
$t^{K-1}$ on the nuisance nodes gives
$G=\check H^T(I+\mathbf e\mathbf e^T)\check H$, with
$\mathbf e=O(\eps)$ since the interpolation coefficients carry
positive powers of $\eps$. Also
$(h_a-h_j)/\{\eps(s_a-s_j)\}=(0,\Lambda\check h_a)$ with
$\Lambda_{uv}=(\eps s_j)^{u-v}$ ($u\ge v$), so $\Lambda=I+O(\eps)$, and
projection off $h_j$ has Gram $B=I+O(\eps^2)$ on vectors with zero
first coordinate. Thus
$G_0=\eps^2\Delta\check H^T\Lambda^TB\Lambda\check H\Delta$.
Both coefficient matrices are $I+O(\eps)$. Compare them before the
possibly ill-conditioned congruence by $\check H\Delta$:
\[
 (1-C\eps)\eps^2\Delta G\Delta\preceq G_0
 \preceq(1+C\eps)\eps^2\Delta G\Delta.
\]
Phase conjugation and real parts preserve this order. Since
$\mathcal Q\Delta=\Delta\mathcal Q$, the order gives
$\det N_0/\det N=\eps^{2K-2}d_j^2(1+O(\eps))$, and, as
$\alpha_j=1+O(\eps^2)$, the Schur identity gives
$\sqrt{\det\mathbf R}=\alpha_j\sqrt{\det N_0/\det N}=\eps^{K-1}|d_j|(1+O(\eps))$,
uniformly in the nuisance phases.
\end{proof}

\begin{proof}[Proof of Lemma~\ref{geom:ellipse}(2)]\label{supp:rp-even}
Fix $K=2r$, a target $j$, and a compact class $\mathcal S$ of ordered
node vectors satisfying $|s_a|\le T$ and $s_{a+1}-s_a\ge\rho>0$.
Identify coefficient-unit polynomials of degree below $r$ with
$S^{2r-1}\subset\mathbb C^r$. Almost every
$\varphi_{-j}\in(\mathbb R/\pi\mathbb Z)^{2r-1}$ determines exactly two
solutions $\pm\mathsf p_\varphi$ of
$\Ree(e^{i\varphi_a}\mathsf p(s_a))=0$, $a\ne j$.
To see this, set $r$ nuisance phases to zero and $r-1$ to $\pi/2$; then
the real polynomial vanishes and the imaginary one is proportional to
the product over the latter nodes, so a maximal minor is a nonzero
analytic function; the cofactor nullvector is analytic, each nuisance evaluation
is not identically zero (place that node in the first group), and
neither is the target evaluation. Rank loss and evaluation zeros
therefore form null sets.

The main step is the bound
$\mathbb E_{\varphi_{-j}}|\mathsf p_\varphi(s_j)|^{-1}\le C$,
uniformly over $\mathcal S$. It is immediate for $r=1$. Otherwise,
off the nuisance evaluation-zero sets define
$\Phi_a(\mathsf p)=\pi/2-\arg\mathsf p(s_a)\pmod\pi$, $a\ne j$.
Its generic fibers have two points, and differentiating the arguments
and Hadamard's inequality give
$|\operatorname{Jac}\Phi(\mathsf p)|\le C\prod_{a\ne j}|\mathsf p(s_a)|^{-1}$.
The area formula yields
\[
 \mathbb E_{\varphi_{-j}}|\mathsf p_\varphi(s_j)|^{-1}
 =\frac{1}{2\pi^{2r-1}}
   \int_{S^{2r-1}}\frac{|\operatorname{Jac}\Phi(\mathsf p)|}{|\mathsf p(s_j)|}\,dS(\mathsf p)
 \le C\int_{S^{2r-1}}\prod_{a=1}^{2r}|\mathsf p(s_a)|^{-1}\,dS(\mathsf p).
\]
Apply the formula away from the evaluation-zero sets, then pass to
the full nonnegative integral by monotone convergence.

Uniform Vandermonde invertibility prevents $r$ evaluations from being
simultaneously small, so
$\prod_{a=1}^{2r}|\mathsf p(s_a)|^{-1}
\le C\sum_{|A|=r-1}\prod_{a\in A}|\mathsf p(s_a)|^{-1}$.
In coefficient space write $\mathsf G=\chi U\sim\CN(0,I_r)$, with
independent uniform direction $U$ and radius $\chi$.
For each $A$, the covariance of $(\mathsf G(s_a))_{a\in A}$ has
uniform positive lower and upper bounds, so its density is at most
$Ce^{-c|z|^2}$ and
$\mathbb E\chi^{-(r-1)}\,\mathbb E\prod_{a\in A}|U(s_a)|^{-1}
=\mathbb E\prod_{a\in A}|\mathsf G(s_a)|^{-1}
\le C\prod_{a\in A}\int_{\mathbb C}e^{-c|z|^2}|z|^{-1}\,d^2z<\infty$.
Since $\mathbb E\chi^{-(r-1)}=\Gamma((r+1)/2)/\Gamma(r)>0$,
summing proves the main step uniformly over $\mathcal S$.

Write $\mathsf p_\varphi(y)=\sum_{\iota<r}\beta_\iota y^\iota$ and let $v$ have entries
$v_\iota=\overline{\beta_\iota}\,\eps^{-\iota}$ for $\iota<r$ and $0$ otherwise. For the real
inner product $\langle v,u\rangle=\Ree\sum_\iota\overline{v_\iota}u_\iota$, the vector $v$ satisfies
$\langle v,e^{i\varphi_a}h_a\rangle=\Ree\bigl(e^{i\varphi_a}\mathsf p_\varphi(s_a)\bigr)=0$
for every nuisance column, and $\|v\|^2\le\eps^{-2r+2}$. Its maximal pairing with the target's two columns
is $|\mathsf p_\varphi(s_j)|$. Thus
$\lambda_{\max}(\mathbf R)\ge\eps^{2r-2}|\mathsf p_\varphi(s_j)|^2$ and
$\mathbb E\lambda_{\max}(\mathbf R)^{-1/2}\le C\eps^{-r+1}$.
\end{proof}

\section{Proof of Theorem~\ref{thm:prior-effect} for Power-Weighted Priors}
\label{sec:weighted-phase-details}
\label{sec:local-weighted-phase}
Lemma~\ref{weighted:moment-chart} and Step~4 below prove
Lemma~\ref{geom:fiber-paths}(ii) for every $\beta>-1$. Write $K=2r$,
retain the notation of Section~\ref{sec:power-weights}, and write
$(s^*,a^*,\varphi^*)$ for the reference configuration. All rectangles
are fixed independently of $\eps,n,\sigma$; comparisons are uniform on
them and over the designs of Assumption~\ref{ass:sampling}, with
constants depending on the rectangles, $K$, $\beta$, density and model
bounds.
\begin{lemma}[A chart preserving the weighted transverse coordinate]
\label{weighted:moment-chart}
Near the reference configuration there are smooth coordinates
$(\mathfrak b,z,u,v)$, with all neighborhoods independent of $\eps$, such that
$(\mathfrak b,z)$ parametrizes $M=(m_0,\ldots,m_{3r-2})$ and $u,v$ are affine
coordinates on its two-dimensional fiber. Translation by $h$ in $u$
fixes these moments and satisfies $\Delta m_{3r-1}=i\mathfrak bh$ exactly.
Moreover $\partial_u s_j\ne0$ for every $j\in A$ at the reference
point, and $\mathscr D=\mathsf k(x)\mathfrak b$ for a smooth nonvanishing factor $\mathsf k$.
\end{lemma}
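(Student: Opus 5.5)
We work in the Prony coordinates $\xi=(M_{<K},p)$ of Lemma~\ref{geom:prony-secant}, with $K=2r$. For a fixed first block $(m_0,\ldots,m_{K-1})$, the recurrence~\eqref{geom:prony-difference} expresses the higher moments as functions of $p$. Taking unscaled differences, for $q<r-1$ the constraint that $m_{K+q}$ stays fixed reads
\[
 \sum_{l<K}m_{q+l}\Delta p_l+\sum_{l<K}\Delta p_l\,\Delta m_{q+l}=0 .
\]
The moments $m_{q+l}$ with $q+l<K$ are fixed, so for $q<r-1$ the second sum involves only $\Delta m_{q+l}$ with $q+l\ge K$. These are the previously fixed higher moments, indices $K,\ldots,K+q-1<3r-1$, so the second sum vanishes inductively. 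The constraints fixing $M=(m_0,\ldots,m_{3r-2})$ are therefore \emph{affine} in $p$, given the lower moments.

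This is the content of Corollary~\ref{geom:affine-moment-fiber} with $K_M=3r-1$: the fibre has real dimension $3K-2K_M=2$, provided the realified $(r-1)\times K$ block $\mathcal A(M)$ has row rank $2r-2$ at the reference point. I will verify that rank at the reference configuration (phases $0$ on $A$, $\pi/2$ on $A^c$). The map $e\mapsto(\sum_ac_as_a^qe(s_a))_{q<r-1}$ on real polynomials of degree $<K$ splits into a real block over $A$ and an imaginary block over $A^c$. These have rank $r-1$ each, since $|A|=r+1$, $|A^c|=r-1$ and the nodes are distinct (weighted Vandermonde). Shrinking to rectangles independent of $\eps$ gives smooth coordinates $(M,u,v)$ with $p=p_*(M)+uN_1+vN_2$, where $N_1,N_2$ span $\ker\mathcal A(M)$.

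Next, on this fibre the next moment $m_{3r-1}=m_{K+r-1}$ is affine in $(u,v)$ at fixed $M$. Indeed, applying the same difference identity at $q=r-1$ gives $\Delta m_{3r-1}=-\sum_l m_{r-1+l}\Delta p_l$, and the quadratic terms again vanish because the relevant $\Delta m$ are fixed. So $\Delta m_{3r-1}=\Lambda_M(\Delta u,\Delta v)$ for a real-linear $\Lambda_M:\mathbb R^2\to\mathbb C$ depending smoothly on $M$. I then rotate the fibre coordinates so that $\Lambda_M$ sends the $u$-direction to $i\mathbb R$, i.e.\ $\Ree\Lambda_M(1,0)=0$. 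Concretely, I choose $u$ along $\ker\Ree\Lambda_M$, which is one-dimensional where $\Ree\Lambda_M\ne0$; I will check this at the reference point, where the real part comes from the $A$ block. I normalize $v$ by $\Ree\Lambda_M(0,1)=1$ and define $\mathfrak b(M)=\Imm\Lambda_M(1,0)$. Then $\Delta m_{3r-1}=i\mathfrak bh$ exactly under $u$-translation by $h$. The remaining coordinates $z$ complete $\mathfrak b$ to a chart of $M$; this needs $d\mathfrak b\ne0$ as a function of $M$, which I will obtain from the identification with $\mathscr D$ below.

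The main obstacle is the identity $\mathscr D=\mathsf k\mathfrak b$ with $\mathsf k\ne0$. My plan is to view $\mathfrak b=0$ as the condition that the fibre direction $u$ leaves all moments up to $m_{3r-1}$ fixed. This means the realified constraint map for $K_M=3r$ ($r$ block rows, $K$ real unknowns) has a kernel, which is equivalent to the vanishing of a $2r\times2r$ real determinant. By the tangent form of~\eqref{geom:prony-finite-tangent}, this map is $e\mapsto(\sum_ac_as_a^qe(s_a))_{q<r}$. Through Lagrange interpolation $e(s_a)=\omega'(s_a)\delta s_a$ and $c_a=|c_a|e^{i\varphi_a}$, it becomes, up to the nonvanishing factor $\prod_a|c_a|\omega'(s_a)$, the determinant of the matrix with rows $s_a^q\cos\varphi_a$ and $s_a^q\sin\varphi_a$, $q<r$, that is, $\mathscr D$ in~\eqref{eq:phase-prior-determinant}. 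Hence both $\mathfrak b$ and $\mathscr D$ are smooth functions that vanish on the same hypersurface. To get $\mathsf k$ smooth and nonvanishing I will use Cramer's rule: $\mathfrak b$ equals a ratio of the full determinant to the $(2r-1)$-minor defining the normalization of $v$, as in the elimination identity~\eqref{eq:profile-minors}. That minor is bounded away from zero on the rectangles. Finally, $\partial_us_j=-e_u(s_j)/\omega'(s_j)$, and $e_u(s_j)\ne0$ for $j\in A$ at the reference point. This holds because $e_u$ vanishes on $A^c$ (imaginary block) and on $r$ further points forced by the real block, and any $r$-subset of $A$ gives an invertible minor. Being nonzero at the reference point, it stays nonzero on small rectangles.
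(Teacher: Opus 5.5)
Your construction follows the paper's proof: the fiber of Corollary~\ref{geom:affine-moment-fiber} with $K_M=3r-1$ and the real/imaginary block rank count, exact linearity of $m_{3r-1}$ on the fiber, the basis with $\Ree(\chi_pe_1)=0$, $\Ree(\chi_pe_2)=1$, $\mathfrak b=\Imm(\chi_pe_1)$, and $\partial_us_j=-e_1(s_j)/\omega'(s_j)$. Your Cramer/Schur step is the same linear algebra as the paper's comparison of $\det DF$, $F=(m_0,\ldots,m_{3r-1})$, in two coordinate systems. In the chart $(M,u,v)$ the fiber block $\bigl(\begin{smallmatrix}0&1\\ \mathfrak b&\ast\end{smallmatrix}\bigr)$ gives $-C_1\mathfrak b$. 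In node--phase variables it gives $C\det H$ with $\det H=\pm(\prod_aa_ad_a)\mathscr D$. That comparison produces $\mathsf k$ without any choice of minors. The one genuine gap is $d\mathfrak b(M_*)\neq0$, which is what makes $(\mathfrak b,z)$ a chart of the base. You defer it to the identification with $\mathscr D$, but $\mathscr D=\mathsf k\mathfrak b$ only transfers the question to $d\mathscr D\neq0$ at the reference point, and you never check that. Proportionality alone allows $\mathfrak b$ to have a critical point at $M_*$, and your phrase ``vanish on the same hypersurface'' presupposes that the zero set is a hypersurface.

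The paper fills this with a phase derivative. At the reference point the columns of the matrix in~\eqref{eq:phase-prior-determinant} are $(s_a^m;0)$ for $a\in A$ and $(0;s_a^m)$ for $a\in A^{c}$. The rank is therefore $2r-1$ and $\mathscr D=0$. For $a\in A$, $\partial_{\varphi_a}$ turns column $a$ into $(0;s_a^m)$, so $\partial_{\varphi_a}\mathscr D=\pm\det V_{A\setminus\{a\}}\det V_{A^{c}\cup\{a\}}\neq0$, with Vandermonde matrices of degrees $0,\ldots,r-1$. Since $\mathsf k\neq0$, this gives $\mathfrak b(M_*)=0$ and $d\mathfrak b=\mathsf k^{-1}d\mathscr D\neq0$ there. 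Because $\mathfrak b$ depends only on $M$ and $M$ is part of a chart, $\mathfrak b$ can be completed to coordinates $(\mathfrak b,z)$.

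Three smaller slips:
\begin{enumerate}
\item Your displayed difference identity omits $\sum_lp_l\Delta m_{q+l}$, which comes from~\eqref{geom:prony-difference} with $p'=p+\Delta p$. This is harmless, because those $\Delta m$ vanish inductively.
\item $e_1$ does not vanish at ``$r$ further points''. At the reference point, $(e_1(s_a))_{a\in A}$ spans the kernel of the $r\times(r+1)$ weighted Vandermonde matrix formed by the $r-1$ real fiber constraints together with $\Ree(\chi_pe_1)=0$. All its entries are nonzero because every $r$-column minor is invertible, and this is what yields $\partial_us_j\neq0$ for $j\in A$.
\item In your Cramer step, the nonvanishing factor comes from the normalization of $e_1$, through a maximal minor of the $(2r-1)\times2r$ matrix formed by the fiber constraints and $\Ree\chi_p$. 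That minor is nonzero near the reference point because $\Ree(\chi_pN)\neq0$ there. The normalization of $v$ does not enter $\mathfrak b$ at all.
\end{enumerate}
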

\begin{proof}
\emph{The affine fixed-moment fiber.}
Use the moments $m_l$, the node polynomial $\omega$ and its
coefficients $p$ of Lemma~\ref{geom:prony-secant}, with $M$ as in the
statement, and $d_a=\prod_{\ell\ne a}(s_a-s_\ell)$. The reference phases are zero on
$A$ ($r+1$ nodes) and $\pi/2$ on $A^{c}$ ($r-1$ nodes). For a real
polynomial variation $\delta\omega$, the recurrence matrix satisfies
$(\mathcal A(M)\delta p)_q=\sum_ac_as_a^q\delta\omega(s_a)
=-\sum_ac_ad_as_a^q\delta s_a$, $q<r-1$.
At the reference point the real and imaginary Vandermonde blocks
act on disjoint node sets and each have rank $r-1$, giving real rank $2r-2$.
Corollary~\ref{geom:affine-moment-fiber} gives full coordinates $(M,t)$
with $p=p_*(M)+N(M)t$, $t\in\mathbb R^2$, and
$\operatorname{range}N(M)=\ker\mathcal A(M)$, including $r=1$.

\emph{A transverse coordinate on the base.}
The next moment
$\chi(M,p):=m_{3r-1}=-\sum_{l<K}m_{r-1+l}p_l$ is exactly linear in $p$
on a fixed-$M$ fiber. At the reference point, the imaginary constraints force the node
variations on $A^{c}$ to vanish. The fiber derivative of $\chi$ is then
real and of rank one. Its kernel on $A$ is the nullspace of an
$r\times(r+1)$ weighted Vandermonde matrix; every component is nonzero
because every $r$-column minor is invertible.

Choose a smooth fiber basis $e_1(M),e_2(M)$ with
$\Ree(\chi_p e_1)=0$, $\Ree(\chi_p e_2)=1$ and $\chi_p e_1=i\mathfrak b(M)$;
for example, if $\eta=\Ree(\chi_p N)=(\eta_1,\eta_2)\ne0$, take
$e_1=N(-\eta_2,\eta_1)^T$ and $e_2=N\eta^T/\|\eta\|^2$.
Writing $p=p_*(M)+u e_1(M)+v e_2(M)$ gives
$\partial_u s_j\ne0$ at the reference point and $\mathfrak b(M_*)=0$.

For $F(x)=(m_0,\ldots,m_{3r-1})$, differentiate
\eqref{geom:prony-difference} and eliminate the first $K$ moments.
Since $\delta\omega(s_a)=-d_a\delta s_a$, this gives
$\det DF=C(x)\det H$, with $C$ smooth and nonzero and with column $a$ of
$H$ equal to $a_ad_a$ times the stacked vectors
$(s_a^l\cos\varphi_a)_{l<r}$ and $(s_a^l\sin\varphi_a)_{l<r}$.
Here $H$ has rank $2r-1$ at the reference point, and, for $a\in A$,
$\partial_{\varphi_a}\det H|_*=\pm(\prod_{k=1}^{K}a_kd_k)
\det V_{A\setminus\{a\}}\det V_{A^{c}\cup\{a\}}\ne0$,
where the two Vandermonde matrices have degrees $0,\ldots,r-1$.
In $(M,u,v)$, the Jacobian of $F$ is block triangular with fiber block
$\bigl(\begin{smallmatrix}0&1\\ \mathfrak b(M)&\Imm(\chi_p e_2)\end{smallmatrix}\bigr)$.
Consequently $\det DF=-C_1(x)\mathfrak b(M(x))$, with $C_1\ne0$, and the
nonzero phase derivative implies $\dd\mathfrak b(M_*)\ne0$. Write
$M=(\mathfrak b,z)$ locally. Translation in $u$ fixes this base and changes
$\chi$ by $i\mathfrak bh$ exactly. Finally,
$\det H=\pm(\prod_a a_ad_a)\mathscr D$ gives
$\mathscr D=\mathsf k(x)\mathfrak b$, with $\mathsf k$ smooth and nonzero.
\end{proof}

\begin{proof}[Proof of Theorem~\ref{thm:prior-effect}(a)]
\emph{1. The prior $\pi_\beta$ is well defined uniformly in the nodes.}
Choose $a_0\in A$. As $\partial_{\varphi_{a_0}}\mathscr D(s^*,\varphi^*)\ne0$
by the preceding proof, the implicit function theorem gives fixed rectangles
$\mathcal S_\beta$, $\Phi_{-a_0}$, and an interval $\Phi_{a_0}=[-\delta,\delta]$
such that the zero is the graph
$\varphi_{a_0}=\xi(s,\varphi_{-a_0})\in[-\delta/2,\delta/2]$,
and $|\partial_{\varphi_{a_0}}\mathscr D|$ is bounded above and
below by positive constants throughout the rectangle.
Shrink the other phase intervals first, then $\mathcal S_\beta$ and $\mathcal G$,
to keep the closure inside the moment chart. Hence
$|\mathscr D|\asymp|\varphi_{a_0}-\xi|$ and
$0<c_\beta\le Z_\beta(s)\le C_\beta<\infty$ ($\beta>-1$),
uniformly in $s$, with an interior crossing margin in every node section.

\emph{2. A two-sided information estimate on the entire rectangle.}
With $c_a=a_ae^{i\varphi_a}$, let $\mathsf M_m(x)$, $m\ge0$, be the real
$2\times K$ matrix with column $a$ equal to the real and imaginary parts
of $c_ad_as_a^m$, put
$G_\eps(x)=\sum_{m<K}\eps^{2m}\mathsf M_m(x)^T\mathsf M_m(x)$, and stack
vertically $\mathsf L=(\mathsf M_0;\ldots;\mathsf M_{r-2})$ (no rows if $r=1$),
$H=(\mathsf L;\mathsf M_{r-1})$ and $H_+=(H;\mathsf M_r)$; up to the
order of its rows, this $H$ is the matrix of the preceding proof. Lemma~\ref{lem:weighted}, with amplitude
factors absorbed into real node-variation coordinates, says
\begin{equation}
 J_a(x)\asymp\eps^{2K-2}
      \min_{\theta_a=1}\theta^TG_\eps(x)\theta.
 \label{eq:local-prior-weighted-information}
\end{equation}
Indeed, replacing the $k$th coordinate of the variable of
$Q_{\eps,q}$ in Lemma~\ref{lem:weighted} by $(a_k/a_a)\theta_k$
converts $a_a^2Q_{\eps,q}$ to the form displayed here.

At the reference point $\mathsf L,H,H_+$ have ranks $K-2,K-1,K$.
The nullvector of $H$ is nonzero in every coordinate of $A$.
Apply \eqref{eq:profile-minors} to
$B_\eps=(\mathsf M_0^T,\eps\mathsf M_1^T,\ldots,\eps^{K-1}\mathsf M_{K-1}^T)^T$.
The least-weight full minor is $\eps^{r(r-1)}\det H$, with
$\det H=\pm(\prod_a a_ad_a)\mathscr D$.
A minor using all rows of $\mathsf L$, one of $\mathsf M_{r-1}$ and one of $\mathsf M_r$
is nonzero at the reference point and has one additional power of
$\eps$; all other minors have at least that weight. Hence
$\det G_\eps\asymp\eps^{2r(r-1)}(\mathscr D^2+\eps^2)$. The denominator in \eqref{eq:profile-minors} for $\ell=e_j^T$
is at most $C\eps^{2(r-1)^2}$, the least weight of $K-1$ rows.
For $j\in A$, deleting column $j$ preserves the ranks of both
$\mathsf L$ and $H$: otherwise a row combination supported at $j$ would
annihilate the nullvector of $H$. Therefore a least-weight denominator
minor is nonzero and its coefficient stays bounded below nearby.
Using \eqref{eq:local-prior-weighted-information} gives, uniformly
on the rectangle,
\begin{equation}
 J_{\min}(x)\asymp J_j(x)
 \asymp\eps^{6r-4}(\mathscr D(s,\varphi)^2+\eps^2),
 \qquad j\in A.
 \label{eq:local-even-information-normal-form}
\end{equation}

\emph{3. One estimator supplies the full upper bound.}
Put $\lambda=\zeta_n/\eps^{3r-1}$, so that
$\min\{\eps^2,\zeta_n^2/(\eps^{6r-4}(t^2+\eps^2))\}
=\eps^2\min\{1,\lambda^2/(t^2+\eps^2)\}$.
Lemma~\ref{cor:conditional-mse}, with minimization over the full
class $\mathcal X_K$, \eqref{eq:local-even-information-normal-form},
the change of variable $\varphi_{a_0}\mapsto\mathscr D$ from Step~1 and
$Z_\beta(s)\asymp1$ give, uniformly in $(s,a)$,
$\mathbb E|\widehat\tau_j-\tau_j|^2\lesssim
\eps^2\int_0^{\bar t}\min\{1,\lambda^2/(t^2+\eps^2)\}t^\beta\,dt$.
The remaining average uses only that $f$ is a probability density.

\emph{4. Paths under the same prior $\pi_\beta$
(Lemma~\ref{geom:fiber-paths}(ii)).}
In the chart of Lemma~\ref{weighted:moment-chart},
$\mathscr D=\mathsf k(x)\mathfrak b$. The polar and chart Jacobians are bounded
above and below, so on a fixed inner product box the prior density is
\begin{equation}
 q_\beta(\mathfrak b,z,u,v)\asymp |\mathfrak b|^\beta.
 \label{eq:weighted-prior-chart}
\end{equation}
Fix $(\mathfrak b,z,v)$ and traverse the box's $u$-interval, of fixed
length, at constant speed. Shrinking the box if necessary, for every
$j\in A$, $\partial_u\tau_j$ has fixed sign and magnitude at least
$c\eps$ throughout the box. The first $3r-1$ moments stay
fixed and $\partial_um_{3r-1}=i\mathfrak b$; the remaining moment derivatives
through order $2K-1$ are uniformly bounded. Lemma~\ref{lem:bridge}
applied to the tangent gives
$\|\partial_u\mu\|_{\nu_n}\le C\eps^{3r-1}\sqrt{\mathfrak b^2+\eps^2}$.
Give the path labels $(\mathfrak b,z,v)$ the measure
$\Upsilon(d\mathfrak b\,dz\,dv)=|\mathfrak b|^\beta\,d\mathfrak b\,dz\,dv$.
Fubini and \eqref{eq:weighted-prior-chart} bound their total occupation
by $C\pi_\beta$, also for $-1<\beta<0$; no derivative of this weight
is taken. Applying \eqref{spine:two-integrals}, with fixed interval lengths
absorbed into constants, proves
$R_{\pi_\beta,j}\ge c\eps^2\int_0^{\mathfrak b_0}
\min\{1,\lambda^2/(\mathfrak b^2+\eps^2)\}\mathfrak b^\beta\,d\mathfrak b$.

Both bounds are thus $\eps^2 I_{\beta+1}(\eps,\lambda)$ up to the fixed
endpoints $\bar t$ and $\mathfrak b_0$, which change only constants
(for $\mathfrak b_0$ by the substitution $\mathfrak b=\mathfrak b_0t$ of
Section~\ref{sec:randomphase-proof}, valid for every $\beta>-1$). Lemma~\ref{risk:power-integral} proves
\eqref{eq:local-weighted-phase-risk}--\eqref{eq:weighted-phase-regimes}.
\end{proof}

\begin{corollary}[Local average inverse information]
\label{cor:local-weighted-information}
Under the prior of Theorem~\ref{thm:prior-effect}(a), for every $j\in A$,
\[
 \mathbb E_{\pi_\beta}[J_j^{-1}]
 \asymp
 \begin{cases}
  \eps^{\beta-(6r-3)},&-1<\beta<1,\\
  \eps^{-(6r-4)}[1+\log(1/\eps)],&\beta=1,\\
  \eps^{-(6r-4)},&\beta>1.
 \end{cases}
\]
The same orders hold when nodes and amplitude moduli are fixed in
their prescribed rectangles and only the conditional phases are
averaged.
\end{corollary}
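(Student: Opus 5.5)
The plan is to compute $\mathbb E_{\pi_\beta}[J_j^{-1}]$ directly from the two-sided estimate \eqref{eq:local-even-information-normal-form} of Step~2. That estimate is uniform on the rectangle and gives $J_j(x)\asymp\eps^{6r-4}(\mathscr D(s,\varphi)^2+\eps^2)$ for $j\in A$. Hence
\[
 \mathbb E_{\pi_\beta}[J_j^{-1}]\asymp\eps^{-(6r-4)}\,\mathbb E_{\pi_\beta}\bigl[(\mathscr D^2+\eps^2)^{-1}\bigr],
\]
with constants independent of $\eps$. The whole claim therefore reduces to evaluating a scalar phase integral.

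First, condition on $(s,a)$ in $\mathcal S_\beta\times\mathcal G$ and on the phases $\varphi_{-a_0}$. By Step~1, $\mathscr D=0$ is the graph $\varphi_{a_0}=\xi(s,\varphi_{-a_0})$, which lies in $[-\delta/2,\delta/2]$. On the interval $\Phi_{a_0}=[-\delta,\delta]$ we also have $|\mathscr D|\asymp|\varphi_{a_0}-\xi|$, and $Z_\beta(s)\asymp1$. Changing variables $\varphi_{a_0}\mapsto t=\varphi_{a_0}-\xi$ then makes the conditional density in $t$ comparable to $|t|^\beta$ on an interval that contains $[-\delta/2,\delta/2]$ and is contained in $[-2\delta,2\delta]$. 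The interior crossing margin supplies the lower endpoint bound. Consequently the conditional expectation is comparable to
\[
 \int_0^{\delta'}\frac{t^\beta}{t^2+\eps^2}\,dt
\]
for a fixed $\delta'$, up to constants. This holds uniformly in $(s,a,\varphi_{-a_0})$, which also gives the second sentence of the corollary, the statement about conditional phase averages.

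Second, evaluate this integral. One can read it off from Lemma~\ref{risk:power-integral} with $q=\beta+1$ and $\eta=\eps$: for $\lambda\le\eps$, $\min\{1,\lambda^2/(t^2+\eps^2)\}=\lambda^2/(t^2+\eps^2)$, so the integral is $\lambda^{-2}I_{\beta+1}(\eps,\lambda)$ with $h=\eps$. Alternatively, split at $t=\eps$ directly. The piece $\int_0^\eps t^\beta\eps^{-2}\,dt\asymp\eps^{\beta-1}$ comes from $\beta>-1$. The piece $\int_\eps^{\delta'}t^{\beta-2}\,dt$ is $\asymp\eps^{\beta-1}$, $\asymp1+\log(1/\eps)$, or $\asymp1$ according as $\beta<1$, $\beta=1$, or $\beta>1$. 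Rescaling by $\delta'$ only changes constants, by the same substitution used in Section~\ref{sec:randomphase-proof}. Multiplying by $\eps^{-(6r-4)}$ gives $\eps^{\beta-(6r-3)}$, $\eps^{-(6r-4)}[1+\log(1/\eps)]$ and $\eps^{-(6r-4)}$ in the three cases.

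Finally, average over $(s,a)$ and $\varphi_{-a_0}$. Because the conditional bounds are uniform and $f$ is a probability density, no lower bound on $f$ is needed here beyond normalization. The only delicate point is the uniformity of the change of variables near the rectangle edges, that is, checking that the $t$-range always contains a fixed neighbourhood of $0$ and stays bounded. Step~1's choice of $\xi\in[-\delta/2,\delta/2]$ and the two-sided bound on $|\partial_{\varphi_{a_0}}\mathscr D|$ secure both.
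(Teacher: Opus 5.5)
Your proposal is correct and follows the paper's proof. Both arguments use the uniform two-sided estimate \eqref{eq:local-even-information-normal-form}, the change of variable $\varphi_{a_0}\mapsto\mathscr D$ from Step~1 with $Z_\beta\asymp1$ to reach $\eps^{-(6r-4)}\int_0^{\bar t}t^\beta(t^2+\eps^2)^{-1}\,dt$ uniformly in $(s,a)$, and then an average over $(s,a)$; the only difference is that you evaluate the scalar integral by splitting at $t=\eps$ (or via Lemma~\ref{risk:power-integral}), while the paper substitutes $t=\eps u$.
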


\begin{proof}
By~\eqref{eq:local-even-information-normal-form}, the change of variable
$\varphi_{a_0}\mapsto\mathscr D$ of Step~1 and $Z_\beta(s)\asymp1$, uniformly
in $(s,a)$,
$\mathbb E[J_j^{-1}\mid s,a]\asymp\eps^{-(6r-4)}\int_0^{\bar t}t^\beta(t^2+\eps^2)^{-1}dt$.
The substitution $t=\eps u$ shows that the integral is of order
$\eps^{\beta-1}$ for $\beta<1$ and $1+\log(1/\eps)$ for $\beta=1$; for
$\beta>1$ the integrand is at most $t^{\beta-2}$, which is integrable,
while the part over $[\bar t/2,\bar t]$ is bounded below for $\eps\le1$,
so the integral is of order $1$. This is the fixed-node version;
integrating in $(s,a)$ gives the unconditional orders.
\end{proof}

\begin{corollary}[Adding a uniform-phase component]
\label{cor:full-support-weighted-phase}
Let $\pi_0$ be a baseline prior of Theorem~\ref{thm:randomphase}
on the same search class, and let $\pi_\beta$ be the preceding local
prior. For fixed $0<p_{\mathrm{mix}}<1$ set
$\pi_{\mathrm{mix}}=(1-p_{\mathrm{mix}})\pi_0+p_{\mathrm{mix}}\pi_\beta$.
Write $\Psi_0$ for the even-source risk order in~\eqref{eq:random-phase-risk} and $\Psi_\beta$
for~\eqref{eq:local-weighted-phase-risk}. Equivalently, $\Psi_0$ is
the integral in~\eqref{eq:local-weighted-phase-risk} with $\beta=0$
and the same $\lambda=\zeta_n/\eps^{3r-1}$. Then, for $j\in A$,
\[
 R_{\pi_{\mathrm{mix}},j}\asymp
       (1-p_{\mathrm{mix}})\Psi_0+p_{\mathrm{mix}}\Psi_\beta
       \asymp\max\{\Psi_0,\Psi_\beta\}.
\]
If the node--modulus density of $\pi_0$ is positive throughout the
admissible interior, this mixture has full phase support at every
interior node--modulus configuration in either component's support.
For $-1<\beta<0$ its risk has the heavier local order $\Psi_\beta$;
for $\beta\ge0$ it retains the baseline order $\Psi_0$.
\end{corollary}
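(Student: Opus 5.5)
The plan is to prove the two comparisons separately: an upper bound from the common estimator, and a lower bound from the prior-dominated paths of each component. Throughout, $j\in A$.

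\emph{Upper bound.} By Lemma~\ref{cor:conditional-mse}, the projected least-squares estimator~\eqref{eq:estimator} has conditional risk at most $C\min\{\eps^2,\zeta_n^2/J_{\min}(x)\}$ at every $x\in\mathcal X_K$. Its Bayes risk under $\pi_{\mathrm{mix}}$ is therefore the corresponding convex combination of the two component integrals. The $\pi_0$ integral is at most $C\Psi_0$ by the proof of Theorem~\ref{thm:randomphase}, which holds for every node--modulus law under uniform phases. The $\pi_\beta$ integral is at most $C\Psi_\beta$ by Step~3 of the proof of Theorem~\ref{thm:prior-effect}(a). The estimator minimizes over the full class, so the same estimator serves both components. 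Hence
\[
R_{\pi_{\mathrm{mix}},j}\le C\{(1-p_{\mathrm{mix}})\Psi_0+p_{\mathrm{mix}}\Psi_\beta\}.
\]

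\emph{Lower bound.} I use Lemma~\ref{spine:risk}(b) with each component's path family separately. The occupation measure of the paths in Lemma~\ref{geom:fiber-paths}(ii) with $\beta=0$ is at most $B\pi_0\le B(1-p_{\mathrm{mix}})^{-1}\pi_{\mathrm{mix}}$. The occupation measure of the $\pi_\beta$ paths is at most $B\pi_\beta\le Bp_{\mathrm{mix}}^{-1}\pi_{\mathrm{mix}}$. Inequality~\eqref{spine:two-integrals} then gives $R_{\pi_{\mathrm{mix}},j}\ge c(1-p_{\mathrm{mix}})\Psi_0$ and $R_{\pi_{\mathrm{mix}},j}\ge cp_{\mathrm{mix}}\Psi_\beta$. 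Averaging these two bounds yields the convex combination. Since $p_{\mathrm{mix}}$ is fixed, the combination is comparable to $\max\{\Psi_0,\Psi_\beta\}$.

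There is one subtlety, which I expect to be the main obstacle: the $\beta=0$ paths must charge the set where $\pi_0$ is bounded below. Following Appendix~\ref{geom:source-lower}, I choose the reference configuration of the even-$K$ uniform paths at a node--modulus point where the density of $\pi_0$ is bounded below, with the same index set $A$. If the node--modulus density of $\pi_0$ is positive throughout the interior, such a point exists near the support of $\pi_\beta$, though it need not be in it. The constant then depends on $\pi_0$ through the ball $U$ and $c_f$ of Theorem~\ref{thm:randomphase}, as that theorem allows.

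\emph{Remaining claims.} Full phase support follows because $\pi_0$ has uniform phases and the density $f$ is positive at interior points. For the final comparison, both orders are $\eps^2 I_q(\eps,\lambda)$ with the same $\lambda$: $\Psi_0$ has $q=1$ and $\Psi_\beta$ has $q=\beta+1$. The integrand of $I_q$ is multiplied by $t^{q-1}$ on $[0,1]$, so $I_q$ is nonincreasing in $q$. Hence $\Psi_\beta\gtrsim\Psi_0$ for $-1<\beta<0$ and $\Psi_\beta\lesssim\Psi_0$ for $\beta\ge0$. This identifies the maximum as stated.
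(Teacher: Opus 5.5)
Your proof is correct. The upper bound and the final comparison are exactly the paper's: monotonicity of $I_q$ in $q$ is the paper's ``compare $t^\beta$ with $1$''. Your lower bound, however, takes a different route.

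The paper argues as follows. For every estimator, the risk under $\pi_{\mathrm{mix}}$ equals $(1-p_{\mathrm{mix}})$ times its $\pi_0$-risk plus $p_{\mathrm{mix}}$ times its $\pi_\beta$-risk. Hence
$R_{\pi_{\mathrm{mix}},j}\ge(1-p_{\mathrm{mix}})R_{\pi_0,j}+p_{\mathrm{mix}}R_{\pi_\beta,j}$, which is concavity of the Bayes risk in the prior. The paper then quotes the lower bounds of Theorems~\ref{thm:randomphase} and~\ref{thm:prior-effect}(a) as black boxes.

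You instead rerun Lemma~\ref{spine:risk}(b) for the mixture itself. Each component's path family is dominated by $\pi_{\mathrm{mix}}$, at the cost of a factor $(1-p_{\mathrm{mix}})^{-1}$ or $p_{\mathrm{mix}}^{-1}$ in $B$. This gives the same bound up to a factor $2$.

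What each route buys:
\begin{itemize}
\item Your route makes the mixture a direct instance of the path criterion. The union of the two families is a prior-dominated family for $\pi_{\mathrm{mix}}$, which is what the ``Mixtures'' row of Table~\ref{geom:verification-table} records.
\item The paper's route needs no inspection of paths at all.
\end{itemize}

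The concavity view also shows that the ``main obstacle'' you flag is not one. Theorem~\ref{thm:randomphase} already gives $R_{\pi_0,j}\ge c\Psi_0$ for every target $j$ and every baseline prior, since a baseline prior by definition has $f\ge c_f$ on some ball $U$. Its reference set can be chosen freely to contain $j$. So nothing needs to be placed near the support of $\pi_\beta$. Positivity of the node--modulus density throughout the interior is used only for the full-support claim, not for the lower bound.
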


\begin{proof}
Every estimator's mixture risk is the weighted sum of its component
risks, while the same global projected least-squares estimator attains
both component upper bounds. Hence the Bayes risk is comparable to
$(1-p_{\mathrm{mix}})\Psi_0+p_{\mathrm{mix}}\Psi_\beta$, and thus to
$\max\{\Psi_0,\Psi_\beta\}$ because $p_{\mathrm{mix}}\in(0,1)$ is fixed.
The claimed dominance follows by comparing $t^\beta$ with $1$ in the
common scalar integral, after fixed endpoint rescaling.
\end{proof}

\section{Proof of Theorem~\ref{thm:prior-effect} for the Singular Three-Source Prior}
\label{prf:k3-singular}
This appendix proves part~(b) of Theorem~\ref{thm:prior-effect}.

Write $\zeta=\zeta_n$, $\lambda=\zeta/\eps^4$ and
\[
 \omega(z)=\prod_{j=1}^3(z-s_j)
          =z^3+p_2z^2+p_1z+p_0.
\]
All rectangles are fixed independently of $\eps,\zeta,n$ and retain
separated nodes, positive real weights and Assumption~\ref{ass:sampling}.
On the support near alignment of \eqref{eq:k3-singular-prior}, take $v_0<1$
and $0<\eps\le\eps_0$, where $v_0^2+\eps_0^2\le1$; reduce $\eps_0$
finitely many times below. Constants may depend on these rectangles
and the fixed $\gamma$.

\emph{An information form uniform over the prior support.} Put $\vartheta_G=v^2+\eps^2\le1$ and $\mathfrak S=u^2+\vartheta_G^2$.
The amplitude-included form of Lemma~\ref{lem:weighted}, with
$\xi_q=\sum_l\mathsf w_l\omega'(s_l)b_ls_l^q$, gives
\[
 J_j\asymp\eps^4\min_{\ell_j\xi=1}\|B_\eps\xi\|^2,\qquad
 B_\eps=\begin{pmatrix}e_0^T\\\eps e_1^T\\\eps^2e_2^T\\
 b_0\\\eps b_1\\\eps^2b_2\end{pmatrix},\quad
 \ell_j=\frac{(s_ks_l,-s_k-s_l,1)}{\mathsf w_j\omega'(s_j)^2},
\]
where $\{j,k,l\}=\{1,2,3\}$,
$b_0=(0,v,u)$, $b_1=(-p_0u,-p_1u,v-p_2u)$ and
$b_2\xi=v\xi_3+u\xi_4$, with the recurrence
$\xi_{q+3}=-p_2\xi_{q+2}-p_1\xi_{q+1}-p_0\xi_q$.
This only subtracts $t_0$ times each real row from its imaginary row;
$b_2$ is linear in $(u,v)$ with uniformly bounded coefficients.

In the numerator of \eqref{eq:profile-minors}, the only total-weight-one
minors are, up to sign,
$\eps u$ and $\eps P$, where $P=v^2-p_2uv+p_1u^2$.
The three real rows give $\eps^3$.
Every weight-two coefficient is $O(|u|+|v|)$, and higher-weight
coefficients are bounded. Since $u^2+P^2\asymp u^2+v^4$ and
$\eps^2v^2\le(v^4+\eps^4)/2$, the numerator is comparable to
$\eps^2\mathfrak S$.
The denominator's only weight-zero minor is
$\ell_{j,2}v-\ell_{j,1}u$; another is $\eps\ell_{j,2}$,
and all others have positive weight.
Because $|\ell_{j,2}|\asymp1$ and
$\ell_{j,1}/\ell_{j,2}=p_2+s_j$, this proves
\begin{equation}
 \begin{split}
 J_j&\asymp\eps^6
       \frac{\mathfrak S}{\eps^2+[v-(p_2+s_j)u]^2},\qquad j\le3,\\
 J_{\min}^{-1}&\asymp\eps^{-6}
                     \left(1+\frac{\vartheta_G}{\mathfrak S}\right).
 \end{split}
 \label{k3p:information}
\end{equation}
For the second line, the separated $p_2+s_j$ give
$\max_j|v-(p_2+s_j)u|^2\asymp v^2+u^2$, and
$(\vartheta_G+u^2)/\mathfrak S\asymp1+\vartheta_G/\mathfrak S$ because $\vartheta_G\le1$.
Here $J_j$ includes the squared amplitude modulus.

\emph{Risk upper bound.} Equation~\eqref{eq:rp-conditional-mse} and
\eqref{k3p:information}, integrated under the stated prior, give
\begin{equation}
 \frac{\mathbb E|\widehat\tau_j-\tau_j|^2}{\eps^2}
 \le C\left[
  \min(1,\lambda^2)+
  \int_0^{v_0}\!\int_0^{u_0}
   \min\left\{1,\frac{\lambda^2\vartheta_G}{u^2+\vartheta_G^2}\right\}
                       u^{-\gamma}\,du\,dv\right].
 \label{k3p:upper-integral}
\end{equation}
Put $\varrho_\eps^2=v^2+\eps^2+u^2/(v^2+\eps^2)$ and
$\bar\varrho_\eps=\min\{1,\varrho_\eps\}$. The baseline term in~\eqref{k3p:upper-integral} makes its right side
comparable to $\mathbb E\min\{1,\lambda^2/\bar\varrho_\eps^2\}$.
Choose fixed $s_0<\min(1,v_0)$ with $s_0^2/2<u_0$, then
$\eps_0<s_0/4$. Integrating $u^{-\gamma}$ over $\{\varrho_\eps\le s\}$ gives
\[
 \mathbb P(\bar\varrho_\eps\le s)\asymp
 \int_0^{\sqrt{s^2-\eps^2}}
 [(v^2+\eps^2)(s^2-v^2-\eps^2)]^{(1-\gamma)/2}\,dv
 \asymp s^{3-2\gamma},\qquad 2\eps\le s\le s_0.
\]
The last comparison follows by $v=sz$, uniformly for $\eps/s\le1/2$.
The upper bound holds for all $s\le s_0$, and $\bar\varrho_\eps\ge\eps$.
Equation~\eqref{risk:capped-tail} and Lemma~\ref{risk:power-integral},
with $q=3-2\gamma$, therefore give
\begin{equation}
 \begin{split}
 \mathbb E\min\{1,\lambda^2/\bar\varrho_\eps^2\}
 &=2\lambda^2\int_\lambda^\infty
           \mathbb P(\bar\varrho_\eps\le s)s^{-3}\,ds\\
 &\asymp\lambda^2\left[1+
       \int_{\eps\vee\lambda}^{s_0}s^{-2\gamma}\,ds\right]
 \asymp\Phi_\gamma(\eps,\lambda),
 \end{split}
 \label{k3p:tail-profile}
\end{equation}
where the second line is for $\eps\vee\lambda\le s_0/2$.
For larger $\lambda$, bounded target diameter gives the same upper
order, since $\Phi_\gamma$ is bounded below by a positive constant.
This argument changes neither the original support nor its prior.

\emph{Paths under the same singular prior.}\label{geom:k3-paths}
We prove Lemma~\ref{geom:fiber-paths}(iii) in the real Prony
coordinates $(M,p)$, $M=(m_0,m_1,m_2)$, of
Lemma~\ref{geom:prony-secant}. With
$\mathfrak r=(\mathfrak r_0,\ldots,\mathfrak r_5)$, the additional
Jacobians are
$|\det D_{(\mathsf w,s)}\mathfrak r|
=(\prod_j\mathsf w_j)\prod_{j<k}(s_j-s_k)^4>0$ and
$|\det D_{(s,\mathsf w,t_0,v,u)}(s,\Ree c,\Imm c)|
=(\prod_j\mathsf w_j)|\det V(s)|>0$, where $V(s)$ has rows
$(1,s_j,s_j^2)$ and the last diagonal Jacobian block is
$\operatorname{diag}(\mathsf w)V(s)$. Separation and positive weights
bound these Jacobians and their inverses, so the $(M,p)$ prior
density is comparable to $u(M,p)^{-\gamma}$.

Choose fixed $t_{\max}>0$ with $1.2t_{\max}<v_0$ and
$3t_{\max}^2<u_0$. Choose compact rectangles with nonempty interiors,
$\mathcal R^-\subset\operatorname{int}\mathcal R^+
\subset\mathcal R^+\subset\operatorname{int}\mathcal R$
and $T^-\subset\operatorname{int}T^+
\subset T^+\subset\operatorname{int}T_0$.
For each dyadic layer $t=t_l=t_{\max}2^{-l}\ge\eps$ of
Lemma~\ref{geom:fiber-paths}(iii), written $t$ below, take starting points with
$(\mathfrak r,t_0)\in\mathcal R^-\times T^-$, $t\le v\le1.1t$ and
$t^2\le u\le2t^2$. For $h\ge0$ the shear
$S_{t,h}(M,p)=(M,p+h\chi_t(M))$, with
$\chi_t(M)=(\Ree M\mathbin{\times}\Imm M)/t$, fixes $M$ and has
Jacobian one for each fixed $h$.
Write $\chi_t(z)=\sum_{l=0}^2\chi_{t,l}z^l$ and set
$R_0=(\mathfrak r_0,\mathfrak r_1,\mathfrak r_2)^T$, $R_1=(\mathfrak r_1,\mathfrak r_2,\mathfrak r_3)^T$,
$R_2=(\mathfrak r_2,\mathfrak r_3,\mathfrak r_4)^T$, $\mathcal H=(R_0,R_1,R_2)$.
The real Hankel matrix $\mathcal H$ is uniformly positive definite,
$\Imm M=t_0R_0+vR_1+uR_2$ and
$\chi_t=(v/t)(R_0\mathbin{\times}R_1)+(u/t)(R_0\mathbin{\times}R_2)$.
For the polynomial $\chi$ with coefficient vector $R_0\times R_1$,
Vandermonde inversion gives
$\mathsf w_j\chi(s_j)=\det\mathcal H/\omega'(s_j)$, $j\le3$.
Thus $\chi_t$ is uniformly bounded and $|\chi_t(s_j)|\ge c>0$
after reducing $t_{\max}$.

Dots now denote derivatives along $S_{t,h}$. Orthogonality to $M$
keeps $m_3$ and $\mathfrak r_0,\ldots,\mathfrak r_3$ fixed, so
$\dot{\mathcal H}=\dot{\mathfrak r}_4e_3e_3^T$. Differentiating the recurrence
and $\Imm M=\mathcal H(t_0,v,u)^T$ gives
$\dot{\mathfrak r}_4=(u/t)\det\mathcal H$,
$(\dot t_0,\dot v,\dot u)^T=-u\dot{\mathfrak r}_4\mathcal H^{-1}e_3$ and
$\dot m_4=-\chi_t\cdot(m_1,m_2,m_3)$. On the enlarged layer
$\mathcal L_t^+=\mathcal R^+\times T^+\times[0.9t,1.2t]\times[0.8t^2,3t^2]$,
these identities and the next moment recurrence give
$|\dot u/u|\le Ct$, $|\dot v/v|\le Ct^2$, $|\dot t_0|\le Ct^3$,
$|\dot{\mathfrak r}_4|+|\dot m_4|\le Ct$ and
$|\dot{\mathfrak r}_5|+|\dot m_5|\le C$.
Choose fixed $h_0>0$ from the rectangle margins, with $Ch_0<1/25$.
Up to the first exit from $\mathcal L_t^+$, integration bounds
$u(h)/u(0)$ and $v(h)/v(0)$ between $e^{-1/25}$ and $e^{1/25}$;
the other moving coordinates cannot reach their larger-rectangle
boundaries. These strict bounds exclude a first exit before $h_0$.
The inverse moment chart therefore keeps every path in the original
admissible support for $0\le h\le h_0$, and
\begin{equation}
 \frac{u(S_{t,h}x)}{u(x)}=1+O(th),\qquad
 \frac{v(S_{t,h}x)}{v(x)}=1+O(t^2h).
 \label{k3p:prior-preservation}
\end{equation}
The original prior density in Prony coordinates consequently satisfies
$q_\gamma(x)\asymp q_\gamma(S_{t,h}x)$, uniformly as $u(x)\downarrow0$.
Moreover $\dot s_j=-\chi_t(s_j)/\omega'(s_j)$ has fixed sign and
magnitude bounded below: its leading term is
$-(v/t)\det\mathcal H/[\mathsf w_j\omega'(s_j)^2]$, with $O(t)$ remainder.
Lemma~\ref{lem:bridge} applied to the tangent and $t\ge\eps$ give,
along the entire path, $|\dot\tau_j|\ge c\eps$ and
$\|\dot\mu\|_{\nu_n}\le C(\eps^4t+\eps^5)\le C\eps^4t$.

Let $\pi_t$ be the original prior restricted to the layer and inner
rectangles; its mass is comparable to $t^{3-2\gamma}$.
For each fixed $0\le h\le h_0$, the image bands
$v(S_{t,h}x)\in[0.9t,1.2t]$ are disjoint across dyadic $t$,
since $1.2(t/2)<0.9t$. The Jacobian-one shears and
\eqref{k3p:prior-preservation} therefore give
$\sum_t(S_{t,h})_\#\pi_t\le C\pi_\gamma$ for $0\le h\le h_0$.
Integrating in $h/h_0$ verifies the occupation condition for the paths
$s\mapsto S_{t,h_0s}x$, labeled by $\sum_t\pi_t$.
Equation~\eqref{spine:two-integrals} now yields
\[
 R_{\pi_\gamma,j}\ge
  c\eps^2\sum_{\substack{\eps\le t\le t_{\max}\\t\ \mathrm{dyadic}}}
       t^{3-2\gamma}\min\{1,\lambda^2/t^2\}.
\]
The dyadic measure has the same distribution-function order and cutoff
as $\bar\varrho_\eps$. Equation~\eqref{k3p:tail-profile} evaluates the sum
as $\Phi_\gamma$, including the critical logarithm; a fixed coarse layer
gives saturation. Together with \eqref{k3p:upper-integral}, this proves
Theorem~\ref{thm:prior-effect}(b).

\begin{corollary}[Priors specified in nodes and weights]
\label{cor:k3-node-weight-prior}
Under the assumptions of Theorem~\ref{thm:prior-effect}(b) on the
designs, the orders~\eqref{eq:k3-singular-risk} hold when the prior is
specified in $(s,\mathsf w,t_0,v,u)$ on
$\mathcal B\times T_0\times(0,v_0)\times(0,u_0)$, with density
comparable to $u^{-\gamma}$, $0<\gamma<1$. Here $\mathcal B$ is a fixed
compact box with nonempty interior, ordered separated nodes and positive
weights bounded away from zero; $T_0$ is a bounded interval with
nonempty interior, $u_0>0$, $0<v_0<1$, and the induced configurations
remain in the admissible interior. The box and density bounds do not
depend on $\eps,n,\zeta_n$.
\end{corollary}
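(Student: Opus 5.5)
The plan is a reduction of Corollary~\ref{cor:k3-node-weight-prior} to the proof of Theorem~\ref{thm:prior-effect}(b) just given, through a change of variables. The map $(s,\mathsf w)\mapsto\mathfrak r=(\sum_j\mathsf w_js_j^k)_{k\le5}$ is smooth on $\mathcal B$, with Jacobian determinant $(\prod_j\mathsf w_j)\prod_{j<k}(s_j-s_k)^4$. This determinant is bounded above and away from zero there, because the nodes are separated and the weights are bounded below. Shrinking $\mathcal B$ if needed, the map is a diffeomorphism onto its image $\mathcal R':=\mathfrak r(\mathcal B)$, and the bounds on its Jacobian and inverse are uniform. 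A density comparable to $u^{-\gamma}$ relative to $ds\,d\mathsf w\,dt_0\,dv\,du$ therefore pushes forward to a density comparable to $u^{-\gamma}$ relative to $d\mathfrak r\,dt_0\,dv\,du$ on $\mathcal R'\times T_0\times(0,v_0)\times(0,u_0)$. This has the form~\eqref{eq:k3-singular-prior}, with $f$ bounded above and below and the ratio $C_f/c_f$ controlled by the box and density bounds. The only difference from Section~\ref{sec:singular-prior-k3} is that $\mathcal R'$ is the image of a box rather than a moment rectangle.

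I then check that the proof of part~(b) uses the rectangle shape of $\mathcal R$ in only two places. The first is the upper bound~\eqref{k3p:upper-integral}--\eqref{k3p:tail-profile}. This integrates~\eqref{k3p:information}, which is uniform over the support, and it only needs the density in $(t_0,v,u)$ to be comparable to $u^{-\gamma}$ on a product $T_0\times(0,v_0)\times(0,u_0)$ for each $\mathfrak r$. That holds for any bounded $\mathcal R'$, since the $\mathfrak r$-marginal is integrated out with bounded density. The second is the lower-bound paths of Appendix~\ref{geom:k3-paths}, which require nested compact rectangles $\mathcal R^-\subset\operatorname{int}\mathcal R^+\subset\mathcal R^+\subset\operatorname{int}\mathcal R$. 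Because $\mathcal B$ has nonempty interior, $\mathcal R'$ contains a small closed moment rectangle $\mathcal R_*$ in its interior. I would run the shear construction with $\mathcal R^\pm$ chosen inside $\mathcal R_*$, and likewise with $T^\pm$ inside $T_0$.

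The key step, and the only delicate one, is the occupation bound. The paths stay in $\mathcal R^+\times T^+\times[0.9t,1.2t]\times[0.8t^2,3t^2]$, which lies inside the support, so the push-forwards $\sum_t(S_{t,h})_\#\pi_t$ are still dominated by a constant times the prior. This uses only the two-sided comparison $q_\gamma\asymp u^{-\gamma}$ along paths via~\eqref{k3p:prior-preservation}, and it does not use the global shape of the support. The layer masses remain of order $t^{3-2\gamma}$, because the $\mathfrak r$-section has fixed positive volume. I expect the obstacle to be confirming that all margins and constants ($t_{\max},h_0,\eps_0$) depend only on $\mathcal B$, $T_0$, $u_0$, $v_0$ and the density bounds. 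I would settle this by tracing the constants of Appendix~\ref{prf:k3-singular}: the Hankel matrix $\mathcal H$ must be uniformly positive definite on $\mathcal R_*$, which follows from positive weights and separated nodes. Combining the upper and lower bounds with~\eqref{k3p:tail-profile} then gives~\eqref{eq:k3-singular-risk}.
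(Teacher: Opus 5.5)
Your proposal is correct and is essentially the paper's proof. The bounded Jacobian of $(s,\mathsf w)\mapsto\mathfrak r$ makes the induced density comparable to $u^{-\gamma}$, so \eqref{k3p:information} gives the same upper integral, and the lower bound reruns the shears on a fixed moment rectangle inside $\mathfrak r(\operatorname{int}\mathcal B)$, with an inner interval of $T_0$ and smaller $u,v$ ranges, where the original prior dominates the restriction. Two minor points: you should not shrink $\mathcal B$, because the upper bound needs the whole support, and you need not, because the moment map is injective on ordered positive three-atom measures. Also, the corollary relaxes $T_0$ from a small interval about zero to any bounded interval. This is harmless, since $t_0$ enters \eqref{k3p:information} only through a bounded row operation and drops out of $\chi_t=(\Ree M\times\Imm M)/t$ because $R_0\times R_0=0$; your constant-tracing should record this.
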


\begin{proof}
The real moment map $(s,\mathsf w)\mapsto\mathfrak r$ is injective on
ordered positive three-atom measures, and its Jacobian, given in
Appendix~\ref{geom:k3-paths}, is bounded above and below on
$\mathcal B$; so the induced density is
comparable to $u^{-\gamma}$, and~\eqref{k3p:information} gives the same
upper integral. For the lower bound, restrict to a fixed moment
rectangle inside $\mathfrak r(\operatorname{int}\mathcal B)$, an inner
interval of $T_0$ and smaller $u,v$ ranges; the original prior dominates
this restriction, and the shear argument above needs only bounded $t_0$
and an interior margin, since $t_0\Ree M$ cancels from
$\Ree M\times\Imm M$.
\end{proof}

\section{Proof of Theorem~\ref{risk:transfer}}
\label{risk:transfer-proof}

\begin{proof}[Proof of Theorem~\ref{risk:transfer}]
In a chart of $\mathcal M$,
$I(x)$ is the infimum of $\|DF(x)h\|^2/|Dg(x)h|^2$ over rational tangent
coordinates $h$ with $Dg(x)h\ne0$, hence Borel, and $\mathcal A_\pi(r;I)$ is defined.

(a) Fix $y$ and a time $s$ at which $\gamma_y$ is differentiable, and
put $h=\gamma_y'(s)\in T_{\gamma_y(s)}\mathcal M$. The chain rule gives
$(F\circ\gamma_y)'=DF\,h$ and $(g\circ\gamma_y)'=Dg\,h$. If $Dg\,h\ne0$,
the definition of $I$ applied to $h/(Dg\,h)$ gives
$d_y^2\ge\|DF\,h\|^2\ge I(\gamma_y(s))|Dg\,h|^2\ge I(\gamma_y(s))b_y^2$;
if $Dg\,h=0$ then $b_y=0$. Because the target derivative has a fixed
sign, $b_y\le|g(\gamma_y(1))-g(\gamma_y(0))|\le\sqrt L$. Hence, for almost
every $s$,
$b_y^2\min\{1,r^2/d_y^2\}\le\min\{b_y^2,r^2b_y^2/d_y^2\}
\le\min\{L,r^2/I(\gamma_y(s))\}$, including the cases $d_y=0$ (then $I(\gamma_y(s))b_y^2=0$) and
$I(\gamma_y(s))=0$. Integrate over $s\in[0,1]$ and $\Upsilon$, and use the
occupation bound in~\eqref{crit:path-family}.

(b) The function $w_x(\delta)=\min\{\sqrt L,\delta/\sqrt{S(x)}\}$ is
nondecreasing with $w_x(t\delta)\le tw_x(\delta)$ for $t\ge1$,
and~\eqref{crit:profile} is~\eqref{risk:modulus} with $C_0=C_U$, so
$\int w_x(r_n)^2\pi(dx)=\mathcal A_\pi(r_n;S)$. Lemma~\ref{spine:risk}(a),~\eqref{crit:matching}
and~\eqref{spine:two-integrals} give
\[
 \frac{c_L}{B}\mathcal L_\Upsilon(r_n)\le R_\pi^*(r_n)
 \le(1+4\bar q)C_U^2\,\mathcal A_\pi(r_n;S)
 \le(1+4\bar q)C_U^2C_M\,\mathcal L_\Upsilon(r_n).
\]
If $S\le C_SI$ with $C_S\ge1$, then
$\min\{L,r^2/I\}\le C_S\min\{L,r^2/S\}$, so part~(a) gives
$\mathcal L_\Upsilon/B\le\mathcal A_\pi(r_n;I)\le C_S\mathcal A_\pi(r_n;S)$,
and the chain closes.

(c) By Berge's
maximum theorem $x\mapsto\varpi_x(\delta)$ is upper semicontinuous; the
supremum defining $\bar\varpi_x$ may be taken over rational $u$, and
$\bar\varpi_x$ is continuous in $\delta$, so
$(x,\delta)\mapsto\bar\varpi_x(\delta)$ is jointly measurable. Since
$\bar\varpi_x$ is nondecreasing, $\bar\varpi_x(\delta)/\delta$ is
nonincreasing and $\bar\varpi_x\ge\varpi_x$, it is admissible
in~\eqref{risk:modulus} with $C_0=1$; any admissible $C_0w_x$ dominates
$\varpi_x$ and therefore $\bar\varpi_x$.
\end{proof}

\section{Proofs of Lemmas~\ref{crit:SleI} and~\ref{crit:gauge} and Propositions~\ref{crit:suff} and~\ref{prop:kappa-law}}
\label{crit:proofs}

\begin{proof}[Proof of Lemma~\ref{crit:SleI}]
At a relative-interior point, apply~\eqref{crit:profile} along a $C^1$
curve with tangent $h$ and divide by its step. This gives
$\sqrt{S(x)}|Dg(x)h|\le C_U\|DF(x)h\|$ when $S(x)<\infty$.
Taking the infimum over $Dg(x)h=1$ proves the claim. If $S(x)=\infty$,
the inverse bound forces $Dg(x)h=0$ for every $h$, so $I(x)=\infty$.
\end{proof}

\begin{proof}[Proof of Proposition~\ref{crit:suff}]
(a) Put $\delta=\|F(y)-F(x)\|$ and split at
$c_v\sqrt{LS_v(x)}$. Below this value use the assumed bound and
$S\le S_t$; above it use $|g(y)-g(x)|\le\sqrt L$ and $S\le S_v$.
These give the stated constant $C_U$.

(b) Almost everywhere in $s$,
\[
 b_y^2\min\{1,r_n^2/d_y^2\}
 \ge\frac{c_P}{\max\{1,C_P\}}\min\{L,r_n^2/S(\gamma_y(s))\}.
\]
Integrate against the occupation measure, use its lower bound on $G$,
and then the tail comparison. This gives the stated $C_M$.
\end{proof}

\begin{proof}[Proof of Proposition~\ref{prop:kappa-law}]
(a) The identities are Tonelli's layer-cake formula, as
in~\eqref{risk:capped-tail}, applied to $\min\{L,t/S\}$ and $t/S$;
$A\le\bar A$ follows from concavity of
$x\mapsto\min\{L,x\}$.

(b) For $H=\max\{s_1,t/L\}$ and $t\le Ls_0$, the two tail bounds give
\[
 A(t)\asymp t\left[s_0^{-1}
       +s_0^{-\kappa}\int_H^{s_0}u^{\kappa-2}\dd u\right].
\]
For the lower comparison integrate from $2H$ when $4H\le s_0$;
otherwise the $s_0^{-1}$ term suffices. Evaluating the integral gives
the three orders, with $H=s_1$ for the uncapped inverse moment.
If $t\ge Ls_0$, then $L\ge\bar A\ge A\ge c_-L$; if $t\le Ls_1$,
then $t/S\le L$ almost surely.

(c) If $\kappa>1$, part~(b) with $H=s_1$ gives
$t\,\mathbb E_\pi[1/S]\le(1+C_+/(\kappa-1))t/s_0$ for every
$s_1\in[0,s_0/8]$, which is at most a constant times $A(t)$ for
$t\le Ls_0$; for $t\ge Ls_0$ both are of order $L$. If $\kappa\le1$, fix
$0<s_1\le s_0/8$ and let $F_S(u)=0$ for $u\le s_1$ and
$F_S(u)=\min\{1,c_-(u/s_0)^\kappa\}$ for $u>s_1$ (an atom of mass
$c_-(s_1/s_0)^\kappa$ at $s_1$). This law
satisfies~\eqref{crit:lower-tail} with these $s_0,s_1$: since
$c_-\le\min\{1,C_+\}$, it is at most $C_+(u/s_0)^\kappa$, and it equals
$c_-(u/s_0)^\kappa$ for $s_1<u\le s_0$. For $Ls_1\le t\le Ls_0$ put
$x=\log(Ls_0/t)$ and $y=\log(s_0/s_1)$, so that $H=t/L$, $L/A(t)$ and
$t\,\mathbb E_\pi[1/S]/A(t)$ are read off part~(b): with $c>0$
depending only on $(\kappa,c_-,C_+)$,
\[
 \frac{\bar A(t)}{A(t)}\ge
 \begin{cases}
 c\min\{e^{\kappa x},e^{(1-\kappa)(y-x)}\},&\kappa<1,\\
 c\min\{e^x/(1+x),(1+y)/(1+x)\},&\kappa=1.
 \end{cases}
\]
Taking $x=y/2$ if $\kappa<1$ and $x=\log y$ if $\kappa=1$, the right side
tends to infinity as $s_1\downarrow0$, so no constant $C$ serves all
$s_1\in(0,s_0/8]$. (The law $F_S(u)=\min\{1,c_-(u/s_0)^\kappa\}$ with
$s_1=0$ has $\mathbb E_\pi[1/S]=\infty$, so $\bar A=L$, whereas $A(t)\to0$
as $t\to0$.)
\end{proof}

\begin{proof}[Proof of Lemma~\ref{crit:gauge}]
If $s_1>\min\{2^{-\mathsf p-2},1/8\}s_0$, use
$c_1s_1\le S\le C_1s_0$. Otherwise put $s_0'=C_1s_0$,
$s_1'=2^{\mathsf p-1}C_1s_1$ and $S'=\max\{S,s_1'\}$. As
$s_1'\le2^{\mathsf p-1}(C_1/c_1)S$, $S\le S'\le KS$ a.s. with
$K=\max\{1,2^{\mathsf p-1}C_1/c_1\}$, so replacing $S$ by $S'$ changes
$A(t)$ and $t\,\mathbb E_\pi[1/S]$ by factors in $[1,K]$. Now $S'$
satisfies~\eqref{crit:lower-tail} with the scales $s_0',s_1'$, $c_-=c_2$
and $C_+=C_2(C_1/c_1)^\kappa$: $S'\ge s_1'$ and $s_1'\le s_0'/8$, as
$s_1\le2^{-\mathsf p-2}s_0$;
$\pi\{S'<u\}\le\pi\{c_1s_0\mathsf r^{\mathsf p}<u\}\le C_+(u/s_0')^\kappa$;
for $2s_1'\le u\le s_0'$ and $t=(u/s_0')^{1/\mathsf p}\in[2\mathsf r_1,1]$,
$\{S'<u\}=\{S<u\}\supset\{\mathsf r<t\}$, so
$\pi\{S'<u\}\ge c_-(u/s_0')^\kappa$; and, as $\mathsf r_1<1/2$,
$c_2\le\pi\{\mathsf r<1\}\le1$, while $c_2\le C_2\le C_+$.
Proposition~\ref{prop:kappa-law}(b) applies to $S'$, and its orders with
$(s_0',s_1')$ and with $(s_0,s_1)$ agree up to factors depending only on
$\mathsf p$ and $C_1$.
\end{proof}

\section{Proof of Corollary~\ref{factor:risk}}
\label{factor:proofs}
\label{factor:lower}
We verify the conditions of Theorem~\ref{risk:transfer} for
Section~\ref{factor:section}, with $F=\Sigma$ in the Frobenius norm,
$T_n=\widehat\Sigma_n=n^{-1}\sum_i\mathbf X_i\mathbf X_i^T$ and
$r_n=n^{-1/2}$. Constants depend only on $m$, the box and the density bounds;
$\|A\|=\|(A_j)_j\|$ and $\underline dI\preceq\Sigma\preceq\Lambda_+I$,
$\Lambda_+=\overline d+V^2+\|A\|^2$.
\emph{Experiment conditions.}\label{factor:info-proof} Gaussian fourth
moments give
$\mathbb E\|\widehat\Sigma_n-\Sigma\|_F^2
=\{(\operatorname{tr}\Sigma)^2+\operatorname{tr}\Sigma^2\}/n\le\bar qr_n^2$,
$\bar q=(m+1)(m+2)\Lambda_+^2$, and Taylor's formula for $-\log\det$ along
$\Sigma'+t(\Sigma-\Sigma')\succeq\underline dI$ gives
$\operatorname{KL}(P_\Sigma^n,P_{\Sigma'}^n)\le n\|\Sigma-\Sigma'\|_F^2/(4\underline d^2)$;
in Remark~\ref{crit:fisher-rel},
$\operatorname{tr}(\Sigma^{-1}H\Sigma^{-1}H)/\|H\|_F^2
\in[\Lambda_+^{-2},\underline d^{-2}]$.

\begin{lemma}[Inverse bound]
\label{factor:inverse}
For symmetric $\widetilde\Sigma$ let
$\widetilde z=(\widetilde\Sigma_{j0})_{j\ge1}$,
$\widetilde G=\mathsf P(\widetilde\Sigma_{RR}-D_{\mathcal C})$, with
reference block $\widetilde\Sigma_{RR}$ and
$D_{\mathcal C}=\operatorname{diag}(d_j1_{j\in\mathcal C})_{j\ge1}$, and
\begin{equation}
 \widetilde d_0=\bigl[\widetilde\Sigma_{00}
 -\widetilde z^T\widetilde G\widetilde z/\|\widetilde G\|_F^2\bigr]_{[\underline d,\overline d]},
 \quad\text{or }(\underline d+\overline d)/2\text{ if }\widetilde G=0 .
 \label{factor:estimator}
\end{equation}
Then, for every $\Sigma$ of the model, with
$E=\|\widetilde\Sigma-\Sigma\|_F$ and $\delta/0=\infty$,
\begin{equation}
 |\widetilde d_0-d_0|\le C_U\min\{\sqrt L,E/\sqrt{S(a)}\},
 \qquad C_U=\max\{1,2\sqrt L,2(\|A\|+V)^2\}.
 \label{factor:global-modulus}
\end{equation}
\end{lemma}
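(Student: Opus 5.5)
The plan is to split on the size of $E$ relative to $\sqrt{S(a)}$. If $E\ge\sqrt{S(a)}/2$, then $E/\sqrt{S(a)}\ge1/2$. Since both $\widetilde d_0$ (after clipping) and $d_0$ lie in $[\underline d,\overline d]$, we get $|\widetilde d_0-d_0|\le\sqrt L\le2\sqrt L\,E/\sqrt{S(a)}$. Combined with the trivial bound $\sqrt L$, this gives \eqref{factor:global-modulus} with the constant $\max\{1,2\sqrt L\}$. This case also covers $S(a)=0$, where $\delta/0=\infty$ leaves only the cap. So the work lies in the regime $E<\sqrt{S(a)}/2$.

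In that regime, first note that $\widetilde G-G=\mathsf P(\widetilde\Sigma_{RR}-\Sigma_{RR})$, because $G=\mathsf P(\Sigma_{RR}-D_{\mathcal C})=\mathsf P(aa^T)$. The calibrated diagonal $d_j$, $j\in\mathcal C$, cancels, and $\mathsf P$ removes the uncalibrated diagonal. Hence $\|\widetilde G-G\|_F\le E$. Likewise $\|\widetilde z-z\|\le E$ and $|\widetilde\Sigma_{00}-\Sigma_{00}|\le E$. Because $\|G\|_F=\sqrt{S(a)}>2E$, we have $\|\widetilde G\|_F\ge\sqrt{S(a)}/2>0$, so the unclipped formula is used. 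Clipping to $[\underline d,\overline d]$ is $1$-Lipschitz and $d_0$ lies in this interval, so it suffices to bound the unclipped quantity $\Sigma_{00}-\Phi(\widetilde z,\widetilde G)$ against the identity \eqref{factor:identification}, where $\Phi(z,G)=z^TGz/\|G\|_F^2$.

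The main step is a perturbation bound for $\Phi$:
\[
|\Phi(\widetilde z,\widetilde G)-\Phi(z,G)|\le C\,E\,\frac{(\|z\|+E)^2}{\|G\|_F}\cdot\frac{1}{\|G\|_F/2}\cdot\|G\|_F .
\]
I would decompose this difference into three pieces and bound each. Using $|x^TMx|\le\|M\|_F\|x\|^2$, changing $z$ to $\widetilde z$ with $G$ fixed changes $\Phi$ by at most $\|G\|_F^{-1}(2\|z\|E+E^2)$. Changing the numerator matrix from $G$ to $\widetilde G$ costs $\|\widetilde z\|^2E/\|G\|_F^2$. Changing the normalization $\|G\|_F^{-2}$ to $\|\widetilde G\|_F^{-2}$ costs $\|\widetilde z\|^2\|\widetilde G\|_F\cdot\bigl|\|G\|_F^{-2}-\|\widetilde G\|_F^{-2}\bigr|$, which is $O(\|\widetilde z\|^2E/\|G\|_F^2)$ because $\|\widetilde G\|_F\asymp\|G\|_F$.

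Next I would convert these into the form $E/\sqrt{S(a)}$. For the first piece, $\|z\|=|v|\|a\|\le V\|A\|$ and $E<\sqrt{S}/2\le\|A\|^2/2$, so $\|z\|E+E^2\lesssim(\|A\|+V)^2E$ divided by $\sqrt S$. The remaining pieces carry the factor $1/\|G\|_F^2=1/S$, which must be compensated by $\|z\|^2$. Here I would use the structure of $z$: with all references calibrated, $\|z\|^2=v^2\|a\|^2=v^2\sqrt{S}$. In the general case, $\|a\|^2\le\|a\|^4/\sqrt{S}$ only if $S\gtrsim\|a\|^4$, and this fails when uncalibrated coordinates dominate. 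That is the obstacle. I would handle it by replacing $\|z\|^2$ with $z^T$-weighted quantities: write $\widetilde z^T\widetilde G\widetilde z$ and use $\mathsf P(zz^T)=v^2G$, so that $z^TGz=v^2\|G\|_F^2$ exactly. For the perturbation, I would bound $|z^T(\widetilde G-G)z|$ via $\|\mathsf P(zz^T)\|_F\|\widetilde G-G\|_F=v^2\sqrt S\,E$, since $\widetilde G-G$ lies in the range of $\mathsf P$. This gives $v^2E/\sqrt S$, and the cross terms in $\widetilde z-z$ are handled the same way with Cauchy--Schwarz. Collecting the constants yields $2(\|A\|+V)^2E/\sqrt{S(a)}$. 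I expect this bookkeeping to be the delicate part, and the constant in \eqref{factor:global-modulus} may need adjusting.
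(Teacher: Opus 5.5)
Your diagnosis of the obstacle is right, and the identity you reach for is the one the paper uses: $\mathsf P(zz^T)=v^2G$, together with the fact that $\widetilde G$ and $\widetilde G-G$ lie in the range of $\mathsf P$. The trouble is how you organize the comparison, and it leaves a gap.

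\emph{The normalization piece is not repaired.} Normalizing by $\|G\|_F^2$ creates a separate piece $\widetilde z^T\widetilde G\widetilde z\,(\|\widetilde G\|_F^{-2}-\|G\|_F^{-2})$. You bound it by $O(\|\widetilde z\|^2E/\|G\|_F^2)$, which has exactly the defect you identified: $\|a\|^2\gg\sqrt{S(a)}$ when uncalibrated references dominate. The fix you spell out only treats $z^T(\widetilde G-G)z$. The piece can be repaired with the same identity in the form $z^T\widetilde Gz=v^2\langle G,\widetilde G\rangle$, giving $|\widetilde z^T\widetilde G\widetilde z|\le\|\widetilde G\|_F\{v^2\sqrt S+E(2\|z\|+E)\}$.

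\emph{The stated constant is out of reach.} After that repair, the normalization piece and the $\widetilde G-G$ piece partially cancel, and bounding them separately loses a constant. Take $\widetilde z=z$ and $\widetilde G=(1-\delta)G$ with $\delta=E/\sqrt S<1/2$. The two pieces are $v^2\delta(2-\delta)/(1-\delta)$ and $-v^2\delta$, whose absolute values sum to at least $3v^2E/\sqrt S$. The actual error is only $v^2\delta/(1-\delta)\le 2v^2E/\sqrt S$. So whenever $2(\|A\|+V)^2<3V^2$ is the largest entry of $C_U$, bounding your pieces separately cannot certify $C_U=\max\{1,2\sqrt L,2(\|A\|+V)^2\}$; take $v$ near $V$ and $\delta$ small. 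Your route proves the lemma only with a larger multiple of $(\|A\|+V)^2$, and the sentence claiming that collecting constants gives $2(\|A\|+V)^2$ is unsupported.

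Your displayed main-step inequality is also false as written. It reduces to $2CE(\|z\|+E)^2/\|G\|_F\approx 2CEv^2$ in the calibrated case. The example above has error $\asymp v^2E/\|a\|^2$, which is larger when $\|a\|$ is small.

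The missing step is never to divide by $\|G\|_F^2$. The paper compares $\widetilde w=\widetilde z^T\widetilde G\widetilde z/\|\widetilde G\|_F^2$ with $v^2$ directly, after multiplying by $\|\widetilde G\|_F^2$:
\[
 \|\widetilde G\|_F^2(v^2-\widetilde w)
 =v^2\langle\widetilde G-G,\widetilde G\rangle
 +(z-\widetilde z)^T\widetilde Gz+\widetilde z^T\widetilde G(z-\widetilde z).
\]
Every term carries the factor $\|\widetilde G\|_FE$, so there is no $1/S$ and no normalization term:
\[
 |v^2-\widetilde w|\le\frac{E(v^2+2\|z\|+E)}{\|\widetilde G\|_F}\le\frac{2E(v^2+2\|z\|+E)}{\sqrt S}.
\]
With $\|z\|\le V\|A\|$, $E\le\|A\|^2/2$ and $|\widetilde\Sigma_{00}-\Sigma_{00}|\le E\le\|A\|^2E/\sqrt S$, this gives exactly $2(\|A\|+V)^2E/\sqrt S$. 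Your trivial regime $E\ge\sqrt S/2$ and your clipping step are as in the paper.
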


\begin{proof}
The known $d_j$ cancel in
$\widetilde G-G=\mathsf P(\widetilde\Sigma_{RR}-\Sigma_{RR})$, so
$\|\widetilde G-G\|_F$, $\|\widetilde z-z\|$ and
$|\widetilde\Sigma_{00}-\Sigma_{00}|$ are at most $E$. Both
$\widetilde d_0$ and $d_0$ lie in $[\underline d,\overline d]$, so the
error is at most $\sqrt L$; this suffices if $S=0$ or
$2E>\sqrt S=\|G\|_F$, since
$\max\{1,2\sqrt L\}\min\{\sqrt L,1/2\}\ge\sqrt L$. If $E=0<S$, use
\eqref{factor:identification}. Let $0<2E\le\sqrt S$; then
$\|\widetilde G\|_F\ge\sqrt S/2$ and $E\le\|a\|^2/2$, because
$\sqrt S\le\|a\|^2$. As $\widetilde G$ lies in the range of $\mathsf P$
and $\mathsf P(zz^T)=v^2G$,
$z^T\widetilde Gz=v^2\langle G,\widetilde G\rangle$, and with
$\widetilde w=\widetilde z^T\widetilde G\widetilde z/\|\widetilde G\|_F^2$,
\[
 \|\widetilde G\|_F^2(v^2-\widetilde w)
 =v^2\langle\widetilde G-G,\widetilde G\rangle
 +(z-\widetilde z)^T\widetilde Gz+\widetilde z^T\widetilde G(z-\widetilde z),
\]
of modulus at most $\|\widetilde G\|_FE(v^2+2\|z\|+E)$. With
$\|z\|=|v|\|a\|\le V\|A\|$, division by $\|\widetilde G\|_F^2$ gives
$|v^2-\widetilde w|\le(2V^2+4V\|A\|+\|A\|^2)E/\sqrt S$. Since
$d_0=\Sigma_{00}-v^2$ and
$|\widetilde\Sigma_{00}-\Sigma_{00}|\le E\le\|A\|^2E/\sqrt S$, the
unclipped error is at most $2(\|A\|+V)^2E/\sqrt S$, and clipping does
not increase it.
\end{proof}

\label{factor:upper}At $\widetilde\Sigma=\Sigma(y)$,
$\widetilde d_0=d_0(y)$ unless $\widetilde G=0$, when $E\ge\sqrt S$;
so~\eqref{factor:global-modulus} gives~\eqref{crit:profile}
($\Theta^+=\Theta$), Lemma~\ref{crit:SleI} gives $S\le C_U^2I$ a.e.,
and concavity of $x\mapsto\min\{L,x/S\}$ gives
$\mathbb E|\widetilde d_0(\widehat\Sigma_n)-d_0|^2\le C_U^2\max\{1,\bar q\}\min\{L,r_n^2/S\}$.

\begin{lemma}[Prior-dominated paths for the factor model]
\label{factor:path-family}
On the factor-model class of Section~\ref{factor:section}, there is a
measurable path family with target derivative of fixed sign and
magnitude at least $b_y$, covariance speed at most $d_y$, and occupation
measure bounded by $B\pi$, with constants independent of $n$.
The paths vary $v^2$ over an interval of fixed length $h\le V^2/4$ in
$[V^2/4,V^2/2]$ at fixed
$(z,\Sigma_{00},(\Sigma_{jj})_{j\notin\mathcal C})$;
$b_y=h$, $d_y\le8h\sqrt S/V^2$ in the Frobenius norm,
with $S$ of~\eqref{factor:surrogate}, and the label measure is Lebesgue
measure on the labels.
\end{lemma}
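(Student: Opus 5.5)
The plan is to realize the paths in reparametrized coordinates in which the target moves linearly and only the calibrated-reference block of $\Sigma$ moves. Restrict to the branch $v>0$ and write $w=v^2$. On the sub-box with $v\in[V/2,V/\sqrt2]$, i.e.\ $w\in[V^2/4,V^2/2]$, use the coordinates
\[
 \bigl(w,\ z=va,\ \Sigma_{00}=d_0+w,\ (\Sigma_{jj}=d_j+a_j^2)_{j\notin\mathcal C}\bigr)
\]
in place of $(v,a,d_0,(d_j)_{j\notin\mathcal C})$. The calibrated $d_j$ are known and are not coordinates. This map is a smooth diffeomorphism with inverse $a=z/\sqrt w$, $d_0=\Sigma_{00}-w$ and $d_j=\Sigma_{jj}-z_j^2/w$. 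Because $w$ is bounded below by $V^2/4$ and $z$ is bounded, its Jacobian and inverse Jacobian are bounded above and below on the sub-box, with constants depending only on $m$ and the box. A path is labeled by $y=(z,\Sigma_{00},(\Sigma_{jj})_{j\notin\mathcal C})$ and a starting value $w_y$, and it moves $w$ from $w_y$ to $w_y+h$ at constant speed $h$, all labels held fixed.

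Target and speed along a path. Since $d_0=\Sigma_{00}-w$, we have $(d_0\circ\gamma_y)'=-h$, so the sign is fixed and $b_y=h$. For the covariance, $\Sigma_{0j}=z_j$ and $\Sigma_{00}$ are fixed. The uncalibrated diagonal entries are fixed by construction, because $d_j$ compensates the change of $a_j^2$. The remaining entries of the reference block are the off-diagonal entries and the calibrated diagonal entries, and these are exactly the entries of $\mathsf P(aa^T)=\mathsf P(zz^T)/w$. Differentiating,
\[
 \frac{d}{ds}\Sigma=-\frac{h}{w^2}\,\mathsf P(zz^T)=-\frac{h}{w}\,G,
\]
using $\mathsf P(zz^T)=v^2G$. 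Hence $\|(\Sigma\circ\gamma_y)'\|_F=h\|G\|_F/w\le 4h\sqrt{S(a)}/V^2$, and this is at most $8h\sqrt S/V^2$ as stated. The only subtlety is that $S$ is evaluated at the moving point $a=z/\sqrt w$. Since $w$ varies by at most a factor $2$ along the path, $S(a)=\|G\|_F^2$ changes by at most a factor $4$. This only adjusts the constant, for whichever point of the path $S$ is read at.

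Staying in the box, and the occupation bound. Choose $h\le\min\{V^2/4,(\overline d-\underline d)/4\}$ and fix a label set of positive Lebesgue measure with the following properties. The quantity $d_0=\Sigma_{00}-w$ stays in $[\underline d,\overline d]$ for all $w\in[w_y,w_y+h]$. Each $|a_j|=|z_j|/\sqrt w$ stays below $A_j$; a margin of factor $\sqrt2$ in $|z_j|$ suffices. Each $d_j=\Sigma_{jj}-z_j^2/w$ stays in $[\underline d,\overline d]$; since its variation is at most $z_j^2h/w^2$, it suffices to take labels with $d_j$ in the middle half of the range and $|z_j|$ small. Small $z$, including $z\to0$, is deliberately allowed, because that is the degenerate region $S\to0$ which the matching condition later needs to charge. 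The label measure is Lebesgue measure on these labels and starting values. The occupation measure $\iint\delta_{\gamma_y(s)}\,ds\,\Upsilon(dy)$ is then $h^{-1}$ times Lebesgue measure in the new coordinates, restricted to a bounded set. By the bounded inverse Jacobian it is at most a constant times Lebesgue measure in the original parameters. Since the prior density is bounded below on the box, this is at most $B\pi$. Joint measurability is immediate, since $\gamma_y(s)$ is a smooth function of $(y,s)$.

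I expect the main obstacle to be the bookkeeping in this last step. The constraints $|a_j|\le A_j$ and $d_j\in[\underline d,\overline d]$ for uncalibrated $j$ couple to the moving coordinate $w$, and the label set must keep every path inside the box while still reaching $a=0$. I would handle this by shrinking the label set to product rectangles with fixed margins and taking $h$ small, so that all constants stay independent of $n$.
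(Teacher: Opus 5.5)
Your proposal is correct and follows the paper's proof. Both use the same chart $(z,v^2,\Sigma_{00},(\Sigma_{jj})_{j\notin\mathcal C})$ on $\{v>0\}$, with Jacobian $2v^{m+1}$, and move $v^2$ linearly at fixed remaining chart coordinates, which gives $b_y=h$. The covariance then moves only through $\mathsf P(zz^T)/v^2$, with $\|\mathsf P(zz^T)\|_F=v^2\sqrt{S(a)}$ constant along the path, and the occupation bound comes from the bounded Jacobian and the lower bound on the prior density.

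The differences are cosmetic. The paper starts every path at $v^2=V^2/4$, and it controls the drift $8A_j^2h/V^2$ of the uncalibrated $d_j$ by taking $h$ small, where you restrict $|z_j|$ instead. With Lebesgue-distributed starting values, your occupation density in the chart is at most one, not $h^{-1}$; this slip only makes the bound better.
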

\begin{proof}
Fix $0<h\le V^2/4$ with
$3h(1+8\|A\|^2/V^2)\le\overline d-\underline d$. On $\{v>0\}$,
$(z,v^2,\Sigma_{00},(\Sigma_{jj})_{j\notin\mathcal C})$ is a chart in
$(a,v,d_0,(d_j)_{j\notin\mathcal C})$ with Jacobian $2v^{m+1}$. A path
fixes all chart coordinates but $v^2=V^2/4+ht$, $t\in[0,1]$; the
labels (Lebesgue measure) are those of paths meeting
$\mathcal G=\{|a_j|\le A_j/2,\ V/2\le v\le(V^2/4+h)^{1/2},\
d_0,d_j\text{ in the middle third of }[\underline d,\overline d]\}$.
Along a path $|a_j|$ changes by a factor at most $\sqrt2$, and $d_0$,
$d_j$ by at most $h$, $8A_j^2h/V^2$, so it stays in the box, and the occupation density lies in
$[2(V/2)^{m+1}/h,2(V/\sqrt2)^{m+1}/h]$ on the swept
set $\supset\mathcal G$. Here $b_y=h$, and $\Sigma$ moves only through
$\mathsf P(zz^T)/v^2$, where $\|\mathsf P(zz^T)\|_F=v^2\sqrt{S(a)}$ is
constant; so $d_y=16h\|\mathsf P(zz^T)\|_F/V^4$ and
$d_y^2\le64h^2S(\gamma_y(t))/V^4$.
\end{proof}
As $\min\{L,r^2/S(a)\}$ depends only on $a$ and does not decrease under
$a\mapsto a/2$, the tail condition of Proposition~\ref{crit:suff}(b)
holds for these paths, which gives~\eqref{crit:matching}. At any point
with $v\ne0$ the path direction gives $I\le S/v^4$.

\begin{proof}[Proof of Corollary~\ref{factor:risk}]
For every $\mathcal C$, Theorem~\ref{risk:transfer}(b) shows that
$R_{\pi,d_0}(n)$, $\mathcal A_\pi(n^{-1/2};S)$ and
$\mathcal A_\pi(n^{-1/2};I)$ are of one order, attained
by~\eqref{factor:estimator} at $\widehat\Sigma_n$. As $S\le C_U^2I$, and
$I\le16S/V^4$ on $\{|v|\ge V/2\}$, of conditional probability bounded
below given $a$, the laws of $I$ and $S$ are comparable up to constant
factors in the argument, as are $\mathbb E_\pi[1/I]$ and
$\mathbb E_\pi[1/S]$. With all references calibrated,
Lemma~\ref{crit:gauge} applies to $S=\|a\|^4$ as in
Table~\ref{tab:kappa}, and Proposition~\ref{prop:kappa-law}(b) with $t=n^{-1}$
gives~\eqref{factor:rates}.
\end{proof}
\label{factor:calibration-proof}
\begin{remark}[Exact information with calibrated references]
\label{factor:exact}
If $\mathcal C=\{1,\ldots,m\}$ and $(a,v)\ne0$, then
\begin{equation}
 I=\Bigl(\frac{\|a\|^2}{\|a\|^2+v^2}\Bigr)^2
 \asymp\min\Bigl\{1,\frac{\|a\|^4}{v^4}\Bigr\}.
 \label{factor:efficient-information}
\end{equation}
Indeed, for $H=e_0e_0^T-\partial_{(v,a)}\Sigma[h_v,h_a]$ and
$h_a=ca+w$, $w\perp a$,
$\|H\|_F^2=(1-2vh_v)^2+2\|a\|^2(h_v+vc)^2+4\|a\|^4c^2+2(v^2+\|a\|^2)\|w\|^2$,
and minimizing over $w$, $c$ and $h_v$ gives the formula. Thus $I\to1$
as $v\to0$ at fixed $a\ne0$, while $S=\|a\|^4$; the two are comparable
where $|v|$ is bounded below.
\end{remark}

\section{Proof of Proposition~\ref{thm:known-count-adaptation}}
\label{prf:knownK-adaptation}
This appendix proves Proposition~\ref{thm:known-count-adaptation} and collects the window lemmas it uses.

Write $\zeta=\zeta_n$ and, for a truth $x$ with center $\tau_0$,
$J=J_{\min}(x)$. For $y=(u,c')\in\mathcal C_W$ put $v_k=u_k-\tau_0$,
$\Delta=\|\Pi_W(\mu_y-\mu_x)\|_{\nu_n}$, the physical moments
$\bar m_l(x)=\sum_jc_j(\eps s_j)^l$ and $\bar m_l(y)=\sum_kc'_kv_k^l$,
and $P(z)=\prod_k(z-v_k)=\sum_{l\le K}p_lz^l$; let $D_1\ge D_2$ be the
two largest $|v_k|$, with multiplicity. Constants depend only on the
quantities listed in Proposition~\ref{thm:known-count-adaptation}.

\emph{The window.} By Lemma~\ref{scope:small-window}(a),
Assumption~\ref{scope:AW} gives Assumption~\ref{ass:sampling} at every
center $t\in W$ with uniform constants, so Lemmas~\ref{lem:weighted}
and~\ref{geom:ellipse} and Proposition~\ref{thm:avg-inverse-info} apply
at $\tau_0$ with uniform constants. The difference of the atomic
functionals of $x$ and $y$ has $h\le2K\le N=2K_{\max}$ points in $W$, so
the Newton-basis argument in the proof of Lemma~\ref{scope:gw-bridge}
(the first $h$ moments determine the Newton coefficients through a unit
triangular matrix, and the higher moments are bounded in them) bounds
its moments of order $2K,\dots,N-1$ by those below $2K$. The comparison of Definition~\ref{scope:proj} and the
unit lower triangular binomial change of center from $\bar\tau$ to
$\tau_0\in W$ then give
\begin{equation}
 C^{-1}\max_{l<2K}|\bar m_l(x)-\bar m_l(y)|\le\Delta
 \le\|\mu_y-\mu_x\|_{\nu_n}\le C\max_{l<2K}|\bar m_l(x)-\bar m_l(y)|,
 \label{adapt:window}
\end{equation}
so~\eqref{spine:experiment} holds on $\mathcal C_W$ with
$F=\Pi_W\mu$, $T_n=\Pi_WY$, $r_n=\zeta$ and $\bar q=q_W$, and $\Delta\le C$
because $\mathcal C_W$ is bounded.

\emph{The Taylor projection.} If $D_W\le D_0$ and
Assumption~\ref{ass:sampling} holds at $\bar\tau$,
Lemma~\ref{scope:small-window}(b) makes the projection $\Pi_W=P_*$ onto
the first $2K$ Taylor coefficients at $\bar\tau$ admissible, and
then~\eqref{adapt:window} holds with $\bar q=2K$.

\begin{lemma}[Inverse bound on a fixed window]
\label{knownK:global-inverse}
There are $N_*,c_0,c_*,\eps_0>0$ and $C$ such that, for $\eps\le\eps_0$,
every truth $x$ and every $y\in\mathcal C_W$:
(i) if $\Delta\le c_*\eps\sqrt J$, the nodes of $y$ match those of $x$
in order within $\rho\eps/3$ and $|u_j-\tau_j|\le C\Delta/\sqrt{J_j(x)}$;
if $D_1\le N_*\eps$, then $|u_j-\tau_j|\le C\min\{\eps,\Delta/\sqrt J\}$;
(ii) if $D_1\ge N_*\max\{\eps,\Delta^{1/(K-1)}\}$, then $D_2\ge c_0D_1$
and $D_1^2\le C\eps\Delta/\sqrt J\le C\Delta^2/J$;
(iii) in all cases
\begin{equation}
 |u_j-\tau_j|^2\le C\min\{D_W^2,\ \Delta^2/J,\ \eps\Delta/\sqrt J\}.
 \label{knownK:global-bound}
\end{equation}
\end{lemma}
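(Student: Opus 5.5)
Throughout, I will work in physical moments centered at the true center $\tau_0$ and use the annihilating-polynomial argument of Proposition~\ref{thm:inverse}. By \eqref{adapt:window}, $|\bar m_l(x)-\bar m_l(y)|\le C\Delta$ for $l<2K$. The coefficients $p_l$ of $P$ are bounded, since all $v_k$ lie in a bounded window. The finite-difference identity \eqref{geom:prony-difference}, written in physical moments with the candidate polynomial $P$, then gives
\[
 \bar S_r:=\sum_jc_jP(\eps s_j)(\eps s_j)^r=\sum_{l\le K}p_l\{\bar m_{l+r}(x)-\bar m_{l+r}(y)\},\qquad |\bar S_r|\le C\Delta,\ r<K .
\]
Rescaling $P(\eps z)=\eps^K p(z)$ is legitimate only when $D_1\lesssim\eps$. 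In general I would write $P(\eps s_j)$ directly and use the real direction $b_k\propto P(\eps s_k)/d_k$ in Lemma~\ref{lem:weighted}, exactly as in the proof of Proposition~\ref{thm:inverse}. This yields
\[
 |P(\eps s_j)|\le C\eps^{K-1}\cdot\eps^{-(K-1)}\cdot\eps\,\Delta/\sqrt{J_j(x)}\quad\text{up to the scaling,}
\]
that is, in unscaled form $|P(\eps s_j)|\le C\eps^K\cdot\Delta/(\eps^{K}\cdot\eps\sqrt{J_j})\cdot\eps^K$. I will fix the bookkeeping so that it reduces to \eqref{eq:prf-psj} when $D_1\le N_*\eps$.

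For (i), assume $\Delta\le c_*\eps\sqrt J$. Each $|P(\eps s_j)|=\prod_k|\eps s_j-v_k|$ is then at most $c(\rho\eps/3)^K$. Since the truth nodes are $\rho\eps$-separated and there are exactly $K$ candidate roots, a pigeonhole argument as in Proposition~\ref{thm:inverse} places one root in each interval of radius $\rho\eps/3$, in order. Dividing by the remaining product, which is at least $(2\rho\eps/3)^{K-1}$, gives $|u_j-\tau_j|\le C\Delta/\sqrt{J_j}$. When instead $D_1\le N_*\eps$, all nodes lie at scale $\eps$, so $|u_j-\tau_j|\le C\eps$ trivially. Combining this with the previous case gives $\min\{\eps,\Delta/\sqrt J\}$: if $\Delta>c_*\eps\sqrt J$, then $\eps\le\Delta/(c_*\sqrt J)$.

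For (ii), assume $D_1\ge N_*\max\{\eps,\Delta^{1/(K-1)}\}$. There are two steps, and the second is the main obstacle.

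\emph{Step 1: $D_2\ge c_0D_1$.} Suppose a single far node sits at distance $D_1\gg\eps$. Its coefficient is bounded below by $a_*$, and it contributes $c'_kv_k^l$ to the moments. Eliminate the near cluster of $y$ together with $x$: this is at most $2K-1$ nodes within $O(D_2)$, and is annihilated by a polynomial of degree $2K-1$. Applying that annihilator to the moment sequence leaves $c'_kv_k^{l}Q(v_k)$ with $|Q(v_k)|\gtrsim D_1^{2K-1}$. Choosing the index $l$ via the $2K$ available moments gives $\Delta\gtrsim D_1^{2K-1}$ if $D_2\le c_0D_1$, and this contradicts $D_1\ge N_*\Delta^{1/(K-1)}$ once $N_*$ is large. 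I expect the needed count to work with the lowest moments, giving $\Delta\gtrsim D_1^{K-1}$ via a degree-$(K-1)$ elimination, because at most $K-1$ candidate nodes can be far.

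\emph{Step 2: $D_1^2\le C\eps\Delta/\sqrt J$.} With at least two far nodes, $|P(\eps s_j)|\gtrsim D_1D_2\prod(\text{rest})\gtrsim D_1^2\,\eps^{K-2}$, using that the remaining roots are either near the cluster or far. Combining this with the bound $|P(\eps s_j)|\le C\eps^{K-1}\Delta/\sqrt{J}$ obtained from the $\bar S_r$ estimate, with the correct scaling, gives $D_1^2\le C\eps\Delta/\sqrt J$. The second inequality $\eps\Delta/\sqrt J\le C\Delta^2/J$ follows from $D_1\ge N_*\eps$. Getting the powers of $\eps$ right in this step, without the rescaling available when $D_1\lesssim\eps$, is the delicate point. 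I would handle it by splitting $P=P_{\rm near}P_{\rm far}$ and applying Lemma~\ref{lem:weighted} with the weight $P_{\rm far}(\eps s_k)$, which is comparable to a constant multiple of $D_1D_2\cdots$ uniformly in $k$.

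For (iii), combine the cases. If $D_1\le N_*\eps$, use (i), noting that $\min\{\eps^2,\Delta^2/J\}\le\min\{\Delta^2/J,\eps\Delta/\sqrt J\}$ because the geometric mean of the two terms lies between them. If $D_1\ge N_*\max\{\eps,\Delta^{1/(K-1)}\}$, use (ii), since $|u_j-\tau_j|\le D_1+\eps T$. In the remaining case $N_*\eps<D_1<N_*\Delta^{1/(K-1)}$, one has $\eps\lesssim\Delta^{1/(K-1)}$, and I need $\Delta^{2/(K-1)}\lesssim\eps\Delta/\sqrt J$. Since $J\lesssim\eps^{2K-2}$ from Lemma~\ref{lem:weighted}, we have $\eps\Delta/\sqrt J\gtrsim\Delta\eps^{2-K}\ge\Delta\cdot\Delta^{(2-K)/(K-1)}=\Delta^{1/(K-1)}$. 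This exceeds $\Delta^{2/(K-1)}$ for $\Delta\le C$, and similarly $\Delta^2/J$ dominates, which gives \eqref{knownK:global-bound}. The $D_W^2$ cap is trivial.
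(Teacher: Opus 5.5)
\textbf{Gap: Step~1 of (ii), the claim $D_2\ge c_0D_1$, is not proved.} Your overall architecture is the paper's: the annihilator $P$, the bounds $|P(\eps s_j)|\le C\eps^{K-1}\Delta/\sqrt{J_j(x)}$ and $\sqrt J\le C\eps^{K-1}$ from Lemma~\ref{lem:weighted}, root matching for (i), two far roots for (ii), and the three-case split for (iii). Parts (i) and (iii) are fine. The ``unscaled form'' you end up with is exactly the right bound, and it needs no restriction on $D_1$. Only the coefficients of $P$ must be bounded, which holds on the window, so your worry about rescaling is unfounded.

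The elimination you actually wrote for Step~1 fails. Annihilating the $K-1$ near candidates and the $K$ true nodes by a polynomial of degree $2K-1$ gives only $a_*D_1^{2K-1}\lesssim\Delta$, that is, $D_1\lesssim\Delta^{1/(2K-1)}$. This does not contradict $D_1\ge N_*\Delta^{1/(K-1)}$, contrary to what you claim, because $\Delta^{1/(2K-1)}\gg\Delta^{1/(K-1)}$ for small $\Delta$. A degree-$(K-1)$ polynomial cannot annihilate the true nodes as well, so the degree-$(K-1)$ elimination you ``expect'' needs an idea you do not supply. Without Step~1, (ii) collapses. With a single far root, the annihilator bound only gives $D_1\lesssim\Delta/\sqrt J$, not $D_1^2\lesssim\eps\Delta/\sqrt J$, and the latter is what (iii) needs.

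\textbf{The missing idea: annihilate only the other candidates, and bound the true nodes' contribution by its size.}
\begin{itemize}
\item If $v_m$ with $|v_m|=D_1$ is a repeated root, then $D_2=D_1$ and there is nothing to prove. So assume it is simple.
\item Take $R=P/(z-v_m)$. It has degree $K-1$ and bounded coefficients.
\item Apply $R$ to the moment differences of order $<K$ in \eqref{adapt:window}. This gives $c'_mR(v_m)=\sum_jc_jR(\eps s_j)+O(\Delta)$.
\item Here $|R(v_m)|\ge(D_1-D_2)^{K-1}$ and $|R(\eps s_j)|\le(D_2+T\eps)^{K-1}$. Hence $a_*(D_1-D_2)^{K-1}\le C\{(D_2+T\eps)^{K-1}+\Delta\}$.
\item Suppose $D_2<c_0D_1$. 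Also $T\eps\le TD_1/N_*$. With $c_0$ small and $N_*$ large, the first term on the right is absorbed into the left side. This leaves $D_1^{K-1}\le C\Delta$, which contradicts $D_1\ge N_*\Delta^{1/(K-1)}$ once $N_*$ is large enough.
\end{itemize}

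\textbf{A smaller point in Step~2.} The lower bound $|P(\eps s_j)|\gtrsim D_1^2\eps^{K-2}$ holds only at \emph{some} $j$, not at every $j$, since the remaining $K-2$ roots may sit on a true node. Each of those $K-2$ roots lies within $\rho\eps/2$ of at most one of the $\rho\eps$-separated true nodes. So at least two true nodes see all of them at distance at least $\rho\eps/2$; Lagrange interpolation also gives this. Choosing such a $j$ suffices, because your upper bound holds at every $j$.
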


\begin{proof}
Since $\sum_kc'_kP(v_k)v_k^q=0$ and $|p_l|\le C$,
\eqref{adapt:window} gives
\begin{equation}
 \Bigl|\sum_jc_jP(\eps s_j)(\eps s_j)^q\Bigr|
 =\Bigl|\sum_{l\le K}p_l\{\bar m_{l+q}(x)-\bar m_{l+q}(y)\}\Bigr|
 \le C\Delta,\qquad q<K.
 \label{adapt:annihilation}
\end{equation}
As $P$ is real on the real line, the real vector
$b_k=|c_k|P(\eps s_k)d_j/\{|c_j|P(\eps s_j)d_k\}$, $d_k=\omega'(s_k)$,
in the amplitude-included form of Lemma~\ref{lem:weighted}, as
for~\eqref{eq:prf-psj}, and $b=e_j$ give
\begin{equation}
 |P(\eps s_j)|\le C\eps^{K-1}\Delta/\sqrt{J_j(x)},\qquad
 \sqrt J\le C\eps^{K-1}.
 \label{adapt:annihilator}
\end{equation}
If $D_1=|v_m|$ for a simple root $v_m$, the coefficients of
$R=P/(z-v_m)$ are bounded, and applying them to both moment sequences
gives $c'_mR(v_m)=\sum_jc_jR(\eps s_j)+O(\Delta)$, so
$a_*(D_1-D_2)^{K-1}\le C\{(D_2+T\eps)^{K-1}+\Delta\}$; if $v_m$ is
repeated, $D_2=D_1$. Choosing $c_0\le1/2$ small and then $N_*\ge2T/c_0$
large, $D_1\ge N_*\max\{\eps,\Delta^{1/(K-1)}\}$ forces $D_2\ge c_0D_1$.

(i) Choose $c_*$ with $Cc_*<(\rho/3)^K$. If $\Delta\le c_*\eps\sqrt J$,
then $\prod_k|s_j-v_k/\eps|=\eps^{-K}|P(\eps s_j)|<(\rho/3)^K$ for
every $j$; as in the proof of Proposition~\ref{thm:inverse} the roots
match in order and $|u_j-\tau_j|\le C\Delta/\sqrt{J_j(x)}
\le C\min\{c_*\eps,\Delta/\sqrt J\}$, without candidate gap or amplitude
bounds. Otherwise $\eps<\Delta/(c_*\sqrt J)$, and
$|u_j-\tau_j|\le D_1+T\eps\le(N_*+T)\eps$ when $D_1\le N_*\eps$.
(ii) Two roots with $|v|\ge c_0D_1\ge2T\eps$ give factors
$|\eps s_j-v|\ge c_0D_1/2$ at every $j$, and the monic quotient of
degree $K-2$ has modulus at least $(\rho\eps)^{K-2}/(K-1)$ at some
$\eps s_j$, by Lagrange interpolation on $K-1$ nodes. Hence
$\max_j|P(\eps s_j)|\ge cD_1^2\eps^{K-2}$, and~\eqref{adapt:annihilator}
gives $D_1^2\le C\eps\Delta/\sqrt J$; with $D_1\ge N_*\eps$ also
$\eps\le C\Delta/\sqrt J$.
(iii) Errors are at most $D_W$, and at most $2D_1$ if $D_1\ge N_*\eps$.
Case~(i) gives~\eqref{knownK:global-bound} by $\min\{a^2,b^2\}\le ab$,
and~(ii) directly. Otherwise $N_*\eps<D_1<N_*\Delta^{1/(K-1)}$, so
$\Delta>\eps^{K-1}$, $\Delta/\sqrt J\ge c$, and
$D_1^2\sqrt J/(\eps\Delta)\le C\Delta^{(3-K)/(K-1)}\eps^{K-2}\le C$,
where for $K=2$ we use $\Delta\le C$ from~\eqref{adapt:window}, for
$K=3$ the exponent is zero, and for $K\ge4$ we use $\Delta>\eps^{K-1}$.
Also $\eps\Delta/\sqrt J\le C\Delta^2/J$.
\end{proof}

\begin{lemma}
\label{knownK:band-small-ball}
Let $K=2r$ and, for real polynomials $q$ of degree below $K$ with
coefficient norm $\|q\|$, put $H_cq=(\sum_jc_jq(s_j)s_j^l)_{l<r}$. If,
given $(\tau_0,s)$, the law of $c$ has a Lebesgue density at most
$M_c$, then, uniformly in $(\tau_0,s)$, $\alpha>0$ and $0\le b\le1$,
\[
 \mathbb P_c\{\exists q:\|q\|=1,\ |q_{K-1}|\le b,\ \|H_cq\|\le\alpha\}
 \le C\min\{1,\alpha(b+\alpha)\}.
\]
\end{lemma}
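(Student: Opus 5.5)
We treat $H_c$ as a real-linear map from the real $K$-dimensional space of polynomials $q$ of degree below $K=2r$ into $\mathbb C^r\cong\mathbb R^{2r}=\mathbb R^K$. Then $H_c$ is a square real $K\times K$ matrix depending linearly on $c$. Column $a$ of the corresponding matrix in the Lagrange basis $\ell_a$ at the nodes $s_a$ is $c_a(s_a^l)_{l<r}$, realified. The event asks for a unit vector in the slab $\{|q_{K-1}|\le b\}$ that is nearly annihilated by $H_c$.

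First we write $q=\sum_a q(s_a)\ell_a$, so that $H_cq=\sum_a c_aq(s_a)h_a$ with $h_a=(s_a^l)_{l<r}$. The leading coefficient is $q_{K-1}=\sum_a q(s_a)/d_a$. Here $d_a=\prod_{\ell\ne a}(s_a-s_\ell)$, and separation bounds the change of basis uniformly.

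\emph{Case $b+\alpha\ge c$.} The claim reduces to $\mathbb P\{\sigma_{\min}(H_c)\le\alpha\}\le C\alpha$. To prove it, note that $\det H_c$ is a real polynomial in $(\Ree c,\Imm c)$ of degree $K$. It is multilinear in the columns $c_a$, and by a Cauchy--Binet expansion it is the realified determinant $\sum\pm\prod_a|c_a|\cdot(\text{minors})$. This is exactly of the form $\mathscr D$ in~\eqref{eq:phase-prior-determinant}, scaled by the moduli.

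\emph{Case of small $b$ and $\alpha$.} The natural route is a net argument over the slab. We would rather use the structure directly. Fix the moduli and all phases except one, $\varphi_{a_0}$. Then $H_c$ is an affine function of $e^{i\varphi_{a_0}}$ with a rank-$2$ real perturbation. We condition on $c_{-a_0}$ and use the density bound on the remaining complex coordinate $c_{a_0}\in\mathbb C$, which has a two-dimensional density at most $C M_c$ after integrating the others out.

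The event requires a vector $q$ with $\|\sum_{a\ne a_0}c_aq(s_a)h_a+c_{a_0}q(s_{a_0})h_{a_0}\|\le\alpha$.
\begin{itemize}
\item If $|q(s_{a_0})|$ is not small, the point $c_{a_0}$ must lie within $C\alpha$ of a set parametrized by the projective directions of $q$ in the kernel-like region. Its dimension count should give measure $C\alpha\cdot(b+\alpha)$ in $\mathbb C$. One factor $\alpha$ comes from the normal direction. The factor $b+\alpha$ comes from the fact that the constraint $|q_{K-1}|\le b$ confines the admissible one-parameter family of near-null $q$ to an arc of length $O(b+\alpha)$.
\item If $|q(s_{a_0})|$ is small, we switch to another index $a_0$. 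Uniform Vandermonde invertibility, as used in Lemma~\ref{geom:ellipse}(2), prevents all evaluations from being small simultaneously.
\end{itemize}

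\emph{Main obstacle.} The main obstacle is making this second factor $b+\alpha$ rigorous. Concretely, we must show that the near-kernel of $H_c$ restricted away from $q_{K-1}=0$ contributes only a one-dimensional family. We would first try to establish a lower bound of the form
\[
 \sigma_{\min}(H_c|_{\{q_{K-1}=0\}})\gtrsim \sigma_2\text{-type quantity}.
\]
This amounts to a rank-$(K-1)$ statement on the slab, which a small-ball estimate of the second singular value combined with the first should deliver. The bound $\mathbb P\{\sigma_{K-1}\le\alpha\}\le C\alpha^2$ yields the $\alpha^2$ term. The bound $\mathbb P\{\sigma_K\le\alpha,\ \text{null vector has }|q_{K-1}|\le b\}\le C\alpha b$ yields the $\alpha b$ term. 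For the latter we would use the transversality of the null-vector map, in the spirit of the area-formula computation in the proof of Lemma~\ref{geom:ellipse}(2).
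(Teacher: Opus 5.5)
Your proposal has a genuine gap. The factor $b+\alpha$, which is what separates the lemma from a bound on $\mathbb P\{\sigma_{\min}(H_c)\le\alpha\}$, is never established, and the steps you offer towards it do not close.

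\emph{The case $b+\alpha\ge c$.} Multilinearity of $\det H_c$ in the columns does not by itself give $\mathbb P\{\sigma_{\min}(H_c)\le\alpha\}\le C\alpha$. You would need $\mathbb P\{|\det H_c|\le C\alpha\}\le C\alpha$, a small-ball bound for a degree-$K$ polynomial that a density bound does not supply. Note that $\det H_c$ vanishes to second order wherever $\operatorname{rank}H_c\le K-2$, for instance at real coefficients when $K\ge4$.

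\emph{The singular-value split.} Splitting into $\{\sigma_{K-1}\le\alpha\}$ and $\{\sigma_K\le\alpha,\ \text{null vector in the slab}\}$ does not cover the event. If $\alpha<\sigma_{K-1}\le M\alpha$, a slab vector with $\|H_cq\|\le\alpha$ can make a fixed angle with the bottom singular vector $v_K$. Even for $\sigma_{K-1}\gg\alpha$ you only get $|(v_K)_{K-1}|\le b+O(\alpha/\sigma_{K-1})$, so the second event needs a joint estimate across dyadic scales of $\sigma_{K-1}$. Neither of the two claimed probability bounds is proved.

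\emph{The conditioning on $c_{-a_0}$.} The claimed area $C\alpha(b+\alpha)$ requires the null vector's leading coefficient to move at unit speed along the zero line of $c_{a_0}\mapsto\det H_c$, uniformly in $c_{-a_0}$. This degenerates, for example where that real-linear function has small gradient, and you give no control of it.

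The missing idea is that no singular-value or transversality analysis is needed. A net and a union bound suffice, because the dimensions match exactly.
\begin{enumerate}
\item By compactness and Vandermonde rank there are $c_q,\alpha_*>0$ such that every unit $q$ with $\|H_cq\|\le\alpha_*$ has at least $r$ evaluations $|q(s_j)|\ge c_q$. Otherwise a limit would satisfy $H_cq=0$ with at most $r-1$ nonzero evaluations. These vanish because the corresponding $h_j$ are independent and $|c_j|\ge a_*$, so $q$ would have $K$ roots.
\item Fix $q$ with $|q(s_j)|\ge c_q/2$ on a set $I$ with $|I|=r$. The map $c_I\mapsto H_cq$ is complex affine from $\mathbb C^r$ to $\mathbb C^r$, with real Jacobian $|\det V_I|^2\prod_{j\in I}|q(s_j)|^2\ge c$. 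Fubini with the joint density bound then gives $\mathbb P\{\|H_cq\|\le C\alpha\}\le C\alpha^{2r}=C\alpha^K$.
\item The slab $\{\|q\|=1,\ |q_{K-1}|\le b\}$ has an $\eta$-net, chosen independently of $c$, of cardinality $C\eta^{1-K}(b+\eta)$.
\item Take $\eta=c_1\alpha$ with $\alpha\le\alpha_*/2$. Every qualifying $q$ has a net point with $r$ evaluations at least $c_q/2$ and residual at most $C\alpha$. The union bound gives $C\alpha^{1-K}(b+\alpha)\alpha^K=C\alpha(b+\alpha)$, and larger $\alpha$ are trivial.
\end{enumerate}
Using all $r$ complex coordinates $c_I$ at once, rather than the single coordinate $c_{a_0}$, makes the fixed-$q$ exponent $K$ exceed the entropy exponent $K-1$ by exactly one power of $\alpha$. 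The width of the slab then produces the factor $b+\alpha$ automatically.
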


\begin{proof}
There are $c_q,\alpha_*>0$ such that each unit $q$ with
$\|H_cq\|\le\alpha_*$ has $r$ evaluations $|q(s_j)|\ge c_q$: otherwise
compactness gives a unit $q$ with $H_cq=0$ and at most $r-1$ nonzero
evaluations, which vanish by Vandermonde rank, so $q$ has $K$ roots. If
$|q(s_j)|\ge c_q/2$ on $I$, $|I|=r$, the map $c_I\mapsto H_cq$ is affine
with real Jacobian $|\det V_I|^2\prod_I|q(s_j)|^2\ge c$, so Fubini gives
$\mathbb P_c\{\|H_cq\|\le C\alpha\}\le C\alpha^K$. The slab
$\{|q_{K-1}|\le b\}$ of the sphere has an $\eta$-net of size
$C\eta^{1-K}(b+\eta)$; for $\alpha\le\alpha_*/2$ and $\eta=c_1\alpha$,
each qualifying $q$ has a net point with $r$ evaluations at least
$c_q/2$ and residual at most $C\alpha$, and these net points do not
depend on $c$. The union bound gives $C\alpha(b+\alpha)$.
\end{proof}

For $K=2$, $c_j=e^{i\varphi_j}$ and uniform phases, $q\equiv1$ gives
$\mathbb P\{|H_cq|\le\alpha\}=(2/\pi)\arcsin(\alpha/2)$: fixed moduli do
not suffice for the lemma.

\begin{proof}[Proof of Proposition~\ref{thm:known-count-adaptation}]
(a) The truth lies in the compact class $\mathcal C_W$, so a measurable
minimizer exists \citep{brown1973measurable}, and
Lemma~\ref{spine:risk}(a) on $\Theta^+=\mathcal C_W$ with the
envelope $\bar\varpi_x$ of the local modulus, admissible with $C_0=1$
by Theorem~\ref{risk:transfer}(c) and jointly measurable by
Appendix~\ref{risk:transfer-proof}, gives
$\mathbb E|\widehat\tau_j-\tau_j|^2\le(1+4q_W)\int\bar\varpi_x(\zeta)^2d\pi$.
By~\eqref{knownK:global-bound},
$\bar\varpi_x\le w_x(\delta):=C\min\{D_W,\delta/\sqrt J,(\eps\delta/\sqrt J)^{1/2}\}$,
which is nondecreasing, satisfies $w_x(t\delta)\le tw_x(\delta)$ for
$t\ge1$, and is jointly measurable since $J_{\min}$ is continuous for
small $\eps$ (Lemma~\ref{lem:profiled}). The phase domination and
Proposition~\ref{thm:avg-inverse-info}, uniform in the moduli, give
$\mathbb E[J^{-1}\mid\tau_0,s,a]\le M_h\sum_k\mathbb E_{U_K}J_k^{-1}
\le C\eps^{-(3K-3)}$; by Jensen's inequality,
$\int w_x(\zeta)^2d\pi\le C\min\{D_W^2,\zeta^2\eps^{-(3K-3)},
\eps\zeta\eps^{-(3K-3)/2}\}$, and $\eps^{1-(3K-3)/2}=\eps^2/\zeta_{*,K}$
for odd $K$.

For $K=2r$ put $\bar\Psi(\delta)=\eps^2\delta/\zeta_{*,K}=\delta\eps^{3-3r}$.
If $H(\delta):=\mathbb E[\varpi_x(\delta)^2\mid\tau_0,s]\le C\bar\Psi(\delta)$
for all $\delta>0$, then, since
$\varpi_x(v)\min\{1,\zeta/v\}\le2^{1-k}\varpi_x(2^k\zeta)$ for
$2^{k-1}\zeta<v\le2^k\zeta$ ($k\ge1$) and $\le\varpi_x(\zeta)$ for
$v\le\zeta$,
$\int\bar\varpi_x(\zeta)^2d\pi\le\sum_{k\ge0}4^{1-k}H(2^k\zeta)
\le8C\bar\Psi(\zeta)$; with the bound through $w_x$ for the first two
terms, this is at most $C$ times the minimum of all three terms
in~\eqref{knownK:all-noise-upper}.

To bound $H(\delta)$, as $\varpi_x\le D_W$ assume
$\bar\Psi(\delta)\le\bar\Psi_0$ for a fixed $\bar\Psi_0\le1$, and put
$D_\delta=N_*\max\{\eps,\bar\Psi(\delta)^{1/2}\}$. Then
$D_\delta\ge N_*\max\{\eps,\delta^{1/(K-1)}\}$: if $\bar\Psi<\eps^2$,
then $\delta<\eps^{3r-1}\le\eps^{K-1}$; if $\bar\Psi\ge\eps^2$ and
$r\ge2$,
$\bar\Psi^{1/2}\delta^{-1/(2r-1)}=\eps^{(3-3r)/2}\delta^{(2r-3)/(4r-2)}
\ge\eps^{-r/(2r-1)}\ge1$; if $r=1$, $\bar\Psi^{1/2}=\delta^{1/2}\ge\delta$.
By Lemma~\ref{knownK:global-inverse}, as $N_*\ge T$,
$\varpi_x(\delta)\le C\min\{\eps,\delta/\sqrt J\}+2N_*\bar\Psi(\delta)^{1/2}+2D_*$,
where $D_*$, measurable by compactness, is the largest $D_1$ among
candidates with $\Delta\le\delta$ and $D_1\ge D_\delta$, to which
part~(ii) applies ($D_*=0$ if there are none). The computation of
Section~\ref{sec:randomphase-proof}, which does not depend on the
moduli (Lemma~\ref{geom:ellipse}), gives
$\mathbb E[\min\{\eps^2,\delta^2/J\}\mid\tau_0,s,a]\le CM_h\Psi_K(\eps,\delta)
\le C\bar\Psi(\delta)$. For a candidate counted in $D_*$ let
$\Xi(z)=\eps^{2-K}\{P(\eps z)-\prod_j(\eps z-\eps s_j)\}$, of degree
below $K$. By part~(ii), $\|\Xi\|\ge c\max_j|\Xi(s_j)|\ge cD_1^2$; its
coefficient of $z^{K-1}$, $\eps(\eps\sum_js_j-\sum_kv_k)$, is
$O(\eps D_1)$; and~\eqref{adapt:annihilation} gives
$|\sum_jc_j\Xi(s_j)s_j^l|\le C\eps^{2-K-l}\Delta\le C\bar\Psi(\delta)$
for $l<r$. So $q=\Xi/\|\Xi\|$ has $|q_{K-1}|\le C\eps/D_1$ and
$\|H_cq\|\le C\bar\Psi(\delta)/D_1^2$. In polar coordinates the law of
$c$ given $(\tau_0,s)$ has density at most $M_c=M_aM_h/(2\pi a_*)^K$,
and $D_*$ depends on $x$ only through $c$ given $(\tau_0,s)$; so
Lemma~\ref{knownK:band-small-ball} gives
$\mathbb P(D_*\ge v\mid\tau_0,s)\le C(\bar\Psi\eps v^{-3}+\bar\Psi^2v^{-4})$
for $v\ge D_\delta$, and the layer-cake formula with its boundary term
gives $\mathbb E[D_*^2\mid\tau_0,s]\le C(\bar\Psi\eps/D_\delta
+\bar\Psi^2/D_\delta^2)\le C\bar\Psi(\delta)$.

(b) For each center $t$ in the support of $Q_j$,
Appendix~\ref{geom:source-lower} (Lemma~\ref{geom:fiber-paths}(i) for
odd $K$; (ii) with $\beta=0$ on a product rectangle inside the box for
even $K$) gives paths in the box with occupation at most $B\pi_{0,j}$
and $\mathcal L_\Upsilon(\zeta)\ge c\Psi_K(\eps,\zeta)$, uniformly in
$t$, since Assumption~\ref{ass:sampling} holds at every center of $W$
with uniform constants (the window paragraph); the nodes stay in $W$
when $T\eps<\delta_W$. Mixing over $Q_j$ gives occupation at most
$(B/p_0)\pi$, and~\eqref{spine:two-integrals} with $F=\mu$ applies.
\end{proof}

\emph{From a window Gram bound to finite comparison.}\label{lb:window-lemma}
We keep the notation of Assumption~\ref{scope:AW} and
Definition~\ref{scope:proj}, with $\Lambda$ a Hermite functional on $W$
of total multiplicity at most $N$, and use the
Hermite--Genocchi formula \citep[Eq.~(52)]{deboor2005}
\begin{equation}
 v_j(t)=\int_{\Delta_{j-1}}\partial_\tau^{j-1}\psi
        \Bigl(\cdot,\textstyle\sum_lu_lt_l\Bigr)\dd u,\qquad
 \operatorname{vol}(\Delta_{j-1})=\frac1{(j-1)!},
 \label{scope:hg}
\end{equation}
valid for distinct and repeated knots; the argument
$\sum_lu_lt_l$ is a convex combination of knots and lies in $W$.

\begin{lemma}[Local and window Gram conditions]
\label{scope:small-window}
(a)~Assumption~\ref{scope:AW} implies Assumption~\ref{ass:sampling} at
every center $\tau\in W$, for every $K\le K_{\max}$, with
$\lambda_0=\lambda_G$, the same $A_\psi$ and
$r_0=\operatorname{dist}(W,\partial W^+)$.
(b)~Suppose that Assumption~\ref{ass:sampling} holds at the center
$\bar\tau$, with constants $r_0,A_\psi,\lambda_0$. There is $D_0>0$,
depending only on $K$ and these constants, such that $D_W\le D_0$ implies
Assumption~\ref{scope:AW} with $K_{\max}=K$,
$W^+=[\bar\tau-3r_0/4,\bar\tau+3r_0/4]$, the same $A_\psi$ and
$\lambda_G=\lambda_0/2$. Moreover, the orthogonal projection onto the
span of the first $2K$ Taylor coefficients at $\bar\tau$ is a
moment-norming projection of rank $2K$, with constants depending only on
$K$ and the constants of Assumption~\ref{ass:sampling}. One can take
$D_0=\min\{r_0/2,\ (1-2^{-1/2})\sqrt{\lambda_0}/A_\psi,\ 2\eps_0\}$, with
$\eps_0$ from Lemma~\ref{lem:bridge} for $T=1$ and $m=2K$.
\end{lemma}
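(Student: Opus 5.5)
Part~(a) is immediate. For a center $\tau\in W$ and $K\le K_{\max}$, the Taylor coefficients at $\tau$ are divided differences with all knots at $\tau$:
\[
 g_r=\partial_\tau^r\psi(\cdot,\tau)/r!=v_{r+1}(\tau,\ldots,\tau).
\]
The confluent list $t=(\tau,\ldots,\tau)\in W^N$ is therefore admissible in Assumption~\ref{scope:AW}. Its Gram matrix $G_W(t;\nu)$ coincides with the full Taylor Gram matrix through order $N=2K_{\max}$. Hence its leading $2K\times2K$ principal submatrix, which is $\Gamma_{2K}(\nu)$, is at least $\lambda_GI$ by eigenvalue interlacing. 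The derivative bounds transfer verbatim on $\mathcal J=(\tau-r_0,\tau+r_0)\subset W^+$ with $r_0=\operatorname{dist}(W,\partial W^+)$.

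For part~(b) I will perturb from the confluent list at $\bar\tau$. By the Hermite--Genocchi formula~\eqref{scope:hg}, each $v_j(t)$ is an average over a simplex of volume $1/(j-1)!$ of $\partial_\tau^{j-1}\psi$ at points of $W^+$. Comparing with $g_{j-1}$, the value at the confluent list, the mean value theorem in $L^2(\nu)$ with the bound $A_\psi$ on the next derivative gives
\[
 \|v_j(t)-g_{j-1}\|_\nu\le A_\psi D_W,
\]
uniformly for $j\le2K$ and $t\in W^{2K}$. I take $W^+=[\bar\tau-3r_0/4,\bar\tau+3r_0/4]$ and $D_W\le r_0/2$, so that $W^+\subset\mathcal J$. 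One point needs care: the order-$2K$ derivative is required, but the assumption only bounds derivatives through order $2K$. For $j=2K$ I will therefore use only continuity in the Genocchi integral, or else bound the perturbation of the top divided difference by $A_\psi$ times the simplex volume. This is where $D_0$ enters. The Gram perturbation then follows from the factorization $G=V^*V$. Writing $\|\cdot\|$ for the operator norm of the synthesis map $V$, the smallest singular value satisfies
\[
 \sigma_{\min}(V_t)\ge\sqrt{\lambda_0}-\|V_t-V_{\bar\tau}\|\ge\sqrt{\lambda_0}-A_\psi D_W\cdot O(1).
\]
The choice $D_W\le(1-2^{-1/2})\sqrt{\lambda_0}/A_\psi$ yields $\sigma_{\min}\ge\sqrt{\lambda_0/2}$, that is, $\lambda_G=\lambda_0/2$. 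I expect the main bookkeeping obstacle to be getting the constant exactly as stated. The Frobenius-versus-operator norm factor and the factorial weights must line up with $(1-2^{-1/2})$, and if they do not I will absorb a $K$-dependent factor into $D_0$.

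For the moment-norming claim I will invoke Lemma~\ref{lem:bridge} with $m=2K$, $T=1$ and $\tau_0=\bar\tau$. For a Hermite functional $\Lambda$ of multiplicity at most $2K$ with knots in $W$, I write the knots as $\bar\tau+\eps t_i$ with $\eps=D_W/2$ and $|t_i|\le1$. This is admissible once $D_W\le2\eps_0$, which is the third term in $D_0$. Lemma~\ref{lem:bridge} then gives
\[
 C^{-1}|\mathsf v|\le\|\Pi_{2K}\Lambda\psi\|_\nu\le\|\Lambda\psi\|_\nu\le C|\mathsf v|,
\]
where $\mathsf v_r=\Lambda[(\tau-\bar\tau)^r]$ are exactly the entries of $m(\Lambda)$ for $r<2K=N$. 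The normalization $f^{(l)}/l!$ in Definition~\ref{scope:proj} differs from Lemma~\ref{lem:bridge}'s admissible functionals only by fixed factorials, and these are absorbed into the coefficients. The projection $P_*=\Pi_{2K}$ has rank $2K$ and depends only on $\psi$, $\nu$ and $\bar\tau$. Its constants depend only on $K$ and the constants of Assumption~\ref{ass:sampling}, which completes~(b).
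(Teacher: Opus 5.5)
Your route is the paper's. In (a), confluent knots give $G_W=\Gamma_N$ and its leading $2K\times2K$ block is $\Gamma_{2K}$. In (b), you perturb by Hermite--Genocchi from the confluent list at $\bar\tau$, bound the smallest singular value of the synthesis map, and apply Lemma~\ref{lem:bridge} with center $\bar\tau$, scale $D_W/2$, $T=1$ and $m=2K$ for the moment-norming claim. Two points in (b) need fixing, and the second one matters for the explicit $D_0$.

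First, there is no difficulty at $j=2K$. The function $v_{2K}$ involves only $\partial_\tau^{2K-1}\psi$. Its Lipschitz constant is controlled by $\sup\|\partial_\tau^{2K}\psi\|_\nu\le A_\psi$, and Assumption~\ref{ass:sampling} does supply this: it bounds all orders $0\le r\le2K$ on $\mathcal J\supset W^+$. Drop both proposed work-arounds. Bounding $\|v_{2K}(t)-g_{2K-1}\|_\nu$ by $A_\psi$ times the simplex volume gives a quantity that does not vanish as $D_W\to0$, which would break the perturbation argument. Mere continuity gives no $D_0$ depending only on the stated constants.

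Second, your per-column bound $\|v_j(t)-g_{j-1}\|_\nu\le A_\psi D_W$ is too crude to reach the stated $D_0$. Cauchy--Schwarz would then give $\sqrt{2K}\,A_\psi D_W$ and force a $K$-dependent factor, so the clause ``one can take $D_0=\ldots$'' would not be proved. Instead keep the simplex volume $1/(j-1)!$ and the half-width $\sup_{s\in W}|s-\bar\tau|=D_W/2$. Each column error is then at most $A_\psi D_W/\{2(j-1)!\}$. Since $\sum_{j\ge1}((j-1)!)^{-2}<4$, Cauchy--Schwarz gives $\|\sum_ja_j(v_j(t)-g_{j-1})\|_\nu\le A_\psi D_W\|a\|_2$. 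Hence $\sigma_{\min}\ge\sqrt{\lambda_0}-A_\psi D_W\ge\sqrt{\lambda_0/2}$ exactly when $D_W\le(1-2^{-1/2})\sqrt{\lambda_0}/A_\psi$, with nothing absorbed. This is how the paper obtains the explicit constant.

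The rest of your argument is fine as written: the containment $W^+\subset\mathcal J$ for $D_W\le r_0/2$, the factorial renormalization of the functionals in Definition~\ref{scope:proj}, and the rank-$2K$ projection $\Pi_{2K}$.
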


\begin{proof}
(a)~If all knots equal $\tau$, then
$v_j(\tau,\ldots,\tau)=\partial_\tau^{j-1}\psi(\cdot,\tau)/(j-1)!$ is the
Taylor coefficient $g_{j-1}$ at the center $\tau$, so
$G_W(\tau,\ldots,\tau;\nu)=\Gamma_N(\nu)$ at that center. Its leading
principal $2K\times2K$ submatrix is $\Gamma_{2K}(\nu)\succeq\lambda_GI_{2K}$.
Part~1 of Assumption~\ref{ass:sampling} holds on
$(\tau-r_0,\tau+r_0)\subset W^+$.

(b)~Let $D_W\le r_0/2$, so $W\subset[\bar\tau-r_0/4,\bar\tau+r_0/4]$ and
$W^+\subset(\bar\tau-r_0,\bar\tau+r_0)$. Write $g_r$ for the Taylor
coefficients at $\bar\tau$. By~\eqref{scope:hg} and
$\|\partial_\tau^{j}\psi\|_\nu\le A_\psi$ for $j\le2K$,
$\|v_j(t)-g_{j-1}\|_\nu\le A_\psi\sup_{s\in W}|s-\bar\tau|/(j-1)!
\le A_\psi D_W/(2(j-1)!)$ for $j\le2K$. Since $\sum_{j\ge1}((j-1)!)^{-2}<4$, Cauchy--Schwarz gives, for
$a\in\mathbb C^{2K}$,
$\|\sum_ja_jv_j(t)\|_\nu\ge\|\sum_ja_jg_{j-1}\|_\nu-A_\psi D_W\|a\|_2
\ge(\sqrt{\lambda_0}-A_\psi D_W)\|a\|_2$. Hence $G_W(t;\nu)\succeq
(\lambda_0/2)I_{2K}$ for
$D_W\le(1-2^{-1/2})\sqrt{\lambda_0}/A_\psi$. Finally,
Lemma~\ref{lem:bridge} with center $\bar\tau$, scale $D_W/2$, $T=1$ and
$m=2K$ applies to every Hermite functional on $W$ of total multiplicity
at most $2K$ once $D_W/2$ is below its $\eps_0$; there
$\mathsf v_r=\Lambda[(\tau-\bar\tau)^r]=m_r(\Lambda)$, and~\eqref{eq:bridge}
is the comparison inequality of Definition~\ref{scope:proj}.
\end{proof}

\begin{lemma}[A projection for the whole search window]
\label{scope:gw-bridge}
Under Assumption~\ref{scope:AW}, there is a moment-norming projection
of rank at most
\[
 q_W=\binom{M+N+1}{N}-1,\qquad
 M=\left\lceil\frac{2D_W A_\psi\sqrt N}{\sqrt{\lambda_G}}\right\rceil .
\]
The projection uses only $\psi,\nu,W$ and the bounds $A_\psi,\lambda_G$.
Its comparison constants depend only on $N,D_W,A_\psi,\lambda_G$;
they and the rank bound are uniform over the design family, including
all knot collisions, and independent of $n,\eps$ and the noise.
\end{lemma}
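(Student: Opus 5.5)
The plan is to build $\Pi_W$ from a finite polynomial approximation of the curve $\tau\mapsto\psi(\cdot,\tau)$ on $W$, in Newton form over a fixed grid. The range of $\Pi_W$ will be the span of finitely many fixed divided differences of $\psi$. I would split $W$ into $M$ subintervals of length $D_W/M$ and take the grid $\{\bar\tau-D_W/2+kD_W/M:0\le k\le M\}$. On this grid I would consider the divided differences of $\psi$ of order $<N$ over all multisets of at most $N$ grid points. Up to the constant function, the number of such multisets is $\binom{M+N+1}{N}-1$, and this gives the rank bound $q_W$. $\Pi_W$ is the complex orthogonal projection onto their span. It depends only on $\psi,\nu,W$ and, through $M$, on $A_\psi,\lambda_G$, and it is chosen without the data.

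The upper inequality $\|\Lambda\psi\|_\nu\le C_\Pi\|m(\Lambda)\|_2$ follows as in Lemma~\ref{lem:bridge}, with $\eps$ replaced by $D_W$. Write $\Lambda\psi=\sum_{j\le h}\Lambda[\prod_{a<j}(\tau-t_a)]\,v_j(t)$ using the Newton expansion on the node list $t$ of $\Lambda$ (total multiplicity $h\le N$, padded to $N$). Each $v_j$ has norm at most $A_\psi/(j-1)!$ by \eqref{scope:hg}. The Newton coefficients are obtained from $m(\Lambda)$ by a unit triangular change of basis with entries polynomial in $t-\bar\tau$, hence bounded by $D_W$. So they are comparable to $\|m(\Lambda)\|_2$ with constants depending on $N,D_W$. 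The middle inequality holds because $\Pi_W$ is an orthogonal projection.

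The lower inequality is the main step. By Assumption~\ref{scope:AW}, $\|\sum_ja_jv_j(t)\|_\nu\ge\sqrt{\lambda_G}\|a\|_2$, so $\|\Lambda\psi\|_\nu\ge c\|m(\Lambda)\|_2$. It then suffices to show that each $v_j(t)$ lies within $\tfrac12\sqrt{\lambda_G/N}\,$ of $\operatorname{range}\Pi_W$, uniformly in the knots, collisions included. Cauchy--Schwarz then gives $\|(I-\Pi_W)\sum a_jv_j\|\le\tfrac12\sqrt{\lambda_G}\|a\|$, and hence $\|\Pi_W\Lambda\psi\|\ge\tfrac12\sqrt{\lambda_G}\|a\|$.

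For the approximation I would round each knot $t_l$ to the nearest grid point $\hat t_l$, at distance at most $D_W/(2M)$. The rounded divided difference $v_j(\hat t)$ lies in the range of $\Pi_W$. By the Hermite--Genocchi formula \eqref{scope:hg}, $v_j(t)-v_j(\hat t)$ is an integral over $\Delta_{j-1}$ of differences of $\partial_\tau^{j-1}\psi$ at convex combinations of the knots. Those combinations move by at most $D_W/(2M)$, so the mean value theorem with $\|\partial^{j}_\tau\psi\|_\nu\le A_\psi$ bounds the difference by $A_\psi D_W/(2M(j-1)!)$. This needs $j\le N-1$ so that the derivative order $j$ is at most $N$; order $N$ is available on $W^+$. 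With $M\ge 2D_WA_\psi\sqrt N/\sqrt{\lambda_G}$ this is at most $\sqrt{\lambda_G}/(4\sqrt N)$, as required.

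The obstacle I expect is uniformity at knot collisions, and in particular the fact that rounding can merge or split multiplicities. The Hermite--Genocchi representation avoids this because it is continuous in the knots, including at repeated ones. The constants then depend only on $N,D_W,A_\psi,\lambda_G$.
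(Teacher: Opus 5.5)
Your proposal is correct and follows essentially the same route as the paper: a grid of $M+1$ points on $W$, $\Pi_W$ the projection onto the divided differences of $\psi$ over grid multisets of size at most $N$, the unit-triangular Newton-basis comparison $\|\alpha\|_2\asymp\|m(\Lambda)\|_2$, Assumption~\ref{scope:AW} (after padding the knot list) for $\|\Lambda\psi\|_\nu\ge\sqrt{\lambda_G}\|\alpha\|_2$, and the Hermite--Genocchi Lipschitz bound after rounding the knots to the grid. The differences are cosmetic: you round to the nearest grid point rather than down, which halves the mesh error, and you use the triangle inequality rather than Pythagoras for the lower constant. One wording slip: the rounding estimate is needed for every $j\le N$, not only $j\le N-1$. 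It holds for all such $j$ because it uses $\partial_\tau^j\psi$ with $j\le N$, which Assumption~\ref{scope:AW} bounds on $W^+$.
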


\begin{proof}
List the support of a Hermite functional $\Lambda$ with its
multiplicities as $t_1\le\cdots\le t_h$, $h\le N$, and write
$\Lambda=\sum_{j\le h}\alpha_j[t_1,\ldots,t_j]$ in the Newton basis.
The matrix of these functionals on the monomials
$(\tau-\bar\tau)^k$, $k<h$, is unit triangular: its $(k,j)$ entry is
zero for $k<j-1$, one for $k=j-1$, and otherwise a complete homogeneous
polynomial in $t_1-\bar\tau,\ldots,t_j-\bar\tau$. Its entries and its
inverse are uniformly bounded on $W$; the remaining rows $h\le k<N$
are bounded too. Consequently, including collisions,
\begin{equation}
 \begin{aligned}
 c_{N,D_W}\|\alpha\|_2\le\|m(\Lambda)\|_2
 &\le C_{N,D_W}\|\alpha\|_2,\\
 \sqrt{\lambda_G}\|\alpha\|_2\le\|\Lambda\psi\|_\nu
 &\le2A_\psi\|\alpha\|_2 .
 \end{aligned}
 \label{scope:window-newton}
\end{equation}
The second pair follows by padding the knot list to length $N$,
applying Assumption~\ref{scope:AW}, and using~\eqref{scope:hg}.

Take a grid of $M+1$ equally spaced points on $W$, and project onto
the span of $v_j(s)$ for all ordered grid lists $s$ of length $j\le N$.
Formula~\eqref{scope:hg} gives
$\|v_j(t)-v_j(s)\|_\nu\le A_\psi\|t-s\|_\infty/(j-1)!$.
Rounding knots down preserves their order, so, with mesh
$d=D_W/M\le\sqrt{\lambda_G}/(2A_\psi\sqrt N)$,
$\|(I-\Pi_W)\Lambda\psi\|_\nu\le A_\psi d\sqrt N\|\alpha\|_2
\le\frac12\|\Lambda\psi\|_\nu$. Orthogonality and~\eqref{scope:window-newton} give the required lower
comparison, with $c_\Pi=\sqrt{3\lambda_G/4}/C_{N,D_W}$; the upper
comparison has $C_\Pi=2A_\psi/c_{N,D_W}$. Counting ordered grid lists
gives $\sum_{j=1}^N\binom{M+j}{j}=q_W$.
\end{proof}

The rank $q_W$ need not be small; it enters the projected-noise
constant and, through $\mathsf t$, the count-selection range of
Proposition~\ref{scope:unknownK}.

\begin{lemma}[A regular block without aliases on $W$]
\label{scope:regular-block}
For $\psi(w,\tau)=e^{i\pi w\tau}$, let
$\nu_0=N^{-1}\sum_{i=0}^{N-1}\delta_{w_0+i\mathsf h}$, where
$\mathsf h>0$ and $\mathsf hD_W<2$. Then
Assumption~\ref{scope:AW} holds for this design. It holds uniformly for
families supported on a fixed band $[-B,B]$ with
$\nu\ge\sum_k\beta_k\nu_0^{(k)}$, where each $\nu_0^{(k)}$ is a block as
above with spacing $\mathsf h_k\in[h_*,(2-\delta)/D_W]$ and
$\sum_k\beta_k\ge\beta_*>0$; $\lambda_G$ then depends only on
$N,h_*,\delta,B,\beta_*$.
Here $N=2K_{\max}$, so the same design covers every $K\le K_{\max}$.
\end{lemma}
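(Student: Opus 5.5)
The plan is to verify the two parts of Assumption~\ref{scope:AW} for $\nu_0$ directly, and to reduce the Gram lower bound to a statement about exponential polynomials. The regularity part is immediate. On any compact $W^+\supset W$ inside the band, $|\partial_\tau^r e^{i\pi w\tau}|=(\pi|w|)^r$. Hence $A_\psi=\max\{1,(\pi B)^N\}$ serves for every design supported in $[-B,B]$. For a single block, take $B=|w_0|+N\mathsf h$.

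For the Gram bound, fix an ordered list $t\in W^N$ and $a\in\mathbb C^N$, and put $f(w)=\sum_ja_jv_j(t)(w)$. By the Hermite--Genocchi formula~\eqref{scope:hg}, $f(w)=\Lambda_a[\tau\mapsto e^{i\pi w\tau}]$ for the Hermite functional $\Lambda_a=\sum_ja_j[t_1,\ldots,t_j]$. We therefore need $\frac1N\sum_{i<N}|f(w_0+i\mathsf h)|^2\ge\lambda_G\|a\|^2$.

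The key step is to use the block structure. With $z_l=e^{i\pi\mathsf h t_l}$, the sequence $i\mapsto f(w_0+i\mathsf h)$ is an exponential polynomial in $i$. Its modes are the $z_l$, and it has polynomial multiplicities at repeated knots. Concretely, it is a confluent Vandermonde system $\mathbf V(z)\,\beta$ of size $N\times N$, where $\beta$ is the vector of divided differences of $\tau\mapsto e^{i\pi w_0\tau}$ weighted by $a$. I would write $f(w_0+i\mathsf h)=\Lambda_a[e^{i\pi w_0\tau}e^{i\pi i\mathsf h\tau}]$ and expand via Leibniz/Newton to see this. The condition $\mathsf hD_W<2$ says that $\tau\mapsto e^{i\pi\mathsf h\tau}$ is injective on $W$. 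Thus distinct knots give distinct $z_l$, so there is no aliasing, and the confluent Vandermonde matrix is invertible for every knot configuration, including collisions. The map $a\mapsto$ sample vector is the composition of three linear maps:
\begin{itemize}
\item the unit-triangular Newton/divided-difference change (bounded with bounded inverse on $W$, as in~\eqref{scope:window-newton});
\item the multiplication by $e^{i\pi w_0\tau}$ (invertible on jets, bounded since $|w_0|\le B$);
\item the divided-difference Vandermonde in the variables $e^{i\pi\mathsf h t}$.
\end{itemize}
The last map is best written in Newton form, $[t_1,\ldots,t_j]\,e^{i\pi i\mathsf h\tau}$, whose normalized version has a determinant expressible through $\prod_{k<l}(z_l-z_k)/(t_l-t_k)$. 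That quotient is bounded below by $\min|\partial_\tau e^{i\pi\mathsf h\tau}|$-type estimates along chords, namely $|\sin(\pi\mathsf h(t_l-t_k)/2)|/|t_l-t_k|\ge c(\mathsf h,D_W)>0$ precisely because $\mathsf h|t_l-t_k|\le\mathsf hD_W<2$. Together with bounded entries, this gives a uniform lower singular value on the compact set $W^N$.

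The main obstacle is this uniform nondegeneracy at colliding knots. I would handle it by working throughout with divided differences in the variable $\tau$, never in $z$. All three factors are then continuous in $t\in W^N$, including at coincidences, via~\eqref{scope:hg}. The smallest singular value is then a continuous, pointwise positive function on the compact set $W^N$, hence bounded below.

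For the family statement, the spacing ranges over the compact $[h_*,(2-\delta)/D_W]$ and the offset $w_0$ over a compact subset of $[-B,B]$. The same continuity–compactness argument, now over $(t,\mathsf h_k,w_{0,k})$, gives a block constant $\lambda_1(N,h_*,\delta,B)$. Then $\nu\ge\sum_k\beta_k\nu_0^{(k)}$ yields $G_W(t;\nu)\succeq\beta_*\lambda_1I_N$.
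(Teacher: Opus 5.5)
Your proposal is correct and is essentially the paper's argument. You write the block Gram matrix as $N^{-1}V^*V$, obtain $|\det V|=\prod_{a<b}|z_b-z_a|/|t_b-t_a|=\prod_{a<b}\pi\mathsf h\operatorname{sinc}(\pi\mathsf h(t_b-t_a)/2)$ at distinct knots, extend it to collisions by continuity of the Hermite--Genocchi entries, bound the entries, and use that $G_W(t;\nu)$ is linear and monotone in $\nu$ for the family.

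The paper differs in three small ways:
\begin{itemize}
\item It converts the determinant and entry bounds directly into $\lambda_{\min}(V^*V)\ge|\det V|^2/\|V\|^{2(N-1)}$, rather than appealing to compactness.
\item It notes that $\mathsf h_kD_W\le2-\delta$ makes each factor at least $\pi h_*\operatorname{sinc}(\pi(2-\delta)/2)$. This is what makes $\lambda_G$ free of $D_W$ and $W$. Your bound $c(\mathsf h,D_W)$ and your compactness argument over $t\in W^N$ should be replaced by this bound, or the compactness should be set up over knot differences.
\item At distinct knots the $e^{i\pi w_0\tau}$ factor contributes only unit column factors, so your separate Leibniz factorization is unnecessary.
\end{itemize}
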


\begin{proof}
For distinct knots, the matrix $V_{ij}=v_j(w_0+i\mathsf h;t)$ is obtained
from the evaluation matrix by the triangular divided-difference
transform. The evaluation matrix is Vandermonde in
$e^{i\pi\mathsf h t_l}$, up to unit column factors. Thus, with
$\operatorname{sinc}z=\sin z/z$ and $\operatorname{sinc}0=1$,
\[
 |\det V(t)|=\prod_{a<b}\pi\mathsf h\operatorname{sinc}
       \bigl(\pi\mathsf h(t_b-t_a)/2\bigr)>0 .
\]
Continuity extends this identity to
collisions. All factors stay positive on the compact ordered knot
simplex, and, as $|\det V|$ is the product of the singular values of
$V$, $\lambda_{\min}(V^*V)\ge|\det V|^2/\|V\|^{2(N-1)}$; so
$G_W=N^{-1}V^*V$ has a positive uniform minimum eigenvalue.
For a block with spacing in $[h_*,(2-\delta)/D_W]$ inside $[-B,B]$ each
factor is at least $\pi h_*\operatorname{sinc}(\pi(2-\delta)/2)$, and the
entries of $V$ are bounded by~\eqref{scope:hg}, which gives a common
single-block bound $\lambda_*$. Since $G_W(t;\nu)$ is linear and
monotone in $\nu$,
$G_W(t;\nu)\succeq\sum_k\beta_kG_W(t;\nu_0^{(k)})\succeq\beta_*\lambda_*I_N$.
Finally $\|\partial_\tau^r\psi\|_\nu\le\max\{1,(\pi B)^N\}$ for
$r\le N$ supplies the common regularity bound.
\end{proof}

For the equispaced design $w_k=k/n$ on $[-1/2,1/2]$, fix
$\mathsf h=\min\{1/(2N),(2-\delta)/D_W\}$ and $m=\lfloor\mathsf hn\rfloor$:
the arithmetic progressions of step $m/n$, cut into consecutive runs of
$N$ points, are disjoint blocks carrying mass at least
$1-Nm/n\ge1/2$, so the condition holds for all large $n$; a single
$N$-point block, of mass $N/n$, does not suffice. The window
restriction is needed: on a grid of step $\mathsf h$, locations $\tau$
and $\tau+2/\mathsf h$ differ only by a common phase.

\section{Proof of Proposition~\ref{scope:unknownK}}
\label{prf:scope-unknownK}
This appendix proves Proposition~\ref{scope:unknownK} on source-count selection.

Write $\zeta=\zeta_n$.

\begin{lemma}[Separation from a smaller count]
\label{scope:iso}
For a $K$-source truth, $2\le K\le K_{\max}$ and $0<\eps\le1$, every
signal $\mu'$ on fewer than $K$ distinct points of $W$, with arbitrary
complex coefficients, satisfies
\[
 \|\Pi_W(\mu_x-\mu')\|_{\nu_n}\ge c_K\eps^{2K-2}.
\]
The constant is uniform over the truths and candidate signals.
\end{lemma}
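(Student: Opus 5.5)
The plan is to reduce the projected distance to a moment distance and then use an annihilating polynomial, as in the proofs of Proposition~\ref{thm:inverse} and Lemma~\ref{knownK:global-inverse}.

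\textbf{Step 1: reduction to moments.} Let $\mu'=\sum_{k<K}c'_k\psi(\cdot,u_k)$ with $m<K$ distinct points $u_k\in W$. The difference $\mu_x-\mu'$ is a Hermite functional on at most $2K-1\le N$ points of $W$. Definition~\ref{scope:proj} and the binomial change of center to $\tau_0$, as in~\eqref{adapt:window}, give
\[
 \|\Pi_W(\mu_x-\mu')\|_{\nu_n}\ge c\max_{l<2K}|\bar m_l(x)-\bar m_l(\mu')|=:c\Delta .
\]
It therefore suffices to show $\Delta\ge c\eps^{2K-2}$. If the coefficients $c'$ are large, the bound is immediate from the lower moments. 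More precisely, the Vandermonde or Newton comparison on at most $2K$ points controls $|c'|$ by $\Delta+C$, so we may assume $|c'|\le C$.

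\textbf{Step 2: annihilating polynomial.} Put $v_k=u_k-\tau_0$ and pad $P(z)=\prod_{k\le m}(z-v_k)$ to degree exactly $K-1$ by multiplying by $z^{K-1-m}$. Choose the padding roots inside the cluster scale, say at $0$. Then $\sum_k c'_kP(v_k)v_k^q=0$ for all $q$. With normalized coefficients this gives, for $q\le K$,
\[
 \Bigl|\sum_jc_jP(\eps s_j)(\eps s_j)^q\Bigr|\le C\Delta .
\]
The coefficients of $P$ are bounded because $W$ is bounded. The case $q=0,\dots,K$ uses moments of order at most $2K-1$.

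\textbf{Step 3: lower bound.} The key is that $P$ has only $K-1$ roots, so it cannot vanish at all $K$ nodes $\eps s_j$. Split on the distance of the roots from the cluster.
\begin{itemize}
\item If all $|v_k|\le N_*\eps$, rescale and write $P(\eps z)=\eps^{K-1}\tilde P(z)$ with $\tilde P$ monic of degree $K-1$ and roots bounded by $N_*$. Consider the vectors $w_q=\sum_jc_j\tilde P(s_j)s_j^q$ for $q<K$. The $K\times K$ weighted Vandermonde system has $|c_j|\ge a_*$ and separated nodes. Lagrange interpolation gives $\max_j|\tilde P(s_j)|\ge c(\rho)>0$. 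Hence $\max_q|w_q|\ge c$, and so
\[
 \max_{q<K}\Bigl|\sum_jc_jP(\eps s_j)(\eps s_j)^q\Bigr|\ge c\,\eps^{K-1}\eps^{K-1}=c\eps^{2K-2},
\]
once the powers $\eps^q$ are taken into account.
\item If some root is far, $|v_k|\ge N_*\eps$, factor out the far roots. Each far root contributes a factor of order $|v_k|\gtrsim\eps$ at every node. This again gives $\max_j|P(\eps s_j)|\gtrsim\eps^{K-1}$ with a uniform constant, and the same Vandermonde argument applies. Alternatively, argue as in Lemma~\ref{knownK:global-inverse}(ii) using $R=P/(z-v_m)$.
\end{itemize}
Combining the two cases gives $\Delta\ge c\eps^{2K-2}$.

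\textbf{Main obstacle.} The exponent bookkeeping in Step 3 is the step I expect to be hardest. I need to confirm that the weights $\eps^q$ with $q\le K-1$ together with $|P|\sim\eps^{K-1}$ give exactly $\eps^{2K-2}$, and not something worse such as $\eps^{2K-1}$. I also need uniformity when far roots and near roots mix, and when $m<K-1$ with padded roots.

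My first attempt is to use only $q=0$ in the first case: $\sum_jc_j\tilde P(s_j)$. This can cancel for special phases, so I will instead take the maximum over $q<K$. Inverting the Vandermonde then needs the factor $\eps^{-(K-1)}$ only when $q$ is large. That loss would give $\eps^{2K-2}$ if I use $q\le K-1$ with scaled moments $\eps^{K-1+q}$.

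If this fails, I will fall back on compactness. The rescaled residual is a continuous positive function on a compact set of normalized configurations, and it is nonzero because a $K$-atom measure with nonzero weights cannot equal one with fewer atoms, by Vandermonde rank on $2K-1$ moments. This yields the uniform constant $c_K$.
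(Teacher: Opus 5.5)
Your proof is correct. It chooses the test functional differently from the paper.

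The paper uses pigeonhole. Some interval of radius $\rho\eps/3$ about a true node $\tau_j$ contains none of the $K'<K$ candidate nodes. The single polynomial $p(t)=\prod_{i\ne j}(t-\tau_i)\prod_{l}(t-u_l)$, of degree at most $2K-2$, then kills every atom except the $j$th. Hence $|\Lambda p|=|c_jp(\tau_j)|\ge a_*\rho^{K-1}(\rho/3)^{K'}\eps^{2K-2}$, and bounded coefficients of $p$ together with Definition~\ref{scope:proj} finish the proof.

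You instead annihilate only the candidate, test against the $K$ shifted moment combinations, and invert the scaled Vandermonde. One factor $\eps^{K-1}$ comes from $P(\eps z)=\eps^{K-1}\tilde P(z)$. The other comes from $\eps^{K-1+q}\ge\eps^{2K-2}$ for $q\le K-1$ and $\eps\le1$, which already settles your ``main obstacle.''

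The two arguments are close relatives. Inverting the Vandermonde amounts to multiplying $P$ by the Lagrange factors $\prod_{i\ne j}(t-\tau_i)$, which gives the paper's polynomial up to your padding factor. The difference is that you find a good index through the leading-coefficient identity $\sum_j\tilde P(s_j)/\omega'(s_j)=1$ rather than by pigeonhole. The paper's route is shorter and gives an explicit constant. Yours reuses the machinery of \eqref{adapt:annihilation} and Lemma~\ref{knownK:global-inverse} and needs no interval bookkeeping.

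Two cleanups. First, that identity gives $\max_j|\tilde P(s_j)|\ge\rho^{K-1}/K$ for every monic $\tilde P$ of degree $K-1$, whatever its roots. So your first bullet already covers all candidates, and the far-root case, the constant $N_*$ and the compactness fallback are all unnecessary.

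Second, the aside in Step~1 that the moments bound $|c'|$ by $\Delta+C$ is false for $K\ge3$. For example, $M\{\psi(\cdot,u_1)-\psi(\cdot,u_2)\}$ with $|u_1-u_2|=1/M$ has bounded moments, because \eqref{scope:window-newton} controls Newton coefficients rather than atomic ones. Nothing in your argument uses this aside, since $P$ annihilates $\mu'$ exactly for arbitrary coefficients, so simply delete it.
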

\begin{proof}
Let $u_1,\ldots,u_{K'}$, $K'<K$, be the distinct candidate nodes. One
of the disjoint intervals of radius $\rho\eps/3$ about the true nodes,
say that about $\tau_j$, contains no $u_l$. The polynomial
$p(t)=\prod_{i\ne j}(t-\tau_i)\prod_{l\le K'}(t-u_l)$
has degree at most $2K-2<2K_{\max}$. For the difference $\Lambda$ of
the two atomic functionals,
$|\Lambda p|=|c_jp(\tau_j)|\ge a_*\rho^{K-1}(\rho/3)^{K'}\eps^{K-1+K'}
\ge c\eps^{2K-2}$. Its coefficients in powers of $t-\bar\tau$ are bounded because its
roots lie in fixed $W$. Hence
$|\Lambda p|\le C\|m(\Lambda)\|\le
C c_\Pi^{-1}\|\Pi_W\Lambda\psi\|_{\nu_n}$ by
Definition~\ref{scope:proj}, proving the claim.
\end{proof}

\begin{proof}[Proof of Proposition~\ref{scope:unknownK}]
Choose $c_{\rm sel}<\min_{2\le k\le K_{\max}}c_k$.
In an orthonormal basis of the range of $\Pi_W$, the projected noise
has $q_\Pi$ independent $\CN(0,\zeta^2)$ coordinates. For
$Z_\Pi=\|\Pi_W\eta\|_{\nu_n}$, the exponential moment
$\mathbb E e^{Z_\Pi^2/(2\zeta^2)}=2^{q_\Pi}$ and Markov's inequality give
$\mathbb P(Z_\Pi>\mathsf t)\le2^{q_\Pi}e^{-\mathsf t^2/(2\bar\zeta^2)}=\alpha$.
On $\mathcal E=\{Z_\Pi\le\mathsf t\}$ the true $K$-source model is
feasible, since $r_K\le Z_\Pi$. The selected $m$ is therefore at most
$K$, and its fit satisfies
$\|\Pi_W(\mu_{\widehat x^{(m)}}-\mu_x)\|\le r_m+Z_\Pi
\le2\mathsf t<c_K\eps^{2K-2}$.
Lemma~\ref{scope:iso} rules out $m<K$ and any reduction of the count by
merging nodes or deleting zero coefficients. Thus on $\mathcal E$ the
selector returns exactly the $K$-class minimum-distance fit, while the
loss off $\mathcal E$ is at most $D_W^2$. Without conditioning the prior,
$\mathbb E\ell^{\rm cnt}_{K,j}\le
\mathbb E|\widehat\tau_j^{(K)}-\tau_j|^2+D_W^2\alpha$. The comparison through order $2K_{\max}$ implies the $2K$-moment
comparison~\eqref{adapt:window} by its Newton-coordinate argument.
Proposition~\ref{thm:known-count-adaptation}(a), applied to this same
projection and $K$-class fit, proves~\eqref{scope:count-upper}.
For $\zeta\le\zeta_{*,K}$ and $\eps\le1$, every term in the minimum
defining $\Psi_K$ is at least $\zeta^2$: use
$\zeta_{*,K}\le\eps$, $\eps^{-(3K-3)}\ge1$, and
$\eps^2\ge\zeta\zeta_{*,K}$. The stated choice of $\alpha$ therefore
absorbs $D_W^2\alpha\le C_\sigma^2D_W^2\zeta^2/4$ into $C\Psi_K$.

Finally, any count-and-location estimator gives a known-$K$ estimator
by reporting the projection of $\widehat\tau_j$ onto $W$ when the count
is correct, and $\bar\tau$ otherwise. Its squared loss is at most
$\ell^{\rm cnt}_{K,j}$. Taking infima and using
Proposition~\ref{thm:known-count-adaptation}(b) gives the lower bound.
\end{proof}

\bibliographystyle{IEEEtranN}
\bibliography{refs,refs_additions,refs_v3_additions,refs_lit,refs_v4_classics,refs_v5_additions,refs_v12_additions,refs_v14_gautschi,refs_tit,refs_v7_additions}

\begin{thebibliography}{51}
\providecommand{\natexlab}[1]{#1}
\providecommand{\url}[1]{#1}
\csname url@samestyle\endcsname
\providecommand{\newblock}{\relax}
\providecommand{\bibinfo}[2]{#2}
\providecommand{\BIBentrySTDinterwordspacing}{\spaceskip=0pt\relax}
\providecommand{\BIBentryALTinterwordstretchfactor}{4}
\providecommand{\BIBentryALTinterwordspacing}{\spaceskip=\fontdimen2\font plus
\BIBentryALTinterwordstretchfactor\fontdimen3\font minus
  \fontdimen4\font\relax}
\providecommand{\BIBforeignlanguage}[2]{{%
\expandafter\ifx\csname l@#1\endcsname\relax
\typeout{** WARNING: IEEEtranN.bst: No hyphenation pattern has been}%
\typeout{** loaded for the language `#1'. Using the pattern for}%
\typeout{** the default language instead.}%
\else
\language=\csname l@#1\endcsname
\fi
#2}}
\providecommand{\BIBdecl}{\relax}
\BIBdecl

\bibitem[Stoica and Nehorai(1989)]{stoica1989music}
P.~Stoica and A.~Nehorai, ``{MUSIC}, maximum likelihood, and {C}ram\'er--{R}ao
  bound,'' \emph{IEEE Transactions on Acoustics, Speech, and Signal
  Processing}, vol.~37, no.~5, pp. 720--741, 1989.

\bibitem[Donoho(1992)]{donoho1992superres}
D.~L. Donoho, ``Superresolution via sparsity constraints,'' \emph{SIAM Journal
  on Mathematical Analysis}, vol.~23, no.~5, pp. 1309--1331, 1992.

\bibitem[Cand\`es and Fernandez-Granda(2014)]{candes2014superres}
E.~J. Cand\`es and C.~Fernandez-Granda, ``Towards a mathematical theory of
  super-resolution,'' \emph{Communications on Pure and Applied Mathematics},
  vol.~67, no.~6, pp. 906--956, 2014.

\bibitem[Shahram and Milanfar(2006)]{shahram2006statistical}
M.~Shahram and P.~Milanfar, ``Statistical and information-theoretic analysis of
  resolution in imaging,'' \emph{IEEE Transactions on Information Theory},
  vol.~52, no.~8, pp. 3411--3437, 2006.

\bibitem[Kay(1993)]{kay1993estimation}
S.~M. Kay, \emph{Fundamentals of Statistical Signal Processing, Volume~{I}:
  Estimation Theory}.\hskip 1em plus 0.5em minus 0.4em\relax Englewood Cliffs,
  NJ: Prentice Hall, 1993.

\bibitem[Lee(1992)]{lee1992crb}
H.~B. Lee, ``The {C}ram\'er--{R}ao bound on frequency estimates of signals
  closely spaced in frequency,'' \emph{IEEE Transactions on Signal Processing},
  vol.~40, no.~6, pp. 1508--1517, 1992.

\bibitem[Smith(2005)]{smith2005resolution}
S.~T. Smith, ``Statistical resolution limits and the complexified
  {Cram\'er--Rao} bound,'' \emph{IEEE Transactions on Signal Processing},
  vol.~53, no.~5, pp. 1597--1609, 2005.

\bibitem[Swingler(1993)]{swingler1993frequency}
D.~N. Swingler, ``Frequency estimation for closely spaced sinusoids: Simple
  approximations to the {Cram\'er--Rao} lower bound,'' \emph{IEEE Transactions
  on Signal Processing}, vol.~41, no.~1, pp. 489--494, 1993.

\bibitem[Hochwald and Nehorai(1994)]{hochwald1994concentrated}
B.~Hochwald and A.~Nehorai, ``Concentrated {C}ram\'er--{R}ao bound
  expressions,'' \emph{IEEE Transactions on Information Theory}, vol.~40,
  no.~2, pp. 363--371, 1994.

\bibitem[van~der Vaart(1998)]{vandervaart1998asymptotic}
A.~W. van~der Vaart, \emph{Asymptotic Statistics}.\hskip 1em plus 0.5em minus
  0.4em\relax Cambridge: Cambridge University Press, 1998.

\bibitem[Murphy and van~der Vaart(2000)]{murphy2000profile}
S.~A. Murphy and A.~W. van~der Vaart, ``On profile likelihood,'' \emph{Journal
  of the American Statistical Association}, vol.~95, no. 450, pp. 449--465,
  2000.

\bibitem[Shen and Win(2010)]{shen2010fundamental}
Y.~Shen and M.~Z. Win, ``Fundamental limits of wideband localization---{Part
  I}: A general framework,'' \emph{IEEE Transactions on Information Theory},
  vol.~56, no.~10, pp. 4956--4980, 2010.

\bibitem[Brown and Purves(1973)]{brown1973measurable}
L.~D. Brown and R.~Purves, ``Measurable selections of extrema,'' \emph{The
  Annals of Statistics}, vol.~1, no.~5, pp. 902--912, 1973.

\bibitem[Yau and Bresler(1992)]{yau1992worst}
S.~F. Yau and Y.~Bresler, ``Worst case {C}ram\'er--{R}ao bounds for parametric
  estimation of superimposed signals with applications,'' \emph{IEEE
  Transactions on Signal Processing}, vol.~40, no.~12, pp. 2973--2986, 1992.

\bibitem[Dilavero{\u{g}}lu(1998)]{dilaveroglu1998nonmatrix}
E.~Dilavero{\u{g}}lu, ``Nonmatrix {C}ram\'er--{R}ao bound expressions for
  high-resolution frequency estimators,'' \emph{IEEE Transactions on Signal
  Processing}, vol.~46, no.~2, pp. 463--474, 1998.

\bibitem[He et~al.(2026)He, Liu, and Wang]{he2026twopoint}
X.~He, P.~Liu, and J.~Wang, ``Two-point resolution in spectral
  super-resolution,'' arXiv:2605.04601, 2026.

\bibitem[Liu and Zhang(2021)]{liu2021computational}
P.~Liu and H.~Zhang, ``A theory of computational resolution limit for line
  spectral estimation,'' \emph{IEEE Transactions on Information Theory},
  vol.~67, no.~7, pp. 4812--4827, 2021.

\bibitem[Batenkov et~al.(2021)Batenkov, Goldman, and Yomdin]{batenkov2020near}
D.~Batenkov, G.~Goldman, and Y.~Yomdin, ``Super-resolution of near-colliding
  point sources,'' \emph{Information and Inference: A Journal of the IMA},
  vol.~10, no.~2, pp. 515--572, 2021.

\bibitem[Hockmann(2025)]{hockmann2025sparse}
M.~Hockmann, ``Analysis of the sparse super resolution limit using the
  {Cram\'er--Rao} lower bound,'' \emph{IEEE Transactions on Information
  Theory}, vol.~71, no.~1, pp. 390--395, 2025.

\bibitem[Tang et~al.(2015)Tang, Bhaskar, and Recht]{tang2015minimax}
G.~Tang, B.~N. Bhaskar, and B.~Recht, ``Near minimax line spectral
  estimation,'' \emph{IEEE Transactions on Information Theory}, vol.~61, no.~1,
  pp. 499--512, 2015.

\bibitem[Ferreira Da~Costa and Chi(2020)]{dacosta2020stable}
M.~Ferreira Da~Costa and Y.~Chi, ``On the stable resolution limit of total
  variation regularization for spike deconvolution,'' \emph{IEEE Transactions
  on Information Theory}, vol.~66, no.~11, pp. 7237--7252, 2020.

\bibitem[Li et~al.(2020)Li, Liao, and Fannjiang]{li2020esprit}
W.~Li, W.~Liao, and A.~Fannjiang, ``Super-resolution limit of the {ESPRIT}
  algorithm,'' \emph{IEEE Transactions on Information Theory}, vol.~66, no.~7,
  pp. 4593--4608, 2020.

\bibitem[Yang(2023)]{yang2023esprit}
Z.~Yang, ``Nonasymptotic performance analysis of {ESPRIT} and spatial-smoothing
  {ESPRIT},'' \emph{IEEE Transactions on Information Theory}, vol.~69, no.~1,
  pp. 666--681, 2023.

\bibitem[Heinrich and Kahn(2018)]{heinrich2018strong}
P.~Heinrich and J.~Kahn, ``Strong identifiability and optimal minimax rates for
  finite mixture estimation,'' \emph{The Annals of Statistics}, vol.~46,
  no.~6A, pp. 2844--2870, 2018.

\bibitem[Ho and Nguyen(2019)]{ho2019singularity}
N.~Ho and X.~Nguyen, ``Singularity structures and impacts on parameter
  estimation in finite mixtures of distributions,'' \emph{SIAM Journal on
  Mathematics of Data Science}, vol.~1, no.~4, pp. 730--758, 2019.

\bibitem[Wu and Yang(2020)]{wu2020gaussian}
Y.~Wu and P.~Yang, ``Optimal estimation of {G}aussian mixtures via denoised
  method of moments,'' \emph{The Annals of Statistics}, vol.~48, no.~4, pp.
  1981--2007, 2020.

\bibitem[Rotnitzky et~al.(2000)Rotnitzky, Cox, Bottai, and
  Robins]{rotnitzky2000singular}
A.~Rotnitzky, D.~R. Cox, M.~Bottai, and J.~Robins, ``Likelihood-based inference
  with singular information matrix,'' \emph{Bernoulli}, vol.~6, no.~2, pp.
  243--284, 2000.

\bibitem[Tsang(2018)]{tsang2018conservative}
M.~Tsang, ``Conservative classical and quantum resolution limits for incoherent
  imaging,'' \emph{Journal of Modern Optics}, vol.~65, no.~11, pp. 1385--1391,
  2018.

\bibitem[Ekvall and Bottai(2022)]{ekvall2022confidence}
K.~O. Ekvall and M.~Bottai, ``Confidence regions near singular information and
  boundary points with applications to mixed models,'' \emph{The Annals of
  Statistics}, vol.~50, no.~3, pp. 1806--1832, 2022.

\bibitem[Andrews and Cheng(2012)]{andrews2012estimation}
D.~W.~K. Andrews and X.~Cheng, ``Estimation and inference with weak,
  semi-strong, and strong identification,'' \emph{Econometrica}, vol.~80,
  no.~5, pp. 2153--2211, 2012.

\bibitem[Bandeira et~al.(2019)Bandeira, Niles-Weed, and
  Rigollet]{bandeira2020optimal}
A.~S. Bandeira, J.~Niles-Weed, and P.~Rigollet, ``Optimal rates of estimation
  for multi-reference alignment,'' \emph{Mathematical Statistics and Learning},
  vol.~2, no.~1, pp. 25--75, 2019.

\bibitem[Abbe et~al.(2019)Abbe, Bendory, Leeb, Pereira, Sharon, and
  Singer]{abbe2019multireference}
E.~Abbe, T.~Bendory, W.~Leeb, J.~M. Pereira, N.~Sharon, and A.~Singer,
  ``Multireference alignment is easier with an aperiodic translation
  distribution,'' \emph{IEEE Transactions on Information Theory}, vol.~65,
  no.~6, pp. 3565--3584, 2019.

\bibitem[Watanabe(2009)]{watanabe2009algebraic}
S.~Watanabe, \emph{Algebraic Geometry and Statistical Learning Theory}, ser.
  Cambridge Monographs on Applied and Computational Mathematics.\hskip 1em plus
  0.5em minus 0.4em\relax Cambridge: Cambridge University Press, 2009, vol.~25.

\bibitem[Rousseau and Mengersen(2011)]{rousseau2011overfitted}
J.~Rousseau and K.~Mengersen, ``Asymptotic behaviour of the posterior
  distribution in overfitted mixture models,'' \emph{Journal of the Royal
  Statistical Society Series B: Statistical Methodology}, vol.~73, no.~5, pp.
  689--710, 2011.

\bibitem[Ziv and Zakai(1969)]{zivzakai1969}
J.~Ziv and M.~Zakai, ``Some lower bounds on signal parameter estimation,''
  \emph{IEEE Transactions on Information Theory}, vol.~15, no.~3, pp. 386--391,
  1969.

\bibitem[Chazan et~al.(1975)Chazan, Zakai, and Ziv]{chazan1975improved}
D.~Chazan, M.~Zakai, and J.~Ziv, ``Improved lower bounds on signal parameter
  estimation,'' \emph{IEEE Transactions on Information Theory}, vol.~21, no.~1,
  pp. 90--93, 1975.

\bibitem[Bell et~al.(1997)Bell, Steinberg, Ephraim, and
  Van~Trees]{bell1997extended}
K.~L. Bell, Y.~Steinberg, Y.~Ephraim, and H.~L. Van~Trees, ``Extended
  {Ziv--Zakai} lower bound for vector parameter estimation,'' \emph{IEEE
  Transactions on Information Theory}, vol.~43, no.~2, pp. 624--637, 1997.

\bibitem[Zhang et~al.(2023)Zhang, Shi, and Gu]{zhang2023zzb}
Z.~Zhang, Z.~Shi, and Y.~Gu, ``{Ziv--Zakai} bound for {DOAs} estimation,''
  \emph{IEEE Transactions on Signal Processing}, vol.~71, pp. 136--149, 2023.

\bibitem[Jeong et~al.(2025)Jeong, Dytso, and Cardone]{jeong2025zzb}
M.~Jeong, A.~Dytso, and M.~Cardone, ``A comprehensive study on {Ziv--Zakai}
  lower bounds on the {MMSE},'' \emph{IEEE Transactions on Information Theory},
  vol.~71, no.~4, pp. 3214--3236, 2025.

\bibitem[Gill and Levit(1995)]{gill1995applications}
R.~D. Gill and B.~Y. Levit, ``Applications of the van {Trees} inequality: a
  {Bayesian} {Cram\'er--Rao} bound,'' \emph{Bernoulli}, vol.~1, no. 1--2, pp.
  59--79, 1995.

\bibitem[Chen et~al.(2016)Chen, Guntuboyina, and Zhang]{chen2016bayes}
\BIBentryALTinterwordspacing
X.~Chen, A.~Guntuboyina, and Y.~Zhang, ``On {Bayes} risk lower bounds,''
  \emph{Journal of Machine Learning Research}, vol.~17, no. 218, pp. 1--58,
  2016. [Online]. Available: \url{https://jmlr.org/papers/v17/16-185.html}
\BIBentrySTDinterwordspacing

\bibitem[Donoho and Liu(1991)]{donoho1991geometrizing2}
D.~L. Donoho and R.~C. Liu, ``Geometrizing rates of convergence, {II},''
  \emph{The Annals of Statistics}, vol.~19, no.~2, pp. 633--667, 1991.

\bibitem[Cai and Low(2015)]{cailow2015framework}
T.~T. Cai and M.~G. Low, ``A framework for estimation of convex functions,''
  \emph{Statistica Sinica}, vol.~25, no.~2, pp. 423--456, 2015.

\bibitem[Duchi and Ruan(2024)]{duchiruan2024local}
J.~C. Duchi and F.~Ruan, ``The right complexity measure in locally private
  estimation: It is not the {Fisher} information,'' \emph{The Annals of
  Statistics}, vol.~52, no.~1, pp. 1--51, 2024.

\bibitem[de~Boor(2005)]{deboor2005}
\BIBentryALTinterwordspacing
C.~de~Boor, ``Divided differences,'' \emph{Surveys in Approximation Theory},
  vol.~1, pp. 46--69, 2005. [Online]. Available:
  \url{https://surveys-in-approximation-theory.com/papers/2/2.pdf}
\BIBentrySTDinterwordspacing

\bibitem[Batenkov and Yomdin(2014)]{batenkov2014geometry}
D.~Batenkov and Y.~Yomdin, ``Geometry and singularities of the {Prony}
  mapping,'' \emph{Journal of Singularities}, vol.~10, pp. 1--25, 2014.

\bibitem[Goldman et~al.(2018)Goldman, Salman, and Yomdin]{goldman2018prony}
G.~Goldman, Y.~Salman, and Y.~Yomdin, ``Geometry and singularities of {Prony}
  varieties,'' \emph{Methods and Applications of Analysis}, vol.~25, no.~3, pp.
  257--276, 2018.

\bibitem[Tsybakov(2009)]{tsybakov2009nonparametric}
A.~B. Tsybakov, \emph{Introduction to Nonparametric Estimation}, ser. Springer
  Series in Statistics.\hskip 1em plus 0.5em minus 0.4em\relax New York:
  Springer, 2009.

\bibitem[Richmond(2006)]{richmond2006threshold}
C.~D. Richmond, ``Mean-squared error and threshold {SNR} prediction of
  maximum-likelihood signal parameter estimation with estimated colored noise
  covariances,'' \emph{IEEE Transactions on Information Theory}, vol.~52,
  no.~5, pp. 2146--2164, 2006.

\bibitem[Ben-Haim and Eldar(2009)]{benhaim2009bias}
Z.~Ben-Haim and Y.~C. Eldar, ``A lower bound on the {Bayesian} {MSE} based on
  the optimal bias function,'' \emph{IEEE Transactions on Information Theory},
  vol.~55, no.~11, pp. 5179--5196, 2009.

\bibitem[Aharon and Tabrikian(2024)]{aharon2024asymptotically}
O.~Aharon and J.~Tabrikian, ``Asymptotically tight {Bayesian} {Cram{\'e}r--Rao}
  bound,'' \emph{IEEE Transactions on Signal Processing}, vol.~72, pp.
  3333--3346, 2024.

\end{thebibliography}
\clearpage
\includepdf[pages=-]{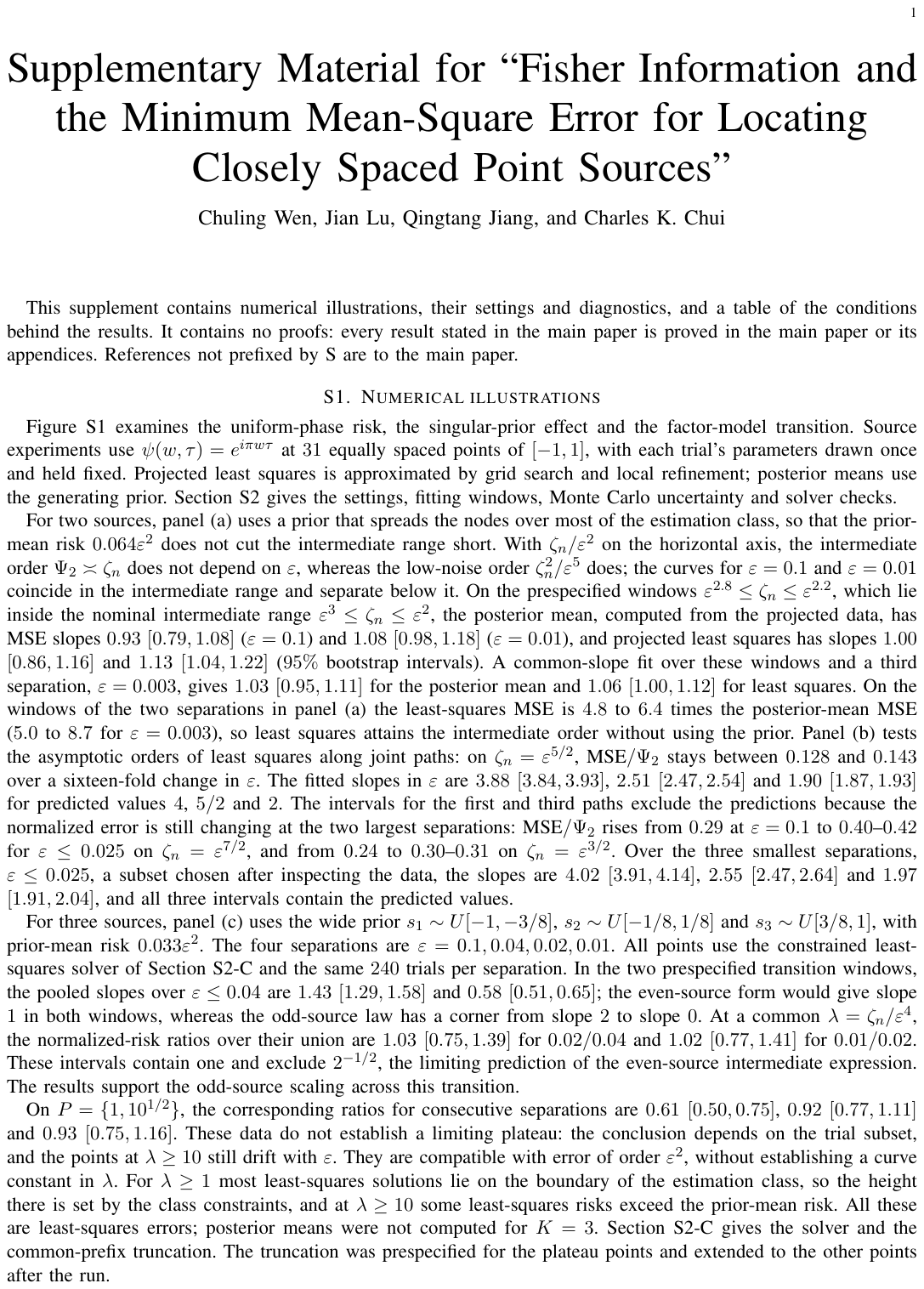}
\end{document}